 \newif\ifreview\reviewfalse
 \newif\ifjournal\journalfalse
\pgfplotsset{compat=1.10}
\tikzstyle{res}=[draw=black, fill=white, minimum size=1.3em, font={\tiny}]
\newcommand{\refsym}[1]{%
	\ensuremath{(%
		\ifthenelse{\equal{#1}{1}}{\ast}%
		{\ifthenelse{\equal{#1}{2}}{\#}%
		{\ifthenelse{\equal{#1}{3}}{\triangle}%
		{\ifthenelse{\equal{#1}{4}}{\bigcirc}%
		{\ifthenelse{\equal{#1}{5}}{\diamond}
		{\color{red}NaN}}}}}%
	)}%
}
\newcommand{\symoverset}[2]{\toverset{\refsym{#1}}{#2}}
\newcommand{\toverset}[2]{\overset{\text{#1}}{#2}}
\newcommand{\Croverset}[2]{\overset{\text{\Crefshort{#1}}}{#2}}
\newcommand{\Crefshort}[1]{%
	{%
		\Crefname{definition}{Def.}{Def.}%
		\Crefname{lemma}{Lem.}{Lem.}%
		\Crefname{proposition}{Prop.}{Prop.}%
		\Crefname{corollary}{Cor.}{Cor.}%
		\Crefname{theorem}{Thm.}{Thm.}%
		\Crefname{observation}{Obs.}{Obs.}%
		\Crefname{claim}{Cl.}{Cl.}%
		\Crefname{section}{Sec.}{Sec.}%
		\Crefname{subsection}{Sec.}{Sec.}%
		\Crefname{example}{Ex.}{Ex.}%
		\Crefname{equation}{Eq.}{Eq.}%
		\Crefname{algorithm}{Alg.}{Alg.}%
		\Cref{#1}%
	}%
}
\DeclareRobustCommand{\Crefnosort}[1]{%
  \begingroup\@cref@sortfalse\Cref{#1}\endgroup
}
\definecolor{darkgreen}{rgb}{0.0, 0.5, 0.0}
\newcommand{\R}{\mathbb{R}}
\newcommand{\Rnn}{\mathbb{R}_{\geq 0}}
\newcommand{\N}{\mathbb{N}}
\newcommand{\comment}[1]{}
\newcommand{\func}{A}
\newcommand{\objfunc}{\vartheta}
\newcommand{\eflow}{{\g}}
\newcommand{\wflow}{{h}}
\NewDocumentCommand{\sysop}{O{\ }}{system-optimal{#1}}
\NewDocumentCommand{\umini}{O{\ }}{\ensuremath{\ell}-minimal{#1}}
\NewDocumentCommand{\mini}{O{\ }}{\ensuremath{\ell}-minimal{#1}}
\newcommand{\tmin}{\tau^{\min}}
\NewDocumentCommand{\CharF}{O{}}{%
	\ifthenelse{\equal{#1}{}}%
	{1}%
	{1_{#1}}%
}
\definecolor{LightCyan}{rgb}{0.88,1,1}
\DeclareMathOperator*{\argmin}{arg\,min}
\def\mybig#1{{\hbox{$\left#1\vbox to23\p@{}\right.\n@space$}}}
\newcommand{\dup}[2]{\langle#1,#2\rangle}
\theoremstyle{definition}
\newtheorem{definition}{Definition}[section]
\theoremstyle{plain}
\newtheorem{theorem}[definition]{Theorem}
\newtheorem{lemma}[definition]{Lemma}
\newtheorem{corollary}[definition]{Corollary}
\newtheorem{claim}[theorem]{Claim}
\newtheorem{subclaim}[theorem]{Subclaim}
\newtheorem{parttheorem}[definition]{Parts of Theorem}
\Crefname{claim}{Claim}{Claims}
\newenvironment{rsttheorem}[1]
{\innerrsttheorem}
{\endinnerrsttheorem}
\theoremstyle{remark}
\newtheorem{remark}[definition]{Remark}
\newtheorem{example}[definition]{Example}
\newlist{thmparts}{enumerate}{1}
\setlist[thmparts]{
	label=\alph*)
}
\apptocmd{\cref@getref}{\xdef\@lastusedlabel{#1}}{}{error}
	\StrCount{\@lastusedlabel}{:}[\LastColonPos]%
	\StrCount{\@lastusedlabel}{:}[\LastColonPos]%
\newenvironment{proofClaim}[1][]{\ifthenelse{\equal{#1}{}}{\begin{proof}}{\begin{proof}[#1]}}{\end{proof}}
\newlist{proofbycases}{enumerate}{1}
\setlist[proofbycases]{
	leftmargin=0em,
	labelwidth=-.5em,
    parsep=0pt,
    listparindent=\parindent,
	label=\boldmath\bfseries\sffamily\arabic*. Case: \protect\casedescr:,
	ref=\arabic*,
	align=left
}
\newcommand{\proofitem}[1]{\def\pidescr{#1}%
	\item}
\newlist{structuredproof}{enumerate}{3}
\setlist[structuredproof]{
	leftmargin=1em,
    labelwidth=.3em,
    parsep=0pt,
    listparindent=\parindent,
	label=\boldmath\bfseries\sffamily\protect\pidescr:,
	align=left
}
 \setlist[structuredproof,2]{
     leftmargin=1em
 }
 \setlist[structuredproof,3]{
     leftmargin=1em
 }
\newenvironment{proofbyinduction}{\begin{description}[leftmargin=0em,parsep=0pt,listparindent=\parindent]}{\end{description}}
\newcommand{\basecase}[1]{\item[Base Case ({\boldmath\bfseries#1}):]}
\newcommand{\inductionstep}[1]{\item[Induction Step ({\boldmath\bfseries#1}):]}
\DeclareFontFamily{OT1}{pzc}{}
\DeclareFontShape{OT1}{pzc}{m}{it}{<-> s * [1.10] pzcmi7t}{}
\DeclareMathAlphabet{\mathpzc}{OT1}{pzc}{m}{it}
\newcommand{\myparagraph}[1]{\paragraph{#1.}}
\newcommand{\fixed@sra}{$\vrule height 2\fontdimen22\textfont2 width 0pt\shortrightarrow$}
\newcommand{\shortarrow}[1]{%
  \mathrel{\text{\rotatebox[origin=c]{\numexpr#1*45}{\fixed@sra}}}
}
\newcommand{\norm}[1]{\lVert #1 \rVert}
\newcommand{\abs}[1]{\lvert#1\rvert}
\newcommand{\ceil}[1]{\lceil #1 \rceil}
\newcommand{\wlg}{w.l.o.g.\ }
\newcommand{\wrt}{w.r.t.\ }
\newcommand{\Meas}{\mathcal{M}}
\newcommand{\di}{\;\mathrm{d}}
\newcommand{\Feas}{\Lambda^*}
\newcommand{\arc}{e}
\newcommand{\g}{g}
\newcommand{\exit}{T}
\newcommand{\arr}{A}
\newcommand{\trav}{D}
\newcommand{\hori}{\R}
\newcommand{\Routes}{\mathcal{W}}
\newcommand{\wa}{w}
\newcommand{\inflow}{r}
\newcommand{\id}{\mathrm{id}}
\newcommand{\GA}{E}
\newcommand{\GV}{V}
\newcommand{\source}{s}
\newcommand{\dest}{{d}}
\newcommand{\edgesFrom}[1]{\delta^+(#1)}
\newcommand{\edgesTo}[1]{\delta^-(#1)}
\newcommand{\sink}{destination}
\NewDocumentCommand{\stwalk}{O{\source}O{\dest}}{\ensuremath{#1},\ensuremath{#2}-walk}
\NewDocumentCommand{\stpath}{O{\source}O{\dest}}{\ensuremath{#1},\ensuremath{#2}-path}
\newcommand{\refrunind}[2]{%
\ifthenelse{\equal{#2}{1}}{
	\ensuremath{%
		\ifthenelse{\equal{#1}{1}}{n}%
		{\ifthenelse{\equal{#1}{2}}{m}%
		{\ifthenelse{\equal{#1}{3}}{j}%
		{\ifthenelse{\equal{#1}{4}}{l}%
		{\ifthenelse{\equal{#1}{5}}{i}
		{\color{red}NaN}}}}}%
	}%
 }
{
\ensuremath{%
		\ifthenelse{\equal{#1}{1}}{N}%
		{\ifthenelse{\equal{#1}{2}}{M}%
		{\ifthenelse{\equal{#1}{3}}{J}%
		{\ifthenelse{\equal{#1}{4}}{L}%
		{\ifthenelse{\equal{#1}{5}}{I}
		{\color{red}NaN}}}}}%
	}%
}
}
\newcommand{\n}[1]{\ifthenelse{\equal{#1}{}}{\refrunind{1}}{\refrunind{#1}{1}}}
\newcommand{\capn}[1]{\ifthenelse{\equal{#1}{}}{\refrunind{1}}{\refrunind{#1}{2}}}
\newcommand{\diffe}{\Delta}
\newcommand{\wto}{\rightharpoonup}
\newcommand{\tarc}{\Tilde{\arc}}
\newcommand{\teflow}{\Tilde{\eflow}}
\newcommand{\tg}{\tilde{\g}}
\newcommand{\tv}{\Tilde{v}}
\newcommand{\eset}{\Tilde{\mathcal{E}}}
\NewDocumentCommand{\seql}{O{1}O{\Routes}O{L(\hori)}}{\otimes^{#1}_{#2}#3}
\newcommand{\ofeas}{\mathcal{M}}
\NewDocumentCommand{\auto}{O{\ }}{autonomous#1}
\NewDocumentCommand{\Aauto}{O{\ }}{An autonomous#1}
\newcommand{\aauto}{an autonomous }
\newcommand{\Auto}{Autonomous }
\newcommand{\startint}{{(-\infty,\,}}
 \tikzstyle{vertex} = [shape=circle,draw=black]
\tikzstyle{namedVertex} = [shape=circle,draw=black]
\tikzstyle{namedVertexF} = [shape=circle,draw=black,fill=white]
\tikzstyle{edge} = [draw,->,thick]
\tikzstyle{labeledNodeS}=[circle, color=black!75!white, draw, inner sep = 0.1em, minimum size = 1.5em, scale=1.25]
\tikzstyle{normalEdge}=[very thick, >=stealth]
\newcommand{\lgcom}[2][]{\ifthenelse{\equal{#1}{journal}}{\todo[color=green!90!blue!30]{LG (for future version): #2}}{\todo[color=green!70!blue!60]{LG: #2}}} 
 \newcommand{\measfunc}{f}
 \newcommand{\leb}{\sigma}
  \newcommand{\eqperdef}{\overset{\text{def}}{=}}
 \newcommand{\defpereq}{\eqperdef}
 \newcommand{\sdRoutes}{\hat{\Routes}_{\source,\dest}}
 \newcommand{\SimpCyc}{\mathcal{C}}
 \newcommand{\Indi}{\mathbf{1}}
  \NewDocumentCommand{\op}{O{}}{\nabla^{#1}}
\NewDocumentCommand{\Nl}{O{\trav(\cdot,\cdot)}}{\ell^{#1}} 
\newcommand{\FeasSol}{\mathcal{F}}
\newenvironment{rstlemma}[1]
{\innerrstlemma}
{\endinnerrstlemma}
\NewDocumentCommand{\edom}{mO{}}{\mathcal H_{#1}^{#2}}
    \newcommand{\BigPicture}[2][0]{#2}
    \newcommand{\BigPicture}[2][0]{#2}
\title{A Decomposition Theorem for Dynamic Flows}
    \author{}
    \date{\vspace{-2cm}}
    \author{Lukas Graf, Tobias Harks and Julian Schwarz} 
    \affil{\small University of Passau, Faculty of Computer Science and Mathematics, 94032 Passau\\
    \href{mailto:julian.schwarz@uni-passau.de}{\{\texttt{lukas.graf,tobias.harks,julian.schwarz\}@uni-passau.de}}}
\begin{document}

\maketitle
\begin{abstract}
The famous  flow decomposition theorem of Gallai~\cite{Gallai58}
states that any static edge $\source$,$\dest$-flow in a directed graph can be decomposed into a
nonnegative linear combination of incidence vectors of paths and cycles.
In this paper, we study the decomposition problem for the 
setting of \emph{dynamic} edge $\source$,$\dest$-flows assuming a quite general dynamic flow
propagation model. We prove the following decomposition theorem: For any integrable  dynamic  edge $\source$,$\dest$-flow, 
there exists a decomposition into a nonnegative linear combination of \stwalk{} inflows and cycles of zero transit time. 
We show that a variant of the classical algorithmic approach of  iteratively
subtracting walk inflows from the current dynamic edge flow converges to a dynamic circulation and that every such circulation can be induced by inflows into cycles of zero transit time.
The algorithm terminates in finite time, if
there is a lower bound on the minimum edge travel times and the flow is finitely supported. We further characterize those
dynamic edge flows which can be decomposed purely into nonnegative linear combinations of \stwalk{} inflows. 

The proofs rely on the new concept of \auto network loadings which allows us to describe how particles of a different walk flow would hypothetically propagate throughout the network under the fixed travel times induced by the given edge flow. 
We show several technical properties of this type of network loading and, as a byproduct, we also derive some general results on dynamic flows which could be of interest outside the context of this paper as well.
\end{abstract} 

\ifreview
\else
\clearpage
\tableofcontents
\clearpage
\fi

\section{Introduction}\label{sec:intro}  
Dynamic network flows are an important mathematical concept in network flow theory with applications in the areas of dynamic traffic assignment, production systems and communication networks. 
As one of the earliest works in this area, Ford and Fulkerson~\cite{Ford62,FordFulkersonDynamicFlows} proposed dynamic flows as a generalization of static flows incorporating a time component.
The dynamic nature arises by assuming that flow particles require a certain amount of time to travel through each edge and when flow is injected into paths at certain points in time, the flow propagation leads to later effects in other parts of the network -- this flow propagation is often called \emph{network loading}. While Ford and Fulkerson~\cite{Ford62,FordFulkersonDynamicFlows}  and also further works in the area (see the survey of Skutella~\cite{Skutella08})
assumed flow-independent, constant travel times,
more realistic network loading models come with \emph{flow-dependent} and time-vaying travel times.
Such models  have been considered extensively in the dynamic traffic assignment community, see for instance the various link-delay formulations~\cite{Friesz93,ZhuM00,KohlerS05}, the Vickrey model with point queues~\cite{CominettiCL15,GrafH22,GHKM23,GrafHS20,Koch11,MeunierW10,OlverSK21,OlverSK23} or
the  Lighthill-Whitham-Richards (LWR) model~\cite{FrieszHanPedro13,LW55,Richards56}.

For a dynamic flow model with flow-dependent and time-varying travel times, the network loading problem asks for the
evolution of dynamic edge flows and corresponding edge travel times for given inflow rates into the paths or walks of the network.
The \emph{flow decomposition problem}, on the other hand, asks for the inverse: 
Given a dynamic  edge $\source$,$\dest$-flow (i.e., inflow functions per edge satisfying balance constraints at all vertices except at the source and the destination), can we decompose the edge flows into walk inflow rates and circulations so that these
walk inflow rates and circulations result in the given edge flow?
This question plays a prominent role for static flows and is answered by the static flow decomposition
theorem (see Gallai~\cite[Satz~1.2.7]{Gallai58}) stating that any static edge $\source$,$\dest$-flow can be decomposed into a nonnegative linear combination
of incidence vectors of paths and cycles. The decomposition property comes into play
at various places: for proving optimality conditions for minimum cost flows, transhipments and more, see Schrijver~\cite[Chapter 11]{Schrijver03} for a comprehensive treatment.
An analogue of this decomposition theorem for dynamic flows with flow dependent transit time is -- to the best of our knowledge -- not known so far.

\subsection{Our Results}\label{sec:results}
For a quite general network-loading model, which includes the linear edge-delay model and the Vickrey queueing model as special cases, we consider the following decomposition
problem: Is an  integrable dynamic   edge $\source$,$\dest$-flow decomposable into a nonnegative linear combination of \stwalk-inflows
and circulations?
Our main result settles this problem:

\begin{rsttheorem}{\ref{thm: FLowDecompModel} (informal)}
    Every integrable edge $\source$,$\dest$-flow admits a flow decomposition into nonnegative linear combinations of \stwalk{\footnote{Note that for a decomposition theorem for dynamic edge flows, it is in general necessary to allow for $\source$,$\dest$-walk inflows instead of only considering (simple) path inflows. To see this, just send inflow  for the time interval $[0,1]$ from  $s$ to $d$ along an $\source$,$\dest$ walk containing a simple cycle. Further, suppose that all travel times are constant $1$.
If we view the resulting dynamic edge flow as the input of the decomposition problem, the only possible decomposition is exactly the described walk inflow rate we started with. 
}} inflows and cycles of zero transit time.    
\end{rsttheorem}

A decomposition into walk-inflows and a dynamic circulation, i.e.~an edge flow fulfilling flow conservation at every node, can be found by the natural inflow reduction algorithm (\Cref{alg: FlowDecompositionPseudo}). The latter can then by decomposed further into cycle inflows of zero transit time.

 \begin{algorithm}
		\caption{Flow Decomposition Algorithm -- Pseudocode}\label{alg: FlowDecompositionPseudo}
		\Input{An   edge $\source$,$\dest$-flow $\g$}
		\Output{Walk inflow rates $h$  such that the difference of $\g$ and the corresponding edge flow of $h$ is a dynamic circulation}
		enumerate all \stwalk s $\hat{\Routes} = \{\wa_k\}_{k \in \N}$  
        and set $\g^1 \leftarrow \g$   
        
        \For{$k \in \N$}{
        Subtract from $\g^k$ as much flow as possible via a walk inflow rate $h_{\wa_k}$ into walk $\wa_k$ and set the remaining flow to $\g^{k+1}$
        \label{line:MainStep}
        }        
    \KwRet{$h_{\wa_k}, k \in \N$}
\end{algorithm}
 
Of particular interest are those dynamic edge flows that are decomposable into
$\source$,$\dest$-walk inflows only. Here we give the following  combinatorial characterization of  this property:

\begin{rsttheorem}{\ref{thm: PureFLowDecompModel} (informal)}
    An integrable edge $\source$,$\dest$-flow admits a flow decomposition
    purely into $\source$,$\dest$-walk inflows if and only if for any cycle and for (almost) all times where this cycle has zero transit time and carries flow, at least one of the following two properties is satisfied: 
    \begin{itemize}
        \item The destination is contained in the cycle and has positive net inflow at that time. 
        \item There is an edge leaving the cycle with positive edge inflow at that time.
    \end{itemize}
\end{rsttheorem}
Let us remark that Koch gave 
a similar characterization in~\cite[Lemma~3.47]{Koch11} 
requiring that every flow-carrying cycle is connected to a flow carrying walk under the starting flow decomposition. 
This characterization, however, is incorrect as the stated condition is too strong and, thus, only yields a sufficient but not a necessary condition, cf.~the counter example in \Cref{fig:KochPureFD-CounterExample}. 
Moreover, Koch's proof of sufficiency of the above condition is also incomplete: He only demonstrates the transformation of a given flow decomposition satisfying the sufficient conditions into a pure walk decomposition \emph{for a single cycle, walk and point in time}. To actually obtain a complete pure decomposition, however, one needs to do this \emph{for all cycles, walks and times at once}, which is technically much more involved.

Finally, we demonstrate in \Cref{sec:EquivalenceOfDE} the usefulness of the above decomposition result by showing  that the two natural definitions for dynamic equilibria, namely the edge- and walk-based definitions, respectively, do in fact describe the same solution concept (\Cref{thm:def-equilibrium-equivalent}). This equivalence is shown to hold even for generalized cost functions that may include tolls, rebates or emission costs. 

\subsection{Challenges and Technical Contributions}\label{sec:techniques}

The proof of the above decomposition theorems mainly rests on analyzing \Cref{alg: FlowDecompositionPseudo}. We start by briefly discussing the main challenges here as well as giving a high-level overview of our solutions to them.

\begin{description}
    \item[Formalization:]
    While the intuitive idea of the above algorithm is clear, it turns out that it is not trivial how to formalize the main step (\Cref{line:MainStep}) of \Cref{alg: FlowDecompositionPseudo} in a mathematically precise way. 
    In particular: What are the objects considered here and what does it mean to subtract a walk flow from an edge flow? To address these questions we introduce the concept of \auto network loadings, that is, the hypothetical flow propagation  of some walk inflow under any fixed travel time function which varies absolutely continuous in time. 
    The \auto network loading allows us to  translate walk inflow rates into \emph{\auto flows}, i.e.~edge inflow rates fulfilling flow conservation \wrt the fixed time-varying travel times. This way, we can view all intermediate flows $\g^k$ occurring during the algorithm as such \auto flows with respect to the fixed travel times induced by the initially given edge flow. 
    Within this framework, we can then subtract \auto flows induced by walk inflow rates from these intermediate edge flows and state a suitable optimization problem for \Cref{line:MainStep} that characterizes the maximal possible walk inflow among all possible walk inflows. 
    \item[Well-definedness:]
     To demonstrate that the algorithm is now well-defined we have to show that this optimization problem is itself well-posed and guaranteed to have an optimal solution. For the former, one has to be careful as not every walk inflow induces an  edge flow under fixed travel times (see \Cref{exa: noarcflow}). Hence, the feasible domain of the optimization problem has to be chosen in exactly the right way to include all possible walk inflows and exclude all others. 
     Regarding the existence of an optimal solution, we aim to apply the Weierstrass Theorem, 
     where the main challenge lies in showing that the set of feasible solutions is compact \wrt an appropriate topology.
    \item[Correctness:]
    As the central tool for showing correctness we  show that any flow satisfying flow conservation at all nodes other than the source and sink either has a flow carrying \stwalk{} or is a dynamic circulation. Showing that the limit of the intermediate flows~$\g^k$ satisfies this property then allows us to deduce the correctness of the algorithm since it removes the maximal amount of flow from every \stwalk.
\end{description}

\begin{figure}
 	\centering
 	\begin{adjustbox}{max width=\textwidth,max height=\textheight-9em}
 		\input{tikz/Overview} 	
 	\end{adjustbox}
 	\caption{A schematic overview over the structure of this paper: The top part contains our main results (for \auto flows in the middle and for non-\auto flows on the right) as well as the main building blocks for their proofs (on the left). Implication arrows indicate that one result is the main ingredient for proving another. Single arrows indicate that a result is used within the corresponding proof. The bottom half contains more technical results on \auto network loadings, which are used in many proofs. Not all relations between these results and other results are explicitly drawn. The following abbreviations and notations are used: auton.\ for \auto[,] flow cons.\ for flow conservation, flow decomp.\ for flow decomposition and $\ell_{\wa}(h_\wa)$ and $\ell_\Routes(h)$ for the \auto network loading of a single walk inflow~$h_\wa$ and a whole walk inflow vector~$h$, respectively.}
 	\label{fig:Overview}
 \end{figure}

In the following, we describe the technical contributions in more detail (see also \Cref{fig:Overview} for a schematic overview of our main results and their proof structure).

\paragraph{\Auto Network Loadings.}
As our key concept for formalizing \Cref{alg: FlowDecompositionPseudo}, we introduce  \auto network loadings (i.e., walk inflows that are loaded with respect to fixed travel time functions which vary absolutely continuous in time) and study
their properties. 
We start  with a characterization of walk-inflows that admit a corresponding \auto edge flow by the following property (\Cref{lem: elluExistenceProperties}): 
For any set of times such that a positive measure of flow enters some fixed walk, the set of \auto arrival times  at any edge along that walk must have positive measure as well. 
 This result allows us to restrict the feasible
space of walk-inflows which have to be considered in the flow decomposition algorithm. With this we can define a corresponding optimization problem and show that it admits an optimal solution (\Cref{thm: ExistenceOptSol}).

Next, we consider the concept of \auto node balances which allows us to define flow conservation with respect to fixed time-varying travel times. We then call flows satisfying this type of flow conservation at all nodes, except the source and destination,  \auto $\source$,$\dest$-flows and show in \Cref{lem: flowconW'} that 
subtracting an $\source$,$\dest$-walk from \aauto edge $\source$,$\dest$-flow results in another  \auto edge $\source$,$\dest$-flow.

Finally, based on 
several structural insights into \auto network loadings, we formulate the two main ingredients for the proof of our decomposition theorem:
\Cref{lem: ZeroCycleDecomposition} shows that any \auto dynamic circulation 
has a decomposition into zero-cycle inflows. 
 \Cref{lem: ExistenceOfFlowCarryingWalk} states that  any \auto $\source$,$\dest$-flow  with positive outflow at the source admits a flow-carrying \stwalk . Note that finding such a walk is much harder in the dynamic case compared to the static case where a simple breath-first search in the subnetwork of flow carrying edges starting at the source suffices. This is because we have to find such a walk not just for a single particle but for a positive measure of particle at once (i.e.\ a set of starting times of positive measure) and the ``time-expanded'' graph in which this search has to take place is of infinite size and, hence, termination of the search procedure is not obvious.
 To address these issues we devise an algorithm (\Cref{algo: sdwalk}) that pushes flow along (flow carrying) outgoing edges starting with the positive network inflow at the source. The flow receiving nodes together with 
the pushed flow  are then recorded in a tree structure. We show termination of this algorithm by using the tree structure and a potential argument tracking the total volume of pushed flow across the layers of the tree.

\paragraph{Flow Decomposition.}
With the above results on \auto network loadings at hand, we then turn back to the problem of flow decomposition. 
For the existence of a flow decomposition (\Cref{thm: FlowDecomp}), we mainly have to show the correctness of the (formal) decomposition algorithm (\Cref{alg: FlowDecomposition}). 
The aforementioned \Cref{lem: flowconW'}  ensures that the limit of the sequence of remaining flows $(\g^k)$ is still  \aauto $\source$,$\dest$-flow. 
 \Cref{lem: ExistenceOfFlowCarryingWalk} then guarantees that this limit fulfills flow conservation at the source as well, since the existence of 
 a flow carrying \stwalk{} $\wa_k$ would lead to a contradiction to the maximality of the removed walk inflow $h_{\wa_k}$. 
 From this, it follows by \Cref{lem: FlowConEveryNode} that we have flow conservation at~$\dest$ as well, i.e., a dynamic circulation. Hence, the algorithm is correct.
 Thus, \Cref{lem: ZeroCycleDecomposition} is applicable, implying that the dynamic circulation of the algorithm's output can be further decomposed into cycle-inflows with zero transit times.

In \Cref{sec:FlowDecomp:Pure} we then consider pure flow decomposition, i.e.\ one where flow is sent only via $\source$,$\dest$-walks. 
We characterize 
in \Cref{thm: PureFlowDecompIntuitive} the flows admitting such a pure decomposition as 
those flows where 
flow is sent into a zero-cycle $c$ only if there exists a flow carrying edge leaving~$c$ or $c$ contains the destination and the  network outflow rate is positive. 
Proving necessity is relatively straightforward. For sufficiency, we start with a general flow decomposition (which exists by \Cref{thm: FlowDecomp}) and then adjust it by incorporating any inflows into zero-cycles into some \stwalk s. Note that this step is technically challenging as the zero-cycles might not be directly connected to any flow carrying \stwalk{} but only indirectly via other zero-cycles.
Moreover, the flow rates may not match directly. 

 Finally, we deduce from  \Cref{thm: PureFlowDecompIntuitive} the existence of maximally pure flow decompositions, that is, flow decompositions that only use inflow into zero-cycles when it is unavoidable (\Cref{cor: PureFlowDecompIntuitive}).

\subsection{Applications of the Flow Decomposition Theorem}
Decomposing dynamic edge flows into walk inflows is an important task in the traffic assignment literature, see Peeta and Ziliaskopoulos~\cite[Section 3.1.5]{Peeta01} for an overview. 
A central problem here is to reverse engineer from given dynamic edge flow measurements an underlying walk inflow distribution which -- after network loading -- can explain the edge flow measurements, see Cascetta et al~\cite{CascettaIM93}. 
Our flow decomposition theorem shows that this problem is actually well-posed, that is, dynamic edge flow measurements obeying conservation laws at all nodes (except $\source$,$\dest$) can be decomposed into a nonnegative linear combination of \stwalk{} inflows and circulations.

The flow decomposition problem 
also plays a fundamental role for showing the
equivalence of edge and path-based definitions of dynamic equilibria. In the past decade, there has been exciting progress on 
our understanding of dynamic equilibrium flows
for the Vickrey queuing model~\cite{Vickrey69}. 
The main results of this literature include the existence and characterization  of dynamic equilibria (Koch and Skutella~\cite{Koch11}, Cominetti et al.~\cite{CominettiCL15}, Sering and Vargas-Koch~\cite{Sering2019}),  their uniqueness, stability and convergence properties (Olver et al.~\cite{OlverSK21}), their long term behavior with infinitely lasting inflow (Cominetti et al.~\cite{CCO22})
and their price of anarchy (Correa et al.~\cite{CCOPoAforNashFlows}). 

The crucial point is that these works
use
an \emph{edge-based} definition of dynamic equilibria, that is, 
a dynamic edge $\source$,$\dest$-flow is a dynamic equilibrium if at almost all times, positive edge-inflow only occurs, if the edge lies on a shortest path (\wrt induced travel times) to the \sink. 
Dynamic equilibrium flows are, however,  fundamentally \emph{path-based} objects, because paths are exactly the strategies of the flow particles.  
Under the assumption that
travel time is the only payoff-relevant measure for the flow particles, the arrival times of particles at intermediate nodes
are synchronized for the Vickrey queuing model leading to the notion of (static) \emph{thin flows}
as first discovered by Koch and Skutella~\cite{Koch11}.
These static thin flows can be decomposed into paths (using the classical static flow decomposition theorem) and, thus,
gives the equivalence between edge- and path-based dynamic equilibria (see \cite[Section~2.7]{CominettiCL15}). 

For other physical flow models and other types of equilibria, however, 
a comparable thin flow formulation is not known. For example, 
in the traffic assignment community, it is quite standard to assume that, instead of travel times, flow particles minimize a \emph{generalized cost function}
which might include tolls, rebates or other desiderata such as emissions (cf.~\cite{FrieszHanPedro13}). For such models, the synchronization property required for thin flow
formulations breaks down and, thus, the equivalence between 
edge- and path-based equilibrium definitions for dynamic $\source$,$\dest$-flows is unclear.  
Our flow decomposition theorem allows us 
to fill this gap and we show in \Cref{sec:EquivalenceOfDE}
that edge- and walk-based equilibrium definitions for dynamic $\source$,$\dest$-flows are in fact equivalent, even for generalized cost functions and general physical flow models (cf.~\Cref{thm:def-equilibrium-equivalent}). 

Another application of our decomposition theorems can be found in the recent paper~\cite{GHS24TollSODA}, where it is used to derive a strong duality result for an infinite dimensional optimization problem
which in turn is used to derive implementing tolls for dynamic equilibrium flows.

\subsection{Related Work}
After the initial work of Ford and Fulkerson~\cite{Ford62},
several papers considered dynamic flow optimization problems such as the maximum flow over time problem (see Anderson and Philpott~\cite{ContinuousMaxFlowMinCutEarlier}, Philpott~\cite{Philpott90}, Fleischer and Tardos~\cite{FleischerT98}, Koch and Nasrabadi~\cite{KochN14} and Koch et al.~\cite{KochNS11}), the earliest arrival flow problem (see Gale~\cite{Gale59}), the quickest transshipment problem (see Hoppe and Tardos~\cite{HoppeT00}, Schloter et al.~\cite{SchloterST22} and Skutella~\cite{Skutella23} for an introduction into the topic) and minimum cost dynamic flows (Klinz and Woeginger~\cite{KlinzW04}).
A common characteristic of the above works is a simplified network loading model, i.e., they assume constant travel times.

The only reference we are aware of addressing the flow
decomposition problem for dynamic flows with
general flow-dependent transit times is the PhD thesis
by Ronald Koch~\cite{KochThesis}, who also explicitly mentioned the lack of literature on this topic (\cite[page~113]{KochThesis}): ``Unfortunately, it seems that there is no contribution addressing dynamic flow decomposition so far.''
In~\cite[Chapter~3.5]{KochThesis}, he 
formally defined the flow decomposition problem under a fairly general dynamic flow model.
He gave an example (\cite[Example~3.48]{KochThesis}) of an infinite time horizon, non-integrable dynamic edge flow  which does not admit a decomposition into walk inflows and circulations.
This counter example, however, crucially uses the fact that the cumulative edge inflow  for some edges is unbounded,  
leaving the existence of a solution to the  decomposition problem for the more realistic case of finite  cumulative edge inflows  unaffected.
In~\cite[Chapter~3.5, page~94]{KochThesis}, he then sketched the natural algorithm for finding a flow decomposition (similar to \Cref{alg: FlowDecompositionPseudo} above) which consists of 
first subtracting  dynamic circulations
with zero transit time from the input edge flow and then
iteratively subtracting walk inflows from the remaining dynamic edge flow. 
However, he gave no proof of correctness for this algorithm and only provided some intuition
for why the algorithm should work under the hypothesis that
the input edge flow vector admits a decomposition. Quoting from~\cite[Chapter 3.5, page 94]{KochThesis}: ``As already mentioned, a flow decomposition of an edge flow over time may not exist. However, the Flow Decomposition algorithm converges to a flow decomposition if the underlying edge flow over time is decomposable. For observing this, we only give a proof idea which is strongly based on intuition.'' 
His intuition is built on the key invariant that once a path inflow function is subtracted, the resulting reduced  edge flow is still decomposable.
His (intuitive) explanation why  this invariant should be correct uses the hypothesis that the initial edge dynamic flow is decomposable. However, this starting assumption (the underlying edge flow over time is decomposable) is the  key open question which we answer in this paper.

\section{The Model}\label{sec:Model}
\subsection{Network}

We consider single-source, single-\sink{} networks given by a directed graph $G=(\GV,\GA)$ with nodes $\GV$ and edges $\GA \subseteq V \times V$, a source node $s \in \GV$ and a destination node $\dest \in \GV$. We always assume $\GA$ to not contain any loops (i.e.\ edges $(v,v)$) and for $\source$ to be different to~$\dest$. We denote by $\hat{\Routes}$ the countable set of (finite) \stwalk s in~$G$. Here, an \stwalk~$\wa$ is a tuple 
of edges $\wa = (\arc_1,\ldots,\arc_k) \in \hat{\Routes}$ with $\arc_j=(v_j,v_{j+1})\in \GA$ for all $j\in [k]:=\{1,\ldots,k\}$ for some $(v_j)_{j\in[k+1]}\in \GV^{k+1}$. We denote by $\abs{\wa} \in \N_0$ the length (=number of edges) of~$\wa$ and use $\wa[j] \coloneqq \arc_j$ to refer to the $j$-th edge on walk~$\wa$. For a node $v \in V$ or an edge~$\arc \in \GA$ we write $v\in \wa$ or $\arc \in \wa$ to say that this node/edge lies on that walk, i.e.\ that there exist  some $j \in [k]$ and $\hat{v},v'\in \GV$ with $(\hat{v},v') = \wa[j]$ and $v \in \{\hat{v},v'\}$ or that there exists some $j \in [k]$ with $\arc=\wa[j]$, respectively. 
By $\edgesFrom{v}$ we denote the set of edges leaving a node $v$ and by $\edgesTo{v}$ the set of edges entering $v$. 
A walk $\wa$ is called simple, if it does not visit any node twice except possibly the starting node, i.e.,~for all $v\in \GV$ there exists at most one $j \in [\abs{\wa}]$ with $\wa[j] \in\delta^+(v)$.
Furthermore, we call a walk $c=(\gamma_1,\ldots,\gamma_m)$ a cycle if $\gamma_1 \in \edgesFrom{v}$ and $\gamma_m \in \edgesTo{v}$ for some node  $v \in \GV$. 
We denote by $\mathcal{C}$ the finite set of simple cycles. 
For a walk $\wa$ and $j\leq |\wa|$, we denote by  $\wa_{\geq j}$ and $\wa_{>j}$ the sub-walks of $\wa$ starting with $\wa[j]$ and $\wa[j+1]$, respectively. Analogously, we define $\wa_{\leq j}$ and $\wa_{<j}$. 
We denote by $\edgesFrom{\hat{\wa}}$ the edges leaving a walk, i.e.~$\edgesFrom{\hat{\wa}}\coloneq \Set{(v,v') \in \GA\mid v\in \hat{\wa},v'\notin\hat{\wa}}$. 
Furthermore, for two walks $\wa^1 = (\arc_1^1,\ldots,\arc_{k_1}^1),\wa^2=(\arc_1^2,\ldots,\arc^2_{k_2})$
with $\wa^1$ ending in a node $v$ and $\wa_2$ starting in it, 
we write $(\wa^1,\wa^2):=(\arc_1^1,\ldots,\arc^1_{k_1},\arc^2_1,\ldots,\arc^2_{k_2})$.

We consider $\hori$  as our planning horizon   during which flow particles can traverse the network. 
Since dynamic flows will be described by Lebesgue-integrable functions on~$\hori$, we equip $\hori$ with its Borel $\sigma$-algebra $\mathcal{B}(\hori)$. 
We denote by $\sigma$ the Lebesgue measure on $\hori$ and by $L(\hori)$ the space of ($\sigma$-equivalence classes of) $\sigma$-integrable real-valued functions over~$\hori$ equipped with the standard norm induced topology and the partial order induced by $L_+(\hori)$, i.e.\ the subsets of nonnegative integrable functions. Analogously, we denote by $L^\infty(\hori)$ the space of essentially bounded real-valued functions over~$\hori$ with the partial order induced by $L_+^\infty(\hori)$.
For any countable set~$M$, we denote by $\seql[1][M][L(\hori)]$ the set of 
vectors $(h_m)_{m \in M} \in L(\hori)^M$ whose sum $\sum_{m \in M}h_m \in L(\hori)$ is well-defined and exists, i.e.
\begin{align*}
    \seql[1][M][L(\hori)] &:= \big\{ h \in L(\hori)^{M} \mid \norm{h} := \sum_{m \in M} \norm{h_m} < \infty \big\}.
\end{align*}
 This defines again a Banach space (cf.~\cite[Section 16.11]{guide2006infinite}) with topological dual
 \begin{align*}
     \seql[\infty][M][L^\infty(\hori)]:= \big\{ f \in L^\infty(\hori)^M \mid \norm{f} := \sup_{m \in M} \norm{f_m}_\infty < \infty \big\}. 
\end{align*} 
We denote the bilinear form between this dual pair by $\dup{f}{h} \coloneqq \sum_{m \in M}\int_\hori f_m\cdot h_m\di\sigma$ for $f\in \seql[\infty][M][L^\infty(\hori)],h\in\seql[1][M][L(\hori)]$.  Here, we use $\int_\hori f\di\sigma$ to denote the integral of $f$ over~$\hori$ with respect to the Lebesgue measure~$\sigma$.\footnote{We use this notation instead of writing $\int_\hori f(t)\di t$ to stay consistent with the proofs of some of the more technical lemmas where we also have to consider integrals with respect to other measures.} 
Similarly, we denote for any two vectors of nonnegative measurable functions  (not necessarily contained in $L^\infty(\hori)^M$ or  $L(\hori)^M$)    $\tilde{f},\tilde{h}:\hori \to \R^M_+$ the sum over their integrals via $\dup{\tilde f}{\tilde h} \coloneqq \sum_{m \in M}\int_\hori \tilde f_m\cdot \tilde h_m\di\sigma \in \R_+ \cup\{\infty\}$. 
We say that a sequence $(h_n)_{n \in \N}$ converges weakly in $\seql[1][M][L(\hori)]$ to some $h \in \seql[1][M][L(\hori)]$ and write $h_n \wto h$, if $\dup{f}{h_n} \to \dup{f}{h}$ for any $f\in\seql[\infty][M][L^\infty(\hori)]$. 
Analogously, we define $\seql[1][M][L(\hori)^\GA]$ where we use $\norm{\g} := \sum_{\arc \in \GA}\norm{\g_\arc}$ for $\g \in L(\hori)^\GA$.

\subsection{Dynamic Flows}

The main concept underlying dynamic flows are the traversal time functions:  

\myparagraph{Traversal time functions} In our model, any vector of edge inflow rates $\g\in L_+(\hori)^\GA$ induces    corresponding nonnegative   edge traversal time functions
$\trav_\arc(\g,\cdot),\arc \in \GA$ 
 with 
$\trav_\arc(\g,t) \in \R_+$ denoting the time needed to traverse $\arc$ 
when entering the latter at time~$t$. 
For any such edge traversal time function, we also define two related functions: 
Firstly, we introduce
edge exit time functions $\exit
_\arc(\g,t):= t + \trav_\arc(\g,t)$ denoting the time a particle exits edge $\arc$ when entering at $t$. 
Secondly, we define edge arrival time functions $\arr_{\wa,j}(\g,\cdot)$ denoting the time a particle arrives at the tail of the $j$-th edge of some walk~$\wa$ when entering $\wa$ at time $t$. More precisely, for any walk $\wa$ we define $\arr_{\wa,1}(\g,\cdot):= \id$ and then, recursively, $\arr_{\wa,j}(\g,\cdot):= \exit_{\wa[{j-1}]}(\g,\cdot) \circ\arr_{\wa,j-1}(\g,\cdot)$ for $j\in\{2,\ldots,|\wa| + 1\}$. Here,  $\arr_{\wa,|\wa|+1}(\g,\cdot)$ is  interpreted as 
 the arrival time at the end node of the walk.

We assume that $\trav_\arc(\g,\cdot)$ is (locally) absolutely continuous\footnote{We omit from now on the term ``locally'' and call a function $\trav:\hori \to \R$ absolutely continuous  if it is absolutely continuous on every closed interval $[a,b]\subseteq \hori$ in the sense of \cite[Definition 5.3.1]{Bogachev2007I}.  We remark that the main properties of absolutely continuous functions carry directly over to the locally absolutely continuous ones. All properties we require are gathered and proved in \Cref{lem: PropAbsCon}.}  and adheres to the first-in first-out principle (FIFO), that is, $\exit_\arc(\g,\cdot)$ is a monotonic increasing function. Note, that this also implies that both $\exit_\arc(\g,\cdot)$ and $\arr_{\wa,j}(\g,\cdot)$ are absolutely continuous as well (\Cref{lem: PropAbsCon:Conca}).   

Observe, that both the Vickrey queuing model and the linear edge delay model, two prominent flow propagation models, fit into our framework. Here, let us remark that both models usually only use $\R_+$ as planning horizon. However, any flow propagation model that is defined only on a subinterval $[t_0,t_f]$ of $\hori$ can be extended to a model on the whole $\hori$ by simply extending the flows by $0$ on $\hori \setminus [t_0,t_f]$ and the travel times constantly on $\hori \setminus [t_0,t_f]$, i.e.~$\trav(\g,t) = \trav(\g,t_0)$ for $t<  t_0$ and $\trav(\g,t) = \trav(\g,t_f)$ for $t > t_f$. 

\begin{example}\label{ex:VickreyModel}
    In both the  Vickrey queuing and the linear edge delay model each edge~$\arc \in \GA$ comes with a free flow travel time~$\tau_\arc > 0$ and a service rate~$\nu_\arc > 0$. The traversal time function~$\trav_\arc(\g,\cdot):\R_+\to\R_+$ for an edge inflow rate vector $g \in L_+(\hori_+)^\GA$ is then defined as the unique solution to a system of equations in terms of the corresponding edges flows:

    For linear edge delays, this system is
        \[\trav_\arc(\g,t) = \tau_\arc + \frac{x_\arc(\g,t)}{\nu_\arc} \quad\text{ and }\quad x_\arc(\g, t) = \int_{[0,t]} \g_\arc\di\sigma-\int_{\exit_\arc(\g,\cdot)^{-1}([0,t])} \g_\arc \di\sigma,\]
    together with the condition that $x_\arc(\g,t)$ is always nonnegative. Here, $x_\arc(\g,t)$ represents the flow volume on edge~$\arc$ at time~$t$. For the Vickrey queuing model, it is
         \[\trav_\arc(\g,t) = \tau_\arc + \frac{q_\arc(\g,t)}{\nu_e} \quad\text{ and }\quad q_\arc(\g, t) = \int_{[0,t]} \g_\arc\di\sigma-\int_{\exit_\arc(\g,\cdot)^{-1}([0,t+\tau_\arc])} \g_\arc \di\sigma,\]
    together with the conditions that the queue is always nonnegative and that the derivative of $t \mapsto \int_{\exit_\arc(\g,\cdot)^{-1}([0,t])} \g_\arc \di\sigma$ (i.e.\ the outflow rate of edge~$\arc$) is bounded by $\nu_\arc$ almost everywhere. Here $q_\arc(\g,t)$ represents the flow volume in the queue of edge~$\arc$ at time~$t$.

    For both models it is known that for any edge flow~$\g$, 
    there exists a unique traversal time function $\trav_\arc(\g,\cdot)$ solving the corresponding system (cf.\ e.g.\ \cite{CominettiCL15,ZhuM00}).  Moreover, this travel time function satisfies the assumptions from above.
\end{example}

With this, we can now formally describe dynamic flows. We will use two types of these flows: 
Edge flows and walk flows: 

\myparagraph{Walk Flows}
For a countable collection of (not necessarily $\source$,$\dest$-)walks $\Routes'$, a 
  \emph{walk flow} or \emph{walk-inflow function} is a vector $h\in L_+(  \hori)^{{\Routes}'}$   
with $h_{{\wa}}(t)$ representing the walk inflow rate at time $t\in \hori$ into the walk ${\wa}\in  {\Routes'}$.
 
\myparagraph{Edge Flows} 
An edge flow is any vector $\g \in L_+(\hori)^\GA$ where $\g_\arc(t)$ then denotes the inflow rate into edge~$\arc$ at time~$t$. For any such edge flow we then define the concept of node balances which is a measure associating with every measurable subset of times the difference between cumulative outflow from and inflow into a node during these times. In particular, for an interval $(-\infty,t]$ with a corresponding negative node balance, its absolute value can then be interpreted as the amount of flow particles waiting at that node at time~$t$.
\begin{definition}\label{def: FlowBalaSDFlow}
	The \emph{node balance} (\wrt $\trav(\cdot,\cdot)$) at node $v \in \GV$ for an arbitrary edge inflow vector $\g\in L_+(\hori)^\GA$ is given by the measure  $\op[\trav(\cdot,\cdot)]_v\g$, which assigns to every $\mathfrak T \in \mathcal{B}(\hori)$ the value
	\begin{align}\label{eq: FlowBalance}
		\op[\trav(\cdot,\cdot)]_v \g(\mathfrak T) \coloneq  
		\sum_{\arc \in \delta^+(v)} \int_{\mathfrak T} \g_\arc \di \leb -  \sum_{\arc \in \delta^-(v)} \int_{\exit_\arc(\g,\cdot)^{-1}(\mathfrak T)} \g_\arc \di \leb. 
	\end{align}
	If the Radon-Nikodym derivative ${\inflow}_v$ of $\op_v \g$ exists, i.e.~a function $\inflow_v \in L(\R)$ satisfying for all $\mathfrak T \in \mathcal{B}(\hori)$
	\begin{align}\label{eq: FlowBalanceDerivative}
		\int_{\mathfrak T} {\inflow}_v\di\leb =  
		\sum_{\arc \in \delta^+(v)} \int_{\mathfrak T} \g_\arc \di \leb -  \sum_{\arc \in \delta^-(v)} \int_{\exit_\arc(\g,\cdot)^{-1}(\mathfrak T)} \g_\arc \di \leb,
	\end{align}
	we say that $\g$ has the  net (node)  outflow rate ${\inflow}_v$ at $v$, or equivalently, the  net inflow rate $-{\inflow}_v$. 
	If the latter is equal to zero almost everywhere, we 
	say that $\g$ fulfills \emph{flow conservation at~$v$} (\wrt $\trav(\cdot,\cdot)$). 
\end{definition}

We then call an edge flow $\g \in L_+(\hori)^\GA$ an (edge) \emph{$\source$,$\dest$-flow} (\wrt $\trav(\cdot,\cdot)$), if it has a net outflow rate $\inflow_s \in L_+(\hori)$ at $\source$, fulfills flow conservation at 
	all $v \neq \source,\dest$ and has a nonpositive node balance at $\dest$, i.e., \eqref{eq: FlowBalanceDerivative}. 
	If $\g$ fulfills flow conservation at all nodes, we refer to $\g$ as \emph{dynamic circulation} (\wrt $(\trav(\cdot,\cdot))$).

\subsection{Flow Decomposition} 
In the following we formally define (pure) flow decomposition.

\begin{definition} \label{def: FlowDecomp}
    Let $\g \in L_+(\hori)^\GA$ be an edge $\source$,$\dest$-flow. We call a walk-inflow function $h \in L_+(\hori)^{\Routes'}$ for $\Routes'= \hat{\Routes}$ a \emph{pure flow decomposition} for~$\g$ if the following holds for all $\arc \in \GA$: 
  \begin{align}\label{eq: DefFlowDecomp}
    \int_{\startint t]} \g_\arc\di\sigma = \sum_{\wa \in \Routes'}\sum_{j: \wa[j] = \arc}\int_{\arr_{\wa,j}(\g,\cdot)^{-1}({\startint}t])}
h_\wa\di\sigma \text{ for all } t\in \hori. 
\end{align}  
 If the latter statement holds for $\Routes'= \hat{\Routes}\cup \mathcal{C}$ and all $h_c$ with $c \in \mathcal{C}$ are zero-cycle inflow rates (\wrt $\trav(\g,\cdot)$), we simply speak of a \emph{flow decomposition} of~$\g$.  Here, we call an inflow rate $h_c$ into a cycle~$c$ a  zero-cycle inflow rate if it fulfills the implication  $h_c(t)>0 \implies \trav_\arc(\g,t)=0$ for all $\arc \in c$ and almost all times $t \in \hori$. 
    \end{definition}
    In regard of~\eqref{eq: DefFlowDecomp}, remark that in case of convergence of  $\sum_{k \in \N}\sum_{j: \wa^k[j] = \arc}\int_{\arr_{\wa^k,j}(\g,\cdot)^{-1}({\startint}t])}
h_{\wa^k}\di\sigma$ for an arbitrary ordering $\Routes'=\{\wa^k\}_{k \in \N}$, the latter is absolute convergent by the nonnegativity of $h_{\wa^k}$ and hence the order of the summands does not matter, that is, writing $\sum_{\wa \in \Routes'}$ is well-defined.

This definition leads to the following natural question(s):

\begin{framed}\centering
    Which edge $\source$,$\dest$-flows $\g$  admit a  (pure) flow decomposition?
 \end{framed}

Our main theorems give complete answers to this.

\begin{theorem}\label{thm: FLowDecompModel}
    Every edge $\source$,$\dest$-flow  $\g \in L_+(\hori)^\GA$ admits a flow decomposition.
\end{theorem}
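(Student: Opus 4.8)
The plan is to derive Theorem~\ref{thm: FLowDecompModel} as the special case of a more general statement about \emph{parameterized} flows, proved by analyzing \Cref{alg: FlowDecompositionPseudo}. First I would set up the machinery of parameterized network loadings: given the input edge $\source$,$\dest$-flow $\g$, one fixes once and for all the traversal times $\trav(\g,\cdot)$, and then reinterprets every object encountered during the algorithm --- the residual edge flows $\g^k$ as well as the edge flow induced by a candidate walk-inflow $h_{\wa_k}$ --- as the hypothetical edge loading that $h$ would produce \emph{under these fixed travel times}. The key point is that with fixed (hence flow-independent) travel times the loading map $h \mapsto \ell^u(h)$ is linear, so subtraction in \cref{line:MainStep} becomes a well-defined operation on a common linear space. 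I would then observe that $\g$ itself is the parameterized loading of any of its own decompositions with respect to $\trav(\g,\cdot)$, so a parameterized flow decomposition of $\g$ (w.r.t.\ itself) is exactly a decomposition in the sense of the definition; this is the reduction that lets us forget about the original flow-dependent model for the rest of the argument.

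Next I would make \cref{line:MainStep} precise as an optimization problem: among all walk-inflow rates $h_{\wa_k} \in L_+(\hori)$ into $\wa_k$ that (i) induce a well-defined parameterized edge flow --- this is where the characterization of loadable walk-inflows (\Cref{lem: elluExistenceProperties}, the ``no positive mass arriving at an edge during a null set of times'' condition) pins down the correct feasible domain --- and (ii) satisfy $\ell^u(h_{\wa_k}) \le \g^k$ edgewise, maximize the total removed volume $\norm{h_{\wa_k}}$. I would invoke \Cref{thm: ExistenceOptSol} for the existence of a maximizer, set $\g^{k+1} := \g^k - \ell^u(h_{\wa_k})$, and then establish the crucial invariant (\Cref{lem: flowcon}): each $\g^k$ is a parameterized $\source$,$\dest$-flow, i.e.\ it satisfies parameterized flow conservation at every node except $\source$ and $\dest$ with a nonnegative net outflow at $\source$. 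This is immediate for $\g^1 = \g$ and is preserved because subtracting the loading of a full $\source$,$\dest$-walk inflow removes equal amounts of in- and out-flow at every interior node (and only affects the balance at $\source$ and $\dest$). One also needs $\g^k \ge 0$, which holds by the feasibility constraint (ii).

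Then I would take limits. Since $\norm{\g^{k+1}} = \norm{\g^k} - \abs{\wa_k}\cdot\norm{h_{\wa_k}}\cdot(\text{something})$ --- more precisely the removed volumes $\sum_k \norm{h_{\wa_k}}$ are bounded by $\norm{\g}$ up to path-length factors --- the tail sums vanish, $h = (h_{\wa_k})_k$ lies in $\seql[1][\hat\Routes][L(\hori)]$, and $\g^k$ converges (in the $L^1$-type norm on $L(\hori)^\GA$) to some limit $\g^\infty \ge 0$. Using continuity of the parameterized loading maps, $\g^\infty$ is again a parameterized $\source$,$\dest$-flow. The heart of the argument is to show $\g^\infty$ is in fact a dynamic circulation (of zero-transit-time cycles). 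Suppose not: by \Cref{lem: ZeroCycleDecomposition} a parameterized $\source$,$\dest$-flow that is not a circulation must have positive net outflow at $\source$, and then \Cref{lem: ExistenceOfFlowCarryingWalk} produces a flow-carrying \stwalk{} $\wa_{k_0}$ in $\g^\infty$, carrying a positive inflow rate of positive total mass. But that inflow could have been subtracted when processing $\wa_{k_0}$ at step $k_0$ --- more carefully, since $\g^\infty \le \g^{k_0}$ edgewise wherever flow was not yet removed, one gets a contradiction to the maximality in \cref{line:MainStep} (one has to be a little careful here: the contradiction is with the \emph{limit} not removing this walk, so the clean way is to argue that $\g^\infty$ must itself have no flow-carrying \stwalk, which forces it to be a circulation). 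Finally, decompose the circulation $\g^\infty$ into zero-transit-time simple cycle inflows (Gallai-style, applied timeslice-wise, or by direct appeal to a circulation-decomposition lemma), append these to $h$, and one obtains a full flow decomposition of $\g$; since $\g$ is its own parameterizing flow this is a flow decomposition in the sense of the Definition, proving the theorem.

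The main obstacle I anticipate is the combination of \Cref{lem: ExistenceOfFlowCarryingWalk} with the limit/maximality argument: extracting a single flow-carrying \stwalk{} in the dynamic setting requires handling a positive measure of particle start-times simultaneously in an infinite time-expanded network (this is exactly the subtle point flagged in \Cref{sec:techniques}), and then phrasing the contradiction with the algorithm's greedy choice so that it applies to the limit flow $\g^\infty$ rather than to a single $\g^k$ --- one must ensure the ``amount removable along $\wa_{k_0}$'' is lower-semicontinuous (or monotone) along the sequence so that a positive removable amount in the limit would already have been removed at the right stage.
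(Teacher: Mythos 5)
Your overall architecture coincides with the paper's: parameterize the loadings with respect to $u=\g$, greedily maximize the removable inflow per walk, pass to the limit, use \Cref{lem: ExistenceOfFlowCarryingWalk} plus maximality to rule out positive residual source outflow, and finish with \Cref{lem: ZeroCycleDecomposition}. However, your formalization of \cref{line:MainStep} has a genuine gap: you constrain the candidate walk inflow only by $\ell^u_{\wa_k}(h_{\wa_k})\le \g^k$ edgewise and maximize the removed volume. This does \emph{not} preserve the invariant that $\g^{k+1}$ is again a parameterized $\source$,$\dest$-flow, because subtracting a walk loading changes the balances at $\source$ and $\dest$: the edgewise bound does not prevent injecting, at a given time, more walk inflow at $\source$ than the residual \emph{net} outflow $\inflow^k_\source$ there (the first edge of the walk may carry flow that reached $\source$ through incoming edges), nor removing at $\dest$ more arriving flow than actually terminates there. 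Concretely, take $\GV=\{\source,\dest\}$, edges $a=(\source,\dest)$ and $b=(\dest,\source)$ with constant travel time $1$, and the edge $\source$,$\dest$-flow $\g_a=1_{[0,1]}+1_{[2,3]}$, $\g_b=1_{[1,2]}$ (one unit injected during $[0,1]$, travelling $\source\to\dest\to\source\to\dest$ and leaving the network at $\dest$ during $[3,4]$), so $\inflow_\source=1_{[0,1]}$. If the enumeration processes the walk $(a)$ first, your problem admits the maximizer $h_{(a)}=1_{[0,1]}+1_{[2,3]}$; the residual then carries $1_{[1,2]}$ on $b$ only, has negative net outflow at $\source$ on $[2,3]$ and positive balance at $\dest$ on $[1,2]$, no further walk can remove anything (every $\source$,$\dest$-walk must use $a$, which is empty), and the leftover is not a circulation (the only cycle has travel time $2$). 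Hence the output of your algorithm is not a flow decomposition of $\g$, so the claimed invariant and the concluding application of \Cref{lem: ZeroCycleDecomposition} both fail as argued.

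The paper closes exactly this hole by adding to the per-walk problem \eqref{opt: FlowDecomp} the two constraints $\wflow_{\wa_k}\le\inflow^k_\source$ \eqref{ineq: FD2} and $\op_\dest\big(\ell^u_{\wa_k}(\wflow_{\wa_k})\big)\ge\op_\dest\eflow^k$ \eqref{ineq: FD3}; these keep every residual a $u$-based $\source$,$\dest$-flow (in the example above they force $h_{(a)}=0$, and the flow is later removed along $(a,b,a)$). Correspondingly, \Cref{lem: ExistenceOfFlowCarryingWalk} is formulated so that the walk inflow it produces obeys the same two bounds, which is precisely what makes the maximality contradiction at step $\tilde k$ work: one only needs the monotonicity $\eflow^*\le\eflow^{\tilde k+1}$, $\inflow^*_\source\le\inflow^{\tilde k+1}_\source$ and $\op_\dest\eflow^*\le\op_\dest\eflow^{\tilde k+1}$ to see that $\wflow_{\wa_{\tilde k}}+\tilde\wflow_{\wa_{\tilde k}}$ is feasible for \eqref{opt: FlowDecomp}, so the lower-semicontinuity concern at the end of your proposal does not arise. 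If you add these two constraints (and adapt your feasibility/closedness and limit arguments to them), your outline becomes essentially the paper's proof.
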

\begin{theorem}\label{thm: PureFLowDecompModel}
    An edge $\source$,$\dest$-flow  $\g \in L_+(\hori)^\GA$ with an outflow rate $\inflow_\dest$ at $\dest$  admits a pure flow decomposition if and only if is satisfies the following property: 
    For every zero-cycle inflow rate $h_c'\in L_+(\hori)$ into \emph{any} (not necessary simple)  cycle $c$ with $\sum_{j\leq \abs{c}:c[j] = \arc}h_c' \leq \eflow_\arc,\arc \in c$, we have for almost all $t \in \hori$ with $h_c'(t)>0$ that (at least) one of the following conditions is satisfied:
    \begin{thmparts}
        \item  $\dest \in c$ and  $ \inflow_\dest (t)<0$.  
        \item  There exists an edge $\arc\in \edgesFrom{c}$ with $\eflow_\arc(t)>0$.  
    \end{thmparts}
\end{theorem}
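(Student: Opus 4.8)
The proof splits into the (routine) ``only if'' direction and the (substantial) ``if'' direction; throughout, ``at a time in $T$'' is understood via the arrival-time reparameterisations $\arr_{\wa,j}(\g,\cdot)$, which are absolutely continuous and, by FIFO, monotone, so the bookkeeping below goes through.

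\textbf{Necessity.} I would argue the contrapositive. Suppose there is a cycle $c$ (not necessarily simple) and a zero-cycle inflow rate $h_c'\le\eflow_\arc$ $(\arc\in c)$ for which the set $T$ of times $t$ with $h_c'(t)>0$ violating \emph{both} listed properties has positive measure; for $t\in T$ this means every edge $\arc=(v,v')\notin c$ with $v\in c$ has $\eflow_\arc(t)=0$, and (since $\inflow_\dest\le 0$ always) $\inflow_\dest(t)=0$ whenever $\dest\in c$. Assume for contradiction that $h\in L_+(\hori)^{\hat{\Routes}}$ is a pure $\source$,$\dest$-flow decomposition of $\g$. Since $\int_T\eflow_\arc\di\sigma\ge\int_T h_c'\di\sigma>0$ for each $\arc\in c$, \eqref{eq: DefEdgeFlow} forces some walk $\wa$ and index $j$ with $\wa[j]\in c$ to carry positive flow entering $\wa[j]$ at times in $T$. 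Pick such a particle (a start time $s$ with $h_\wa(s)>0$ and $\arr_{\wa,j}(\g,s)\in T$). Because all edges of $c$ have zero transit time at times in $T$, the particle is still at the head of $\wa[j]$ at time $\arr_{\wa,j}(\g,s)\in T$; from there $\wa$ can neither use an edge leaving $c$ (those carry no flow at times in $T$, as property~(b) fails) nor terminate — the only possible termination node on $c$ would be $\dest$, but $\inflow_\dest=0$ there, so nothing is absorbed at $\dest$. Hence $\wa$ follows $c$ once around and returns to $\wa[j]$ at the \emph{same} time in $T$, and inductively it must do so infinitely often, contradicting $\abs{\wa}<\infty$. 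The only non-routine point is making ``positive flow entering $\wa[j]$ at times in $T$'' persist under the forward reparameterisation, which is exactly where monotonicity of $\exit_\arc(\g,\cdot)$ and $\trav_\arc(\g,\cdot)\equiv 0$ on $T$ enter; the same estimate handles non-simple cycles.

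\textbf{Sufficiency.} I would start from \emph{any} flow decomposition $\big(h,(h_c)_{c\in\mathcal{C}}\big)$ of $\g$ — which exists by \Cref{thm: FLowDecompModel} — and convert it into a pure one by ``splicing'' each zero-cycle inflow $h_c$ into $\source$,$\dest$-walks \emph{without} changing the induced edge flow $\g$ (hence without changing the fixed travel times $\trav(\g,\cdot)$). Fix a simple cycle $c$; since $h_c\le\eflow_\arc$ for $\arc\in c$, the hypothesis applies with $h_c'=h_c$, so the times with $h_c>0$ split into a part where $\dest\in c$ has strictly negative net inflow and a part where some edge $\arc=(v,v')\notin c$, $v\in c$, carries positive flow. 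On the first part one appends a full loop around $c$ — which, having zero transit time there, leaves every downstream arrival time untouched — to the end of an $h$-walk terminating at $\dest$ at that time; on the second part one inserts such a loop into a flow-carrying walk just before it traverses $\arc$. There are only finitely many simple cycles, so one iterates, cutting the relevant walk- and cycle-inflow rates along the time axis so that the amounts spliced in match exactly.

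\textbf{Main obstacle.} The hard part of the ``if'' direction is organising these splicings into a terminating procedure that exhausts \emph{all} of every $h_c$: the ``insert before a leaving edge'' step may attach $c$ not to a genuine $\source$,$\dest$-walk but to another zero-cycle $c'$, producing chains — and a priori cyclic dependencies — among the finitely many simple cycles, while the rates generically do not line up, forcing repeated splitting of both walk and cycle inflows. I would resolve this by an induction carried by a potential that strictly drops whenever flow is moved off a cycle onto the flow through an edge leaving it (e.g.\ the total remaining zero-cycle inflow mass, or a lexicographic count of cycles still carrying flow), using as base case that once a cycle's leaving edge carries $\source$,$\dest$-walk flow the splicing finishes in one step. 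Verifying that the final family of walks still satisfies \eqref{eq: DefEdgeFlow} reduces, one edge at a time, to the fact that the inserted loops have zero transit time and hence perturb no arrival time relevant to \eqref{eq: DefEdgeFlow}; this last check is notationally heavy but mechanical.
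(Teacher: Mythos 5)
Your necessity argument is sound and is essentially the route the paper takes (the ``only if'' part of \Cref{thm: PureFlowDecomp}): follow a positive measure of walk flow that enters an edge of $c$ at times in $T$ and, since all edges of $c$ have zero transit time there, trace it until it either uses an edge leaving $c$ or terminates at $\dest$; the only caveat is that the ``persistence'' you mention is not supplied by FIFO monotonicity alone but by the compatibility property \eqref{eq: nlexists} (equivalently \Cref{lem: FLowOnZeroTrav} together with \Cref{lem: flowcon}), which prevents a positive-measure packet from arriving during a null set of times.

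The gap is in sufficiency, and it sits exactly at the two points you flag but do not actually resolve. First, the rate mismatch is not cured by ``repeated splitting'': inserting one zero-transit loop into a piece of walk flow can absorb cycle flow at most at the rate of that walk flow, so if at the relevant times the carrying walk has inflow rate $\varepsilon$ while $h_c$ has rate $1$, no finite splitting into single-loop insertions exhausts $h_c$. The paper's construction (\Cref{claim: FlowDecompSingleCycle}) inserts roughly $\lceil h_c/h_{\wa}\rceil$ copies of $c$ into the carrying walk, partitioning time according to the value of this ceiling; this unavoidably produces countably many modified walks for a single cycle, so your procedure cannot be ``terminating'', and one then has to verify well-definedness of the resulting walk-inflow vector, existence of the induced loadings for the spliced inflows, and the identity \eqref{eq: DefEdgeFlow} for the limit object -- this verification is a substantial part of the proof, not mere notation. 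Second, the chain problem is not solved by a potential that ``strictly drops'': when the only positive flow on every edge leaving $c$ is other zero-cycle flow, there is nothing to splice $c$ into (a cycle is not an \stwalk), and you have only invoked the hypothesis for the single simple cycles $h_c$ of the decomposition, which is too weak to rule out a group of active cycles supporting one another. Breaking such mutual support requires applying the hypothesis to \emph{composed} cycles spanning whole groups of simultaneously active cycles; this is precisely the connected-component machinery of \Cref{thm: PureFlowDecomp}: one first modifies the decomposition (\Cref{claim: FlowDecompZshKomps}) by splicing Eulerian circuits of an entire component, with geometrically weighted amounts $2^{-l}$, into a countable family of walks whose arrival times cover the component's active times, so that afterwards every active cycle is carried by genuine walk flow on all its edges, and only then removes the cycles one at a time. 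Without this step the base case of your induction need never become available, and the claim that the total remaining zero-cycle mass decreases to zero under single-loop splicings is unsubstantiated. So the overall plan (start from \Cref{thm: FlowDecomp} and splice cycles into walks at zero-transit times) matches the paper, but the two constructions that make it work are missing.
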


As mentioned earlier, these theorems follow from the analogous \Cref{thm: FlowDecomp,thm: PureFlowDecompIntuitive} for so-called \auto flows. 
We formally introduce this key concept in the following section, along with several key properties we require for our main results.

 \section{\Auto Network Loadings}\label{sec:uBasedNetworkLoadings}

Let us consider an edge $\source$,$\dest$-flow $\g$ for which we want to compute a flow decomposition via \Cref{alg: FlowDecompositionPseudo}. 
In order to formulate the latter mathematically precise, we require the concept of \emph{\auto network loadings}. 
These are edge flows
${\tilde{\g}} \in L_+(\hori)^\GA$  which can be induced by a walk flow vector $\tilde{h} \in L_+(\hori)^{\Routes'}$ 
for some  countable collection of walks $\Routes'$ under the \emph{fixed} traversal time functions $\trav(\g,\cdot)$,  i.e.\ flows fulfilling for all $\arc \in \GA$:
\begin{align}\label{eq: DefUEdgeFlowU}
    \int_{\startint t]} \tilde{\g}_\arc\di\sigma = \sum_{\wa\in \Routes'} \sum_{j: \wa[j] = \arc}\int_{\arr_{\wa,j}(\g,\cdot)^{-1}({\startint}t])}\tilde{h}_\wa\di\sigma \text{ for all } t \in \hori. 
\end{align}
We investigate this concept of \auto network loadings in an even more general form, that is, we fix an arbitrary, flow-independent but time-dependent travel time function $\trav:\hori \to \R_+^\GA, t \mapsto (\trav_\arc(t))_{\arc \in \GA}$ and analyze the existence and properties of 
edge flows ${\g} \in L_+(\hori)^\GA$ which can be induced 
by a walk flow vector $h \in L_+(\hori)^{\Routes'}$ sending into all walks $\wa \in \Routes'$  flow $h_\wa$  under the \emph{fixed} traversal time functions $\trav$ for an arbitrary countable collection of walks $\Routes'$, i.e.~flows fulfilling for all $\arc \in \GA$: 
\begin{align}\label{eq: DefUEdgeFlow}
    \int_{\startint t]} \g_\arc\di\sigma =\sum_{\wa\in \Routes'} \sum_{j: \wa[j] = \arc}\int_{\arr_{\wa,j}^{-1}({\startint}t])}h_\wa\di\sigma \text{ for all } t \in \hori. 
\end{align}
Here, we denote by $\exit_\arc$ and $\arr_{\wa,j}$ the (flow independent) exit and arrival time function corresponding to $\trav$ and 
 merely assume that  $\trav_\arc$ is absolutely continuous and fulfills FIFO. Note that, as above, this implies that
 $\exit_\arc$ and $\arr_{\wa,j}$ are absolutely continuous as well.

In the above situation \eqref{eq: DefUEdgeFlow}, 
we write $\ell_{\Routes'}(h) =  (\ell_{\Routes',\arc}(h))_{\arc \in \GA} := (\g_\arc)_{\arc \in \GA}$, say that $\ell_{\Routes'}(h)$ is well-defined, exists and that the walk inflow rate $h$ induces~$\g$ under $\trav$ or that $g=\ell_{\Routes'}(h)$ is the \auto network loading of~$h$ under~$\trav$.

Analogous to \Cref{def: FlowDecomp}, we say that $h\in L_+(\hori)^{\Routes'}$ for $\Routes' \in \{\hat\Routes,\hat\Routes\cup \mathcal{C}\}$ is \aauto (pure) flow decomposition  of a $\g \in L_+(\hori)^\GA$ if 
the \auto  network loading  $\ell_{\Routes'}(h)$ of $h$ exists and coincides with $\g$. 
In this regard, let us state a first important  observation for \auto network loadings relating \auto flow decompositions to (non-\auto[)] ones:
\begin{lemma}\label{lem: NonParaIsPara}
    Consider an edge $\source$,$\dest$-flow $\g \in L_+(\hori)^\GA$ and a walk inflow $h\in L_+(\hori)^{\Routes'}$ for $\Routes' \in \{\hat\Routes,\hat\Routes\cup \mathcal{C}\}$. Then, $h$ is a   (pure) flow decomposition  of $\g$ if and only if $h$ is \aauto (pure) flow decomposition  of $\g$ \wrt the fixed travel times $\trav := \trav(\g,\cdot)$.
\end{lemma}
\begin{proof}
We prove both directions separately: 
\begin{structuredproof}
    \proofitem{``$\Leftarrow$''} If $h$ is \aauto (pure) flow decomposition  of $\g$ \wrt $\trav := \trav(\g,\cdot)$, then $\ell_{\Routes'}(h) = \g$, that is, \eqref{eq: DefUEdgeFlow} holds for $\trav := \trav(\g,\cdot)$ which is exactly \eqref{eq: DefFlowDecomp}.

    \proofitem{``$\Rightarrow$''} If $h$ is a (pure) flow decomposition  of $\g$, then \eqref{eq: DefFlowDecomp} holds. For $\trav := \trav(\g,\cdot)$, this implies by definition that $\g$ is the \auto network loading to $h$ \wrt $\trav$, i.e.~$\ell_{\Routes'}(h) = \g$ and, thus, $h$ is the \auto (pure) flow decomposition  of $\g$ \wrt $\trav$. \qedhere 
\end{structuredproof}
\end{proof}

In the following, we  briefly state the main structural properties on \auto network loadings needed for our flow decomposition results in the subsequent sections. 
The proofs, together with more technical properties of network loadings, can be found in \Cref{sec: ANLResultsFlowDecomp}.

\paragraph{Existence of \Auto Network Loadings} 

As already mentioned in the introduction, not every walk inflow rate necessarily induces an edge flow \wrt arbitrary fixed traversal times $\trav$ that can be described via a vector ${\g} \in L_+(\hori)^\GA$. This is true even in case of traversal times $\trav$ being induced by some other flow in a well known flow propagation model like the Vickrey model, cf.~\Cref{exa: noarcflow}. 
The following \namecref{lem: elluExistenceProperties} determines for which inflow rates $h$ the vector $\ell_{\Routes'}(h)$ exists and gives a complete characterization. 
To this end, we introduce 
for any walk $\wa$, $j \in [|\wa|]$ and $h_\wa \in L_+(\hori)$ 
the \auto flow induced by $h_{\wa}$ on the  $j$-th edge of $\wa$ 
(if it exists) as the function 
$\ell_{\wa,j}(h_\wa) \in L_+(\hori)$ satisfying
\begin{align*}
    \int_{\startint t]} \ell_{\wa,j}(h_\wa) \di\sigma=    \int_{\arr_{\wa,j}^{-1}(\startint t])}h_\wa\di\sigma \quad \quad \text{ for all }t \in \hori.
\end{align*}
In the following, we denote by $\edom{\Routes'}$ for any collection of walks $\Routes'$ 
the maximal domain of the corresponding \auto network loading operator. 
Equivalently to writing $h \in \edom{\Routes'}$, we will also say that $\ell_{\Routes'}(h)$ exists.  
We adopt the analogous notation and convention for the \auto network loading operator $\ell_{\wa,j}$ and its maximal domain $\edom{\wa,j}$.
In \Cref{lem: elluExistenceProperties} we provide a characterization of those walk flows that induce \aauto edge flow, alongside with several topological properties of the \auto network loading operator. Here, we only state the parts of \Cref{lem: elluExistenceProperties} concerning the existence characterization:

\begin{parttheorem}\label{lem: elluExistencePropertiesShort}
	 Consider \aauto travel time operator $\trav(\cdot)$, an arbitrary countable collection of walks $\Routes'$, $h \in L_+(\hori)^{\Routes'}$,  $\wa\in \Routes'$, $j \in[|\wa|+1]$ and $\arc \in \GA$. 
	Then, the following holds: 
	\begin{itemize}

        \item[\ref{lem: elluExistenceProperties:ExistenceInducedFlowOnJthEdge}] $h_\wa \in \edom{\wa,j}$  if and only if $h_\wa$ satisfies 
        \begin{align*}
			h_\wa = 0 \text{ on } \arr_{\wa,j}^{-1}(\mathfrak T) \text{ for every Borel-measurable $\leb$-null set }\mathfrak T \in \mathcal{B}(\hori).
		\end{align*}
       In this case $\ell_{\wa,j}(h_\wa)$ is uniquely determined.                 
		\item[\ref{lem: elluExistenceProperties:ExistenceInducedFlowOnAllEdges}] $h_\wa \in \edom{\wa}[]$ if and only if  $h_\wa \in \edom{\wa,j}$ for all $j\leq \abs{\wa}$. In this case, $\ell_{\wa}(h_\wa)$ is uniquely determined by $\ell_{\wa,\arc}(h_\wa) = \sum_{j:\wa[j]=\arc}\ell_{\wa,j}(h_\wa)$ for all $\arc \in \GA$.
		
		\item[\ref{lem: elluExistenceProperties:ExistenceInducedFlow}] $h\in \edom{\Routes'}$ if and only if  $h_\wa \in \edom{\wa}$  for all $\wa \in \Routes'$ and $(\Nl[]_{\wa}(h_\wa))_{\wa\in \Routes'} \in \seql[1][\Routes'][L_+(\hori)^\GA]$ holds.  In this case, $\Nl[]_{\Routes'}(h)$ is uniquely determined by $\Nl[]_{\Routes'}(h) = \sum_{\wa \in \Routes'}\Nl[]_\wa(h_\wa)$. 
	\end{itemize}
	
\end{parttheorem}

\paragraph{Optimization Problems Involving \Auto Network Loadings}

With the existence characterization of \auto network loadings at hand, we are in the position to formulate 
the optimization problem needed in \Cref{alg: FlowDecompositionPseudo} and show the existence of optimal solutions under suitable assumptions.  In fact, we 
do this for 
a whole class of  optimization problems involving \auto network loadings which will contain the aforementioned problem as a special case. 
We consider general optimization problems of the following form: 

\begin{align} 
	\max_{h}\;    &\objfunc(h)  \tag{P} \label{opt: GeneralOverview} \\
	\text{s.t.: } &\Nl[]_{\Routes'}(h) \leq \g  \label{eq: ExOptSolLeqOverview}\\
	&h \in \ofeas \nonumber
\end{align}
Here, 
$\Routes'$ denotes an arbitrary countable collection of walks which may contain individual walks multiple but at most finitely many times. 
The constraint vector  $\g$ is an arbitrary  element in $L_+(\hori)^\GA$. 
The objective~$\objfunc$ is some real-valued function on $\ofeas$,  which, in turn, is some subset of $\edom{\Routes'}$ containing at least one $h$ fulfilling~\eqref{eq: ExOptSolLeqOverview}, i.e.~the set of feasible solutions 
$\FeasSol\coloneq \{h \in \ofeas\mid \Nl[]_{\Routes'}(h)\leq \g\}$
is non-empty. 
Remark that 
\eqref{opt: GeneralOverview} is well-defined as $\ofeas \subseteq \edom{\Routes'}$ ensures that $\Nl[]_{\Routes'}(h)$ is well-defined. 
In the following theorem, $\FeasSol$ and 
$\edom{\Routes'}$ are equipped with the subspace topology induced by $\seql[1][\Routes'][L_+(\hori)]$.\footnote{The inclusion $\edom{\Routes'} \subseteq \seql[1][\Routes'][L_+(\hori)]$ is proven in \Cref{lem: elluContinuity:Subset}.}

 \begin{theorem}\label{thm: ExistenceOptSol}
 	Assume that $\objfunc:\FeasSol\to \R$  is sequentially weakly upper semi-continuous and $\ofeas$ is sequentially weakly closed in~$\edom{\Routes'}[]$.   Then, 
 	the optimization problem~\eqref{opt: General} has an optimal solution.
 \end{theorem}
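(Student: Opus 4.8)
The plan is to argue by the direct method. Take a maximizing sequence $(\wflow^n)$ of feasible points, extract a weakly convergent subsequence, and pass to the limit using the sequential weak continuity of $\objfunc$ and of the constraint map $\wflow\mapsto\sum_{\wa\in\Routes'}\ell^u_\wa(\wflow_\wa)$ from \Crefsub{lem: elluContinuity}{Sum}, together with the sequential weak closedness of $\ofeas$. The one genuinely delicate point is the weak sequential compactness: $\seql[1][\Routes'][L(\hori)]$ is an $L^1$-type space and hence far from reflexive, so boundedness of the maximizing sequence alone is not enough — the feasibility constraint $\sum_{\wa}\ell^u_\wa(\wflow_\wa)\le\eflow$ has to be exploited twice over, once to obtain uniform integrability in each coordinate and once to control the tail over the (possibly infinite) walk collection $\Routes'$.

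\textbf{A priori bounds.} Let $(\wflow^n)$ be a maximizing sequence of feasible points. Since $\ell^u_{\wa,1}(\wflow_\wa)=\wflow_\wa$, since every $\ell^u_{\wa,j}(\wflow_\wa)$ is nonnegative when $\wflow_\wa\ge0$ (shown inside the proof of \Crefsub{lem: elluContinuity}{SingleWalk}), and since $\int_\hori\ell^u_{\wa,j}(\wflow_\wa)\di\sigma=\int_\hori\wflow_\wa\di\sigma$ directly from the defining relation (using $\arr_{\wa,j}(u,\cdot)^{-1}(\hori)=\hori$), feasibility $\sum_{\wa\in\Routes'}\ell^u_\wa(\wflow^n_\wa)\le\eflow$ yields both the pointwise domination $0\le\wflow^n_\wa\le\bigl(\ell^u_\wa(\wflow^n_\wa)\bigr)_{\wa[1]}\le\eflow_{\wa[1]}$ for every walk $\wa\in\Routes'$ (each walk being understood to have at least one edge), and, via $\norm{\ell^u_\wa(\wflow_\wa)}=\sum_{\arc\in\GA}\sum_{j:\wa[j]=\arc}\norm{\ell^u_{\wa,j}(\wflow_\wa)}=\abs{\wa}\,\norm{\wflow_\wa}$ together with Tonelli's theorem, the length-weighted estimate
\[
\sum_{\wa\in\Routes'}\abs{\wa}\,\norm{\wflow^n_\wa}=\norm{\sum_{\wa\in\Routes'}\ell^u_\wa(\wflow^n_\wa)}\le\norm{\eflow}\qquad\text{for all }n.
\]

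\textbf{Extracting the limit.} For each fixed $\wa$ the family $\{\wflow^n_\wa\}_n$ is dominated in $L(\hori)$ by the fixed function $\eflow_{\wa[1]}$, hence uniformly integrable and, by the Dunford--Pettis theorem, relatively weakly sequentially compact; a diagonal argument over the countable set $\Routes'$ then gives a subsequence (not relabelled) with $\wflow^n_\wa\wto\bar\wflow_\wa$ in $L(\hori)$ for every $\wa$, where $\bar\wflow_\wa\ge0$ since the positive cone is weakly closed. The length-weighted estimate yields the uniform tail bound $\sum_{\wa:\abs{\wa}\ge N}\norm{\wflow^n_\wa}\le\norm{\eflow}/N$ for all $n$, and since a finite graph has only finitely many walks of length less than $N$, the mass of $\wflow^n$ in $\seql[1][\Routes'][L(\hori)]$ is uniformly concentrated on a finite set of short walks. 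Combining this with the coordinatewise weak convergence shows $\wflow^n\wto\bar\wflow$ in $\seql[1][\Routes'][L(\hori)]$: for $f\in\seql[\infty][\Routes'][L^\infty(\hori)]$ one splits $\dup{f}{\wflow^n-\bar\wflow}$ into the finite sum over walks of length $<N$ (which tends to $0$ by the coordinatewise convergence) plus a tail bounded by $2\norm{f}_\infty\norm{\eflow}/N$, which is arbitrarily small for $N$ large. Finally, weak lower semicontinuity of the norm together with Fatou's lemma give $\sum_{\wa}\abs{\wa}\,\norm{\bar\wflow_\wa}\le\norm{\eflow}$, so in particular $\bar\wflow\in\seql[1][\Routes'][L_+(\hori)]$.

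\textbf{Feasibility and optimality.} By \Crefsub{lem: elluExistenceProperties}{Linearity} each $\ell^u_\wa(\bar\wflow_\wa)$ exists, and $\sum_{\wa}\norm{\ell^u_\wa(\bar\wflow_\wa)}=\sum_{\wa}\abs{\wa}\,\norm{\bar\wflow_\wa}<\infty$, so $\bar\wflow\in\edom{\Routes'}$; since $\ofeas$ is sequentially weakly closed in $\edom{\Routes'}$ and $(\wflow^n)\subseteq\ofeas$, we obtain $\bar\wflow\in\ofeas$. By \Crefsub{lem: elluContinuity}{Sum} the constraint map is sequentially weak--weak continuous, hence $\sum_{\wa}\ell^u_\wa(\wflow^n_\wa)\wto\sum_{\wa}\ell^u_\wa(\bar\wflow_\wa)$ in $L(\hori)^\GA$, and since $\{v\in L(\hori)^\GA:v\le\eflow\}=\eflow-L_+(\hori)^\GA$ is convex and strongly, hence weakly, closed, the limit satisfies $\sum_{\wa}\ell^u_\wa(\bar\wflow_\wa)\le\eflow$; thus $\bar\wflow$ is feasible. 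Finally $\objfunc(\wflow^n)\to\objfunc(\bar\wflow)$ by sequential weak continuity of $\objfunc$, so $\objfunc(\bar\wflow)$ equals the supremum of $\objfunc$ over the feasible set — which is therefore attained (and finite) — i.e.\ $\bar\wflow$ solves \ref{opt: General}. The step I expect to be the crux is the compactness argument above: one cannot invoke reflexivity, and it is precisely the combination of coordinatewise domination by the fixed $L^1$-functions $\eflow_{\wa[1]}$ and the length-weighted bound $\sum_\wa\abs{\wa}\,\norm{\wflow^n_\wa}\le\norm{\eflow}$ (which, using finiteness of the graph, rules out mass escaping along ever longer walks) that lets one interchange the sum over $\Routes'$ with the weak limit; everything else is routine bookkeeping with the structural facts about $\ell^u$ already established.
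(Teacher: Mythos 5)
Your argument is correct and follows the same high-level Weierstrass strategy as the paper, but the decisive compactness step is carried out by a genuinely different route. The paper transports the problem to $L(\hori\times\Routes')$ via the isomorphism $\phi$ of \Cref{claim: phiHomo} and shows that the whole feasible set has weakly compact closure by verifying an abstract Dunford--Pettis-type criterion (norm boundedness, uniformly absolutely continuous integrals, and concentration on a set of finite $\sigma\otimes\eta$-measure), passing to sequential compactness via Eberlein--\v{S}mulian; the concentration condition is obtained by a cycle-multiplicity count (walks containing a simple cycle $c$ at least $k$ times carry total inflow at most $\min_{\arc\in c}\norm{\eflow_\arc}/k$, and the walks avoiding all such repetitions form a finite set). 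You instead extract the weak limit by hand: the pointwise domination $\wflow^n_\wa\le\eflow_{\wa[1]}$ gives coordinatewise relative weak sequential compactness by Dunford--Pettis, a diagonal argument over the countable $\Routes'$ produces a coordinatewise limit, and the length-weighted bound $\sum_{\wa}\abs{\wa}\,\norm{\wflow^n_\wa}\le\norm{\eflow}$ (mass conservation of the maps $\ell^u_{\wa,j}$) supplies the uniform tail bound that upgrades this to weak convergence in $\seql[1][\Routes'][L(\hori)]$; this encodes essentially the same information as the paper's cycle count (in a finite graph, long walks must repeat simple cycles), but is obtained more directly and lets you dispense with $\phi$, the abstract compactness criterion and Eberlein--\v{S}mulian. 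The remaining steps (existence of $\ell^u_\wa(\bar\wflow_\wa)$ from the weak closedness of the maximal domain in \Cref{lem: elluExistenceProperties}, membership in $\edom{\Routes'}$, sequential weak closedness of $\ofeas$, weak-weak continuity of the constraint map from \Cref{lem: elluContinuity}, weak closedness of $\{v\le\eflow\}$, and the Weierstrass limit argument) coincide with the paper's. Two small points you should make explicit, both of which the paper needs as well: the ``head'' $\{\wa\in\Routes':\abs{\wa}<N\}$ is a finite subset of $\Routes'$ only because, in addition to the graph being finite, $\Routes'$ is assumed to contain each walk only finitely often; and the domination $\wflow_\wa\le\eflow_{\wa[1]}$ (like the paper's own bound via $\ell^u_{\wa,1}$) presupposes that every walk in $\Routes'$ has at least one edge, which you do flag.
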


 For the proof of \Cref{thm: ExistenceOptSol}, we aim to apply a Weierstrass-type of argument, requiring us to show that the set of feasible solutions is weakly compact in $\seql[1][\Routes'][L_+(\hori)]$. This is not a trivial task as 
 $\ell_{\Routes'}$ is not weakly continuous in general (cf.~\Cref{exa: DiscontinuityNL}), and hence, we first have to establish in \Cref{lem: elluContinuity} that $\ell_{\Routes'}$ satisfies a suitable form of upper semicontinuity.

\paragraph{\Auto Node Balances} 
We introduce the concept of the \auto node balance, i.e.\ the node balance of an arbitrary vector $\g \in L_+(\hori)^\GA$ as defined in \Cref{def: FlowBalaSDFlow} \wrt the fixed traversal time function~$\trav$.
With this, we can formally define \auto (edge) $\source$,$\dest$-flows as those  $\eflow \in L_+(\hori)^\GA$ who have a nonnegative \auto net outflow rate at $\source$, \aauto net outflow rate of $0$ at all $v \neq \source,\dest$ (i.e.\ satisfy \auto flow conservation) and a nonpositive \auto node balance at $\dest$. This concept and the following insights play a key role in the formalization of \Cref{alg: FlowDecompositionPseudo} 
as a crucial aspect of the latter is   that any appearing $\g^k$ during the execution of the algorithm and, in particular, their limit is \aauto $\source$,$\dest$-flow.

\begin{lemma}
    \label{lem: flowconW'}
    	Consider an arbitrary countable collection of walks $\Routes'$, a corresponding walk inflow rate vector $h \in \edom{\Routes'}$ with   $\g\coloneq \ell_{\Routes'}(h)$  and a node $v \in \GV$. 
	Then we have for all $\mathfrak T \in \mathcal{B}(\hori)$:
	\begin{align*} 
		\op_v \g(\mathfrak T) =  \sum_{\wa \in \Routes'_{v+}}\int_{\mathfrak T}  h_\wa  \di\leb  - \sum_{\wa \in \Routes'_{v-}}\int_{\arr_{\wa,|\wa|+1}^{-1}(\mathfrak T)}  h_\wa \di\leb ,
	\end{align*}
	where $\Routes'_{v+}$ denotes the set of walks in $\Routes'$ starting at $v$ while $\Routes'_{v-}$ denotes the set of walks in $\Routes'$ ending in  $v$. 
	
	If, additionally, $\ell_{\wa,\abs{\wa}+1}(h_\wa)$ exist for all $\wa \in \Routes'_{v-}$, we even  have for all $\mathfrak T \in \mathcal{B}(\hori)$:
	\begin{align*}
		\op_v \g(\mathfrak T) = \sum_{\wa \in \Routes'_{v+}}\int_{\mathfrak T} h_\wa \di\leb - \sum_{\wa \in \Routes'_{v-}} \int_{\mathfrak T} \ell_{\wa,\abs{\wa}+1}(h_\wa) \di\leb.
	\end{align*} 
\end{lemma}

\paragraph{Properties of \Auto $\source$,$\dest$-Flows}

The two main ingredients 
of the proof of the flow decomposition theorem  state that \aauto $\source$,$\dest$-flow  has either a positive net outflow rate at $\source$ and  admits 
a  flow-carrying \stwalk{} (\Cref{lem: ExistenceOfFlowCarryingWalk}), or, is a dynamic circulation (\Cref{lem: FlowConEveryNode}) and can be decomposed into zero-cycle inflow rates (\Cref{lem: ZeroCycleDecomposition}). 
From this, the correctness of  \Cref{alg: FlowDecompositionPseudo}  follows as the limit of the \auto $\source$,$\dest$-flows  $\g^k,k\in \N$ can not admit any flow-carrying \stwalk~$\wa_k$ due to the maximality of the corresponding  $h_{\wa_k}$. 

\begin{theorem}\label{lem: ZeroCycleDecomposition}
	Any  \aauto dynamic circulation $\eflow$ can be decomposed into zero-cycle inflow rates ${h}_c \in L_+(\hori), c \in \SimpCyc$ via $\eflow_\arc = \sum_{c \in \SimpCyc} \ell_{c,\arc}({h}_c) = \sum_{c \in \SimpCyc:\arc \in c}  {h}_c$ for all $\arc \in \GA$. 
\end{theorem}

We show this \namecref{lem: ZeroCycleDecomposition} by applying \Cref{lem: FlowConEveryNode} from which it follows that the  edge travel times are (almost) always zero when there is inflow into an edge. 
This reduces the problem to the static decomposition problem.

\begin{theorem}\label{lem: ExistenceOfFlowCarryingWalk}
Let $\eflow \in L_+(\hori)^\GA$ be an \auto edge $\source$,$\dest$-flow with a positive node outflow rate ${\inflow}_s\in L_+(\hori) \setminus\{0\}$ at $s$. 
    Then, there exist 
    an \stwalk{} $\wa \in \hat{\Routes}$ and a walk inflow rate $\wflow_\wa \in \edom{\wa}\setminus\{0\}$ with $\ell_{\wa}(\wflow_\wa) \leq \eflow$, $\wflow_\wa \leq {\inflow}_s$ as well as $\op_\dest\ell_\wa(\wflow_\wa) \geq \op_\dest \eflow$.
\end{theorem}
For this \namecref{lem: ExistenceOfFlowCarryingWalk}, we refer to the introduction for a proof sketch.

\section{Flow Decomposition}\label{sec:FlowDecomp}

We now come to our main decomposition question and first consider the case of general flow decompositions  into \stwalk s and zero-cycles. Afterwards, we turn to our characterization of those edge flows that even admit a pure flow decomposition.

\subsection{General Flow Decomposition}\label{sec:FlowDecomp:General}

In this subsection, we show the promised flow decomposition \Cref{thm: FLowDecompModel}. In fact, we will prove the analogous version (\Cref{thm: FlowDecomp}) for \auto flows, stating that any \auto $\source$,$\dest$-flow has \aauto flow decomposition. 
We remark again that any (non-\auto[)] edge $\source$,$\dest$-flow $\g$ is in particular \aauto   $\source$,$\dest$-flow \wrt $\trav(\g,\cdot)$ and, hence, \Cref{thm: FLowDecompModel} follows immediately from \Cref{thm: FlowDecomp} using \Cref{lem: NonParaIsPara}.    
 Note that proving the \auto version \Cref{thm: FlowDecomp} of \Cref{thm: FLowDecompModel} does not add any additional layer of complexity to the proof. This is because from the second iteration onwards the flow decomposition algorithm used for the proof has to compute \aauto flow decomposition of \aauto $\source$,$\dest$-flow anyway (namely, of $\g^2$  \wrt $\trav(\g,\cdot)$).

\begin{algorithm}[H]
	\caption{Flow Decomposition Algorithm}\label{alg: FlowDecomposition}
	\Input{\Aauto $\source$,$\dest$-flow  $\g\in L_+(\hori)^\GA$  with  node outflow rate ${\inflow}_s\in L_+(\hori)$ at $s$}
	\Output{Walk inflow rates $\wflow \in L_+(\hori)^{\hat{\Routes}}$  such that $ \eflow-\sum_{\wa \in \hat{\Routes}}\ell_\wa(\wflow_\wa)$ is nonnegative and fulfills flow conservation at all nodes}
			
		fix some order on the set of all \stwalk s $\hat{\Routes} = \{\wa_k\}_{k \in \N}$ \label{line: enumerate}
  
        set $\eflow^1 \leftarrow \eflow$, $\inflow_s^1 \leftarrow \inflow_s$
        
        \For{$k \in \N$}{
        
        find an optimal solution $\wflow_{\wa_k}$ of
    \begin{align} 
        \max\; & \int_\hori\wflow_{\wa_k}\di\sigma  \tag{$\mathrm{FD}^k$}\label{opt: FlowDecomp}\\
        \text{s.t.: }&\ell_{\wa_k}(\wflow_{\wa_k}) \leq \eflow^k \label{ineq: FD} \\
        &\wflow_{\wa_k} \leq \inflow^k_s \label{ineq: FD2}\\
        &\op_\dest \big(\ell_{\wa_k}(\wflow_{\wa_k})\big) \geq \op_\dest \eflow^k  \label{ineq: FD3}\\
                    &\wflow_{\wa_k} \in \edom{\wa_k} \nonumber
    \end{align}
        
        $\eflow^{k+1} \leftarrow \eflow^k - \ell_{\wa_k}(\wflow_{\wa_k})$ 
       
         $\inflow^{k+1}_s \leftarrow \inflow^k_s- \wflow_{\wa_k}$

        }

    \KwRet{$\wflow_{\wa_k}, k \in \N$}
\end{algorithm}

\begin{theorem}\label{thm: FlowDecomp}
    Every \auto $\source$,$\dest$-flow has \aauto flow decomposition.
\end{theorem}

The proof of this \namecref{thm: FlowDecomp} mainly consists of showing the correctness of \Cref{alg: FlowDecomposition}, that is, showing that after countably many steps the algorithm returns a walk flow $\wflow$ such that the difference between $\eflow$ and the  edge flow induced by~$\wflow$ under $\trav$ is an  edge flow satisfying flow conservation at all nodes. The \namecref{thm: FlowDecomp} then follows by applying \Cref{lem: ZeroCycleDecomposition} which allows us to decompose the remaining flow into  zero-cycle inflows.

\begin{proof}
We split the main part of the proof into two separate steps, showing the well-definedness and correctness of \Cref{alg: FlowDecomposition} separately:
\begin{structuredproof}
    \proofitem{Well-definedness}
        We show that \eqref{opt: FlowDecomp} always has an optimal solution by verifying that \Cref{thm: ExistenceOptSol} is applicable here: The objective is clearly (sequentially) weakly continuous. Furthermore, the set 
 \begin{align*}
     \ofeas:= \big\{ \wflow_{\wa_k} \in \edom{\wa_k}\mid \wflow_{\wa_k} \leq \inflow^k_s, \op_\dest \big(\ell_{\wa_k}(\wflow_{\wa_k})\big) \geq \op_\dest \eflow^k  \big\}
 \end{align*}
 is  sequentially weakly closed in $\edom{\wa_k}$. 
To see this, consider a weakly converging sequence  $(\wflow_{\wa_k}^n)_{n\in \N}$ contained in   $\ofeas$  with 
$\wflow_{\wa_k}^n \wto \wflow_{\wa_k}^* \in \edom{{\{\wa_k\}}}$. Then, we have for arbitrary $\mathfrak T \in \mathcal{B}(\hori)$ 
\begin{align*}
   \int_{\mathfrak T}\inflow^k_s\di\sigma \geq   \int_{\mathfrak T}\wflow_{\wa_k}^n\di\sigma \to \int_{\mathfrak T}\wflow_{\wa_k}^*\di\sigma 
\end{align*}
showing that $\wflow_{\wa_k}^*\leq \inflow^k_s$. 
Similarly, we get by \Cref{lem: elluContinuity} that $\ell_{\wa_k}(\wflow_{\wa_k}^n)\wto \ell_{\wa_k}(\wflow_{\wa_k}^*)$ and hence for an arbitrary $\mathfrak T \in \mathcal{B}(\hori)$:
\begin{align*}
         \op_\dest \eflow^k (\mathfrak T) &\leq \op_\dest\big(\ell_{\wa_k}(\wflow_{\wa_k}^n)\big) (\mathfrak T) \eqperdef \sum_{\arc \in \delta^+(\dest)} \int_{\mathfrak T} \ell_{\wa_k,\arc}(\wflow_{\wa_k}^n) \di \sigma -  \sum_{\arc \in \delta^-(\dest)} \int_{\exit_\arc^{-1}(\mathfrak T)}\ell_{\wa_k,\arc}(\wflow_{\wa_k}^n)\di \sigma  \\
      &\to \sum_{\arc \in \delta^+(\dest)} \int_{\mathfrak T} \ell_{\wa_k,\arc}(\wflow_{\wa_k}^*) \di \sigma -  \sum_{\arc \in \delta^-(\dest)} \int_{\exit_\arc^{-1}(\mathfrak T)}\ell_{\wa_k,\arc}(\wflow_{\wa_k}^*)\di \sigma \defpereq  \op_\dest\big(\ell_{\wa_k}(\wflow_{\wa_k}^*)\big) (\mathfrak T)
\end{align*}
which shows that $\op_\dest \big(\ell_{\wa_k}(\wflow_{\wa_k}^*)\big) \geq \op_\dest \eflow^k  $.

Finally, the problem always has a feasible solution given by $0$ since we always have $\inflow^k_s \in L_+(\hori), \op_\dest \eflow^k \leq 0$ as well as $\eflow^k\in L_+(\hori)^\GA$ for all $k \in \N$. All these properties follow by a straight forward induction over~$k$ and the feasibility of $\wflow_{\wa_{k-1}}$ for~(\hyperref[opt: FlowDecomp]{$\mathrm{FD}^{k-1}$})  (i.e.~the fulfillment of the inequalities in~\eqref{ineq: FD}, \eqref{ineq: FD2} and~\eqref{ineq: FD3}). 

Therefore, \Cref{thm: ExistenceOptSol} ensures that \eqref{opt: FlowDecomp} always has an optimal solution and, consequently, the algorithm is well-defined.
        \proofitem{Correctness} 
We have to show that $\sum_{\wa \in \hat{\Routes}}\ell_\wa(h_\wa)$ is well-defined, i.e.\ converges absolutely, and that $\eflow-\sum_{\wa \in \hat{\Routes}}\ell_\wa(h_\wa)$ is nonnegative and fulfills flow conservation at all nodes. 

For the well-definedness, we observe that the series $\sum_{k \in \N}\ell_{\wa_k}(\wflow_{\wa_k})$ is bounded by~$0$ and~$\eflow$. The lower bound holds since all $h_{\wa_k}$ are nonnegative and, subsequently, so are $\ell_{\wa_k}(\wflow_{\wa_k})$ by \Cref{lem: ellOrderPreserving}. The upper bound follows from the fact that we have for any $k^*\in \N$ that 
$\eflow-\sum_{k < k^*}\ell_{\wa_k}(\wflow_{\wa_k})= \eflow^{k^*} \in L_+(\hori)^\GA$ and, hence, 
$ \sum_{k < k^*}\ell_{\wa_k}(\wflow_{\wa_k})\leq \eflow$. 
Thus, the pointwise limit of the series exists almost everywhere and, by Lebesgue's dominated convergence theorem,  the series itself converges (absolutely) in $L_+(\hori)^\GA$. Moreover, we get
\begin{align*}
    \eflow^* \coloneqq \lim_{k^*\to \infty}\eflow^{k^*}= \lim_{k^*\to \infty}\eflow - \sum_{k < k^*}\ell_{\wa_k}(\wflow_{\wa_k}) =  \eflow - \sum_{k \in \N}\ell_{\wa_k}(\wflow_{\wa_k}) \in L_+(\hori)^\GA,
\end{align*} 
which, in particular, also implies that $\eflow^* = \eflow-\sum_{\wa \in \hat{\Routes}}\ell_\wa(h_\wa)$ is nonnegative.

Hence, it only remains to show flow conservation at all nodes. 
By \Cref{lem: flowconW'}, 
the series $\sum_{k \in \N}\ell_{\wa_k}(\wflow_{\wa_k})$ fulfills flow conservation at all $v\neq s,\dest$ and, subsequently, so does $\eflow^*$. Next, we  
make the following observation: From the above, 
we can deduce that the net outflow rate $\inflow_s^*$ of $\eflow^*$ at $s$ is nonnegative and the outflow 
$\op_\dest\eflow^*$  is nonpositive. This is due to the fact that by  \Cref{lem: flowconW'}  the net outflow rate of the series $\sum_{k \in \N}\ell_{\wa_k}(\wflow_{\wa_k})$  at $s$ equals $\sum_{k \in \N}\wflow_{\wa_k}$ while the net outflow    at $\dest$ equals $\sum_{k \in \N} \op_\dest \big(\ell_{\wa_k}(\wflow_{\wa_k})\big)$ and we have $\sum_{k \leq k^*}\wflow_{\wa_k} \leq \inflow_s$ by $\inflow^{k^*+1}_s\in L_+(\hori)$ and 
$\sum_{k \leq k^*}  \op_\dest \big(\ell_{\wa_k}(\wflow_{\wa_k})\big) \geq \op_\dest\eflow$ by $\op_\dest \eflow^{k^*+1} \leq 0$  
for all $k^* \in \N$. Therefore, $\eflow^*$ is an \auto edge $\source$,$\dest$-flow. 

Now assume for the sake of a contradiction that flow conservation does not hold at $\source$, i.e.~$\inflow^*_s \in L_+(\hori)\setminus\{0\}$. 
By \Cref{lem: ExistenceOfFlowCarryingWalk}, there exists a $\tilde{k} \in \N$ and  $\Tilde{\wflow}_{\wa_{\tilde{k}}}\in L_+(\hori)\setminus\{0\}$ with $\ell_{\wa_{\tilde{k}}}(\Tilde{\wflow}_{\wa_{\tilde{k}}}) \leq \eflow^*$ and $\Tilde{\wflow}_{\wa_{\tilde{k}}} \leq \inflow_\source^*$ as well as 
$\op_\dest\ell_{\wa_{\tilde{k}}}(\Tilde{\wflow}_{\wa_{\tilde{k}}})  \geq \op_\dest \eflow^*$.
Since $\eflow^*\leq \eflow^{{\tilde{k}}+1}= \eflow^{\tilde{k}} - \ell_{\wa_{\tilde{k}}}(\wflow_{\wa_{\tilde{k}}})$, $\inflow^*_s\leq \inflow^{{\tilde{k}}+1}_s = \inflow^{{\tilde{k}}}_s - \wflow_{\wa_{\tilde{k}}} $ and $ \op_\dest \eflow^* \geq   \op_\dest \eflow^{\tilde{k}+1} = \op_\dest \eflow^{\tilde{k}} - \op_\dest\ell_{\wa_{\tilde{k}}}(\Tilde{\wflow}_{\wa_{\tilde{k}}}) $, 
 the sum $\wflow_{\wa_{\tilde k}} + \tilde{\wflow}_{\wa_{\tilde k}}$ is feasible for~(\hyperref[opt: FlowDecomp]{$\mathrm{FD}^{\tilde k}$}), contradicting the optimality of $\wflow_{\wa_{\tilde{k}}}$. 

Thus, $\eflow^*$ fulfills flow conservation at all nodes $v\neq \dest$ and hence, by \Cref{lem: FlowConEveryNode}, at $\dest$ as well, which concludes the correctness of the algorithm. 
\end{structuredproof}
With this, we can now apply \Cref{lem: ZeroCycleDecomposition} to $\eflow^*= \eflow- \sum_{\wa \in \hat{\Routes}}h_{\wa}$ and get zero-cycle inflow rates $h_c, c \in \mathcal{C}$ such that $\sum_{c \in \mathcal{C}:\arc \in c}h_c = \eflow^*$. By combining these zero-cycle inflow rates $(h_c) \in L_+(\hori)^\mathcal{C}$ with the \stwalk{} inflow rates $(h_{\wa_k})_{k \in \N} \in L_+(\hori)^{\hat{\Routes}}$ produced by the algorithm we obtain a walk inflow vector~$h \in L_+(\hori)^{\hat{\Routes}\cup\mathcal{C}}$ which is a flow decomposition of~$\eflow$.
\end{proof}

\begin{remark}[Finite Execution of \Cref{alg: FlowDecomposition}]
    In case that  $\g$ has finite support and all travel times are lower bounded by some $\tmin > 0$ on that support,
    we can adjust 
    \Cref{alg: FlowDecomposition} suitably 
    such that it terminates after a finite number of steps:  
    By enumerating the \stwalk s in \Cref{line: enumerate} in an ascending order with respect to their number of edges (i.e.~$k< k' \implies \abs{\wa_k} \leq \abs{\wa_{k'}}$), we ensure that there exists some $k^* \in \N$ such that any particle entering a walk $\wa_k,k\geq k^*$ during the support of~$\g$ would enter the last edge of~$\wa_k$ only after the support of~$\g$ ended. This results in \eqref{opt: FlowDecomp} for $k\geq k^*$ to have $\wflow_{\wa_k}=0$ as the only feasible solution. Hence, \Cref{alg: FlowDecomposition} can be stopped at the $k^*$-th iteration.  
\end{remark}

\subsection{Pure Flow Decomposition }\label{sec:FlowDecomp:Pure}

In this section, we investigate the question of when \aauto $\source$,$\dest$-flow admits a pure \auto $\source$,$\dest$-flow decomposition. 
Remark again that, by \Cref{lem: NonParaIsPara}, this then also answers the same question for non-\auto flows (\Cref{thm: PureFLowDecompModel}) as every (non-\auto[)] edge $\source$,$\dest$-flow $\g$ is in particular \aauto $\source$,$\dest$-flow \wrt $\trav(\g,\cdot)$. 
We will now prove the following:

\begin{theorem}\label{thm: PureFlowDecompIntuitive}
    \Aauto $\source$,$\dest$-flow $\eflow \in L_+(\hori)^\GA$ with net outflow rate $\inflow_\dest$ at $\dest$ has \aauto pure flow decomposition   if and only if the following property holds: 
    For every zero-cycle inflow rate $h_c'\in L_+(\hori)$ into \emph{any} (not necessary simple) cycle $c$ with $\ell_c(h'_c) \leq \eflow$, we have for almost all $t \in \hori$ with $h_c'(t)>0$ that (at least) one of the following conditions is satisfied:
    \begin{thmparts}
        \item  $\dest \in c$ and  $ \inflow_\dest (t)<0$. \label[thmpart]{thm: PureFlowDecompIntuitive: Dest}
        \item  There exists an edge $\arc=(v,v') \in \edgesFrom{c}$ with  $\eflow_\arc(t)>0$. \label[thmpart]{thm: PureFlowDecompIntuitive: NotDest}
    \end{thmparts}
\end{theorem}

Intuitively, the above two conditions \ref{thm: PureFlowDecompIntuitive: Dest} and \ref{thm: PureFlowDecompIntuitive: NotDest} are necessary and sufficient conditions for the zero-cycle inflow $h_c$ to not be disconnected from the remaining flow of $\g$. It is clear that \ref{thm: PureFlowDecompIntuitive: NotDest} ensures the connectedness. For 
\ref{thm: PureFlowDecompIntuitive: Dest}, note that in case of its fulfillment, the positive \emph{net} inflow of $\dest$ implies that 
there has to arrive flow from $\g$ that does not belong to $h_c$ as the latter has no impact on the net inflow of $\dest$. 
We remark that it does not suffice to only consider simple zero-cycles in \Cref{thm: PureFlowDecompIntuitive} as the following 
 example shows: 

\begin{figure}
    \centering
    \BigPicture[0]{
    \begin{tikzpicture}
        \node[namedVertex](s)at(0,0){$\source$};
        \node[namedVertex](d)at(9,0){$\dest$};
        \node[namedVertex](v1)at(3,0){$v_1$};
        \node[namedVertex](v2)at(6,0){$v_2$};
        \node[namedVertex](v3)at(4.5,2){$v_3$};
        \node[namedVertex](v4)at(4.5,-2){$v_4$};

        \draw[edge](s)--node[above]{$0$}(v1);
        \draw[edge](v1)--node[above]{$2\cdot\CharF[[0,1]]$}(v2);
        \draw[edge](v2)--node[above]{$0$}(d);
        \draw[edge](v2)--node[anchor=south west]{$\CharF[[0,1]]$}(v3);
        \draw[edge](v3)--node[anchor=south east]{$\CharF[[0,1]]$}(v1);
        \draw[edge](v2)--node[anchor=north west]{$\CharF[[0,1]]$}(v4);
        \draw[edge](v4)--node[anchor=north east]{$\CharF[[0,1]]$}(v1);
    \end{tikzpicture}
    }
    \caption{An edge $\source$,$\dest$-flow that has no pure flow decomposition even though every \emph{simple} zero-cycle has a flow-carrying outgoing edge. All edges in this network have a constant traversal time of zero. The edge-labels denote the edge inflow rates of the given flow.}
    \label{fig:PureFlowDecompNonSimpleNec}
\end{figure}

\begin{example}
    In \Cref{fig:PureFlowDecompNonSimpleNec} we given an example for an edge $\source$,$\dest$-flow wherein every simple zero-cycle (namely the cycles $((v_1,v_2),(v_2,v_3),(v_3,v_1))$ and $((v_1,v_2),(v_2,v_4),(v_4,v_1))$ during the interval $[0,1]$) has an outgoing flow-carrying edge ($(v_2,v_4)$ and $(v_2,v_3)$, respectively). Nevertheless, the given flow clearly has no pure flow decomposition (as there is never any outflow from the source).
\end{example}

Instead of proving \Cref{thm: PureFlowDecompIntuitive} directly, we will first show with \Cref{thm: PureFlowDecomp} another characterization using connected components of zero-cycles from which the above \namecref{thm: PureFlowDecompIntuitive} will then follow almost immediately.

In order to formally state this \namecref{thm: PureFlowDecomp}, we require some additional terminology: 
Consider a set of zero-cycle inflow rates $h_c,c \in \mathcal{C}$ and fix some (arbitrary) representatives of those. We define 
for all $t \in \hori$ the set $\mathcal{C}(t):=\{c \in \mathcal{C}\mid  {h}_c(t)>0 \text{ and }\trav_{\arc}(t) = 0, \arc \in c\}$. 
Let  $C_1^t,\ldots,C_{m(t)}^t$   for  a $\n2(t) \in \N$ be the 
partition of $\mathcal{C}(t)$ into maximal connected components, that is, a partition with the following two properties:
\begin{itemize}
    \item For every $j \in\{1,\ldots,m(t)\}$ the edge set $\GA_{C_j^t}:=\set{\arc \in \GA | \exists c \in C_j^t: e \in c}$ induces a connected subgraph of~$G$ and 
    \item the node sets $\GV_{C_j^t}:=\set{v \in \GV | \exists c \in C_j^t: v \in c}$ are disjoint, i.e.\ $\GV_{C_j^t} \cap \GV_{C_{j'}^t} = \emptyset$ for all $j'\neq j$. 
\end{itemize} 
Next, we consider the set $\{C \subseteq \mathcal{C} \mid \sigma(\mathfrak T_C)>0\} \subseteq 2^\mathcal{C}$ where $\mathfrak T_{C} := \{t \in \hori \mid  \exists\, j: C = C_j^t \}$. We denote this set via $\{C_{\n1}\}_{\n1 \in \capn1}$ 
where   $\capn1\subseteq \N$ is a finite family of indices.  Note that the sets $\mathfrak T_{C}$  are measurable as they can be written as follows: 
    $\mathfrak T_{C} = \bigcap_{c \in C}\mathfrak T_{c} \cap \bigcap_{c \in \bar{C}\setminus C}(\hori \setminus\mathfrak T_c)$ where
    $\mathfrak T_c := \{t \in \hori \mid {h}_c(t)>0 \text{ and }\trav_{\arc}(t) = 0, \arc \in c\}$ and $\Bar{C} = \{c \in \mathcal{C}\mid \exists \,c' \in C: c \text{ shares a node with }c'\}$ with $\mathfrak T_c$ being measurable due to ${h}$ and $\trav$ being measurable. Furthermore, we denote for any $\n1 \in \capn1$ by $\GV_{C_{\n1}}:=\{v \in \GV\mid\exists c \in C_{\n1}: v \in c \}$ the nodes contained in $C_{\n1}$ and analogously by  $\GA_{C_{\n1}}:=\{\arc \in \GA\mid\exists c \in C_{\n1}: \arc \in c \}$ the edges  contained in $C_{\n1}$.
    We denote by $\edgesFrom{C_{\n1}}$ the edges leaving the connected component induced by $C_{\n1}$, i.e., $\edgesFrom{C_{\n1}}\coloneq \Set{(v,v') \in \GA\mid v \in\GV_{C_{\n1}}, (v,v')\notin  \GA_{C_{\n1}}}$.
    Finally, we remark that the above sets are uniquely determined (up to enumeration) by the support of the representative $h_c,c \in \mathcal{C}$. 
    
\begin{definition}\label{def: ConnectedComp}
    In the situation as described above, we call $\mathcal{C}(t)$ the set of active cycles at $t \in \hori$,    $\{C_{\n1}\}_{\n1 \in \capn1} $ the resulting  connected components and   $\mathfrak T_{C_{\n1}},\n1\in\capn1$ the set of times at which the connected components are active. Similarly, $\mathfrak T_{c}$ for any $c \in \mathcal{C}$ is the set of times at which the cycle~$c$ is active.
\end{definition}
 With this notation at hand, we can state in the following another characterization of edge flows with flow decompositions purely into \stwalk s, from which \Cref{thm: PureFlowDecompIntuitive} will follow almost immediately.  
 The proof of the following theorem is given after the proof of \Cref{thm: PureFlowDecompIntuitive}.

 \begin{theorem}\label{thm: PureFlowDecomp}
    Consider \aauto $\source$,$\dest$-flow $\eflow \in L_+(\hori)^\GA$ with a corresponding 
    flow decomposition $h_\wa,\wa \in  {\hat{\Routes}},h_c,c \in \mathcal{C}$, an outflow rate $\inflow_\dest$ and an arbitrary representative of $h$ together with the  sets defined in \Cref{def: ConnectedComp}. 
    Then  $\eflow$ has \aauto flow decomposition   purely into \stwalk s if and only if for every $\n1 \in \capn1$ and almost all $t \in \mathfrak T_{C_{\n1}}$ (at least) one of the following statements is true 
\begin{thmparts}
    \item $\dest \in \GV_{C_{\n1}}$ and  $\inflow_\dest (t)<0$. \label[thmpart]{thm: PureFlowDecomp: Dest} 
    \item there exists an edge $\arc=(v,v') \in \edgesFrom{C_{\n1}}$ with $\eflow_\arc(t)>0$. \label[thmpart]{thm: PureFlowDecomp: NotDest}
\end{thmparts}
\end{theorem}
 
With additional help from \Cref{thm: FlowDecomp}, we can now deduce \Cref{thm: PureFlowDecompIntuitive} from the above  \Cref{thm: PureFlowDecomp} as follows:
If the conditions of \Cref{thm: PureFlowDecompIntuitive} are satisfied, we can use, for any $\n1 \in \capn1$, a cycle~$c$ traversing all edges in $E_{C_n}$ and an inflow rate $h'_c$ which is positive on $\mathfrak T_{C_{\n1}}$ to deduce from this that the corresponding conditions of \Cref{thm: PureFlowDecomp} are satisfied as well. This then gives us the existence of a pure flow decomposition. For the other direction, we are given an edge flow $\eflow$ for which we know that there exists a pure flow decomposition and some zero-cycle inflow rate $h'_c$ with an induced edge flow upper bounded by~$\eflow$. After removing this induced flow on~$c$ from~$\eflow$, 
we can get a (non-pure) flow decomposition for the remaining flow from \Cref{thm: FlowDecomp} while, at the same time, decomposing the removed flow into zero-cycle inflows into simple cycles only. By recombining these two decomposition, we get an alternative decomposition of the given edge flow~$\eflow$ in which it is guaranteed that for every time where $h'_c$ is positive, all edges of~$c$ are part of some common component $C_{\n1}$. Since we know that $\eflow$ has a pure flow decomposition, we get from \Cref{thm: PureFlowDecomp}, that the two conditions of that \namecref{thm: PureFlowDecomp} must be satisfied for the constructed decomposition. From this, we can then deduce that the corresponding conditions in \Cref{thm: PureFlowDecompIntuitive} hold as well.

\begin{proof}[Proof of \Cref{thm: PureFlowDecompIntuitive}] 
 We show both directions separately:
     \begin{structuredproof}
        \proofitem{``$\Leftarrow$"} 
        Let $h$ be any flow decomposition of~$\g$, which exists by \Cref{thm: FlowDecomp}. Now, consider an arbitrary representative of~$h$ with the corresponding sets defined in \Cref{def: ConnectedComp}. We will verify that the conditions \ref{thm: PureFlowDecompIntuitive: Dest} and \ref{thm: PureFlowDecompIntuitive: NotDest} stated in \Cref{thm: PureFlowDecomp} are fulfilled. Let $ \n1 \in \capn1$ be arbitrary 
        and let $c$ be a cycle that contains every cycle in $C_{\n1}$ once. Moreover, consider a corresponding inflow rate $h_c'$ that fulfills $\ell_c(h_c') \leq \sum_{c \in C_{\n1}:\arc \in c}h_c \leq \eflow_\arc$ for all $\arc \in c$ and $h_c'(t)>0$ for a.e.~$t\in \mathfrak T_{C_{\n1}}$. Note that this is possible as $h_c(t)>0$ for a.e.~$t \in \mathfrak T_{C_{\n1}}$ and all $c \in C_{\n1}$. 
        The fulfillment of  \Cref{thm: PureFlowDecomp: Dest} or \Cref{thm: PureFlowDecomp: NotDest} is then a direct consequence of our starting assumption that  either \Cref{thm: PureFlowDecompIntuitive: Dest} or \Cref{thm: PureFlowDecompIntuitive: NotDest} is fulfilled. 
        Hence, we can apply \Cref{thm: PureFlowDecomp} to get the existence of a pure flow decomposition which finishes this direction of the proof. 
        \proofitem{``$\Rightarrow$"} 
         Consider a  zero-cycle inflow rate $h_c'\in L_+(\hori)$ with $\ell_c(h'_c) \leq \eflow$. Then, $\hat{\eflow}:= \eflow- \ell_c(h'_c)$ has a flow decomposition $\hat{h}$  by \Cref{thm: FlowDecomp}. Now, choose simple cycles $c_1,\dots,c_m \in \mathcal{C}$ together with multiplicities $n_1,\dots,n_m \in \N$ such that taken together they make up the cycle~$c$ in the sense that we have $\sum_{k: \arc \in c_k}n_k = \abs{\set{j \in [\abs{c}] | c[j]=\arc}}$ for every edge $\arc \in \GA$. Moreover, we define zero-cycle inflow rates $h'_{c_k} \coloneqq n_k \cdot h'_c$. Then, we clearly have 
            \[\sum_{j \leq \abs{c}:c[j]=\arc} h'_c = \sum_{k \leq m: \arc \in c_k}h'_{c_k} \text{ for all } \arc \in \GA\]
        as well as $h'_c(t) > 0 \Leftrightarrow (\forall k: h'_{c_k}(t)>0)$ for almost all $t \in \R$. We choose for the remainder of the proof representatives that fulfill this property for all $t \in \hori$. 
        By adding, for any $k\leq \n2$, to (an arbitrary representative of) $\hat{h}_{c_{k}}$ the respective (above representative of the) zero-cycle inflow rate ${h}'_{c_{k}}$, we arrive at a (representative of a) flow 
        decomposition~$h$ of~$\eflow$ with ${h}_{c_{k}}(t)\geq {h}'_{c_{k}}(t)$ for all $t\in\hori,k\leq \n2$. 
        The latter, in particular, fulfills ${h}_{c_{k}}(t)\geq h'_{c_{k}}(t)>0,k\leq \n2$ for all $t$ with $h'_c(t)>0$ by the choice of the representative of~$h'$. 
        Since there exists a pure flow decomposition of~$\eflow$, we can  apply \Cref{thm: PureFlowDecomp} \wrt the representative of $h$ we constructed. 
        Choose, then, a representative of $\eflow$ that fulfills $h_{\tilde{c}}(t)>0 \implies \eflow_\arc(t)>0$ for all $\arc \in {\tilde{c}}$ and ${\tilde{c}} \in \mathcal{C}$. 
 		For every~$t$ with $h_c'(t)>0$, we now find $\n1 \in \capn1$ with $c_{k} \in C_{\n1}$ for all $ k\leq \n2$ and  such that $t \in \mathfrak T_{C_{\n1}}$. 
		Hence, $\{t\in \hori \mid h_c'(t)>0 \} \subseteq \bigcup_{\n1 \in \capn1}\mathfrak T_{C_{\n1}}$ and, therefore, 
		for almost all $t$ with $h_c'(t)>0$, either \Cref{thm: PureFlowDecomp: Dest} or \Cref{thm: PureFlowDecomp: NotDest} is fulfilled for some $\n1 \in \capn1$, i.e.\  either $\dest \in \GV_{C_{\n1}}$ and $\inflow_\dest(t) <0$ or 
		there exists $\arc = (v,v')\in \edgesFrom{C_{\n1}}$ with $\eflow_\arc(t) >0$. 
		Depending on which condition of \Cref{thm: PureFlowDecomp: Dest} or \Cref{thm: PureFlowDecomp: NotDest} holds, 
		we argue in the following that one of the conditions in \Cref{thm: PureFlowDecompIntuitive} is fulfilled:
		\begin{structuredproof}
			\proofitem{\Cref{thm: PureFlowDecomp: Dest} holds} If $\dest \in c$, it follows immediately that \Cref{thm: PureFlowDecompIntuitive: Dest}
			is fulfilled.  If $\dest \notin c$, there must be at least one edge contained in $E_{C_{\n1}}$ but not in $c$. Since, by the definition of $C_{\n1}$, the graph induced by $C_{\n1}$ is connected and build only out of (simple) cycles, there must, in fact, be a different cycle $c' \in C_{\n1}$ with an edge $\arc'=(v,v') \in c'$ leaving $c$, i.e.~$\arc'\in\edgesFrom{c}$. Moreover, again by the definition of~$C_{\n1}$, there must be positive inflow $h_{c'}(t)>0$.          
			Hence, \Cref{thm: PureFlowDecompIntuitive: NotDest} is fulfilled for $\arc'$ as  $\g_{\arc'}(t)>0$ follows by the 
			choice of the representative of $\g$  together with $h_{c'}(t)>0$ and $\arc' \in c'$. 
			\proofitem{\Cref{thm: PureFlowDecomp: NotDest} holds} If $v \in c$, then \Cref{thm: PureFlowDecompIntuitive: NotDest} is fulfilled for $\arc$. If $v \notin c$, the exact same argumentation as for the case of \Cref{thm: PureFlowDecomp: Dest} being valid, applies, showing that \Cref{thm: PureFlowDecompIntuitive: NotDest} is fulfilled. 
		\end{structuredproof}
        Thus, the proof of this direction is finished as well.  \qedhere
     \end{structuredproof}
 \end{proof}


We now turn back to \Cref{thm: PureFlowDecomp} and its proof. 
We first note that \cite[Lemma~3.47]{KochThesis} claims a similar characterization for purely \stwalk-decomposable flows among all decomposable flows stating that an edge $\source$,$\dest$-flow with a walk decomposition also has a \emph{pure} walk decomposition if and only if every zero-cycle of the decomposition is \emph{directly} connected to an \stwalk{} used in the given flow decomposition at the same time. However, this characterization is incorrect as the stated condition is too strong and, thus, only yields a sufficient but not a necessary condition.
A simple example which does not satisfy this condition even though it has a decomposition purely into \stwalk s is given in \Cref{fig:KochPureFD-CounterExample}. 

\begin{figure}\centering
    \BigPicture[0]{%
        \begin{tikzpicture}
            \coordinate(s)at(0,0);
            \coordinate(v1)at(3,0);
            \coordinate(v2)at(3,2);
            \coordinate(v3)at(3,4);
            \coordinate(d)at(6,0);

            \draw[red,line width=2pt,->,rounded corners]($(s)+(0,-.1)$) -- ($(d)+(-.3,-.1)$);
            \draw[blue,line width=2pt,->,rounded corners]($(v1)+(-.1,.3)$) to[out=120,in=60,looseness=22] ($(v1)+(.1,.3)$);
            \draw[green,line width=2pt,->,rounded corners]($(v2)+(-.1,.3)$) to[out=120,in=60,looseness=22] ($(v2)+(.1,.3)$);

            \node[namedVertexF](s)at(s){$\source$};
            \node[namedVertexF](v1)at(v1){$v_1$};
            \node[namedVertexF](v2)at(v2){$v_2$};
            \node[namedVertexF](v3)at(v3){$v_3$};
            \node[namedVertexF](d)at(d){$\dest$};

            \draw[edge](s) --node[above]{$\CharF[[0,1]]$} (v1);
            \draw[edge](v1) --node[above]{$\CharF[[0,1]]$} (d);
            \draw[edge](v1) to[bend left=40]node[left]{$\CharF[[0,1]]$} (v2);
            \draw[edge](v2) to[bend left=40]node[left]{$\CharF[[0,1]]$} (v3);
            \draw[edge](v3) to[bend left=40]node[right]{$\CharF[[0,1]]$} (v2);
            \draw[edge](v2) to[bend left=40]node[right]{$\CharF[[0,1]]$} (v1);
        \end{tikzpicture}
    }
    \caption{A network with time and flow independent travel times of $0$ for all edges. The labels on the edges denote an $\source$,$\dest$-flow and the arrows provide a non-pure flow decomposition. Note that the edge flow clearly also has a pure flow decomposition even though the topmost zero-cycle (green) of the given flow decomposition is not directly connected to the only \stwalk{} (red) used in this decomposition.}\label{fig:KochPureFD-CounterExample}
\end{figure}

Before we come to the  proof of \Cref{thm: PureFlowDecomp}, let us give a brief proof sketch first:

The only if direction is quite straightforward and exploits the fact that for any connected component~$C_{\n1}$ and (a.e.) point in time $t \in \mathfrak T_{C_{\n1}}$, the flow induced on an edge contained in the component is induced by some \stwalk{} $\wa$ under the flow  decomposition   purely into \stwalk s. Tracking this flow along the walk until it leaves the component implies the fulfillment of \ref{thm: PureFlowDecomp: Dest} or \ref{thm: PureFlowDecomp: NotDest}. 

In contrast, the if direction is technically quite involved. We start by showing that for any connected component $C_{\n1}$ we can construct another flow decomposition with the same connected components and the additional property  that each cycle in $C_{\n1}$ is directly connected to a flow-carrying \stwalk and, thus, satisfies the stronger condition stated in \cite[Lemma~3.47]{Koch11}. From here, the idea is then to add sufficiently many copies of this cycle to the corresponding walk such that the flow requirement on the cycle is met. Note that Koch only shows how to do this for a single cycle, walk and point in time. To actually obtain a complete pure decomposition, we need to do this for all cycles, walks and times at once, which is technically much more involved.

\begin{proof}[Proof of \Cref{thm: PureFlowDecomp}]
  	We prove both directions separately. 
	\begin{structuredproof}
		\proofitem{``$\Rightarrow$''} 
		Consider the starting flow decomposition $h_\wa,\wa \in  {\sdRoutes},h_c,c \in \SimpCyc$ of $\g$ with the corresponding sets from \Cref{def: ConnectedComp}. 
		Let  $h'_\wa,\wa \in  {\sdRoutes},h'_c,c \in \SimpCyc$ with $h'_c = 0,c \in \SimpCyc$ be a pure flow decomposition of $\eflow$. 
		Assume for the sake of a contradiction that there exists $\n1\in \capn1$ and  $\mathfrak T^*\subseteq \mathfrak T_{C_{\n1}},\leb(\mathfrak T^*)>0$ such that for almost every $t \in \mathfrak T^*$ 
		neither~\ref{thm: PureFlowDecomp: Dest} nor~\ref{thm: PureFlowDecomp: NotDest} is fulfilled. 
		
		Consider an arbitrary $c \in C_{\n1}$ and $\arc \in c$. 
		By definition of $ C_{\n1}$, we have  
		\begin{align*}
			\sum_{\wa\in \sdRoutes}\sum_{j:\wa[j] = \arc}\ell_{\wa,j}(h'_\wa)(t) = \eflow_\arc(t) \geq h_c(t)>0 \quad \quad  \text{ for a.e.\ } t \in \mathfrak T_{C_{\n1}}.
		\end{align*}
		 Hence,  there has to exist a walk $\wa \in \sdRoutes$, $j \leq |\wa|$ with $\wa[j] = \arc$ and a measurable set $\mathfrak T \subseteq \mathfrak T^*, \leb(\mathfrak T)>0$ such that  $h_c(t)$ and $\ell_{\wa,j}(h'_\wa)(t)$ are bigger than $0$ for a.e.~$t \in \mathfrak T$.

		Now let $j'$ be the first index in $\{j+1,\ldots,|\wa|\}$ with $\wa[j'] \notin \GA_{C_{\n1}}$ (and subsequently $\wa[j'] \in \edgesFrom{C_{\n1}}$) in case such an index exists or set $j' := |\wa|+1$ otherwise. 
		Since we have 
		$\trav_{\tilde{\arc}}(t) = 0,\tilde{\arc} \in \GA_{C_{\n1}}$ for all $t\in \mathfrak T\subseteq \mathfrak T_{C_{\n1}}$, we get  
		by \Cref{lem: FLowOnZeroTrav} applied to the walk  $\wa_{\geq j}$, the set $\mathfrak D:= \mathfrak T$ and indices $j_1 := 1$ and $j_2 := j'-j+1$ that 
		\begin{align*} 
			 \ell_{\wa,j}(h_\wa') \cdot \Indi_{\mathfrak T} &\Croverset{lem: elluPropagation}{=} \ell_{\wa_{\geq j},1}(\ell_{\wa,j}(h_\wa'))\cdot \Indi_{\mathfrak T} \Croverset{lem: FLowOnZeroTrav}{=}  \ell_{\wa_{\geq j},j'-j+1}(\ell_{\wa,j}(h_\wa')\cdot \Indi_{\mathfrak T}) \\
			 &\Croverset{lem: elluindi}{=}  \ell_{\wa_{\geq j},j'-j+1}(\ell_{\wa,j}(h_\wa')) \cdot \Indi_{\arr_{\wa_{\geq j},j'-j+1}(\mathfrak T)}
			 \Croverset{lem: elluPropagation}{=} \ell_{\wa,j'}(h_\wa')  \cdot \Indi_{\mathfrak T}.
		\end{align*}
		  Remark for the above that $\arr_{\wa_{\geq j},j'-j+1}(t) = t$ for all $t \in \mathfrak T$ as $\trav_{\wa_{\geq j}[\tilde{j}]}(t) =0$ by $\wa_{\geq j}[\tilde{j}] = \wa[j+\tilde{j}-1] \in \GA_{C_{\n1}}$ for all $\tilde{j} < j'-j+1$ and all $t \in \mathfrak T$ by the choice of $j'$. 
		The above now shows that $ \ell_{\wa,j'}(h_\wa')(t)>0$ for almost all $t \in \mathfrak T$ since $\ell_{\wa,j}(h_\wa') (t)>0$ for almost all $t\in \mathfrak T$. From this, we derive now a contradiction depending on which value $j'$ attains:

		In the case that $j' \leq |\wa|$, we get a contradiction to~\ref{thm: PureFlowDecomp: NotDest}   not being fulfilled for almost all  $t \in\mathfrak T^*$,  since we have by the choice of $j'$ that  $\wa[j']  \in \edgesFrom{C_{\n1}}$  and 
		$0<\ell_{\wa,j'}(h_\wa')(t) \leq \eflow_{\wa[j']}(t)$ for a.e.~$t \in\mathfrak T$. 
		
		In the case of $j'= \abs{\wa}+1$, we get $\dest \in \GV_{C_{\n1}}$ since $\wa$ is an \stwalk. Then, by \Cref{lem: flowcon}, we get $\inflow_\dest \leq -\ell_{\wa,\abs{\wa}+1}(h_\wa') < 0$ on $\mathfrak T$, contradicting that~\ref{thm: PureFlowDecomp: Dest} is not fulfilled  for almost all  $t \in\mathfrak T^*$. 
		Remark that the existence of $\inflow_\dest$ implies the existence of $\ell_{\wa,\abs{\wa}+1}(h_\wa') $.

		\proofitem{``$\Leftarrow$''}
		We start by proving the following claim which will allow us to assume \wlg that any zero-cycle flow is connected to a flow carrying \stwalk{}  under the flow decomposition $h$ of $\g$. 
		
		\begin{claim}\label{claim: FlowDecompZshKomps}
			Fix a representative of $h$ with corresponding sets as defined in \Cref{def: ConnectedComp}. Then, for any $\n1\in \capn1$ we can construct another (representative of a) flow decomposition $h'_\wa,\wa \in  {\sdRoutes},h'_c,c \in \SimpCyc$ of $\eflow$ such that
			\begin{itemize}
				\item we can use the same sets to satisfy \Cref{def: ConnectedComp} for $h'$,
				\item for any $\n1' \in \capn1$ satisfying
				\begin{align}\label{eq:ZeroCyclesConnected}
					h_c(t) >0 \implies \sum_{\wa \in \sdRoutes}\ell_{\wa,\arc}(h_\wa) (t)>0 \quad \text{for almost all } t \in \mathfrak T_{C_{\n1'}}, \text{ all } \arc \in c \text{ and } c \in C_{\n1'},
				\end{align}
				the same still holds under $h'$ and
				\item under $h'$ the implication \eqref{eq:ZeroCyclesConnected} holds for $\n1$ as well.
			\end{itemize}    
		\end{claim}
		
		\begin{proofClaim} 
			Let $\n1 \in \capn1$ be arbitrary. 
			We construct in the following a countable set of walks $\wa^{l},l \in L$ with corresponding starting points $\mathfrak D^{l}\subseteq \hori$ with the property that flow is sent over $\wa^{l}$ during $\mathfrak D^{l}$ under $h$ and arrives at a node shared with $C_{\n1}$ during  $\mathfrak T_{C_{\n1}}$. Additionally, the union of these arrival times at a node of $C_{\n1}$ over all walks~$\wa^l$ is disjoint and equals $\mathfrak T_{C_{\n1}}$ up to a null set. By adding a cycle containing each edge in $C_{\n1}$ to these walks, we will then be able to construct  walk inflow rates that fulfill the desired condition for all $c \in C_{\n1}$. 
			
			\begin{subclaim}\label{subclaim: SetL}
				There exists a countable set $L\subseteq \N$ with corresponding walks $\wa^{l} \in \sdRoutes, j_{l}\leq \abs{\wa^{l}}+1$ and departure time sets $\mathfrak D^{l} \in \mathcal{B}(\hori)$ with $\leb(\mathfrak D^{l})> 0$ such that for all $l \in L$
				\begin{itemize}
					\item either $j_l = \abs{\wa^l}+1$ and $\dest \in \GV_{C_{\n1}}$ or  $\wa^{l}[j_{l}] \in \delta^+(v)$ for a node $v \in \GV_{C_{\n1}}$, 
					\item $h_{\wa^l}(t)>0$ for a.e.~$t \in \mathfrak D^l$, 
					\item the union  $\bigcup_{l' \in L}  \arr_{\wa^{l'},j_{l'}}(\mathfrak D^{l'})$ equals $\mathfrak T_{C_{\n1}}$ up to a null set,
					\item the union  $\bigcup_{l' \in L}  \arr_{\wa^{l'},j_{l'}}(\mathfrak D^{l'})$ is disjoint and
					\item $\arr_{\wa^{l},j_{l}}(\mathfrak D)$ has positive measure for all $\mathfrak D\subseteq \mathfrak D^{l}$ with $\leb(\mathfrak D)>0$.           
				\end{itemize} 
			\end{subclaim}
			\begin{proofClaim}    
				Consider an arbitrary ordering $\{\wa^k\}_{k \in \N}$ on the set $\sdRoutes$. 
				We aim to apply \Cref{lem: Relations:h>0u>0:u>0ExistsCountableMZwischen} \wrt $\mathfrak T_{C_n}$ and the set $\hat{L}\subseteq \N^2$ consisting 
				of all $(k,j) \in \N^2$ for which either $j= \abs{\wa^k}+1$ and $\dest \in \GV_{C_{\n1}}$ or 
				$\wa^k[j]=(v,v')$ fulfills $v \in \GV_{C_{\n1}}$. 
				Hence, we have to verify that for almost every $t \in \mathfrak T_{C_n}$ there exists an $l \in \hat{L}$ with corresponding $\tilde{t} \in \arr_{\wa^l,j_l}^{-1}(t)$ such that $h_\wa(\tilde{t})>0$. 
				
				We know that for almost all $t \in \mathfrak T_{C_{\n1}}$, either~\ref{thm: PureFlowDecomp: Dest} or \ref{thm: PureFlowDecomp: NotDest} holds.
				Hence, the union of the sets 
				\begin{align*}
					\mathfrak T_1 \coloneqq \set{t \in \mathfrak T_{C_{\n1}} | \text{\ref{thm: PureFlowDecomp: Dest} holds at } t} \text{ and } \mathfrak T_2^\arc \coloneqq \set{t \in \mathfrak T_{C_{\n1}} | \text{\ref{thm: PureFlowDecomp: NotDest} holds at } t \text{ for } \arc}, \arc \in \GA
				\end{align*}
				equals $\mathfrak T_{C_n}$ up to a null set. It is, thus, sufficient to show the required property for the individual sets separately:  
				\begin{structuredproof}
					\proofitem{$\mathfrak T_1$} 
					We know by \Cref{lem: Relations:h>0u>0:u>0ExistsHw>0D} that for $\op_\dest \g$-almost every 
					$t$, there exists $k\in \N$ and $\tilde{t} \in \arr_{\wa^k,\abs{\wa}+1}^{-1}(t)$ with $h_{\wa^k}(\tilde{t})>0$. Since $\op_\dest \g= \inflow_\dest\cdot \leb$ and $\inflow_\dest(t) < 0$ for almost every  $t\in\mathfrak T_1$, it follows that any $\op_\dest \g$-null subset  of $\mathfrak T_1$ is also a  $\leb$-null set. In particular, we get that for $\leb$-almost every 
					$t\in \mathfrak T_1$, there exists $k\in \N$ and $\tilde{t} \in \arr_{\wa^k,\abs{\wa}+1}^{-1}(t)$ with $h_{\wa^k}(\tilde{t})>0$. Since $\dest  \in \GV_{C_{\n1}}$ by \ref{thm: PureFlowDecomp: Dest} being satisfied for almost all $t \in \mathfrak T_1$, the required property holds on $\mathfrak T_1$. 
					
					\proofitem{$\mathfrak T_2^\arc$ for arbitrary $\arc \in \GA$} 
					This follows immediately by \Cref{lem: Relations:h>0u>0:u>0ExistsHw>0} since 
					$\g_\arc(t)>0$ for all $t \in \mathfrak{T}_2^\arc$.  \qedhere
				\end{structuredproof}
			\end{proofClaim}
			
			With the walks $\wa^l$, indices $j_l$ and starting times $\mathfrak D^l$ constructed in \Cref{subclaim: SetL} at hand, we can now take each of these walks and shift some inflow (during $\mathfrak D^l$) into $\wa^l$ to the walk that first follows $\wa^l$ until the $j^l$-th edge (where the walk intersects $C_{\n1}$), then traverses each of the simple cycles in $C_{\n1}$ exactly once and, finally, follows the second half of $\wa^l$ towards the destination. At the same time we reduce the inflow rates into the cycles contained in $C_{\n1}$ in such a way as to keep the induced edge flow the same. This way we obtain a new walk-decomposition $h'$ in which \eqref{eq:ZeroCyclesConnected} holds for $C_{\n1}$ as well, i.e.\ whenever $C_{\n1}$ is active, each of its edges is part of a used \stwalk.
			
			To formalize this, define for any $l \in L$ the walk 
			$\hat{\wa}^{l}:=(\wa^{l}_{< j_{l}},c_{\n1}^{l}, \wa^{l}_{\geq j_{l}}) $ where $c_{\n1}^{l}$ is a cycle composed of all cycles in $C_{\n1}$ (i.e.~an Eulerian circuit in the directed multi-graph containing each edge from $\GA$ as many times as there are cycles in $C_{\n1}$ that contain that edge) and staring with $\dest$ if $j_l = \abs{\wa^l}+1$ or starting with the tail of $\wa^{l}[j_{l}]$ otherwise.  Note that, by construction of $L$, in the former case we have $\dest \in \GV_{C_{\n1}}$ while in  the latter case the tail of $\wa^{l}[j_{l}]$ is contained in $\GV_{C_{\n1}}$. Hence,  $c_{\n1}^{l}$ exists and $\hat{\wa}^l$ is a well-defined \stwalk. 
			
			Next, take (and fix an arbitrary representative of) $\hat{h}^{l} \in L_+(\hori)$   with $\ell_{\wa^{l},j_{l}}(\hat{h}^{l}) = \min_{c \in C_{\n1}} h_c \cdot \Indi_{ \arr_{\wa^{l},j_{l}}(\mathfrak{D}^{l})}$. That is, $\hat{h}^l$ is the maximal amount of flow that can be sent into $\hat\wa^l$ such that the induced flow on any of the cycles $c \in C_{\n1}$ does not exceed the flow sent into it via $h_c$. Such a walk inflow $\hat h^l$ exists by~\Cref{lem: 1to1:h-f}. 
			We  observe that $\hat{h}^{l}(t)>0$ for almost every $t \in \mathfrak D^{l}$ since we have for arbitrary measurable non-null set $\mathfrak D \subseteq \mathfrak D^{l}$:
			\begin{align*}
				0\symoverset{1}{<}   &\int_{\arr_{\wa^{l},{j_{l}}}(\mathfrak D)}  \min_{c \in C_{\n1}} {h}_c \cdot \Indi_{ \arr_{\wa^{l},j_{l}}(\mathfrak{D}^{l})} \di\leb 
				= \int_{\arr_{\wa^{l},{j_{l}}}(\mathfrak D)}  \ell_{\wa^{l},j_{l}}(\hat{h}^{l}) \di\leb  \\
				= &\int_{ \arr_{\wa^{l},{j_{l}}}^{-1}(\arr_{\wa^{l},{j_{l}}}(\mathfrak D))}  \hat{h}^{l}  \di\leb \\
				\symoverset{2}{=} &\int_{\mathfrak D}   \hat{h}^{l}  \di\leb 
			\end{align*}
			Here, note for  the  inequality \refsym{1}   that   $\arr_{\wa^{l},{j_{l}}}(\mathfrak D)$  is not a null set by \Cref{subclaim: SetL}  and  $\min_{c \in C_{\n1}} {h}_c \cdot \Indi_{ \arr_{\wa^{l},j_{l}}(\mathfrak{D}^{l})}(t)>0$ for a.e.~$t\in \arr_{\wa^{l},{j_{l}}}(\mathfrak D) $ since  $\arr_{\wa^{l},{j_{l}}}(\mathfrak D^l)\subseteq \mathfrak T_{C_{\n1}}$. 
			The   equality \refsym{2} is due to \Cref{lem: elluinj}. Thus, we have $\hat{h}^{l}(t)>0$ for almost all $t \in \mathfrak D^{l}$.
			A similar argument shows that $\hat{h}^{l}(t)=0$ on $\hori\setminus\mathfrak D^{l}$.
			
			With this we can now define a new walk inflow~$h'$ by setting for all $\wa \in \sdRoutes$ and $c \in \SimpCyc$: 
			\begin{align}\label{eq:FlowDecompZshKomps:Defh'}
				h_\wa' := h_\wa + \sum_{l\in L: \hat{\wa}^{l} = \wa} \frac{1}{2^l} \rho^{l}  - \sum_{l\in L: {\wa}^{l} = \wa} \frac{1}{2^l} \rho^{l}  && \text{ and } &&  h_c' := h_c - \Indi_{  C_{\n1}}(c) \cdot \sum_{l \in L} \frac{1}{2^l} \ell_{\wa^l,j_l}(\rho^{l})
			\end{align}
			where  $\rho^{l}:=  \min\{0.5\cdot \hat{h}^{l},h_{\wa^{l}}\}$. 
			Here, intuitively, $\rho^l$ is the amount of flow we can safely shift from walk $\wa^l$ to walk~$\hat\wa^l$ without either creating negative walk inflow rates into one of the walks~$\wa^l$ or making one of the cycles $c \in C_{\n1}$ inactive (and, hence, changing the sets in \Cref{def: ConnectedComp}). The former is ensured by $\rho^l\leq {h}_{\wa ^l}$, the latter by $\rho^l\leq 0.5\cdot \hat{h}^l$. In the actual definition of~$h'$ we then shift even less flow than $\rho^l$ (namely, $2^{-l}\rho^l$) since it is possible that we have $\wa^l = \wa^{l'}$ for $l \neq l'$ and, consequently, we remove flow from the same walk multiple times.

			The following \namecref{subclaim: FlowDecompZshKomps: h'props} now states that the walk inflow~$h'$ satisfies all the desired properties. \Cref{claim: FlowDecompZshKomps} will then be an immediate consequence of this \namecref{subclaim: FlowDecompZshKomps: h'props}:
			\begin{subclaim} \label{subclaim: FlowDecompZshKomps: h'props}
				The walk inflow~$h'$ defined by~\eqref{eq:FlowDecompZshKomps:Defh'} fulfills the following:
				\begin{enumerate}[label = \roman*)]
					\item $h'$ is well-defined and $h'\in L_+(\hori)^{\sdRoutes\cup \SimpCyc}$. \label{subclaim: FlowDecompZshKomps: h'props: h'wd}
					\item For all $c \in \SimpCyc$ and almost all $t \in \hori$ we have $h_c(t)>0 \Leftrightarrow h_c'(t)>0$.  \label{subclaim: FlowDecompZshKomps: h'props: h_ch'_c} 
					\item $ \ell_\wa({h}'_\wa)$ exists for every ${\wa \in \sdRoutes}$. \label{subclaim: FlowDecompZshKomps: h'props: l(h')ex} 
					\item $\sum_{\wa \in \sdRoutes} \ell_{\wa,\arc}({h}'_\wa) \geq \sum_{\wa \in \sdRoutes} \ell_{\wa,\arc}({h}_\wa)$ for all $\arc \in \GA$.\label{subclaim: FlowDecompZshKomps: h'props: GeqPureFlow}
					\item $\sum_{\wa \in \sdRoutes} \ell_{\wa,\arc}({h}'_\wa) + \sum_{c \in \SimpCyc: \arc \in c}h_c' = \eflow_\arc$ for all $\arc \in \GA$. \label{subclaim: FlowDecompZshKomps: h'props: equalEflow}
					\item For all $c \in C_{\n1}$, $\arc \in c$ and almost all $t \in \mathfrak T_{C_{\n1}}$, we have $\sum_{\wa \in \sdRoutes}\ell_{\wa,\arc}(h_\wa')(t)>0$.  \label{subclaim: FlowDecompZshKomps: h'props: g'_c>0}
				\end{enumerate}
			\end{subclaim}
			\begin{proofClaim}
				\begin{structuredproof}
					\proofitem{\ref{subclaim: FlowDecompZshKomps: h'props: h'wd}} 
					We start by verifying for any $\wa \in \sdRoutes$ that 
					$\sum_{l\in L: \hat{\wa}^{l} = \wa} \frac{1}{2^l} \rho^{l} \text{ and } \sum_{l\in L: {\wa}^{l} = \wa} \frac{1}{2^l} \rho^{l}$ 
					are well-defined integrable functions. 
					This follows by Beppo Levi's monotone convergence theorem (cf.~\cite[Theorem 2.8.2/Corollary 2.8.6]{Bogachev2007I}) as we can bound 
					\begin{align*}
						\sum_{l \in L} \frac{1}{2^l} \rho^{l} \leq  \sum_{l\in L}  \frac{1}{2^l} h_{\wa^l} \leq \sum_{l\in L}  \frac{1}{2^l} \inflow_s \leq  \inflow_s \sum_{l \in \N} \frac{1}{2^l}   =  \inflow_s. 
					\end{align*}
					Hence, we get the integrability $h_\wa' \in L(\hori)$. 
					In particular, we get for any $\wa \in \sdRoutes$ that 
					\begin{align*}
						{h}_\wa' \geq {h}_\wa - \sum_{l\in L: {\wa}^{l} = \wa}\frac{1}{2^l}\rho^{l} \geq   {h}_\wa  - \sum_{l\in L: {\wa}^{l} = \wa}\frac{1}{2^l}h_{\wa^l} \geq  {h}_\wa   - \sum_{l\in \N}\frac{1}{2^l}h_{\wa} 
						=  {h}_\wa -     {h}_\wa    = 0 ,
					\end{align*}
					that is, $h_\wa'$ is nonnegative.

					Finally, for $c \in \SimpCyc$, we observe that
					\begin{align}\label{eq: h_c'}
						h_c' &=  h_c -  \sum_{l \in L} \frac{1}{2^l} \ell_{\wa^l,j_l}(\rho^{l}) \symoverset{1}{\geq} h_c -  \sum_{l \in L} \frac{1}{2^l} \ell_{\wa^l,j_l}\big(\frac{1}{2}\hat{h}^l\big)
						= h_c -  \frac{1}{2} \sum_{l \in L} \frac{1}{2^l} \min_{\tilde{c} \in C_{\n1}}h_{\tilde{c}} \cdot \Indi_{ \arr_{\wa^{l},j_{l}}(\mathfrak{D}^{l})} \nonumber
						\\&\geq h_c -  \frac{1}{2} \sum_{l \in L} \frac{1}{2^l} \min_{\tilde{c} \in C_{\n1}}h_{\tilde{c}}  \geq 
						h_c -  \frac{1}{2} \min_{\tilde{c} \in C_{\n1}}h_{\tilde{c}} \geq  h_c -  \frac{1}{2} h_c  =   \frac{1}{2} h_c 
					\end{align}
					where the first inequality \refsym{1} holds by \Cref{lem: ellOrderPreserving} and the definition of $\rho^l$. 
					Thus, we may infer that $h'_c\geq 0$. Moreover, it is clear that $h'_c \leq h_c$ and hence $h_c' \in L_+(\hori)$ follows by $h_c \in L_+(\hori)$.

					\proofitem{\ref{subclaim: FlowDecompZshKomps: h'props: h_ch'_c}} It is clear by definition of $h_c'$ that for almost every $t \in \hori$ the implication $h_c'(t) >0 \Rightarrow h_c(t)>0$ is true. The reverse direction is a direct consequence of the estimate shown in~\eqref{eq: h_c'}.

					\proofitem{\ref{subclaim: FlowDecompZshKomps: h'props: l(h')ex}} 
					Let $\wa \in \sdRoutes$ be arbitrary. 
					By the continuity and linearity of $\ell_\wa$, we get \begin{align*}
						&\ell_\wa\Big(\sum_{l\in L: \hat{\wa}^{l} = \wa}\frac{1}{2^l} \rho^{l}\Big) = \sum_{l\in L: \hat{\wa}^{l} = \wa}\frac{1}{2^l}  \ell_\wa( \rho^{l}) \quad \text{ and }\\
						&\ell_\wa\Big(\sum_{l\in L:  {\wa}^{l} = \wa}\frac{1}{2^l} \rho^{l}\Big) = \sum_{l\in L:  {\wa}^{l} = \wa}\frac{1}{2^l}  \ell_\wa( \rho^{l})
					\end{align*}
					in case that $\ell_\wa(\rho^{l})$ exists for all  $l\in L$ with either $\hat{\wa}^{l} = \wa$ or $\wa^{l} = \wa$. 
					For the latter case, we have $\rho^{l} \leq h_{\wa^{l}} = h_{\wa}$ and since $\ell_\wa(h_\wa)$ exists, it follows that 
					$\ell_\wa( \rho^{l})$ exists. 
					For the former case, let  $l\in L$ with $\hat{\wa}^{l} = \wa$ be arbitrary for the following.  
					Since $\wa^{l}$ and $\hat{\wa}^{l}$($=\wa$) have the first $j_{l}-1$ edges in common, $\rho^{l} \leq h_{\wa^{l}}$ and
					$\ell_{\wa^{l},j}( h_{\wa^{l}} )$ for $j \leq\abs{\wa^l}+1$ exist (for $j= \abs{\wa ^l}+1$ by existence of $\inflow_d$), it follows that 
					$\ell_{\wa,j}( \rho^{l})$ exists for all $j \leq j_{l}$. 
					For $j = j_{l} +z$ with $z \in \{0,\ldots,|c_{\n1}^{l}|\}$, \Cref{lem: FLowOnZeroTrav} applied to $\mathfrak D^l$ yields that
					$\ell_{\wa,j}( \rho^{l})$ exists and is equal to $\ell_{\wa,j_{l}}( \rho^{l}) $. Remark that $\rho^l = 0$ on  $\hori\setminus\mathfrak D^l$ by  $\hat{h}^l$ being $0$ on the latter set. Moreover, note that 
					\Cref{lem: FLowOnZeroTrav} is applicable for $\mathfrak D^l$ as we have 
					$\arr_{\wa,j_l}(t) = \arr_{\wa,j}(t)$ for almost every $t \in \mathfrak D^l$ by    
					$ \arr_{\wa,j_l}(t) = \arr_{\wa^l,j_l}(t) \in \mathfrak T_{C_{\n1}}$ and $\wa[j] = \hat{\wa}^l[j] \in \GA_{C_{\n1}}$ as well as  $\trav_\arc(t')= 0$ for all $t' \in \mathfrak T_{C_{\n1}}$ and $\arc \in \GA_{C_{\n1}}$.

					Finally, consider $j = j_{l} +|c_{\n1}^{l}| + z$ with $z\in \{1,\ldots,\abs{\wa} -j_{l}\}$. 
					We argue for the validity of the following equalities: 
					\begin{equation}\label{eq: identity}
							\begin{aligned}
							\ell_{\wa^l,j_l+z} ( \rho^{l})\overset{\text{\Crefshort{lem: elluPropagation}}}&{=} \ell_{\wa^l_{\geq j_l},z+1}(\ell_{\wa^l,j_{l}}( \rho^{l})) \symoverset{1}{=}\ell_{\wa^l_{\geq j_l},z+1}(\ell_{\wa,j_{l} +|c_{\n1}^{l}|}( \rho^{l})) \\
							&\symoverset{2}{=}\ell_{\wa_{\geq j_l+|c_{\n1}^{l}|},z+1}(\ell_{\wa,j_{l} +|c_{\n1}^{l}|}( \rho^{l})) \Croverset{lem: elluPropagation}{=}  	\ell_{\wa,j_{l} +|c_{\n1}^{l}| + z}( \rho^{l}) =	\ell_{\wa,j}( \rho^{l})
						\end{aligned}
					\end{equation}
					For the above, first remark that by $\rho^{l} \leq h_{\wa^{l}}$ and
					$\ell_{\wa^{l},j_l+z}( h_{\wa^{l}} )$ existing, it follows that 
					$\ell_{\wa^l,j_l+z}( \rho^{l})$ exists. 
					For equality \refsym{1}, we used the previous insights that 
					$\ell_{\wa,j_{l} +|c_{\n1}^{l}|}( \rho^{l}) = \ell_{\wa,j_{l}}( \rho^{l})$. 
					For \refsym{2}, we used that $\wa^l_{\geq j_l} = \wa_{\geq j_l+|c_{\n1}^{l}|}$. 
					Note that the equalities in which we used \Cref{lem: elluPropagation} imply in particular the existence of the expression on the right hand side of the equality.

					\proofitem{\ref{subclaim: FlowDecompZshKomps: h'props: GeqPureFlow} and \ref{subclaim: FlowDecompZshKomps: h'props: equalEflow}} 
					By the linearity of $\ell_{\wa,e}$ (\Cref{lem: elluExistenceProperties:Linearity}) and the identities derived in \ref{subclaim: FlowDecompZshKomps: h'props: l(h')ex}
					we calculate for 
					an arbitrary $\arc \in \GA$:
					\begin{align}
						\sum_{\wa \in \sdRoutes}\ell_{\wa,\arc}(h_\wa') &= \sum_{\wa \in \sdRoutes} \ell_{\wa,\arc}(h_\wa)+ \sum_{\wa \in \sdRoutes} \sum_{l: \hat{\wa}^{l} = \wa}\frac{1}{2^l}  \ell_{\wa,\arc}( \rho^{l}) -\sum_{l:  {\wa}^{l} = \wa}\frac{1}{2^l}  \ell_{\wa,\arc}( \rho^{l}) \nonumber\\
						&=  \sum_{\wa \in \sdRoutes} \ell_{\wa,\arc}(h_\wa) + \sum_{l \in L} \frac{1}{2^l} \Big(\ell_{\hat{\wa}^l,\arc}( \rho^{l}) - \ell_{\wa^l,\arc}( \rho^{l})\Big)\nonumber \\
						&\symoverset{1}{=}  \sum_{\wa \in \sdRoutes} \ell_{\wa,\arc}(h_\wa) +   \sum_{l \in L} \frac{1}{2^l}
						\Big(\sum_{j\in \{j_{l},\ldots,j_{l}+|c^{l}_{\n1}|-1 \}: \hat{\wa}^{l}[j] = \arc } \ell_{\hat{\wa}^{l},j}(\rho^{l})    \Big)\nonumber \\
						&\symoverset{2}{=}  \sum_{\wa \in \sdRoutes} \ell_{\wa,\arc}(h_\wa) +   \sum_{l \in L} \frac{1}{2^l}
						\Big(\abs{\{c \in C_{\n1}:\arc \in c\}} \cdot \ell_{{\wa}^{l},j_l}(\rho^{l})    \Big) \label{eq: identity2} 
					\end{align}
					where we used in the equality indicated by~\refsym{1} that 
					$\wa^l$ and $\hat{\wa}^l$ share the first $j_l-1$ edges as well as the identity in~\eqref{eq: identity}.
					For the equality specified by \refsym{2}, we used that $\ell_{\hat{\wa}^{l},j}(\rho^{l})  = \ell_{{\wa}^{l},j_l}(\rho^{l}) $ for $j\in \{j_{l},\ldots,j_{l}+|c^{l}_{\n1}|-1 \}$ as well as the fact that $c_{\n1}^l$ contains every cycle in $C_{\n1}$ exactly once and each cycle in $C_{\n1}$ is simple. 
					
					From this, \ref{subclaim: FlowDecompZshKomps: h'props: GeqPureFlow} now follows by $\ell_{\wa^l,j_l}(\rho^l)$ being nonnegative. 
					The statement in \ref{subclaim: FlowDecompZshKomps: h'props: equalEflow} follows by observing that 
					\begin{align*}
						\sum_{c \in \SimpCyc:\arc \in c} h_c' = \sum_{c \in \SimpCyc:\arc \in c} h_c - \abs{\{c \in C_{\n1}:\arc \in c\}} \cdot \sum_{l \in L} \frac{1}{2^l}\ell_{{\wa}^{l},j_l}(\rho^{l}) 
					\end{align*}
					which implies, together with the above \eqref{eq: identity2}, that  
					\begin{align*}
						\sum_{\wa \in \sdRoutes} \ell_{\wa,\arc}({h}'_\wa) + \sum_{c \in \SimpCyc: \arc \in c}h_c' =\sum_{\wa \in \sdRoutes} \ell_{\wa,\arc}({h}_\wa) + \sum_{c \in \SimpCyc: \arc \in c}h_c= \eflow_\arc.
					\end{align*}
					
					\proofitem{\ref{subclaim: FlowDecompZshKomps: h'props: g'_c>0}} 
					Since  $\hat{h}^l$ and $h_{\wa^l}$ are larger than $0$ almost everywhere on $\mathfrak D^l$, so is $\rho^l$. 
					Thus, \Cref{lem: ellOrderPreservingSharpened} implies that $\ell_{\wa^l,j_l}(\rho^l)$ is larger than $0$ almost everywhere on  $\arr_{\wa^{l},j_{l}}(\mathfrak{D}^{l})$. Since, furthermore, $\bigcup_{l \in L}  \arr_{\wa^{l},j_{l}}(\mathfrak{D}^{l})$ equals $\mathfrak T_{C_{\n1}}$ up to a null set, we may deduce that for $e \in c \in C_{\n1}$ the second term in~\eqref{eq: identity2}  is strictly larger than zero almost everywhere on  $\mathfrak T_{C_{\n1}}$, from which the claim follows immediately.\qedhere
				\end{structuredproof}
			\end{proofClaim}
			
			With this, we can now deduce that $h'$ has all the properties stated in \Cref{claim: FlowDecompZshKomps}: First, \labelcref{subclaim: FlowDecompZshKomps: h'props: h'wd,subclaim: FlowDecompZshKomps: h'props: h_ch'_c,subclaim: FlowDecompZshKomps: h'props: equalEflow,subclaim: FlowDecompZshKomps: h'props: l(h')ex} together show that $h'$ is a decomposition of~$\eflow$. Then, \ref{subclaim: FlowDecompZshKomps: h'props: h_ch'_c} ensures that we can use the same sets according to \Cref{def: ConnectedComp} for~$h'$ as for~$h$. Next, \ref{subclaim: FlowDecompZshKomps: h'props: GeqPureFlow} together with \ref{subclaim: FlowDecompZshKomps: h'props: h_ch'_c} guarantees that implication~\eqref{eq:ZeroCyclesConnected} remains true for all $\n1' \in \capn1$ for which it was already true under~$h$ and, finally, \ref{subclaim: FlowDecompZshKomps: h'props: g'_c>0} demonstrates that, under~$h'$, this implication holds for~$\n1$ as well.
		\end{proofClaim}
		With this \namecref{claim: FlowDecompZshKomps} at hand, we may assume now \wlg that $h$ fulfills the implication 
		\begin{align}\label{eq: FlowDecompEachCycle}
			h_c(t)>0 \implies \sum_{\wa \in \sdRoutes}\ell_{\wa,\arc}(h_\wa)(t)>0 \text{ for almost all } t \in \hori \text{ and all } \arc \in c, c \in \SimpCyc
		\end{align}
		as we can otherwise apply \Cref{claim: FlowDecompZshKomps} successively over all $\n1 \in \capn1$ and consider the resulting alternative flow decomposition. 
		Starting with such a walk inflow~$h$ we construct in the following for an arbitrary $c \in \SimpCyc$ another flow decomposition which does not use a zero-cycle inflow rate into~$c$. From this, the theorem itself then, finally, follows by successively applying this construction to~$h$ for all $c\in \SimpCyc$: 
		\begin{claim}\label{claim: FlowDecompSingleCycle} 
			Given a  flow $h$ fulfilling~\eqref{eq: FlowDecompEachCycle}, we can construct 
			for an arbitrary $c \in \SimpCyc$ 
			(a representative of) another flow decomposition $h'_\wa,\wa \in  {\sdRoutes},h'_c,c \in \SimpCyc$ of $\eflow$ which  still satisfies~\eqref{eq: FlowDecompEachCycle} and fulfills $h_c'=0$ as well as $h'_{\tilde{c}} = h_{\tilde{c}}$ for all $\tilde{c}\neq c$.
		\end{claim}
		\begin{proofClaim}
			The main idea for the construction is the same as the one we used for \Cref{claim: FlowDecompZshKomps} and, in particular, \Cref{subclaim: FlowDecompZshKomps: h'props}: Shift flow from walks $\wa$ intersecting $c$ to walks $\hat\wa$ that additionally traverse the cycle~$c$ in such a way as to replace all the zero-cycle inflow into~$c$. Note, however, that such a direct shift will not always be possible as the flow rate on a walk~$\wa$ might be smaller than the one on~$c$ at the relevant times. We will remedy this by using walks~$\hat\wa$ that traverse~$c$ not just once but multiple times. 
			
			We now start by obtaining the walks~$\wa$ from which we shift flow:
			Let $\arc \coloneqq c[1]$ be the first edge on~$c$. Since $h$ satisfies~\eqref{eq: FlowDecompEachCycle}, there exist, by \Cref{lem: Relations:h>0u>0:u>0ExistsCountableM}, a countable set  $M$ (denoted by $L$ in \Cref{lem: Relations:h>0u>0:u>0ExistsCountableM}) and walks $\{\wa^m\}_{m\in M} \subseteq \sdRoutes$ together with indices $j_m\leq |\wa^m|$ and measurable sets $\mathfrak D^m, \leb(\mathfrak D^m)>0$ for all $m \in M$ such that $\wa^m[j_m] = \arc$, $h_{\wa^m}(t) > 0$ for a.e.~$t \in \mathfrak D^m$ and $\mathfrak T^{m} := \arr_{\wa^m,j_m}(\mathfrak D^m), m \in M$ are disjoint with $\bigcup_{m \in M} \mathfrak T^m$ equaling $\mathfrak T_c$ up to a null set.
			Furthermore, for any walk $\wa \in \sdRoutes$, there are only finitely many $m \in M$ with $\wa^m=\wa$ and hence ${\norm{\wa}_M} :=|\{m \in M\mid \wa^m = \wa\}|< \infty$.   
			 
			The \stwalk{s} to which we shift flow from any $\wa^m$ are then defined by 
			 $\hat{\wa}^{m,n}:=(\wa^m_{<j_m},c^{m,n},\wa^m_{\geq j_m})$ where $c^{m,n}$ is the concatenation of $n\cdot{\norm{\wa^m}_M}$ copies of~$c$.

			Next, we define $\hat{h}^m$ as a function fulfilling $\ell_{\wa^m,j_m}(\hat{h}^m) = \Indi_{\mathfrak T^m}\cdot {h}_c$, which exists by~\Cref{lem: 1to1:h-f}, and consider an arbitrary representative. 
			Now, $\hat h^m$ is exactly the amount of flow we would like to add to the inflow into $(\wa^m_{<j_m},c,\wa^m_{\geq j_m})$ in order to replace all the flow on~$c$ induced by the zero-cycle inflow rate~$h_c$. If we had $\hat h^m \leq h_{\wa^m}$, this could be achieved by a single shift (of $\hat h^m$) from $\wa^m$ to $(\wa^m_{<j_m},c,\wa^m_{\geq j_m})$. Since this need not be the case in general, we instead consider the fraction $\hat{h}^m / h_{\wa^m}$ and aim to shift only $\frac{\hat{h}^m}{\lceil\hat{h}^m / h_{\wa^m}\rceil} \leq h_{\wa^m}$ from $\wa^m$ to the walk that contains the cycle $c$ exactly $\lceil\hat{h}^m / h_{\wa^m}\rceil$-many times. A final technicality now is that we might have $\wa^m = \wa^{m'}$ for different $m \neq m'$ and, hence, we have to be even more careful when removing flow from~$\wa^m$. We fix this by considering how many times $\wa^m$ appears in $M$ (i.e.\ $\norm{\wa^m}_M$) and then only remove $\frac{\hat{h}^m}{\lceil\hat{h}^m / h_{\wa^m}\rceil\cdot\norm{\wa^m}_M}$ from $\wa^m$ for each~$m$ and shift it to a walk, that traverses~$c$ even more often, namely $\lceil\hat{h}^m / h_{\wa^m}\rceil\cdot\norm{\wa^m}_M$ many times.
			
			To formalize this, we start by defining a function $\hat{\rho}^m$ via  
			\begin{align*}
				\hat\rho^m: \hori \to \Rnn, t \mapsto \begin{cases}
					\frac{\hat h^m(t)}{h_{\wa^m}(t)}, &\text{if } t \in \mathfrak D^m \\
					0,                                 &\text{else}
				\end{cases}
			\end{align*}
			where we choose a representative of $h$ with $h_{\wa^m}(t)>0$ for all $t \in \mathfrak D^m$, which is possible by the definition of the set $\mathfrak D^m$. 
			Furthermore, set $\rho^m := \lceil \hat{\rho}^m \rceil $
			where $\lceil \cdot\rceil$ denotes the standard ceiling function, i.e.~$\lceil x\rceil$ is the smallest integer that is greater or equal to $x$. Remark that  $\rho^m$ is measurable as the ceiling function and $\hat{\rho} ^m$ are likewise.
			
			We now define $h'$ as follows: For any $\wa \in \sdRoutes$, set 
			\begin{align*}
				h_\wa' &\coloneqq  h_\wa - \sum_{m: \wa ^m = \wa} \sum_{n \in \N} \Indi_{(\rho^{m})^{-1}(n)} \cdot \frac{\hat{\rho}^m}{n {\norm{\wa^m}_M}} h_{\wa^m} + 
				\sum_{(m,n): \hat{\wa}^{m,n} = \wa} \Indi_{(\rho^{m})^{-1}(n)}  \cdot \frac{\hat{\rho}^m}{n {\norm{\wa^m}_M}}h_{\wa^{m}}
			\end{align*}
			and define $h'_{\tilde{c}} \coloneqq h_{\tilde{c}}$ for $\tilde{c}\neq c$ and $h'_c \coloneqq 0$. 
			Here and from now on, we assume that every index $m$ is an element of $M$ unless it is explicitly stated otherwise. Note also, that $(\rho^{m})^{-1}(n)$ is just the set of times $t \in \mathfrak D^m$ where $\hat\rho^m(t) \in (n-1,n]$.

			As the zero-cycle inflow rates under $h'$ are already as required in the claim, it only remains to show that $h'$ is a (well-defined) walk-decomposition of $\eflow$ and still satisfies~\eqref{eq: FlowDecompEachCycle}:
			\begin{subclaim} \label{subclaim: FlowDecompSingleCycleh'}
				$h'$ fulfills the following:
				\begin{enumerate}[label = \roman*)]
					\item $h'$ is well-defined and  $h'\in L_+(\hori)^{\sdRoutes\cup\SimpCyc}$. \label{subclaim: FlowDecompSingleCycleh':geq0}
					\item $\ell_\wa(h'_\wa)$ exists for all $\wa \in \sdRoutes$. \label{subclaim: FlowDecompSingleCycleh':l(h')ex}
					\item $\sum_{\wa \in \sdRoutes} \ell_{\wa,\arc}({h}'_\wa) + \sum_{c \in \SimpCyc: \arc \in c}h_c' = \eflow_\arc$ for all $\arc \in \GA$.  \label{subclaim: FlowDecompSingleCycleh':equal} 
					\item $h'$ fulfills \eqref{eq: FlowDecompEachCycle}. \label{subclaim: FlowDecompSingleCycleh':Prop} 
				\end{enumerate}
			\end{subclaim}
			\begin{proofClaim}
				\begin{structuredproof}
					\proofitem{\ref{subclaim: FlowDecompSingleCycleh':geq0}} 
					Let $\wa \in \sdRoutes$ be arbitrary.

					The well-definedness of the first sum follows immediately by the aforementioned property that 
					${\norm{\wa^m}_M} :=|\{m \in M\mid \wa^m = \wa\}|< \infty$ is finite together with the obvious observation that 
					$(\rho^{m})^{-1}(n) \cap (\rho^{m})^{-1}(n') = \emptyset$ for any two $n\neq n'$. 
					The second sum is  also well-defined as it contains only finitely many summands. To see this, 
					note that there are only finitely many walks $\wa'$ in $\sdRoutes$ that 
					can be extended to $\wa$ by adding copies of~$c$, i.e.~for which there exists $m' \in M$ and $n \in \N$ with $\wa' = \wa^{m'}$ and $\hat{\wa}^{m',n} = \wa$. 
					This, together with the fact that $\norm{\wa'}_M < \infty$ for all $\wa' \in \sdRoutes$, shows the finiteness of the sum. 
					In particular, this together with the estimate 
					\begin{align*}
						h_\wa'  &\leq  h_\wa + 
						\sum_{(m,n): \hat{\wa}^{m,n} = \wa} \Indi_{(\rho^{m})^{-1}(n)}  \cdot \frac{\hat{\rho}^m}{n {\norm{\wa^m}_M}}h_{\wa^{m}} \\
						&\leq  h_\wa + \sum_{(m,n): \hat{\wa}^{m,n} = \wa}  h_{\wa^{m}} 
					\end{align*}
					shows that $h_\wa'$ is bounded from above by an integrable function. Hence, $h_\wa' \in L_+(\hori)$ follows by the nonnegativity of $h'_\wa$, which can be shown by a direct computation for any $t \in \R$:
                    
					\begin{align*}
						h'_\wa(t) 
						&\geq h_\wa(t) - \sum_{m:\wa^m=\wa}\sum_{n \in \N}\CharF[(\rho^m)^{-1}(n)](t)\cdot\frac{\hat\rho^m(t)}{n\norm{\wa^m}_M}h_{\wa^m}(t) \\
						&= h_\wa(t) - \sum_{m:\wa^m=\wa}\frac{\hat\rho^m(t)}{\rho^m(t)\norm{\wa}_M}h_{\wa}(t) \\
						&= h_\wa(t) - \frac{h_\wa(t)}{\norm{\wa}_M} \sum_{m:\wa^m=\wa}\frac{\hat\rho^m(t)}{\ceil{\hat\rho^m(t)}} \\
						&\geq h_\wa(t) - \frac{h_\wa(t)}{\norm{\wa}_M} \cdot \norm{\wa}_M = 0.
					\end{align*}

					\proofitem{\ref{subclaim: FlowDecompSingleCycleh':l(h')ex}} This proof works similarly to the one of \Cref{subclaim: FlowDecompZshKomps: h'props}\ref{subclaim: FlowDecompZshKomps: h'props: l(h')ex}. 
					
					Let $\wa \in \sdRoutes, m \in M$ with $\hat{\wa}^{m,n} = \wa$ or $\wa^m =\wa$ and $n \in \N$. We aim to show that   $\ell_{\wa}(\Indi_{(\rho^{m})^{-1}(n)}  \cdot \frac{\hat{\rho}^m}{n{\norm{\wa ^m}_M}} h_{\wa^m})$ exists. Then,  
					by linearity of $\ell_\wa$ and the existence of $\ell_\wa(h_\wa)$, the existence of  $\ell_\wa(h_\wa')$ follows. 
					
					In the case of $\wa^m =\wa$, the required existence follows as $\Indi_{(\rho^{m})^{-1}(n)}  \cdot \frac{\hat{\rho}^m}{n{\norm{\wa ^m}_M}} h_{\wa^m} \leq h_{\wa^m} =h_\wa$ and $\ell_{\wa}(h_\wa)$ exists. 
					Hence, consider the case of $\hat{\wa}^{m,n} = \wa$. The existence of $\ell_{\wa,j}(h_\wa),j\leq \abs{\wa}+1$ together with the observation that $\hat{\wa}^{m,n}$ ($=\wa$) and $\wa^m$ share the first $j_m-1$ edges shows the existence and equality of 
					\begin{align}\label{eq: subclaim: FlowDecompSingleCycleh':InducedFlow3}
						\ell_{\wa,j}\Big(\Indi_{(\rho^{m})^{-1}(n)}  \cdot \frac{\hat{\rho}^m}{n{\norm{\wa ^m}_M}} h_{\wa^m}\Big) = \ell_{\wa^m,j}\Big(\Indi_{(\rho^{m})^{-1}(n)}  \cdot \frac{\hat{\rho}^m}{n{\norm{\wa ^m}_M}} h_{\wa^m}\Big) \text{ for all }j\leq j_m.
					\end{align}

					For $j = j_{m} +z$ with $z \in \{0,\ldots,|c^{m,n}|\}$, we argue in the following for the existence and equality:
					\begin{align}\label{eq: subclaim: FlowDecompSingleCycleh':InducedFlow}
						\ell_{\wa,j}\Big(\Indi_{(\rho^{m})^{-1}(n)} \cdot \frac{\hat{\rho}^m}{n{\norm{\wa ^m}_M}} h_{\wa^m}\Big) =  \ell_{\wa^m,j_m} \Big( \Indi_{(\rho^{m})^{-1}(n)}  \cdot \frac{1}{n{\norm{\wa ^m}_M}} \Indi_{\mathfrak D^m} &\hat{h}^m \Big).
					\end{align}
					Note that the right hand side exists since $\ell_{\wa^m,j_m}(\hat{h}^m)$ exists. 
					We calculate  for an arbitrary $\mathfrak T \in \mathcal{B}(\hori)$:
					\begin{align*}
						 \int_{\mathfrak T} \ell_{\wa^m,j_m} \Big( \Indi_{(\rho^{m})^{-1}(n)}  \cdot \frac{1}{n{\norm{\wa ^m}_M}} \Indi_{\mathfrak D^m} &\hat{h}^m \Big)\di\leb \\  
						&\symoverset{3}{=}  \int_{\arr_{\wa^m,j_m}^{-1}(\mathfrak T)} \Indi_{(\rho^{m})^{-1}(n)}  \cdot \frac{1}{n{\norm{\wa ^m}_M}} \Indi_{\mathfrak D^m} \hat{h}^m \di\leb \\
						&\symoverset{4}{=}  \int_{\arr_{\wa,j_m}^{-1}(\mathfrak T)} \Indi_{(\rho^{m})^{-1}(n)}  \cdot \frac{1}{n{\norm{\wa ^m}_M}} \Indi_{\mathfrak D^m} \hat{h}^m \di\leb \\
						&\symoverset{5}{=}   \int_{\arr_{\wa,j}^{-1}(\mathfrak T)} \Indi_{(\rho^{m})^{-1}(n)}  \cdot \frac{1}{n{\norm{\wa ^m}_M}} \Indi_{\mathfrak D^m} \hat{h}^m \di\leb \\    
						&= \int_{\arr_{\wa,j}^{-1}(\mathfrak T)} \Indi_{(\rho^{m})^{-1}(n)}  \cdot \frac{\hat{\rho}^m}{n{\norm{\wa ^m}_M}} h_{\wa^m}\di\leb
					\end{align*}
					Here, \refsym{3} holds simply by definition of $\ell_{\wa^m,j_m}$.  
					The equality  \refsym{4} follows as $\hat{\wa}^{m,n}$ ($= \wa$) and $\wa^m$ share the first $j_m-1$ edges and 
					\refsym{5} 
					follows from  
					$\arr_{\wa,j} = \arr_{\wa,j_{m}}$ on $\mathfrak D^{m}$ due to $\arr_{\wa,j_{m}}(\mathfrak D^{m}) = \mathfrak T^m\subseteq \mathfrak T_{c}$. Note that equality on $\mathfrak D^m$ is sufficient  
					as  a part of the integrand is a indicator function over $\mathfrak D^m$. 
					The last equality is due to the definition of $\hat{\rho}^m$. 
					Hence, we have shown that $ \Indi_{\arr_{\wa^m,j_m}((\rho^{m})^{-1}(n))} \cdot  \Indi_{\mathfrak T^m} \cdot \frac{1}{n{\norm{\wa ^m}_M}}h_c$ fulfills \eqref{eq: Deflwj} for $\Indi_{(\rho^{m})^{-1}(n)}  \cdot \frac{\hat{\rho}^m}{n{\norm{\wa ^m}_M}} h_{\wa^m}$ which shows the claimed existence and equality.

					Finally, 
					the exact same argumentation as carried out for \eqref{eq: identity} shows the existence and equality of
					\begin{align}\label{eq: subclaim: FlowDecompSingleCycleh':InducedFlow2}
						\ell_{\wa,j_{m} +|c^{m,n}| + z}\left(\Indi_{(\rho^{m})^{-1}(n)}  \cdot \frac{\hat{\rho}^m}{n{\norm{\wa ^m}_M}} h_{\wa^m}\right) = \ell_{\wa^m,j_m+z}\left(\Indi_{(\rho^{m})^{-1}(n)}  \cdot \frac{\hat{\rho}^m}{n{\norm{\wa ^m}_M}} h_{\wa^m}\right)
					\end{align}
					for any $z\in \{1,\ldots,\abs{\wa} -j_{m}\}$. 
					
					Hence, we have shown that $\ell_{\wa,j}(h_\wa')$ exists for all $j \leq \abs{\wa}$, from which the existence of  $\ell_{\wa}(h_\wa')$ follows.  
					
					\proofitem{\ref{subclaim: FlowDecompSingleCycleh':equal}} 
					
					We show in the following that $\sum_{\wa \in \sdRoutes}\ell_{\wa,\arc}(h'_\wa)= \sum_{\wa \in \sdRoutes}\ell_{\wa,\arc}(h_\wa) +   \Indi_{c}(\arc) \cdot h_c$ for all $\arc \in \GA$ from which the claimed equality follows immediately as $h'_c = 0$ and $h'_{\tilde c} = h_{\tilde c}$ for all $\tilde c \neq c$ by the definition of~$h'$. Here, we denote by $ \Indi_{c}(\arc)$ the function that is $1$ if $\arc \in c$ and $0$ else. 
					With the help of the identities derived in~\ref{subclaim: FlowDecompSingleCycleh':l(h')ex}, we can calculate for an arbitrary $\arc \in\GA$: 					
					\begin{align}
						\sum_{\wa \in \sdRoutes}&\ell_{\wa,\arc}(h'_\wa) - \sum_{\wa \in \sdRoutes}\ell_{\wa,\arc}(h_\wa)= \notag \\ 
					\notag	&=  
						\sum_{\wa\in \sdRoutes}\Bigg[ -\sum_{m: \wa ^m = \wa} \sum_{n \in \N} \ell_{\wa,\arc}\Big(\Indi_{(\rho^{m})^{-1}(n)} \frac{\hat{\rho}^m}{n \norm{\wa^m}_M} h_{\wa^m} \Big)\\&\quad \quad + 
						\sum_{(m,n): \hat{\wa}^{m,n} = \wa}\ell_{\wa,\arc}\Big( \Indi_{(\rho^{m})^{-1}(n)}  \frac{\hat{\rho}^m}{n \norm{\wa^m}_M}h_{\wa^{m}}\Big)\Bigg] \\
					\notag	&= \sum_{m \in M}\sum_{n \in \N} \ell_{\hat{\wa}^{m,n},\arc}\Big( \Indi_{(\rho^{m})^{-1}(n)}  \frac{\hat{\rho}^m}{n \norm{\wa^m}_M}h_{\wa^{m}}\Big) - \ell_{\wa^{m},\arc}\Big( \Indi_{(\rho^{m})^{-1}(n)}  \frac{\hat{\rho}^m}{n \norm{\wa^m}_M}h_{\wa^{m}}\Big) \\
					\notag	&\underset{\eqref{eq: subclaim: FlowDecompSingleCycleh':InducedFlow3}}{\overset{\eqref{eq: subclaim: FlowDecompSingleCycleh':InducedFlow2}}{=}} \sum_{m \in M}\sum_{n \in \N} \sum_{j\in \{ j_m,\ldots,j_m+\abs{c^{m,n}}-1\}: \hat{\wa}^{m,n}[j] = \arc } \ell_{\hat{\wa}^{m,n},j}\Big(\Indi_{(\rho^{m})^{-1}(n)} \cdot \frac{\hat{\rho}^m}{n{\norm{\wa ^m}_M}} h_{\wa^m}\Big) \\
					\notag	&\overset{\eqref{eq: subclaim: FlowDecompSingleCycleh':InducedFlow}}{=}  \sum_{m \in M}\sum_{n \in \N} \sum_{j\in \{ j_m,\ldots,j_m+\abs{c^{m,n}}-1\}: \hat{\wa}^{m,n}[j] = \arc }
						\ell_{{\wa}^{m},j_m}\Big(\Indi_{(\rho^{m})^{-1}(n)} \cdot \frac{1}{n{\norm{\wa ^m}_M}} \Indi_{\mathfrak D^m} \hat h^m\Big) \\
					\notag	&\symoverset{1}{=}  \Indi_{c}(\arc) \cdot \sum_{m \in M}\sum_{n \in \N}  	\ell_{{\wa}^{m},j_m}\Big(\Indi_{(\rho^{m})^{-1}(n)} \cdot \Indi_{\mathfrak D^m} \hat h^m\Big)\\
					\notag	&\symoverset{2}{=} \Indi_{c}(\arc) \cdot \sum_{m \in M}	\ell_{{\wa}^{m},j_m}\Big(\sum_{n \in \N}\Indi_{(\rho^{m})^{-1}(n)} \cdot \Indi_{\mathfrak D^m} \hat h^m\Big) \\
					\notag	&=  \Indi_{c}(\arc) \cdot \sum_{m \in M}	\ell_{{\wa}^{m},j_m}\Big( \Indi_{\R \setminus(\rho^{m})^{-1}(0)} \cdot\Indi_{\mathfrak D^m} \hat h^m\Big) \\ 
					\notag	&=  \Indi_{c}(\arc) \cdot \sum_{m \in M}	\ell_{{\wa}^{m},j_m}\Big( \Indi_{\mathfrak D^m \setminus(\rho^{m})^{-1}(0)} \hat h^m\Big) \\
						\overset{\text{\Crefshort{lem: elluindi}}}&{=}  \Indi_{c}(\arc) \cdot \sum_{m \in M}	 \Indi_{\arr_{\wa^m,j_m}\big(\mathfrak D^m\setminus(\rho^{m})^{-1}(0)\big)}	\cdot\ell_{{\wa}^{m},j_m}\Big( \hat h^m\Big) \\ 
						&=  \Indi_{c}(\arc) \cdot \sum_{m \in M}	 \Indi_{\arr_{\wa^m,j_m}\big(\mathfrak D^m\setminus(\rho^{m})^{-1}(0)\big)}	\cdot\Indi_{\arr_{\wa^m,j_m}(\mathfrak D^m)} \cdot h_c \label{eq: subclaim: FlowDecompSingleCycleh: 1}
					\end{align}
					 
					where the equality indicated by \refsym{1} holds as the middle part of~$\hat{\wa}^{m,n}$ consists of $n \cdot \norm{\wa^m}_M$ copies of the simple cycle~$c$. 
					In the equality indicated by \refsym{2}, we used the linearity and continuity of $\ell_{\wa^m,j_m}$ as well as the convergence of $\sum_{n \in \N}\Indi_{(\rho^{m})^{-1}(n)} \cdot \Indi_{\mathfrak D^m} \hat h^m$. 
					The final equality is due to the definition of $\hat{h}^m$. 
					
					Now in order to show that the expression in the final line is equal to $\Indi_c(e) \cdot h_c$, we argue in the following that $h_c = 0$ on $\arr_{\wa^m,j_m}(\mathfrak D^m)\setminus \arr_{\wa^m,j_m}\big(\mathfrak D^m\setminus(\rho^{m})^{-1}(0)\big)$ for all $m \in M$. Consider an arbitrary
					  $m \in M$ and observe that 
					\begin{align}\label{eq: subclaim: FlowDecompSingleCycleh}
						\arr_{\wa^m,j_m}(\mathfrak D^m)\setminus \arr_{\wa^m,j_m}(\mathfrak D^m\setminus (\rho^{m})^{-1}(0)) \subseteq \arr_{\wa^m,j_m}(\mathfrak D^m \cap(\rho^{m})^{-1}(0) )
					\end{align}
					which allows us to calculate
					\begin{align*}
						\int_{\arr_{\wa^m,j_m}(\mathfrak D^m)\setminus \arr_{\wa^m,j_m}(\mathfrak D^m\setminus (\rho^{m})^{-1}(0))}h_c\di\leb 
						&= \int_{\arr_{\wa^m,j_m}(\mathfrak D^m)\setminus \arr_{\wa^m,j_m}(\mathfrak D^m\setminus (\rho^{m})^{-1}(0))}\Indi_{\arr_{\wa^m,j_m}(\mathfrak D^m)} h_c\di\leb \\
						&\overset{\eqref{eq: subclaim: FlowDecompSingleCycleh}}{\leq}\int_{\arr_{\wa^m,j_m}(\mathfrak D^m \cap(\rho^{m})^{-1}(0) )} \Indi_{\arr_{\wa^m,j_m}(\mathfrak D^m)}h_c\di\leb \\
						&=\int_{\arr_{\wa^m,j_m}(\mathfrak D^m \cap(\rho^{m})^{-1}(0) )} \ell_{\wa^m,j_m}(\hat h^m)\di\leb \\
						&= \int_{\arr_{\wa^m,j_m}^{-1}(\arr_{\wa^m,j_m}(\mathfrak D^m \cap(\rho^{m})^{-1}(0) ))}  \hat{h}^m\di\leb \\
						\overset{\text{\Crefshort{lem: elluinj}}}&{=} \int_{\mathfrak D^m \cap(\rho^{m})^{-1}(0) }  \hat{h}^m\di\leb = 0
					\end{align*}
					Here, the last equality holds by definition of $\rho^m$. 
					Hence, by nonnegativity of $h_c$, it follows that $h_c = 0$ on $\arr_{\wa^m,j_m}(\mathfrak D^m)\setminus \arr_{\wa^m,j_m}\big(\mathfrak D^m\setminus(\rho^{m})^{-1}(0)\big)$. Using this allows us to simplify  \eqref{eq: subclaim: FlowDecompSingleCycleh: 1} further:
					
					\begin{align*}
						\eqref{eq: subclaim: FlowDecompSingleCycleh: 1} = \Indi_{c}(\arc) \cdot \sum_{m \in M}	 \Indi_{\arr_{\wa^m,j_m}(\mathfrak D^m)} \cdot h_c =  \Indi_{c}(\arc) \cdot h_c
					\end{align*}
					where we used for the last equality that $\bigcup_{m\in M}\arr_{\wa^m,j_m}(\mathfrak D^m)$ equals $\mathfrak T_c$ up to a null set and $h_c = 0$ on $\R \setminus \mathfrak T_c$ by definition of the latter set and $h_c$ being a zero-cycle inflow.

					\proofitem{\ref{subclaim: FlowDecompSingleCycleh':Prop}} This is a direct consequence of the above proof where we showed that 
					$\sum_{\wa \in \sdRoutes}\ell_{\wa,\arc}(h'_\wa) = \sum_{\wa \in \sdRoutes}\ell_{\wa,\arc}(h_\wa) + \Indi_c(\arc)\cdot h_c$ holds for all $\arc \in \GA$. This together with $h_c \geq 0$,  $h$ fulfilling \eqref{eq: FlowDecompEachCycle} and the definition of $h'_{\tilde{c}} = h_{\tilde{c}}, \tilde{c}\neq c$ and $h'_c = 0$ implies that $h'$ fulfills \eqref{eq: FlowDecompEachCycle} as well.\qedhere
				\end{structuredproof}
			\end{proofClaim}
			From \Cref{subclaim: FlowDecompSingleCycleh'}, the statement in \Cref{claim: FlowDecompSingleCycle} follows immediately. 
		\end{proofClaim}
		As argued before, the existence of a pure flow decomposition of~$\eflow$ now follows by successively applying \Cref{claim: FlowDecompSingleCycle} for all $c\in \SimpCyc$ to the flow decomposition~$h$ satisfying~\eqref{eq: FlowDecompEachCycle}. This concludes the proof of the \namecref{thm: PureFlowDecomp}. \qedhere
	\end{structuredproof}
\end{proof}

We finish this section with 
the following corollary implied by \Cref{thm: PureFlowDecomp}. It  
states that  for any   $\source$,$\dest$-flow $\eflow$, we can find a ``maximally pure'' flow decomposition $h'_\wa,\wa \in  {\hat{\Routes}},h'_c,c \in \mathcal{C}$ in the sense that any flow on a zero-cycle induced by some $h_c'$ for a $c \in \mathcal{C}$ is not induceable via a flow on an \stwalk.
 We only state the version for \auto setting here -- as before, the non-\auto case then follows directly from it by \Cref{lem: NonParaIsPara}.

\begin{corollary} \label{cor: PureFlowDecompIntuitive}
    Any \auto $\source$,$\dest$-flow $\eflow \in L_+(\hori)^\GA$ with outflow rate $\inflow_\dest$ has 
    a maximally pure \auto flow decomposition $h' \in L_+(\hori)^{\hat{\Routes}\cup \mathcal{C}}$, i.e.~\aauto flow decomposition which fulfills  for all $c \in \mathcal{C }$ and almost all $t \in \hori$ with $h_c'(t)>0$ that neither of the following conditions is fulfilled:
\begin{thmparts}
    \item $\dest \in c$ and  $\inflow_\dest (t)<0$. \label[thmpart]{cor: PureFlowDecompIntuitive: DestCor} 
    \item There exists an edge $\arc=(v,v') \notin c$ with $v \in c$ and  $\sum_{\wa \in \hat{\Routes}}\ell_{\wa,\arc}(h'_{\wa})(t)>0$. \label[thmpart]{cor: PureFlowDecompIntuitive: NotDestCor}
\end{thmparts}
\end{corollary}

Similar to \Cref{thm: PureFlowDecompIntuitive,thm: PureFlowDecomp}, we prove the above corollary by first showing  the following analogue version involving the sets defined in \Cref{def: ConnectedComp}.

\begin{corollary} \label{cor: PureFlowDecomp}
    Consider \aauto $\source$,$\dest$-flow $\eflow \in L_+(\hori)^\GA$ with outflow rate $\inflow_\dest$ and an arbitrary representative of a corresponding \auto flow decomposition~$h$ together with the sets defined in \Cref{def: ConnectedComp}. 
    Then, there exists 
    another \auto flow decomposition $h'_\wa,\wa \in  {\hat{\Routes}},h'_c,c \in \mathcal{C}$ with a corresponding representative and sets $\mathcal{C}'(t),t \in \hori,C'_{\n1'},\mathfrak T'_{C'_{\n1'}},\n1' \in \capn1'$ from \Cref{def: ConnectedComp} such that for every $\n1' \in \capn1'$ 
    and almost every $t \in \mathfrak T_{C'_{\n1'}}$ neither of the following statements holds:
\begin{thmparts}
    \item $\dest \in \GV_{C_{\n1'}'}$ and  $\inflow_\dest (t)<0$. \label[thmpart]{cor: PureFlowDecomp: DestCor} 
    \item There exists an edge $\arc=(v,v') \in \edgesFrom{C_{\n1'}'}$ with  $\eflow_\arc(t)>0$.  \label[thmpart]{cor: PureFlowDecomp: NotDestCor}
\end{thmparts}
Moreover, $h_c'(t)\leq h_c(t)$ for all $t \in \hori$ and $c \in \mathcal{C}$, and, for every $\n1' \in \capn1'$ there exists $\n1 \in \capn1$ with ${C}'_{{\n1}'} = C_{\n1}$ and ${\mathfrak T}'_{ C'_{{\n1'}}} \subseteq \mathfrak T_{C_{\n1}}$.
\end{corollary}

The proof idea is relatively straight forward: We define  new zero-cycle inflow rates $\hat{h}_c,c \in \mathcal{C}$ by setting $h$ to zero whenever neither \ref{cor: PureFlowDecomp: DestCor} or \ref{cor: PureFlowDecomp: NotDestCor} is fulfilled. These zero-cycle inflow rates combined with the original $\source$,$\dest$-walk inflow vector of $h$ induce an  $\source$,$\dest$-flow which has a pure flow decomposition $\tilde{h}$ as 
the zero-cycle inflow rates $\hat{h}_c,c \in \mathcal{C}$  fulfill, by construction, the condition in  \Cref{thm: PureFlowDecomp}.  
Adding to $\tilde{h}$ the difference $h_c - \hat{h}_c$  then yields the desired flow decomposition.

\begin{proof}
    For all $\n1 \in \capn1$  and $t \in \mathfrak T_{C_{\n1}}$ for which either \ref{cor: PureFlowDecomp: DestCor} or \ref{cor: PureFlowDecomp: NotDestCor} is fulfilled \wrt $h$, define 
    $\hat{h}_c (t):= h_c(t), c \in C_{\n1}$. 
    Otherwise, i.e.~for any pair $(c,t) \in \mathcal{C}\times \hori$ for which there does not exist an $\n1 \in \capn1$ such that $t \in \mathfrak T_{C_{\n1}}$ and $c \in C_{\n1}$ and one of \ref{cor: PureFlowDecomp: DestCor} or \ref{cor: PureFlowDecomp: NotDestCor} is fulfilled \wrt $h$, set $\hat{h}_c(t) :=0$. 
    Furthermore define $\hat{h}_\wa(t) := h_\wa(t),t \in \hori, \wa \in \hat{\Routes}$. 
    Then, $\hat{h}_\wa,\wa \in \hat{\Routes},\hat{h}_c, c \in \mathcal{C}$ are a flow decomposition of $\hat{\eflow}_\arc:= \sum_{\wa \in \hat{\Routes}}\ell_{\wa,\arc}(\hat{h}_\wa) +  \sum_{c \in \mathcal{C}:\arc \in c}\hat{h}_c,\arc \in \GA$. Moreover, 
    we can make the following observation: 
    \begin{claim}\label{claim: MaxPureFlow1}
        Consider the sets $\hat{\mathcal{C}}(t),t \in \hori$ and  $\hat{C}_{\hat{\n1}}, \hat{\mathfrak T}_{\hat{ C}_{\hat{\n1}}}, \hat{\n1} \in \hat{\capn1}$ defined in \Cref{def: ConnectedComp} \wrt $\hat{h}$. Then, for every $\hat{\n1 }\in \hat{\capn1}$, there exists $\n1 \in \capn1$ such that $\hat{C}_{\hat{\n1}} = C_{\n1}$ and $\hat{\mathfrak T}_{\hat C_{\hat{\n1}}} \subseteq \mathfrak T_{C_{\n1}}$.
    \end{claim}
\begin{proofClaim}
        Consider an arbitrary  $\hat{\n1}\in \hat{\capn1}$ and $t \in \hat{\mathfrak T}_{\hat{C}_{\hat{\n1}}}$. 
As $\hat{C}_{\hat{\n1}}$  is a connected component of $\hat{\mathcal{C}}(t)$, it is also connected in 
 $\mathcal{C}(t)$ since $\hat{\mathcal{C}}(t) \subseteq \mathcal{C}(t)$ due to $\hat{h}_{c}(t) \leq h_{c}(t),c \in \mathcal{C}$. 
    Hence, there has to exist an $\n1 \in \capn1$ such that $\hat{C}_{\hat{\n1}} \subseteq C_{\n1}$ and $t \in \mathfrak T_{C_{\n1}}$. 
    With this, taking any cycle $\hat c \in \hat C_{\hat{\n1}}$ gives us a cycle $\hat c \in C_{\n1}$ with $\hat h_{\hat c}(t) > 0$. Due to the definition of~$\hat h$, this implies that at least one of \ref{cor: PureFlowDecomp: DestCor} or \ref{cor: PureFlowDecomp: NotDestCor} must have been satisfied for $n$ and $t$ \wrt $h$. But then, the definition of~$\hat h$ also ensures that we have $\hat h_c(t) = h_c(t) > 0$ for all $c \in C_n$ and, therefore, $C_{\n1} \subseteq \hat C_{\hat{\n1}}$ and subsequently $C_{\n1} =\hat C_{\hat{\n1}}$.

    Hence, we have shown that  we can find for any  pair of $\hat{\n1}\in \hat{\capn1}$ and $t \in \mathfrak T_{\hat{C}_{\hat{\n1}}}$ a corresponding $\n1 \in \capn1$ with  $\hat{C}_{\hat{\n1}} = C_{\n1}$ and $t \in \mathfrak T_{C_{\n1}}$. 
     The claim then follows since $C_{\n1_1}\neq  C_{\n1_2}$ for any $\n1_1\neq \n1_2 \in \capn1$. 
\end{proofClaim}

As an immediate consequence of this claim and the definition of $\hat{h}$, we get that $\hat{\eflow}$ and $\hat{h}$ fulfill the conditions stated in \Cref{thm: PureFlowDecomp}. The latter implies that $\hat{\eflow}$ has a pure flow decomposition $\tilde{h}$, i.e.\ $\tilde{h}_{c}=0,c\in\mathcal{C}$. 
Define $h'_\wa := \tilde{h}_\wa,\wa \in \sdRoutes$ and $h_c':= h_c - \hat{h}_c, c \in \SimpCyc$. Then, $h'$ is a flow decomposition of~$\eflow$ by definition of $\hat{h},\hat{\eflow}$ and $\tilde{h}$. Next, we require the following claim:
\begin{claim}\label{claim: MaxPureFlow2}
     Consider the sets ${\mathcal{C}'}(t),t \in \hori$ and  ${C'}_{{\n1'}}, {\mathfrak T'}_{ { C'}_{ {\n1'}}}, {\n1'} \in  {\capn1'}$ defined in \Cref{def: ConnectedComp} with respect to~${h'}$. Then, for every ${\n1 '}\in  \capn1'$, there exists $\n1 \in \capn1$ such that $C'_{{\n1'}} = C_{\n1}$ and $\mathfrak T'_{C'_{\n1'}} \subseteq \mathfrak T_{C_{\n1}}$.
\end{claim}
\begin{proofClaim}
Consider an arbitrary $\n1' \in \capn1'$ and $t \in \mathfrak T'_{C'_{\n1'}}$. 
As ${C'}_{{\n1'}}$  is a connected component of ${\mathcal{C}'}(t)$, it is also connected in 
 $\mathcal{C}(t)$ since ${\mathcal{C}'}(t) \subseteq \mathcal{C}(t)$ due to $h'_{c}(t) \leq h_{c}(t),c \in \mathcal{C}$. 
    Hence, there has to exist a $\n1 \in \capn1$ such that $C'_{\n1'} \subseteq C_{\n1}$ and $t \in \mathfrak T_{C_{\n1}}$.  

We argue now that even ${C}'_{{\n1}'} = C_{\n1}$ holds. 
Assume for the sake of a contradiction that this would not hold. Then, there exists a cycle $c \in C_{\n1}\setminus {C}'_{{\n1}'}$ with   $h_c(t)>0$ but also $h'_c(t)=0$. This, in turn, implies $\hat{h}_c(t) = h_c(t)>0$ and hence $c \in \hat{C}_{\hat{n}}$ for some $\hat{n}$. \Cref{claim: MaxPureFlow1} together with $c \in C_{\n1}$ then implies $ \hat{C}_{\hat{n}} = C_{\n1}$ since we have $C_{\n1_1}\neq  C_{\n1_2}$ for any $\n1_1\neq \n1_2 \in \capn1$.
 From this, we get  for all $c \in C_{\n1}$ that  $\hat{h}_c(t)>0$  and, by definition of $\hat{h}$, even $\hat{h}_c(t) = h_c(t)$. This, however, shows that $h'_c(t)=0$ for all $c \in C_{\n1}$ which is not possible as ${C}'_{{\n1}'} \subseteq C_{\n1}$ and ${C}'_{{\n1}'} \neq \emptyset$ (by definition). 
\end{proofClaim}
The ``moreover'' part of the \namecref{cor: PureFlowDecomp} is now a direct consequence of the above claim and the definition of $h'$. 
For the first part of the statement, consider an arbitrary  $\n1' \in \capn1'$ 
    as well as  $t \in \mathfrak T'_{C'_{\n1'}}$.  
    By \Cref{claim: MaxPureFlow2}, there exists $\n1 \in \capn1$ such that $C'_{{\n1'}} = C_{\n1}$ and $t \in \mathfrak T_{C_{\n1}}$. 
    By  $t \in \mathfrak T'_{C'_{\n1'}}$, we have  $h'_c(t)>0$ and subsequently $h_c(t) > \hat{h}_c(t)$ has to hold. 
 In particular, by the definition of $\hat{h}$, this shows that neither \ref{cor: PureFlowDecomp: DestCor} nor \ref{cor: PureFlowDecomp: NotDestCor}  is fulfilled \wrt $h$ for $\n1,t,C_{\n1}$. 
 But since $C_{\n1} = C'_{{\n1'}}$,  this shows that neither \ref{cor: PureFlowDecomp: DestCor} nor \ref{cor: PureFlowDecomp: NotDestCor}  is fulfilled \wrt $h'$ for $\n1',t,C'_{\n1'}$ either. This completes the proof.  
 \end{proof}
With \Cref{cor: PureFlowDecomp} at hand, the proof of \Cref{cor: PureFlowDecompIntuitive} is straight forward as we only need to show that the flow decomposition constructed in \Cref{cor: PureFlowDecomp} is a maximally pure one, i.e.~fulfills the condition in \Cref{cor: PureFlowDecompIntuitive}:

\begin{proof}[Proof of \Cref{cor: PureFlowDecompIntuitive}] 
	Let $h$ be (an arbitrary representative of) a flow decomposition of $\eflow$ which exists by \Cref{thm: FlowDecomp}. Furthermore, let $h'$ be  the representative of the flow decomposition constructed in \Cref{cor: PureFlowDecomp}. 
	We argue in the following via a  proof by contradiction that this $h'$ fulfills the conditions stated in \Cref{cor: PureFlowDecompIntuitive}: 
	Assume for the sake of a contradiction that there exists $c \in \SimpCyc$ and a measurable set $\mathfrak T$ with $\leb(\mathfrak T)>0$ for which $h'_c(t)>0$ and either of the conditions \Cref{cor: PureFlowDecomp: DestCor} or \Cref{cor: PureFlowDecomp: NotDestCor} holds. It is sufficient to consider the cases where either one of the conditions is satisfied for all of $\mathfrak T$. 
	Similarly, since there are only finitely many sets $C'_{\n1'},\n1'\in\capn{1}'$, there exists $\n1' \in \capn1'$ with $\leb(\mathfrak T_{C'_{\n1'}}\cap \mathfrak T)>0$ and we can assume \wlg that $\mathfrak T \subseteq \mathfrak T_{C'_{\n1'}}$.

	\begin{structuredproof}
		\proofitem{\Cref{cor: PureFlowDecomp: DestCor}} If \Cref{cor: PureFlowDecomp: DestCor} was fulfilled during $\mathfrak T$, then $\dest \in \GV_{C'_{\n1'}}$ 
		and subsequently \Cref{cor: PureFlowDecompIntuitive: DestCor} is fulfilled for all $t \in\mathfrak T$, contradicting that $h'$ fulfills the 
		properties stated in \Cref{cor: PureFlowDecompIntuitive}.

		\proofitem{\Cref{cor: PureFlowDecomp: NotDestCor}}
		Assume that 
		 \Cref{cor: PureFlowDecomp: NotDestCor} was fulfilled during $\mathfrak T$ but \Cref{cor: PureFlowDecomp: DestCor} is not. 
		 We can assume \wlg (by finiteness of $\GA$) that \Cref{cor: PureFlowDecomp: NotDestCor} is fulfilled \wrt one edge $\arc =(v,v')\in \edgesFrom{c}$ with  $\sum_{\wa \in \sdRoutes}\ell_{\wa,\arc}(h'_{\wa})(t)>0$ for all $t\in \mathfrak T$. Since $\sdRoutes$ is countable, there has to exist  $\wa \in \sdRoutes$ and $j\leq \abs{\wa}$ with 
		$\ell_{\wa,j}(h'_\wa)(t)>0$ for a non-null subset of $\mathfrak T$ and we can assume \wlg on all of $\mathfrak T$. 
		Let $j^*>j$ be the first index with $\wa[j^*]\notin \GA_{C_{\n1'}'}$ (and subsequently $\wa[j^*] \in \edgesFrom{C_{\n1'}'}$). Such a $j^*$ has to exists since $\wa$ is an \stwalk{} and 
		\Cref{cor: PureFlowDecomp: DestCor} does not hold. 
		By the travel times on all arcs $\{\wa[j],\ldots,\wa[j^*-1]\}\subseteq \GA_{C'_{\n1'}}$ being equal to $0$ (due to $\mathfrak T \subseteq  \mathfrak T_{ C_{\n1'}'}$), 
		it follows by \Cref{lem: FLowOnZeroTrav,lem: elluPropagation} that 
		$\ell_{\wa,j^*}(h'_\wa)(t)=\ell_{\wa,j}(h'_\wa)(t)>0$ for almost all $t \in \mathfrak T$.  Since $h'$ is in 
		particular a flow decomposition of $\g$, we hence arrive at $\g_{\wa[j^*]}(t) \geq \ell_{\wa,j^*}(h'_\wa)(t) >0$ for almost all $ t \in \mathfrak T$. That is, \Cref{cor: PureFlowDecomp: NotDestCor} 
		is fulfilled for almost all $t \in \mathfrak T$ and edge $\wa[j^*]$, in contradiction to $h'$ fulfilling the 
		properties stated in \Cref{cor: PureFlowDecompIntuitive}.  
		\qedhere
	\end{structuredproof}
\end{proof}

\section{Equivalence of Dynamic Equilibrium Definitions}\label{sec:EquivalenceOfDE}

\newcommand{\cost}{C}
\newcommand{\setAE}{E}

As an application of our decomposition theorem, we now discuss the equivalence of walk- and edge-based definitions for dynamic equilibria. Specifically, we consider a network where, in addition to the traversal time functions $\trav_\arc(\cdot,\cdot)$, every edge also comes with a (flow and time dependent) cost function $\cost_\arc(\cdot,\cdot)$ such that $\cost_\arc(\g,t) \in \Rnn$ denotes the (nonnegative) cost of traversing edge~$\arc$ under the edge flow~$\g$ when entering edge~$\arc$ at time~$t$.

For any edge $\source$,$\dest$-flow $\g \in L_+(\R)^\GA$ and walk $\wa$ we then define the cost of traversing this walk at time~$t$ via 
    \[\cost_\wa(\g,t) \coloneqq \sum_{j=1}^{\abs{\wa}}\cost_{\wa[j]}(\g,\arr_{\wa,j}(\g,t)).\]
Furthermore, we define remaining cost labels $R_v(\g,t)$ for any node $v \in V$ by
    \[R_v(\g,t) \coloneqq \inf\Set{\cost_\wa(\g,t) | \wa \text{ a $v$,$\dest$-walk}}.\]

For walk inflows the most common equilibrium definition is then to require Wardrop's first principle (\cite[p.~345]{Wardrop52}) to hold, i.e.\ flow particles only ever enter cheapest \stwalk s (cf.\ e.g.\ \cite[Definition~4.1]{ZhuM00}, \cite[Definition~1]{Koch11}):

\begin{definition}
    We call a walk inflow $h \in L_+(\R)^{\Routes}$ a \emph{walk-based dynamic equilibrium} if it induces an edge $\source$,$\dest$-flow $\g$ and satisfies
        \[h_\wa(t) > 0 \implies \cost_\wa(\g,t) = R_s(\g,t)\]
    for all $\wa \in \Routes$ and almost all $t \in \R$.
\end{definition}

A natural way of transferring this concept to edge flows is to require flow particles to only ever enter edges which are the start of some cheapest walk towards the destination.

\begin{definition}
    We call an edge $\source$,$\dest$-flow $\g \in L_+(\R)^\GA$ an \emph{edge-based dynamic equilibrium} if it satisfies
        \[\g_\arc(t) > 0 \implies \arc \in \setAE(\g,t)\]
    for all $\arc \in \GA$ and almost all $t \in \R$, where
        \[\setAE(\g,t) \coloneqq \Set{\arc = (v,v') \in \GA | R_v(\g,t)(t) \geq \cost_\arc(\g,t) + R_{v'}(\g,\exit_\arc(\g,t))}\]
    denotes the set of all \emph{active} edges at time~$t$.
\end{definition}

Note that, for the case of costs being equal to travel times, the labels are typically defined starting at the source (i.e.\ denoting the earliest arrival time at intermediate nodes instead of the shortest remaining travel time), cf.\ e.g.\ \cite[Definition~2]{Koch11} or \cite[Lemma~1]{CominettiCL15}. For the case of more general costs as considered here, however, we have to use the version with destination-based labels. 

We now aim to show that these two definitions are equivalent in the sense that for every walk-based dynamic equilibrium there exists a corresponding edge-based dynamic equilibrium and vice versa. Here, we say that a walk inflow~$h$ and an edge flow~$\g$ \emph{correspond to each other} if they satisfy equation~\eqref{eq: DefFlowDecomp}, i.e.\ if $h$ is a flow decomposition of~$\g$.

In order to show this equivalence, we first need to establish equivalence of the definitions for given corresponding flows. The desired result then follows by a simple application of our flow decomposition theorem. 

\begin{lemma}\label{lemma:EquivalenceEquilibriumDefinitionsGivenFlows}
    Consider a network with edge cost functions such that for any edge flow~$\g$ there are no cycles of cost~$0$. Let $h$ be a walk inflow and $\g$ a corresponding edge $\source$,$\dest$-flow. Then $h$ is a walk-based dynamic equilibrium if and only if $\g$ is an edge-based equilibrium.
\end{lemma}

Note that, a similar statement specifically for the Vickrey queuing model and dynamic equilibria traversal times as cost functions can be found in \cite[Theorem 4.13 and Lemma~4.14]{KochThesis} and \cite[Theorem~1]{Koch11}.

\begin{proof}
    We start by showing that cheapest walks are exactly those walks consisting only of active edges:
    \begin{claim}\label{claim:activePathsAreMinimalPaths}
        Let $\g$ be an edge $\source$,$\dest$-flow and $\wa$ any $v$,$\dest$-walk. Then, for any time $t \in \hori$ we have
            \[\cost_\wa(\g,t) = R_v(\g,t) \iff \forall j \in [\abs{\wa}]: \wa[j] \in \setAE(\g,\arr_{\wa,j}(\g,t)).\]
    \end{claim}
    
    \begin{proofClaim}
    For all edges $\arc=({\tilde{v}},v')$, we have 
        by the definition of the node labels $R_{\tilde{v}}$ (and triangle-inequality) that $R_{\tilde{v}}(g,t)(t) \leq \cost_\arc(g,t) + R_{v'}(g,\exit_\arc(g,t))$   for all times $t \in \hori$ with equality if and only if $({\tilde{v}},v')$ is active at that time. Hence, using $\wa[j] \eqqcolon (v_{j-1},v_j)$ we get
            \begin{align*}
                \cost_\wa(g,t) 
                    &= \sum_{j=1}^{\abs{\wa}}\cost_{\wa[j]}(\g,\arr_{\wa,j}(\g,t)) \\
                    &\symoverset{1}{\geq} \sum_{j=1}^{\abs{\wa}}\left(R_{v_{j-1}}(\g,\arr_{\wa,j}(\g,t))-R_{v_j}(\g,\exit_{\wa[j]}(\g,\arr_{\wa,j}(\g,t)))\right) \\
                    &= \sum_{j=1}^{\abs{\wa}}\left(R_{v_{j-1}}(\g,\arr_{\wa,j}(\g,t))-R_{v_j}(\g,\arr_{\wa,j+1}(\g,t))\right) \\
                    &= R_{v_0}(\g,\arr_{\wa,1}(\g,t))-R_{v_{\abs{\wa}+1}}(\g,\arr_{\wa,\abs{\wa}+1}(\g,t)) \\
                    &= R_v(\g,t) - R_\dest(\g,\arr_{\wa,\abs{\wa}+1}(\g,t)) = R_v(\g,t) - 0
            \end{align*}
        with equality at~\refsym{1} if and only if all edges along that walk are active at the respective entry times.
    \end{proofClaim}

    With this, we can now deduce the \namecref{lemma:EquivalenceEquilibriumDefinitionsGivenFlows}'s statement:
    
   \begin{structuredproof}
        \proofitem{``if''} Let $\wa \in \Routes$ be any $\source$,$\dest$-walk. Fix some representatives of $h$ and $\g$ such that we have 
            \[h_\wa(t) > 0 \implies \g_{\wa[j]}(\arr_{\wa,j}(\g,t) > 0\]
        for all $j \in [\abs{\wa}]$ and \emph{all} times $t \in \hori$ (this is possible by \Cref{lem: Relations:h>0u>0:Pointwise}). Then, define the sets
            \[\mathfrak T_j \coloneqq \Set{t \in \hori | g_{\wa[j]}(t) > 0 \text{ and } \wa[j]\notin\setAE(g,t)}\]
        of times where $\g$ violates the equilibrium condition on the $j$-th edge of walk~$\wa$. Since $\g$ is an edge-based dynamic equilibrium, all these sets must be null sets. Hence, by \Cref{lem: elluExistenceProperties:ExistenceInducedFlowOnAllEdges}, we get
            \[h_\wa(t) = 0 \text{ for almost all } t \in \bigcup_{j=1}^{\abs{\wa}}\arr_{\wa,j}(\g,\cdot)^{-1}(\mathfrak T_j).\]
        In other words, for every $j \in [\abs{\wa}]$ and almost all $t \in \hori$ with $h_\wa(t) > 0$ we have $t \notin \arr_{\wa,j}(g,\cdot)^{-1}(\mathfrak T_j)$ and, therefore, $g_{\wa[j]}(\arr_{\wa,j}(\g,t)) = 0$ or $\wa[j] \in E(g,\arr_{\wa,j}(g,t))$. But since our choice of representatives for $h$ and $\g$ precludes the former, the latter must hold for almost all $t \in \hori$ with $h_\wa(t) > 0$ (and all $j \in [\abs{\wa}]$). With this, \Cref{claim:activePathsAreMinimalPaths} implies that we have $\cost_\wa(g,t) = R_s(g,t)$ for almost all such times~$t$ and, thus, $h$ is a walk-based dynamic equilibrium.

        \proofitem{``only if''} We fix representatives of $h$ and $\g$ such that the equilibrium condition for $h$ hold for \emph{all} times. Now, let $\arc \in \GA$ be any edge and $\mathfrak T \coloneqq \set{t \in \hori | g_\arc(t) > 0}$ the set of times where flow enters this edge. If $\mathfrak T$ has measure zero, there is nothing to show. Otherwise, \Cref{lem: Relations:h>0u>0:u>0ExistsHw>0} gives us for almost every $t \in \mathfrak T$ some \stwalk~$\wa\in\Routes$ and $j \in [\abs{\wa}]$ with $\wa[j]=\arc$ such that there exists some $\tilde t \in \arr_{\wa,j}(g,\cdot)^{-1}(t)$ with $h_\wa(\tilde t) > 0$. Since $h$ satisfies the equilibrium condition everywhere, this implies $\cost_\wa(g,\tilde t) = R_s(g,\tilde t)$ and, hence, $\wa[j] \in \setAE(g,\arr_{\wa,j}(g,\tilde t)) = \setAE(g,t)$ by \Cref{claim:activePathsAreMinimalPaths}. Therefore, $\g$ is an edge-based dynamic equilibrium. \qedhere
    \end{structuredproof}   
\end{proof}

\begin{corollary}\label{thm:def-equilibrium-equivalent}
    Given a network where the destination~$\dest$ is reachable from every node~$v \in \GV$ and with nonnegative edge cost functions such that for any fixed edge flow there are no cycles of cost~$0$. Then the following holds:
    \begin{enumerate}
        \item For any edge-based dynamic equilibrium $\g$, there exists a corresponding walk inflow~$h$ which is a walk-based equilibrium.
        \end{enumerate}
        Moreover, if   every walk inflow has a corresponding network loading, the following holds:
        \begin{enumerate}[resume]
        \item For any walk-based dynamic equilibrium $h$, there exists a corresponding edge flow~$\g$ which is an edge-based dynamic equilibrium.
    \end{enumerate}
\end{corollary}

\begin{proof}
    \Cref{lemma:EquivalenceEquilibriumDefinitionsGivenFlows} together with the assumption that network loadings exist, directly implies the second part of the corollary. 
    For the first part we apply our decomposition theorem (\Cref{thm: FLowDecompModel}) to obtain a decomposition of $\g$ into walk- and (zero)-cycle inflow rates $h$. If $h$ is in fact a pure walk inflow, we are done by \Cref{lemma:EquivalenceEquilibriumDefinitionsGivenFlows}. Thus, assume for contradiction that this is not the case. That is, there is some positive zero-cycle inflow $h_c$ on some set of positive measure $\mathfrak T$. Then, we must have 
        \[g_\arc(t) = \ell_\arc(h)(t) \geq \ell_\arc(h_c)(t) \Croverset{lem: FLowOnZeroTrav}{=} h_c(t) > 0\]
    for almost all $t \in \mathfrak T$ and $\arc \in c$. Since $\g$ is an edge-based equilibrium, this implies that all edges on~$c$ are active for almost all $t \in \mathfrak{T}$. Since $\dest$ is reachable from every node, the node labels~$R_v(g,t)$ are all finite and, hence, a zero-cycle of only active edges is only possible if the costs on all these edges are zero -- a contradiction to the assumptions of this \namecref{thm:def-equilibrium-equivalent}.
\end{proof}

\section{Structural Properties of \Auto Network Loadings} 
\label{sec: ANLResultsFlowDecomp}

In this section, we develop the theory of \auto network loadings required to derive our results. In particular, we provide formal proofs of the central insights on \auto network loadings stated in \Cref{sec:uBasedNetworkLoadings}.
We start with an example demonstrating that  not every walk inflow rate necessarily induces an edge flow \wrt arbitrary fixed traversal times $\trav$ that can be described via a vector ${\g} \in L_+(\hori)^\GA$. This is true even in case of traversal times $\trav$ being induced by some other flow in a well known flow propagation model like the Vickrey model.

 \begin{example}\label{exa: noarcflow}
Consider the  network depicted in \Cref{fig: NLexa}. As flow propagation model we use the Vickrey queuing model  with free flow travel times and service rates given by  $\tau_\arc$ and $\nu_\arc$ on the edges.
 \begin{figure}[h!]     
        \begin{center}
        \BigPicture[1]{%
\begin{tikzpicture}
	\node[namedVertex] (v) at (0,0) {$s$};
	\node[namedVertex] (w) at (4,0) {$v$};	
 \node[namedVertex] (t) at (8,0) {$\dest$};
	\draw[edge, bend left] (v) to node[below]{$\arc_1$} node[above,pos =0.5]{$\tau_{\arc_1} = 1,\nu_{\arc_1}  = 1$} (w);
\draw[edge,bend right] (v) to node[above]{$\arc_2$}node[below,pos =0.5]{$\tau_{\arc_2} = 1,\nu_{\arc_2}  = 2$}  (w);
\draw[edge] (w) to node[below]{$\arc_3$}node[above,pos =0.5]{$ \tau_{\arc_3} = 1,\nu_{\arc_3}  = 4$}  (t);
\node[above left =1cm of v,xshift=-1cm,  anchor=west, node distance=1cm,blue]  {$h_{\wa_1} = 2_{[0,1]}$};
\node[below left =1cm of v,xshift=-1cm,  anchor=west, node distance=1cm,blue]  {$h_{\wa_2} = 2_{(1,2]}$};

    \begin{scope}[xshift=-2.5cm,yshift=3cm]
        \begin{axis}[anchor=center, xmin=0,xmax=3.2,ymax=4.2, ymin=0, samples=500,width=4.5cm,height=3.5cm,
				        axis y line*=left, axis lines=left, xtick={1,2,3},ytick={0,1,2,3}]
            \addplot[blue,  thick,samples at={0,...,1},-]  (\x,\x)  ;
            \addplot[blue,  thick,samples at={1,...,2},-]  (\x,2-\x)node[above right,pos=0.5]{$q_{\arc_1}(\g,\cdot)$};
            \addplot[blue, very thick,samples at={2,...,6},-]  (\x,0)  ;
        \end{axis}
    \end{scope}

    \begin{scope}[xshift=2cm,yshift=3cm]
        \begin{axis}[anchor=center, xmin=0,xmax=3.2,ymax=4.2, ymin=0, samples=500,width=4.5cm,height=3.5cm,
				         axis y line*=left, axis lines=left,x label style={at={(axis description cs:1,-0.1)},anchor=north}, xtick={1,2,3}, ytick={0,1,2,3}]
            \addplot[blue,  thick,samples at={0,...,1},-]  (\x,1 +\x)  ;
            \addplot[blue,  thick,samples at={1,...,2},-]  (\x,3-\x) ;
            \addplot[blue, thick,samples at={2,...,6},-]  (\x,1) node[above ,pos=0.1]{$\trav_{\arc_1}(\g,\cdot)$} ;
        \end{axis}
    \end{scope}

    \begin{scope}[xshift=6.5cm,yshift=3cm]
        \begin{axis}[anchor=center,xmin=0,xmax=3.2,ymax=4.2, ymin=0, samples=500,width=4.5cm,height=3.5cm,
				    axis y line*=left, axis lines=left,x label style={at={(axis description cs:1,-0.1)},anchor=north}, xtick={1,2,3}, ytick={0,1,2,3}]
            \addplot[blue,  thick,samples at={0,...,1},-]  (\x,1 +2*\x)  ;
            \addplot[blue,  thick,samples at={1,...,2},-]  (\x,3)node[below ,pos=0.8]{$\arr_{\wa_1,2}(\g,\cdot)$};
            \addplot[blue,  thick,samples at={2,...,6},-]  (\x,1+\x)  ;
        \end{axis}
    \end{scope}
\end{tikzpicture}
}
\end{center}
    \caption{Example for the non-existence of edge flows under fixed traversal times.}
    \label{fig: NLexa}
\end{figure}

Consider the edge flow $\g_{\arc_1} = 2_{[0,1]}$, $\g_{\arc_2} = 2_{(1,2]}$, $\g_{\arc_3} = 1_{[1,3]} + 2_{(2,3]}$ induced by the walk inflow rates $h_{\wa_1} = 2_{[0,1]}$, $h_{\wa_2} = 2_{(1,2]}$ (where $\wa_1 = (\arc_1,\arc_3)$ and $\wa_2=(\arc_2,\arc_3)$).  

On edge~$\arc_1$, a queue starts to build in the time interval from $0$ to $1$ with 
the volume $q_{\arc_1}(\g,t) = t$ for $t \in[0,1]$ and starts to decrease in the time interval $1$ to $2$ with $q_{\arc_1}(\g,t) = 2-t$. 
The resulting travel times on $\arc_1$ and the corresponding arrival times at $\arc_3$ over $\wa_1$ are given by 
\[
    \trav_{\arc_1}(\g,t) = 1 + q_{\arc_1}(\g,t)/\nu_{\arc_1} = \begin{cases}
        1 + t, &\text{if } t \in [0,1] \\
        3 - t, &\text{if } t \in (1,2]
    \end{cases} 
\]
and
\[
    \arr_{\wa_1,2}(\g,t) = t +\trav_{\arc_1}(\g,t)= \begin{cases}
        2+t, &\text{if } t \in [0,1] \\
        3, &\text{if } t \in (1,2]
    \end{cases},
\]
respectively.

Now, consider the walk inflow rate $\Tilde{h}_{\wa_1} := 2_{[0,2]}$ under the fixed traversal times of $\g$ and assume that the \auto network loading $\ell_{\wa_1}(\Tilde{h}_{\wa_1})$ existed. This would imply that 
\begin{align*}
    0 = \int_{\{3\}} \ell_{\wa_1}(\Tilde{h}_{\wa_1}) \di \sigma =  \int_{\arr_{\wa_1,2}(\g,\cdot)^{-1}(3)} \tilde{h}_{\wa_1} \di \sigma =   \int_{[1,2]} 2 \di \sigma = 2,
\end{align*}
which is a contradiction.
Hence, $\ell_{\wa_1}(\Tilde{h}_{\wa_1})$ does not exist. 
 
The reason for the non-existence of an  edge flow induced by $\Tilde{h}_{\wa_1}$ under the fixed traversal times of~$\g$  
is the fact that $\Tilde{h}_{\wa_1}$ sends a nontrivial amount of particles into the walk $\wa_1$ during $[1,2]$. These particles, then, all arrive at the same time $\arr_{\wa_1,2}(\g,t) = 3, t\in (1,2]$ at $\arc_3$.
\end{example}

\subsection{Existence of \Auto Network Loadings}
\Cref{exa: noarcflow} suggests that a necessary condition for a walk inflow rate $h_\wa \in L_+(\hori)$ to induce an edge flow, is that no flow of positive measure is sent into the walk in such a way that these flow particles all arrive at some edge during a null set of times. That is, $h_\wa$ must satisfy the following condition for all $j \leq |\wa|$:
\begin{equation}\label{eq: nlexists}
	h_\wa = 0 \text{ on } \arr_{\wa,j}^{-1}(\mathfrak T) \text{ for every Borel-measurable  $\leb$-null set }\mathfrak T \in \mathcal{B}(\hori).  
\end{equation}
We show in the following \namecref{lem: elluExistenceProperties} that this is also a sufficient condition for the existence 
 $h_\wa$ to induce a corresponding \auto  edge flow.

In \cite{KochThesis} the same condition (called compatibility of $h$ and $\arr_{\wa,j}(h,\cdot)$ there) is stated as an assumption on the flow model itself and is required to hold for \emph{all} walk inflows and  corresponding induced arrival time functions (see \cite[Definition~3.2]{KochThesis} where, in the last paragraph, $\tau_P$ seems to be a typo and should be replaced by~$\ell_P$). 
For \auto network loadings this assumption does not hold in general (see \Cref{exa: noarcflow} for a simple instance with traversal times induced by the Vickrey queuing model where not every walk inflow has a \auto network loading).
However, as the following \namecref{lem: elluExistenceProperties} will show, here this condition can instead be used to exactly characterize those inflow rates $h_\wa \in L_+(\hori)$ into a walk~$\wa$ that do induce a corresponding \auto  edge flow.

To formally state this \namecref{lem: elluExistenceProperties} we will need some additional notation: 
For any walk $\wa$, $j \in [|\wa|]$ and $h_\wa \in L_+(\hori)$, we denote by $\ell_{\wa,j}(h_\wa) \in L_+(\hori)$ the (\auto[)] flow induced by $h_{\wa}$ on the  $j$-th edge of $\wa$ if it exists, i.e.\ a function satisfying
\begin{align}\label{eq: Deflwj}
    \int_{\mathfrak T} \ell_{\wa,j}(h_\wa) \di\sigma=    \int_{\arr_{\wa,j}^{-1}(\mathfrak T)}h_\wa\di\sigma
\end{align}
  for all $\mathfrak T \in \mathcal{B}(\hori)$.  
Remark that the stated equality is equivalent to the equality holding for all intervals of the form ${\startint}t]$.
Analogously, we also define $\ell_{\wa,j}(h_\wa)$ for $j = |\wa|+1$, denoting the inflow 
into the last node of the walk~$\wa$. 
Note that, if an edge~$\arc$  occurs multiple times on~$\wa$, then $\ell_{\wa,j}(h_\wa)$ with $\wa[j] = \arc$ is different to $\ell_{\wa,\arc}(h_\wa)$ (the flow induced by $h_\wa$ on edge~$\arc$) but related to it by $\ell_{\wa,\arc}(h_\wa) = \sum_{j:w[j]=\arc}\ell_{\wa,j}(h_\wa)$.  
From now on, 
we denote by $\edom{\Routes'}$, $\edom{\wa}$, $\edom{\wa,\arc}$ and $\edom{\wa,j}$ 
the maximal domains of the corresponding \auto network loading operator. 
 Equivalently to writing $h_\wa \in \edom{\wa,j}$, we will also say that $\ell_{\wa,j}(h_\wa)$ exists and 
 adopt the analogue convention for  the sets $\edom{\Routes'}$, $\edom{\wa}$ and  $\edom{\wa,\arc}$.

\begin{rsttheorem}{\ref{lem: elluExistencePropertiesShort}}\label{lem: elluExistenceProperties}
    Consider an arbitrary countable collection of walks $\Routes'$, $h \in L_+(\hori)^{\Routes'}$,  $\wa\in \Routes'$, $j \in[|\wa|+1]$ and $\arc \in \GA$. Then, the following holds:
    \begin{thmparts}
        \item $h_\wa \in \edom{\wa,j}$   if and only if $h_\wa$ satisfies~\eqref{eq: nlexists}.  In this case $\ell_{\wa,j}(h_\wa)$ is uniquely determined. \label[thmpart]{lem: elluExistenceProperties:ExistenceInducedFlowOnJthEdge}

        \item 
         $h_\wa \in \edom{\wa,\arc}[]$ if and only if  $h_\wa \in \edom{\wa,j}$ for all $j\leq \abs{\wa}$ with $\wa[j] = \arc$. In this case, $\ell_{\wa,\arc}(h_\wa)$ is uniquely determined by $\ell_{\wa,\arc}(h_\wa) = \sum_{j:\wa[j]=\arc}\ell_{\wa,j}(h_\wa)$.
          \label[thmpart]{lem: elluExistenceProperties:ExistenceInducedFlowOnEdgeE}

        \item  $h_\wa \in \edom{\wa}[]$ if and only if  $h_\wa \in \edom{\wa,j}$ for all $j\leq \abs{\wa}$. In this case, $\ell_{\wa}(h_\wa)$ is uniquely determined by $\ell_{\wa,\arc}(h_\wa) = \sum_{j:\wa[j]=\arc}\ell_{\wa,j}(h_\wa)$ for all $\arc \in \GA$.\label[thmpart]{lem: elluExistenceProperties:ExistenceInducedFlowOnAllEdges}

        \item The maximal domains $\edom{\wa,j}$ and $\edom{\wa,\arc}$ of $\ell_{\wa,j}$ and $\ell_{\wa,\arc}$  are sequentially weakly closed convex cones of $L_+(\hori)$, that is, if $h_\wa^n\wto h_\wa$ and $\ell_{\wa,j}(h_\wa^n),n\in \N$ exist, then so does $\ell_{\wa,j}(h_\wa)$ and, analogously, for $\ell_{\wa,\arc}$.

        Moreover, both functions are linear on their respective domains.
        \label[thmpart]{lem: elluExistenceProperties:Linearity}

        \item  If the aggregated edge flow $\ell_{\Routes'}(h) \in L_+(\hori)^\GA$ exists, then it is uniquely determined. On its maximal domain $\edom{\Routes'} \subseteq L_+(\hori)^{\Routes'}$, the function  $\ell_{\Routes'}$ is  linear. \label[thmpart]{lem: elluExistenceProperties:AggLinearity}

        \item The aggregated edge flow $\ell_{\Routes'}(h) \in L_+(\hori)^\GA$ exists if and only if $\ell_{\wa}(h_\wa)$ exists for all $\wa \in \Routes'$ and $(\ell_{\wa}(h_\wa))_{\wa\in \Routes'} \in \seql[1][\Routes'][L_+(\hori)^\GA]$ holds.   In this case, $\Nl[]_{\Routes'}(h)$ is uniquely determined by $\Nl[]_{\Routes'}(h) = \sum_{\wa \in \Routes'}\Nl[]_\wa(h_\wa)$.
        \label[thmpart]{lem: elluExistenceProperties:ExistenceInducedFlow}

 
    \end{thmparts} 
\end{rsttheorem}

\begin{proof}
We will only prove the statements about $\ell_{\Routes'}$ and $\ell_{\wa,j}$ as the analogues ones about $\ell_{\wa,\arc}$ follow directly from the statements of $\ell_{\wa,j}$. 
    \begin{structuredproof}
        \proofitem{\ref{lem: elluExistenceProperties:ExistenceInducedFlowOnJthEdge},``$\Rightarrow$''} 
        This is a direct consequence of the nonnegativity of $h_\wa$ and the equality in \eqref{eq: Deflwj} as the left-hand side is $0$ for all null sets $\mathfrak T$.

        \proofitem{\ref{lem: elluExistenceProperties:ExistenceInducedFlowOnJthEdge},``$\Leftarrow$''} 
        By setting $\mu_{h}^j({\startint}t]) :=  \int_{\arr_{\wa,j}^{-1}({\startint}t])}h_\wa\di\sigma$ for all $t \in \hori$,  one arrives at a uniquely determined finite Borel measure  $\mu_{h}^j$ that fulfills $\mu_{h}^j(\mathfrak T) =   \int_{\arr_{\wa,j}^{-1}(\mathfrak T)}h_\wa\di\sigma$ for all $\mathfrak T \in \mathcal{B}(\hori)$. 
        Due to the assumption, it follows that $\mu_{h}^j$ is absolutely continuous \wrt $\sigma$. 
        Hence, there exists a uniquely determined  $  \ell_{\wa,j}(h_\wa)  \in L_+(\hori)$ (the Radon-Nikodym derivative of $\mu_{h}^j$) fulfilling the equality   
            \[\int_{\startint t]}   \ell_{\wa,j}(h_\wa)  \di\sigma = \int_{\arr_{\wa,j}^{-1}({\startint}t])}h_\wa\di\sigma\]
        for all $t \in \hori$ (\cite[Theorem 3.2.2]{Bogachev2007I}). Hence, $  \ell_{\wa,j}(h_\wa)$ exists and is uniquely determined. 

 \proofitem{\ref{lem: elluExistenceProperties:Linearity}} Regarding the sequential weak closedness, let $h_\wa^n\wto h_\wa$  with $\ell_{\wa,j}(h_\wa^n),n\in \N$ existing. Consider an arbitrary null set $\mathfrak T\subseteq \hori$. Then $0 = \int_{\arr_{\wa,j}^{-1}(\mathfrak T)} h_\wa^n\di\sigma \to \int_{\arr_{\wa,j}^{-1}(\mathfrak T)} h_\wa \di\sigma$, showing that $h_\wa = 0$ on $\arr_{\wa,j}^{-1}(\mathfrak T)$. Thus, $\ell_{\wa,j}(h_\wa)$ exists  by the first part of the \namecref{lem: elluExistenceProperties}. 

        The property of the domain of $\ell_{\wa,j}$ being a linear convex cone follows immediately by the characterization given in 
        \ref{lem: elluExistenceProperties:ExistenceInducedFlowOnJthEdge}. 
        Furthermore, it is clear that $\ell_{\wa,j}$ is linear on its domain.

        \proofitem{\ref{lem: elluExistenceProperties:AggLinearity}} The uniqueness follows immediately by the fact that for any $\g^1_\arc,\g^2_\arc \in L_+(\hori)$ the implication 
        \begin{align*}
            \int_{\startint t]} \g_\arc^1\di\sigma =  \int_{\startint t]} \g_\arc^2\di\sigma \text{ for all } t\in \hori \implies \g^1_\arc = \g^2_\arc
        \end{align*}
        holds. For the linearity, consider two $h^1,h^2 \in \edom{\Routes'}$ and a scalar $\lambda \in \R$. Then, the claimed linearity follows from the following chain of equalities  for all $\arc  \in \GA$ and $t \in \hori$: 
        \begin{align*}
            \int_{\startint t]}\lambda\cdot \ell_{\Routes',\arc}(h^1) +  \ell_{\Routes',\arc}(h^2) \di\sigma &= \lambda\cdot\int_{\startint t]} \ell_{\Routes',\arc}(h^1) \di\sigma + \int_{\startint t]}\ell_{\Routes',\arc}(h^2) \di\sigma \\
            &= \lambda\cdot\sum_{\wa\in \Routes'} \sum_{j: \wa[j] = \arc}\int_{\arr_{\wa,j}^{-1}({\startint}t])}h^1_\wa\di\sigma \\
            &\quad+ \sum_{\wa\in \Routes'} \sum_{j: \wa[j] = \arc}\int_{\arr_{\wa,j}^{-1}({\startint}t])}h^2_\wa\di\sigma \\
            &\symoverset{1}{=}\sum_{\wa\in \Routes'} \sum_{j: \wa[j] = \arc}\int_{\arr_{\wa,j}^{-1}({\startint}t])}\lambda\cdot h^1_\wa + h^2_\wa\di\sigma .
        \end{align*}
        Here, \refsym{1} holds by the absolute convergence of both series which is implied by the existence of $\ell_{\Routes'}(h^j),j=1,2$, cf.~the discussion after \Cref{def: FlowDecomp}. 
 
        \proofitem{\ref{lem: elluExistenceProperties:ExistenceInducedFlow}, ``$\Leftarrow$''} 
        We first observe that  if $\ell_{\wa}(h_\wa), \wa \in \Routes'$ exist and $(\ell_{\wa}(h_\wa))_{\wa\in \Routes'} \in \seql[1][\Routes'][L_+(\hori)^\GA]$, then the sum $\sum_{\wa \in \Routes'}\ell_\wa(h_\wa)$ is well defined, i.e.~for any order $(\wa_l)_{l\in \N}$ of the walks in $\Routes'$, the induced 
        sequence of partial sums $(\sum_{l=1}^s\ell_{\wa_l}(h_{\wa_l}))_{s \in \N}$ converges (\wrt the strong topology of $L(\hori)$) to the same element $\sum_{\wa \in \Routes'}\ell_\wa(h_\wa)$. 
        This is a direct consequence of the definition of $\seql[1][\Routes'][L_+(\hori)^\GA]$ which requires the absolute convergence of 
        $\sum_{\wa \in \Routes'}\ell_\wa(h_\wa)$ in order for $(\ell_{\wa}(h_\wa))_{\wa\in \Routes'} \in \seql[1][\Routes'][L_+(\hori)^\GA]$. With this, we calculate for arbitrary $t \in \hori$ and $\arc \in \GA$: 
                \begin{align*}
            \int_{\startint t]}\sum_{\wa \in \Routes'}\ell_{\wa,\arc}(h_\wa)\di\sigma 
            & = \int_{\startint t]}\sum_{\wa \in \Routes'}\sum_{j:\wa[j]=\arc}\ell_{\wa,j}(h_\wa)\di\sigma \\
            & \symoverset{1}{=} \sum_{\wa \in \Routes'}\sum_{j:\wa[j]=\arc}\int_{\startint t]}\ell_{\wa,j}(h_\wa)\di\sigma \\
            &= \sum_{\wa \in \Routes'}\sum_{j:\wa[j]=\arc}\int_{\arr_{\wa,j}^{-1}(\startint t])}h_\wa\di\sigma.
        \end{align*}
        where we used in \refsym{1} that $\sum_{\wa \in \Routes'}\ell_\wa(h_\wa)$ exists. Hence, we have shown that the latter series fulfills \eqref{eq: DefUEdgeFlow}  which shows the claim.

        \proofitem{\ref{lem: elluExistenceProperties:ExistenceInducedFlow}, ``$\Rightarrow$''} 
If $\ell_{\Routes'}(h) \in L_+(\hori)^\GA$ exists and $h \in L_+(\hori)^{\Routes'}$, then it follows for an arbitrary $\wa^* \in \Routes'$, $\arc \in \GA$, $j^*\leq \abs{\wa^*}$ and  set $\mathfrak T \in \mathcal{B}(\hori)$ that 
\begin{align*}
     \int_{\mathfrak T} \ell_{\Routes',\arc}(h) \di\sigma = \sum_{\wa\in \Routes'} \sum_{j: \wa[j] = \arc}\int_{\arr_{\wa,j}^{-1}(\mathfrak T)}h_\wa\di\sigma   \geq \int_{\arr_{\wa^*,j}^{-1}(\mathfrak T)}h_{\wa^*}\di\sigma
\end{align*}
where the inequality holds by $h$ being nonnegative. 
The above shows in particular  that the series $\sum_{\wa\in \Routes'} \sum_{j: \wa[j] = \arc}\int_{\arr_{\wa,j}^{-1}(\mathfrak T)}h_\wa\di\sigma $ is well-defined, i.e.~absolute convergent. 
Now, for any null set $\mathfrak T$, the left hand side is $0$ by which the existence of 
 $\ell_{\wa^*,j}(h_{\wa^*})$ follows by~\ref{lem: elluExistenceProperties:ExistenceInducedFlowOnJthEdge} and the nonnegativity of $h_{\wa^*}$. Since $\wa^* \in \Routes'$ and $j^*\leq \abs{\wa^*}$ was arbitrary, we can conclude that $\ell_{\wa}(h_\wa),\wa \in \Routes'$ exist.  
 Thus, the above yields 
 \begin{align*}
     \int_{\mathfrak T} \ell_{\Routes',\arc}(h) \di\sigma =  \sum_{\wa\in \Routes'} \sum_{j: \wa[j] = \arc}\int_{ \mathfrak T}\ell_{\wa,j}(h_\wa)\di\sigma 
 \end{align*} 
 which implies $(\ell_{\wa}(h_\wa))_{\wa\in \Routes'} \in \seql[1][\Routes'][L_+(\hori)^\GA]$ and $\ell_{\Routes'}(h) = \sum_{\wa \in \Routes'} \ell_\wa(h_\wa)$. 
 \qedhere
    \end{structuredproof} 
\end{proof}

 Note that, in contrast to $\edom{\wa,j}$ and $\edom{\wa,\arc}$, the domain of the full network loading $\edom{\Routes'}$ is not necessary sequentially weakly closed in $\seql[1][\Routes'][L_+(\hori)]$ in case of $\Routes'$ not being finite:

\begin{figure}[h]
	\centering
	\BigPicture[1]{
		
		\begin{adjustbox}{max width=\textwidth}
			\begin{tikzpicture}
				\coordinate(d)at(4,0);
				\coordinate(s)at(-0,0);

				\node[namedVertexF,inner sep=4](temp-d)at(d){$\dest$};
				\node[namedVertexF,inner sep=4, minimum size = 0.8cm](temp-s)at(s){$\source$};

				\draw[edge] (temp-s)--node[below,sloped]{$\arc_1$}node[above,sloped]{$\trav \equiv 0$} (temp-d);
				\draw[edge] (temp-d) to [in=30,out=150,looseness=15] node[above,sloped]{$\trav \equiv 0$} node[below,sloped]{$\arc_2$}  (temp-d);
				
			\end{tikzpicture}
	\end{adjustbox}			 	}
	\caption{The network considered in \Cref{exa: EdomNotClosed,exa: DiscontinuityNL} with constant travel times of $0$ on both edges.}
	\label{fig: DiscontinuityNL}
\end{figure}

\begin{example}\label{exa: EdomNotClosed}
		Consider the network depicted in \Cref{fig: DiscontinuityNL} with constant travel times of $0$ on all edges. 
	Let $\Routes'=(\wa^k)_{k\in \N\cup\{0\}}$ be the collection of all $\source$,$\dest$-walks where $\wa^k=(\arc_1,\arc_2,\ldots,\arc_2)$ is the walk containing the edge $\arc_2$ exactly $k$ times.
	Define a sequence $(h^n)_{n \in \N}\subseteq \edom{\Routes'}$ via $h^n_{\wa^k} = \nicefrac{1}{k^2}\cdot \Indi_{[0,1]}$ for $1\leq k\leq n$ and $h^n_\wa = 0$ else. Then $\Nl[]_{\Routes',\arc_1}(h^n) = \sum_{k=1}^n \nicefrac{1}{k^2}\cdot \Indi_{[0,1]}$ and  $\Nl[]_{\Routes',\arc_2}(h^n) = \sum_{k=1}^n \nicefrac{1}{k} \cdot \Indi_{[0,1]}$. Moreover, we have $h^n\to h$ in the norm topology of $\seql[1][\Routes'][L_+(\hori)]$ where $h$ is given via $h_{\wa^k} =  \nicefrac{1}{k^2}\cdot \Indi_{[0,1]}$ for all $k\in \N$ and $h_{\wa^0} = 0$. This holds by the fact that $\sum_{k\in \N}\nicefrac{1}{k^2} < \infty$. Yet, $\Nl[]_{\wa^k,\arc_2}(h_{\wa^k})= \nicefrac{1}{n}\cdot \Indi_{[0,1]}$ and since $\sum_{k\in \N}\nicefrac{1}{k} = \infty$, we can conclude that $(\Nl[]_{\wa}(h_\wa))_{\wa\in \Routes'} \notin\seql[1][\Routes'][L_+(\hori)^\GA]$ 
	which shows by \Cref{lem: elluExistenceProperties:ExistenceInducedFlow} that $h\notin\edom{\Routes'}$. 
\end{example}

While  the previous \Cref{lem: elluExistenceProperties:ExistenceInducedFlowOnJthEdge} shows which walk flows $h_\wa$ induce an edge flow $\g^\wa_j \in L_+(\hori)$ on a given edge $j$, the next \namecref{lem: 1to1:h-f} shows the opposite direction, namely which edge flows on a given edge can be induced by a walk-inflow.

\begin{lemma}\label{lem: 1to1:h-f}
	Consider an arbitrary walk  $\wa$, $j\in [|\wa|+1]$ and $\g^{\wa,j} \in L_+(\hori)$. 
	Then, there exists a  ${h}_{\wa,j}\in L_+(\hori)$ with $\ell_{\wa,j}({h}_{\wa,j}) = \g^{\wa,j}$ if and only if 
	$\g^{\wa,j} = 0$ on ${\startint}\arr_{\wa,j}(-\infty)]$ where $\arr_{\wa,j}(-\infty)\coloneq \lim_{t\to -\infty}\arr_{\wa,j}(t)$. In this case, ${h}_{\wa,j}\in L_+(\hori)$ is uniquely determined. 
\end{lemma}
\begin{proof}
	We prove both directions of the claimed equivalence separately and begin with the ``only if''-direction: 
	In case of $ \arr_{\wa,j}(-\infty) = -\infty$, the interval  ${\startint}\arr_{\wa,j}(-\infty)] = \emptyset$ is empty and 
	the claim is trivial. Hence, assume  $\arr_{\wa,j}(-\infty) \in \R$ and consider 
	  the given equality \eqref{eq: Deflwj} for an arbitrary $t< \arr_{\wa,j}(-\infty)$, that is, 
	\begin{align*}
		\int_{\startint t]}\g^{\wa,j} \di \sigma = \int_{\arr_{\wa,j}^{-1}(\startint t])}h_{\wa,j} \di \sigma = \int_{\emptyset}h_{\wa,j}\di\sigma = 0.
	\end{align*}
	Since $\g^{\wa,j} \in L_+(\hori)$ is nonnegative, it follows that $\g^{\wa,j} = 0$ on $(-\infty,t]$ and since 
	$t< \arr_{\wa,j}(-\infty)$ was arbitrary, we can conclude that 	$\g^{\wa,j} = 0$ on ${\startint}\arr_{\wa,j}(-\infty)]$. 
	
	For the ``if''-direction, 
	consider the function 
	$\hori \to \R,    t \mapsto \int_{{\startint}\arr_{\wa,j}(t)]}\g^{\wa,j}\di\sigma$. 
	This function is  absolutely continuous  as the concatenation of an absolutely continuous function $t \mapsto \int_{{\startint}t]}\g^{\wa,j} \di\sigma$ and an absolutely continuous monotone increasing function $\arr_{\wa,j}$  (cf.~\Cref{lem: PropAbsCon:Conca}). 
	Hence, by \Cref{lem: PropAbsCon:Der}, the derivative ${h}_{\wa,j}$ exists almost everywhere and fulfills for all $a<b \in \R$ the equality 
	\begin{align*}
		\int_{[\arr_{\wa,j}(a),\arr_{\wa,j}(b)]}\g^{\wa,j}\di\sigma   = \int_{{\startint}\arr_{\wa,j}(b)]}\g^{\wa,j}\di\sigma - \int_{{\startint}\arr_{\wa,j}(a)]}\g^{\wa,j}\di\sigma =  \int_{[a,b]} {h}_{\wa,j}\di\sigma  . 
	\end{align*} 
	Since this holds for all $a<b \in \R$, we can deduce from the integrability of $\g^{\wa,j} \in L_+(\hori)$ that also ${h}_{\wa,j}\in L_+(\hori)$ is integrable. Moreover, by considering $a \to -\infty$ and using $\g^{\wa,j} = 0$ on ${\startint}\arr_{\wa,j}(-\infty)]$, 
	we get the following equality for all $b \in \hori$: 
	\begin{align}\label{eq: lem: 1to1}
		\int_{{\startint}\arr_{\wa,j}(b)]}\g^{\wa,j}\di\sigma = \int_{(\arr_{\wa,j}(-\infty),\arr_{\wa,j}(b)]}\g^{\wa,j}\di\sigma  = \int_{\startint b]} {h}_{\wa,j}\di\sigma  . 
	\end{align}

	It remains to show that this $h_{\wa,j}$ has $\g_j^\wa$ as its autonomous network loading (i.e., satisfies~\eqref{eq: Deflwj}): For an arbitrary $t > \arr_{\wa,j}(-\infty)$ ($\geq$ in case that $\arr_{\wa,j}^{-1}(\arr_{\wa,j}(-\infty)) \neq \emptyset$), the property of $\arr_{\wa,j}(t) \geq t$ ensures that the set $\arr_{\wa,j}^{-1}(t)$ is non-empty and contained in ${\startint}t]$. This, together with the continuity of $\arr_{\wa,j}$, shows that  the maximum  $b:= \max\{t'\in \hori\mid t' \in \arr_{\wa,j}^{-1}(t)\}$  exists. For this $b$ in the above equality~\eqref{eq: lem: 1to1}, we get
	\begin{align*}
		\int_{{\startint}t]} \g^{\wa,j}\di\sigma =  \int_{{\startint}\arr_{\wa,j}(b)]} \g^{\wa,j}\di\sigma = \int_{\startint b]}    {h}_{\wa,j}  \di\sigma =
		\int_{\arr_{\wa,j}^{-1}({\startint}t])}    {h}_{\wa,j}  \di\sigma.    
	\end{align*}
	Hence, the desired equality holds for all  $t > \arr_{\wa,j}(-\infty)$ ($\geq$ in case that $\arr_{\wa,j}^{-1}(\arr_{\wa,j}(-\infty)) \neq \emptyset$). 
	For any $t <\arr_{\wa,j}(-\infty)$ ($\leq$ in case that $\arr_{\wa,j}^{-1}(\arr_{\wa,j}(-\infty)) = \emptyset$) we have, by  $\g^{\wa,j} = 0$ on ${\startint}\arr_{\wa,j}(-\infty)]$ and $\arr_{\wa,j}^{-1}({\startint}t]) =\emptyset$, that also in this case the equality holds: 
	\begin{align*}
		\int_{{\startint}t]} \g^{\wa,j}\di\sigma =  0=   \int_{\arr_{\wa,j}^{-1}({\startint}t])}    {h}_{\wa,j}  \di\sigma.  
	\end{align*}
	Hence, $\ell_{\wa,j}(h_{\wa,j}) = \g^{\wa,j}$ and the proof of this direction is finished. 
	
	In order to finish the entire proof, it remains to show the uniqueness of $h_{\wa,j}$ for which we 
	  argue   that
	\begin{align*}
		\int_{\startint \arr_{\wa,j}(t)]}\g^{\wa,j}\di\sigma = \int_{\startint t]}h_{\wa,j}\di\sigma \text{ for all } t \in \R  
	\end{align*}
	holds.  
	By the fulfillment of \eqref{eq: Deflwj}, we know that 
	\begin{align*}
		\int_{\startint \arr_{\wa,j}(t)]}\g^{\wa,j}\di\sigma = \int_{\arr_{{\wa},j}^{-1}(\startint \arr_{\wa,j}(t)])}h_{\wa,j}\di\sigma \text{ for all } t \in \R
	\end{align*}
	holds. By $\arr_{\wa,j}$ being continuous and nondecreasing, we have $(-\infty,t] \subseteq \arr_{{\wa},j}^{-1}(\startint \arr_{\wa,j}(t)])$ and the difference $\arr_{{\wa},j}^{-1}(\startint \arr_{\wa,j}(t)]) \setminus (-\infty,t]$ is contained in $\arr_{{\wa},j}^{-1}(\arr_{\wa,j}(t))$. In particular, \Cref{lem: elluExistenceProperties:ExistenceInducedFlowOnJthEdge} 
	implies $h_{\wa,j} = 0$ on this difference, implying
	\begin{align*}
	\int_{\startint \arr_{\wa,j}(t)]}\g^{\wa,j}\di\sigma = \int_{\arr_{{\wa},j}^{-1}(\startint \arr_{\wa,j}(t)])}h_{\wa,j}\di\sigma = \int_{\startint t]}h_{\wa,j}\di\sigma  \text{ for all } t \in \R,
\end{align*}	
which finishes the poof.
\end{proof}

\subsection{Optimization Problems Involving \Auto Network Loadings}\label{app: elluContinuity} 

 In this section, we consider a whole class of  optimization problems involving \auto network loadings which 
 contain the optimization problem needed for the natural flow reduction algorithm (\Cref{alg: FlowDecompositionPseudo}) as a special case. 
 We consider general optimization problems of the following form: 
 
 \begin{align} 
 	\max_{\wflow}\;    &\objfunc(\wflow)  \tag{P} \label{opt: General} \\
 	\text{s.t.: } &\ell_{\Routes'}(h) \leq \eflow  \label{eq: ExOptSolLeq}\\
 	&\wflow \in \ofeas \nonumber
 \end{align}
Here, 
$\Routes'$ denotes an arbitrary countable collection of walks which may contain individual walks multiple but at most finitely many times.  
The constraint vector  $\g$ is an arbitrary  element in $L_+(\hori)^\GA$. 
The objective~$\objfunc$ is some real-valued function on $\ofeas$,  which, in turn, is some subset of $\edom{\Routes'}$ containing at least one $h$ fulfilling~\eqref{eq: ExOptSolLeq}, i.e.~the set of feasible solutions 
$\FeasSol\coloneq \{h \in \ofeas\mid \Nl[]_{\Routes'}(h)\leq \g\}$
is non-empty. 
Remark that 
\eqref{opt: General} is well-defined as $\ofeas \subseteq \edom{\Routes'}$ ensures that $\Nl[]_{\Routes'}(h)$ is well-defined.

 A first challenge in showing that \eqref{opt: General} has an optimal solution is the lack of continuity of the constraint function in \eqref{eq: ExOptSolLeq} with respect to any reasonable topology. 
 We demonstrate in \Cref{exa: DiscontinuityNL} that $\ell_{\Routes'}:\edom{\Routes'}\to L_+(\hori)^\GA$ 
 is not even strong-weak continuous.

 \begin{example}\label{exa: DiscontinuityNL}
 	Consider the setting of \Cref{exa: EdomNotClosed}. 
 	Define a sequence of walk inflow vectors $(h^n)_{n\in \N}$ via 
 		$h^n_{\wa^n}= \nicefrac{1}{n}\cdot \Indi_{[0,1]}$ and $h_\wa = 0$ else. 
 		Then $\Nl[]_{\Routes',\arc_1}(h^n) = \nicefrac{1}{n} \cdot \Indi_{[0,1]}$ and 
 		$\Nl[]_{\Routes',\arc_2}(h^n) = \Indi_{[0,1]}$ for all $n\in \N$. In particular, 
 		$\Nl[]_{\Routes',\arc_2}(h^n)\to \Indi_{[0,1]}$ in the norm topology of $L_+(\hori)$, 
		yet, $h^n\to 0$ in $\seql[1][\Routes'][L_+(\hori)]$ \wrt the norm topology and
		 $\Nl[]_{\Routes',\arc_2}(0) = 0\neq  \Indi_{[0,1]}$.  
 \end{example} 
 
A key insight in order to show that \eqref{opt: General} admits optimal solutions is the semi-continuity of the constraint function  in \eqref{eq: ExOptSolLeq} which we show in the following.

\begin{lemma}
    \label{lem: elluContinuity}
 Consider an arbitrary walk  $\wa$, $j\in [|\wa|+1]$, $\arc \in \GA$ and a countable collection of walks~$\Routes'$. 
     Then, the following statements are true.
    \begin{thmparts}
        \item The mappings $\ell_{\wa,j}$ and $\ell_{\wa,\arc}$ are strong-strong and sequentially weak-weak continuous from their maximal domains to $L_+(\hori)$. \label[thmpart]{lem: elluContinuity:SingleWalk}

        \item For all $h \in \edom{\Routes'}$, we have $\norm{\ell_{\Routes'}(h)} = \sum_{\wa \in \Routes'}\abs{\wa}\cdot \norm{h_\wa}$. In particular, $\edom{\Routes'}\subseteq \seql[1][{\Routes'}][L_+(\hori)]$.
        \label[thmpart]{lem: elluContinuity:Subset}
        
        \item The mapping $\ell_{\Routes'}: \edom{\Routes'} \to L_+(\hori)^\GA$ is sequentially weakly lower semi-continuous in the following sense: 
        For any   sequence $(h^n)_{n\in \N}$ with $h^n\in \edom{\Routes'}, n\in \N$ and   $h^n_\wa \wto h_\wa,\wa \in \Routes'$ for some $h \in L_+(\hori)^{\Routes'}$ and all $\g \in L_+(\hori)^\GA$ for which there exists $N \in \N$ such that for all $n \geq N$ the inequality $\ell_{\Routes'}(h^n) \leq \g$ holds, we have $h \in \edom{\Routes'}$ and 
        $\ell_{\Routes'}(h) \leq \g$.  \label[thmpart]{lem: elluContinuity:Sum} 
    \end{thmparts}
\end{lemma}
 Let us remark that
\Cref{lem: elluContinuity:SingleWalk} is reminiscent of known continuity statements regarding the mapping from edge inflow to outflow and the network loading operator for specific flow propagation models as e.g.\ the Vickrey point queue or linear edge delays (see, e.g., \cite[Section~5.3]{CominettiCL15} and \cite[Section~3]{ZhuM00}, respectively). 
Yet, these are neither generalizations nor special cases of the above \namecref{lem: elluContinuity} since we have to consider the fixed traversal times $\trav$ 
and also allow for sets of countably infinitely many walks~$\Routes'$.
\begin{proof}[Proof of \Cref{lem: elluContinuity}]
\begin{structuredproof}
\proofitem{\ref{lem: elluContinuity:SingleWalk}} 
        We only prove the statements for  $\ell_{\wa,j}$ since the analogue ones for $\ell_{\wa,\arc}$ follow immediately from them.

        The strong-strong continuity is a direct consequence of $\ell_{\wa,j}$ being a linear (\Cref{lem: elluExistenceProperties:Linearity}) and bounded operator. 
        For the boundedness, remark that 
        \begin{align}\label{eq: hellNormsEqual}
            \norm{\ell_{\wa,j}(h_\wa)} = \int_\hori \abs{\ell_{\wa,j}(h_\wa)}\di\sigma =   \int_\hori  {\ell_{\wa,j}(h_\wa)}\di\sigma =  \int_{\arr_{\wa,j}^{-1}(\hori)} h_\wa \di\sigma =\int_\hori h_\wa \di\sigma =  \norm{h_\wa},
        \end{align}
        where we used the nonnegativity of $h_\wa$ and $\ell_{\wa,j}(h_\wa)$. 
Since any strong-strong continuous linear function is, in particular, sequentially weak-weak continuous, the claim of \ref{lem: elluContinuity:SingleWalk} is proven.

\proofitem{\ref{lem: elluContinuity:Subset}} 
By  \Cref{lem: elluExistenceProperties:ExistenceInducedFlow}, we know that $(\ell_\wa(h_\wa))_{\wa \in \Routes'} \in \seql[1][\Routes'][L_+(\hori)^\GA]$ and $\ell_{\Routes'}(h) = \sum_{\wa \in \Routes'}\ell_{\wa}(h_\wa)$ holds. 
Hence, we get the following chain of equalities:
       \begin{align*}
        \norm{\ell_{\Routes'}(h)} =  \sum_{\wa \in {\Routes'}} \sum_{\arc \in \GA} \norm{\ell_{\wa,\arc}(h_\wa)}  &\symoverset{1}{=} \sum_{\wa \in {\Routes'}} \sum_{\arc \in \GA} \sum_{j:\wa[j] = \arc}\norm{\ell_{\wa,j}(h_\wa)} \\ &\symoverset{2}{=} \sum_{\wa \in {\Routes'}}  \sum_{\arc \in \GA} \sum_{j:\wa[j] = \arc}   \norm{{h}_\wa} =\sum_{\wa \in {\Routes'}} \abs{\wa}\cdot\norm{{h}_\wa}.
    \end{align*}
    Here, the equality indicated with \refsym{1} holds by the nonnegativity of  $\ell_{\wa,j}(h_\wa)$. Equality \refsym{2} is valid by \eqref{eq: hellNormsEqual}. 

    \proofitem{\ref{lem: elluContinuity:Sum}}  
    By \Cref{lem: elluExistenceProperties:ExistenceInducedFlow} and $h^n \in \edom{\Routes'}$, we know that   $\ell_\wa(h^n_\wa),\wa \in \Routes',n \in \N$ exist. Hence, \Cref{lem: elluExistenceProperties:Linearity} implies that  the induced flow $\ell_{\wa}(h_\wa)$ exists for 
    all $\wa \in \Routes'$  as $h^n_\wa \wto h_\wa$. 
    Hence, by \Cref{lem: elluExistenceProperties:ExistenceInducedFlow}, it remains  to show that $(\ell_\wa(h_\wa))_{\wa \in \Routes'} \in  \seql[1][{\Routes'}][L_+(\hori)^\GA]$ in order to derive the existence of $\ell_{\Routes'}(h)$ which is then given by $\ell_{\Routes'}(h) = \sum_{\wa \in \Routes'}\ell_\wa(h_\wa)$. 
    
    Consider an arbitrary ordering of $\Routes'=\{\wa_l\}_{l \in \N}$ for the following. 
    By \ref{lem: elluContinuity:SingleWalk}, we know that $\ell_\wa(h_\wa^n)\wto \ell_\wa(h_\wa)$ for all $\wa \in \Routes'$. This allows us to verify for an arbitrary $\mathfrak T \in \mathcal{B}(\hori)$, $\arc \in \GA$ and $L\in \N$: 
    \begin{align*}
         \int_{\mathfrak T} \sum_{l \leq L}\ell_{\wa_l,\arc}(h_{\wa_l}^n) \di\sigma  \to \int_{\mathfrak T} \sum_{l \leq L}\ell_{\wa_l,\arc}(h_{\wa_l})\di\sigma. 
    \end{align*}
    Since the left side is bounded by $\int_{\mathfrak T} \g_\arc \di \sigma$ for $n \geq N$ due to $\ell_{\Routes',\arc}(h^n) \leq \g_\arc$, we can deduce that also the right side is bounded by this quantity. 
    Since $\mathfrak T\in \mathcal{B}(\hori)$ was arbitrary, it follows that $\sum_{l \leq L}\ell_{\wa_l,\arc}(h_{\wa_l}) \leq \g_\arc$. 
    This implies by the monotone convergence theorem by Lebesgue and Beppo Levi (\cite[Theorem 2.8.2]{Bogachev2007I}) that   $\sum_{l \in \N}\ell_{\wa_l,\arc}(h_\wa) \in L_+(\hori)$ exists and is smaller or equal to $\g_\arc$. 
    In particular, again by the nonnegativity $\ell_{\wa,\arc}(h_\wa) \in L_+(\hori),\wa \in \Routes'$, we get that  $\sum_{l \in \N}\norm{\ell_{\wa_l,\arc}(h_{\wa_l})} \leq \norm{\g_\arc}$, i.e.\  the sum  is absolutely convergent, showing that  $(\ell_\wa(h_\wa))_{\wa \in \Routes'} \in  \seql[1][{\Routes'}][L_+(\hori)^\GA]$. Subsequently $\ell_{\Routes'}(h)$ is well-defined and fulfills $\ell_{\Routes'}(h) = \sum_{l \in \N}\ell_{\wa_l}(h_{\wa_l}) \leq \g$.
    \qedhere
\end{structuredproof}    
\end{proof}

 We come now to the main result of this subsection,  showing  that problems of the general form~\eqref{opt: General} have an optimal solution under suitable assumptions. 
 One of those assumptions will require $\ofeas$  to be  sequentially weakly closed in $\edom{\Routes'}$. 
 In this regard, let us remark that this does not necessary imply the sequential weakly closedness of $\ofeas$ in $\seql[1][\Routes'][L_+(\hori)]$. The reason being that $\edom{\Routes'}$ is not necessary sequentially weakly closed in $\seql[1][\Routes'][L_+(\hori)]$ in case of $\Routes'$ not being finite, as already remarked in~\Cref{exa: EdomNotClosed}.
 In the following theorem, 
 $\FeasSol$ and  $\edom{\Routes'}[]$ are equipped with the subspace topology induced by $\seql[1][\Routes'][L_+(\hori)]$. Moreover, we require that 
$\objfunc:\FeasSol\to \R$  is sequentially weakly upper semi-continuous, that is, 
$\limsup_{n \to\infty} \objfunc(h^n) \leq \objfunc(h)$ for any weakly converging sequence $h^n\wto h$ in $\seql[1][\Routes'][L_+(\hori)]$ with all $h^n$ and $h$ contained in the feasible set $\FeasSol$.

\begin{rsttheorem}{\ref{thm: ExistenceOptSol}} 
	Assume that $\objfunc:\FeasSol\to \R$  is sequentially weakly upper semi-continuous and $\ofeas$ is sequentially weakly closed in~$\edom{\Routes'}$.   Then, 
	the optimization problem~\eqref{opt: GeneralOverview} has an optimal solution. 
\end{rsttheorem}
\begin{proof}[Proof of \Cref{thm: ExistenceOptSol}]
We aim to apply Weierstrass' theorem which requires compactness of the feasible set \wrt a suitable topology. 
For this, we first have to switch (via a natural homeomorphism) from the infinite product space $\seql[1][\Routes'][L_+(\hori)]$ to the space $L_+(\hori\times \Routes')$ of nonnegative, integrable functions on $\hori\times \Routes'$ (the latter being equipped with the product measure $\sigma\otimes \eta$ where $\eta$ is the counting measure on~$\Routes'$). This, then, allows us to use standard compactness results for such spaces.

\begin{claim}\label{claim: phiHomo}
Define $\phi:\seql[1][\Routes'][L_+(\hori)]\to L_+(\hori \times \Routes'), {\wflow} \mapsto \phi(\wflow)$  
with  $\phi(\wflow)(t,\wa) = {\wflow}_\wa(t)$ for all $\wa \in \Routes'$ and almost all $t\in \hori$. 
Then, $\phi$   
defines a homeomorphism \wrt both spaces being equipped with their norm induced topologies. 
Furthermore, $\phi$ is also a sequential homeomorphism \wrt both spaces being equipped
with the norm-induced weak topologies. 
\end{claim}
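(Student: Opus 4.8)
The plan is to reduce everything to the single observation that $\phi$ is a \emph{linear isometric bijection} between the two Banach spaces; both homeomorphism assertions then follow from standard principles.

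First I would check well-definedness and the isometry. Since $\Routes'$ is countable and carries the counting measure~$\eta$, the product $\sigma$-algebra $\mathcal{B}(\hori)\otimes 2^{\Routes'}$ is generated by the rectangles $B\times\{\wa\}$, so a function on $\hori\times\Routes'$ is measurable precisely when each section $t\mapsto f(t,\wa)$ is measurable; hence $\phi(\wflow)$ is measurable whenever every $\wflow_\wa\in L(\hori)$. Applying Tonelli's theorem to the nonnegative function $\abs{\phi(\wflow)}$ gives $\norm{\phi(\wflow)} = \sum_{\wa\in\Routes'}\norm{\wflow_\wa} = \norm{\wflow}$, so $\phi$ maps $\seql[1][\Routes'][L(\hori)]$ into $L(\hori\times\Routes')$ and is a linear isometry. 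I would also note that $\phi$ respects the respective equivalence-class identifications, since $\sigma\otimes\eta(N) = \sum_{\wa}\sigma(N_\wa)$ shows that $N\subseteq\hori\times\Routes'$ is a null set if and only if every section $N_\wa$ is $\sigma$-null. Surjectivity is immediate: given $f\in L(\hori\times\Routes')$, set $\wflow_\wa := f(\cdot,\wa)$; Tonelli again gives $\sum_{\wa}\norm{\wflow_\wa} = \norm{f} < \infty$, so $\wflow\in\seql[1][\Routes'][L(\hori)]$ with $\phi(\wflow)=f$, and injectivity is clear from the isometry. The norm-homeomorphism claim is then automatic, a surjective linear isometry having a linear isometric (hence continuous) inverse.

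For the weak-topology part I would invoke the general fact that every bounded linear operator $T\colon X\to Y$ between normed spaces is continuous for the respective weak topologies: if $x_n\wto x$ in $X$ then, using the adjoint $T^\ast\colon Y^\ast\to X^\ast$, we have $\duplr{y^\ast}{Tx_n} = \duplr{T^\ast y^\ast}{x_n} \to \duplr{T^\ast y^\ast}{x} = \duplr{y^\ast}{Tx}$ for every $y^\ast\in Y^\ast$. Applying this to $T=\phi$ and to $T=\phi^{-1}$ (both bounded, being isometries) shows that $\phi$ is a homeomorphism for the weak topologies, and in particular a sequential homeomorphism for them. No explicit description of the duals is needed here; alternatively one could use the identifications $(\seql[1][\Routes'][L(\hori)])^\ast \cong \seql[\infty][\Routes'][L^\infty(\hori)]$ and $(L(\hori\times\Routes'))^\ast \cong L^\infty(\hori\times\Routes')$ (the latter valid because $\sigma\otimes\eta$ is $\sigma$-finite) and check that $\phi^\ast$ is again ``evaluation on sections''.

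The only step needing genuine care is the measure-theoretic bookkeeping on the countable product space: verifying measurability of $\phi(\wflow)$ and that the null sets of $\sigma\otimes\eta$ are exactly those whose sections are all $\sigma$-null, so that $\phi$ descends to a well-defined bijection between equivalence classes. Everything afterwards is soft functional analysis, so I expect no real obstacle beyond writing these routine verifications cleanly.
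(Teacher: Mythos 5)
Your proposal is correct and follows essentially the same route as the paper: establish via Fubini--Tonelli (with the counting measure on $\Routes'$) that $\phi$ is a well-defined linear isometric bijection, which gives the norm homeomorphism, and then obtain weak-weak (hence sequential weak) continuity of $\phi$ and $\phi^{-1}$ by composing with continuous linear functionals — exactly the paper's argument, phrased slightly more abstractly through the adjoint. The extra bookkeeping you flag (section-wise measurability and the characterization of $\sigma\otimes\eta$-null sets) is a sound way to justify the step the paper handles implicitly, and deducing injectivity from the isometry rather than directly from sections is an equivalent shortcut.
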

\begin{proofClaim}
    \begin{structuredproof}
        \proofitem{$\phi$ is well-defined} 
         In order to prove well-definedness, we have to show that  $\phi\big(\seql[1][\Routes'][L_+(\hori)]\big) \subseteq L_+(\hori\times \Routes')$. We calculate for an arbitrary $\wflow \in \seql[1][\Routes'][L_+(\hori)]$: 
 
\begin{align}\label{eq: NormsEqual}
    \norm{\wflow} \eqperdef \sum_{\wa \in \Routes'}\norm{\wflow_\wa} = \int_{\Routes'} \int_{\hori}\abs{\phi(\wflow)(t,\wa)}\di\sigma(t)\di\eta(\wa) \symoverset{1}{=}\int_{\hori\times \Routes'} \abs{\phi(\wflow)} \di\sigma\otimes\eta \defpereq \norm{\phi(\wflow)}
\end{align}
Remark that equality \refsym{1} holds by first applying Tonelli's theorem (\cite[Theorem 3.4.5]{Bogachev2007I}) and then Fubini's theorem (\cite[Theorem 3.4.4]{Bogachev2007I}). 

\proofitem{$\phi$ is injective}  Suppose we have $\wflow^1,\wflow^2 \in \seql[1][\Routes'][L_+(\hori)]$ with $\phi(\wflow^1) = \phi(\wflow^2)$, that is, for $\sigma\otimes \eta$-all $(t,\wa)$ we have $\phi(\wflow^1)(t,\wa) = \phi(\wflow^2)(t,\wa)$. Since $\eta$ is the counting measure, this implies that the latter equality is valid for all $\wa \in \Routes'$ and almost all $t \in \hori$. Since furthermore $ \phi(\wflow^j)(t,\wa)=\wflow^j_\wa(t), j = 1,2$ for almost all $t \in \hori$ by definition of $\phi$, it follows that $\wflow^1=\wflow^2$. 

\proofitem{$\phi$ is surjective} For an arbitrary $\hat{\wflow} \in L_+(\hori\times \Routes')$, the  equivalence class ${\wflow} \in L_+(\hori)^{\Routes'}$  
  given by ${\wflow}_\wa(t) := \hat{\wflow}(t,\wa)$ for a.e.~$t \in \hori$
  and all $\wa \in \Routes'$ fulfills $\phi(\wflow) = \hat{\wflow}$.  
Note that it is  a direct consequence of Fubini's theorem that ${\wflow} \in  \seql[1][\Routes'][L_+(\hori)]$. 

\proofitem{$\phi,\phi^{-1}$ are norm continuous} This is an immediate consequence of the equality derived in~\eqref{eq: NormsEqual} and the linearity of both functions.

\proofitem{$\phi,\phi^{-1}$ are sequentially weakly continuous} 
This is a direct consequence of $\phi,\phi^{-1}$ being linear and norm-continuous functionals.   
\qedhere
    \end{structuredproof}
\end{proofClaim}

The above claim allows us to reformulate optimization problem~\eqref{opt: General} via 
\begin{align}
    \sup_{\wflow}\;  &\objfunc(\phi^{-1}(\hat{\wflow})) \tag{$\tilde{\mathrm{P}}$}\label{opt: GeneralRef}
    \\ \text{s.t.: }   &\ell_{\Routes'}(\phi^{-1}(\hat{\wflow}))\leq \eflow \label{eq: opt:GeneralRef}\\
    & \hat{\wflow}\in \phi(\ofeas) \nonumber 
\end{align}

We then proceed by showing the following properties of the above reformulation:
\begin{claim}\label{claim:ExistenceOptSol:MinimizationProblem}
    The maximization problem~\ref{opt: GeneralRef} has  a sequentially  weakly closed feasibility set $\Feas$ which is contained in a sequentially weakly compact set. Moreover, the objective function is  sequentially weakly  upper semi-continuous on $\Feas$, that is, $\limsup_{n \to \infty} \objfunc(\phi^{-1}(\hat{\wflow}^n)) \leq \objfunc(\phi^{-1}(\hat{\wflow}))$ for any weakly converging  sequence $\hat{h}^n\wto \hat{h}$ in $L_+(\hori\times\Routes')$  with $\hat{h}^n,\hat{h} \in \Feas$. 
\end{claim}

From this \namecref{claim:ExistenceOptSol:MinimizationProblem} the \namecref{thm: ExistenceOptSol}'s statement now follows with an argument analogously to the proof of Weierstrass' extreme value theorem: Let $(\wflow^n) \subseteq \Feas$ be a sequence of feasible solution with objective values converging to the supremum of the above problem. Since this sequence is contained in a sequentially weakly compact set, it has a converging subsequence with limit point~$\wflow^*$. As $\Feas$ is sequentially weakly closed, this limit point must also be contained in $\Feas$. Finally, using the upper semi-continuity of the objective function gives us that $\objfunc(\wflow^*)$ is larger or equal to the supremum of the given maximization problem. Hence, $\wflow^*$ is an optimal solution.

\begin{proofClaim}[Proof of \Cref*{claim:ExistenceOptSol:MinimizationProblem}] 
The sequential  weak  upper semi-continuity of the objective function is 
an immediate consequence of the semi-continuity required for $\objfunc$ and the property of $\phi$ being a sequential weak homeomorphism.   Next, we argue for the sequential weak closedness and compactness: 
\begin{structuredproof}
     \proofitem{$\Feas$ is sequentially weakly closed}  
Let $\phi(h^n)\wto \phi(h)$ in $L_+(\hori\times \Routes')$ with $\phi(h^n)\in \Feas,n \in \N$.
Since $\phi$ is a sequential homeomorphism for the weak topologies, we get that $h^n\wto h$ in $\seql[1][\Routes'][L_+(\hori)]$ and in particular that $h^n_\wa \wto h_\wa$ in $L_+(\hori)$ for each $\wa \in \Routes'$. Moreover, by $\phi(h^n) \in \Feas$, we get that $\ell_{\Routes'}(h^n)\leq \g$ for all $n \in \N$. Thus, \Cref{lem: elluContinuity:Sum} is applicable, showing that $h \in \edom{\Routes'}$ and $\ell_{\Routes'}(h) \leq \g$. This, in turn, implies that $\phi(h)$ fulfills the constraint \eqref{eq: opt:GeneralRef}. Moreover, by $h \in \edom{\Routes'}$, we know that $h \in \ofeas$ as the latter set is weakly  sequentially closed in $\edom{\Routes'}$ and $\phi(h^n) \in \Feas \subseteq \phi(\ofeas)$. 
Hence, $\phi(h)$ is also contained in $\phi(\ofeas)$ and, thus, we can conclude that $\phi(h) \in \Feas$. 

\proofitem{$\Feas$ is contained in a sequentially weakly compact set} 
Clearly, $\Feas$ is contained in its weak closure. We will argue in the following that this weak closure is weakly compact. Note that by   the Eberlein–\v{S}mulian Theorem (cf.~\cite[Theorem 6.34]{guide2006infinite}), weak compactness and  sequentially weak compactness are equivalent on $L(\hori \times \Routes')$.

We will verify the equivalent conditions stated in~\cite[Theorem 4.7.20 (iv)]{Bogachev2007I} for $\Feas$ having weakly compact closure. We do so in the following and start by noting that 
$\Feas$ is norm bounded. To see this, observe that 
 for an arbitrary feasible $\phi({\wflow}) \in \Feas$: 
      \begin{align}\label{eq: BoundOfFeasibleSols}
         \sum_{\arc \in \GA} \eflow_\arc \geq 
         \sum_{\arc \in \GA} \ell_{\Routes',\arc}(h) \symoverset{1}{=} 
         \sum_{\arc \in \GA} \sum_{\wa \in {\Routes'}} \sum_{j:\wa[j] = \arc}{\ell_{\wa,j}(\wflow_\wa)}  \symoverset{2}{\geq}\sum_{\wa \in {\Routes'}} {\ell_{\wa,1}(\wflow_\wa)}  = \sum_{\wa \in {\Routes'}}{{\wflow}_\wa},
    \end{align}
    where we used in \refsym{1} the description in \Cref{lem: elluExistenceProperties:ExistenceInducedFlow} and for \refsym{2} the nonnegativity of $\ell_{\wa,j}(h_\wa)$.
Since all appearing functions in the above inequality are nonnegative, the 
inequality remains true when considering the respective norms. Thus, combining~\eqref{eq: BoundOfFeasibleSols} with the  equality in~\eqref{eq: NormsEqual} shows that $\Feas$ is uniformly bounded by $\sum_{\arc \in \GA}\norm{\eflow_\arc}$.

Next, we argue that the elements in 
$\Feas$ have uniformly absolutely continuous integrals. 

    Let $\varepsilon>0$ be arbitrary. 
    Then there exists $\delta> 0$ such that $\int_{\mathfrak T}\sum_{\arc \in \GA}{\eflow_\arc} \di \sigma< \varepsilon$ for all $\mathfrak T \in \mathcal{B}(\hori)$ with $\sigma(\mathfrak T)< \delta$ by the absolute continuity of the Lebesgue integral, cf.~\cite[Theorem 2.5.7]{Bogachev2007I}.   
    
			Now consider an arbitrary $\mathfrak A  \in \mathcal{B}(\hori\times \Routes')$ with $(\leb\otimes\eta)(\mathfrak A)< \delta $ and let  $\mathfrak A = \bigcup_{\wa \in \Routes'} \mathfrak T_\wa \times \{\wa\}$.  
 			For any $\phi(\wflow) \in \Feas$, we have 
 \begin{align*}
 \varepsilon > \int_{ \bigcup_{\wa \in \Routes'}\mathfrak T_\wa} \sum_{\arc \in \GA}{\eflow_\arc} \di\sigma \overset{\eqref{eq: BoundOfFeasibleSols}}{\geq}
\int_{ \bigcup_{\wa \in \Routes'}\mathfrak T_\wa} \sum_{\wa\in \Routes'}  \wflow_\wa  \di \sigma 
 &\symoverset{1}{=} \int_{ \bigcup_{\wa \in \Routes'}\mathfrak T_\wa \times \Routes'} \phi(\wflow) \di\sigma \otimes \eta
 \\&\geq \int_{\mathfrak A}  \phi(\wflow)  \di \sigma\otimes \eta, 
 \end{align*}
Hence, the first inequality follows by $  \delta > (\sigma\otimes\eta)(\mathfrak A) = \sum_{\wa \in \Routes'} \sigma(\mathfrak T_\wa) \geq \sigma(\bigcup_{\wa \in \Routes'}\mathfrak T_\wa)$.  
The equality indicated by \refsym{1} is valid by Fubini's theorem (\cite[Theorem 3.4.4]{Bogachev2007I}). 
The last inequality is valid as $\phi(\wflow)\geq 0$ (by $\wflow \geq 0$) and $\mathfrak A \subseteq  \bigcup_{\wa \in \Routes'} \mathfrak T_\wa \times \Routes'$.

Finally, we show that for every $\varepsilon>0$ there exists some $\mathfrak A$ with $(\sigma\otimes\eta)(\mathfrak A)< \infty$ such that we have
$\int_{\hori\times \Routes\setminus \mathfrak A} \phi(\wflow)\di\sigma\otimes\eta < \varepsilon$ for all $\phi(\wflow) \in \Feas$. 
Let $\varepsilon>0$ be arbitrary. 
We require two observations: 
First, remark that 
we can apply \cite[Theorem 4.7.20 (i) and (iv)]{Bogachev2007I}   to  the singleton 
    $\{\sum_{\arc \in \GA}\g_\arc\}$. This is due to the fact that a singleton is weakly compact in 
    $L_+(\hori \times \Routes')$ as weak topologies are Hausdorff (cf.~\cite[Section 5.14]{guide2006infinite}) and singletons are closed in Hausdorff spaces. 
    Hence, there exists $\mathfrak T_{\g,\varepsilon/2} \subseteq \hori$ with $\sigma(\mathfrak T_{\g,\varepsilon/2})< \infty$ and $\int_{\hori \setminus \mathfrak T_{\g,\varepsilon/2}} \sum_{\arc\in \GA} \g_\arc \di\sigma < \varepsilon/2$. 

Secondly, define 
for every $c \in \mathcal{C}$ and $k \in \N$  the set of all walks in~$\Routes'$ containing the cycle~$c$ at least $k$~times by  $\Routes_{c,k}\subseteq \Routes'$. Consider an arbitrary $c \in \mathcal{C}$, $\arc \in c$ and $\phi(\wflow) \in \Feas$. By feasibility, we get that $\sum_{\wa \in \Routes'}\ell_\wa(\wflow_\wa) \leq \eflow$ and hence 
\begin{align*}
    \norm{\eflow_\arc} =  \int_{\hori} {\eflow}_\arc \di \sigma  \geq \sum_{\wa \in \Routes_{c,k}}\sum_{j: \wa[j] = \arc}\int_{\arr_{\wa,j}^{-1}(\hori)} {\wflow}_\wa\di\sigma \geq   \sum_{\wa \in \Routes_{c,k}} k \int_{\hori} {\wflow}_\wa\di\sigma. 
\end{align*}
Now let $k_{c,\varepsilon} \in \N$ be such that $\min_{\arc \in c}\norm{\eflow_\arc}/k_{c,\varepsilon} < \varepsilon/(2|\mathcal{C}|)$. 
Then, the above shows the following estimate: $\sum_{\wa \in \Routes_{c,k_{c,\varepsilon}}} \int_{\hori} {\wflow}_\wa\di\sigma < \varepsilon/(2|\mathcal{C}|)$. 

With these observations, consider the set
 $\mathfrak{A} :=\mathfrak T_{\g,\varepsilon/2} \times (\Routes' \setminus \bigcup_{c \in \mathcal{C}} \Routes_{c,k_{c,\varepsilon}})$ and calculate: 
\begin{align*}
   \int_{\hori\times \Routes'\setminus \mathfrak A} \phi(\wflow)\di\sigma\otimes\eta 
   &=  \int_{(\hori\setminus\mathfrak T_{\g,\varepsilon/2}) \times( \Routes' \setminus \bigcup_{c \in \mathcal{C}} \Routes_{c,k_{c,\varepsilon}})} \phi(\wflow)\di\sigma\otimes\eta +  \int_{\hori\times \bigcup_{c \in \mathcal{C}} \Routes_{c,k_{c,\varepsilon}}} \phi(\wflow)\di\sigma\otimes\eta  
   \\&= \int_{\hori\setminus\mathfrak T_{\g,\varepsilon/2}} \sum_{\wa \in (\Routes' \setminus \bigcup_{c \in \mathcal{C}} \Routes_{c,k_{c,\varepsilon}}) }h_\wa \di\sigma 
   +\sum_{c \in \mathcal{C}}\sum_{\wa \in  \Routes_{c,k_{c,\varepsilon}}}  \int_{\hori} \wflow_\wa\di\sigma \\
   \overset{\eqref{eq: BoundOfFeasibleSols}}&{\leq}   \int_{\hori \setminus \mathfrak T_{\g,\varepsilon/2}} \sum_{\arc\in \GA} \g_\arc \di\sigma    +\sum_{c \in \mathcal{C}}\sum_{\wa \in  \Routes_{c,k_{c,\varepsilon}}}  \int_{\hori} \wflow_\wa\di\sigma 
   \\ &<  \varepsilon/2 +  |\mathcal{C}| \cdot \varepsilon/(2|\mathcal{C}|)  =\varepsilon
\end{align*}
which shows the claim. Note that $\sigma \otimes \eta(\mathfrak A) = \sigma(\mathfrak T_{\g,\varepsilon/2}) \cdot \abs{\Routes' \setminus \bigcup_{c \in \mathcal{C}} \Routes_{c,k_{c,\varepsilon}}}< \infty$ as 
$\sigma(\mathfrak T_{\g,\varepsilon/2}) < \infty$ 
and $\Routes' \setminus \bigcup_{c \in \mathcal{C}} \Routes_{c,k_{c,\varepsilon}}$ is a finite set by our assumption that $\Routes'$ only contains each walk finitely often. \qedhere
\end{structuredproof}
\end{proofClaim}

With this \namecref{claim:ExistenceOptSol:MinimizationProblem} the \namecref{thm: ExistenceOptSol} now follows as explained before.
\end{proof}

\subsection{\Auto Node Balances and  \boldmath{$\source$,$\dest$}-Flows}\label{app:flowcon}

In this section we consider the concept of \auto node balance, i.e.\ the node balance of an arbitrary vector $\g \in L_+(\hori)^\GA$ as defined in \Cref{def: FlowBalaSDFlow} \wrt the fixed traversal time function~$\trav$.
As previously, we omit in the following the term ``\auto[''] whenever it is clear from context. 

For the subsequent proofs we require the following notation for the two types of standard (Borel) measures  which are involved in the definition of node balances. 
Firstly, for any measurable function $\g:\hori\to \R$, we denote by $\g \cdot \sigma$ the measure on $\mathcal{B}(\hori)$
given by $\g\cdot\sigma(\mathfrak T) := \int_{\mathfrak T} \g\di\sigma,\mathfrak T \in \mathcal{B}(\hori)$. Secondly, for any measurable function $A: \hori\to\hori$ and any measure $\mu$ on $\mathcal{B}(\hori)$ we denote by $\mu\circ A^{-1}$ the image measure of $\mu$ under $A$ which is defined by $\mu\circ A^{-1}(\mathfrak T) := \mu(A^{-1}(\mathfrak T))$. The latter is again a measure on $\mathcal{B}(\hori)$. 
Finally, we also introduce the notation of $\mu\leq \mu'$ for two measures, meaning that $\mu(\mathfrak T) \leq \mu'(\mathfrak T)$ for all $\mathfrak T \in \mathcal{B}(\hori)$.
We refer to~\cite{Bogachev2007I} for a comprehensive overview of measure theory. 
With this, we can now also write the \auto node balance as $\op_v\eflow = \sum_{\arc \in \delta^+(v)}   \eflow_\arc \cdot \sigma -  \sum_{\arc \in \delta^-(v)}  ( \eflow_\arc \cdot\sigma) \circ \exit_\arc^{-1}$. 
 
The following \namecref{lem: flowcon} states that any induced edge flow  $\ell_\wa(h_\wa)$ fulfills \auto flow conservation at all nodes except the start and end node of $\wa$  as well as that the \auto net outflow rate at the start node equals $h_\wa$. 
\Cref{lem: flowconW'}, then, extends this to entire walk inflow rate vectors.

\begin{lemma}
    \label{lem: flowcon}
Consider an arbitrary $v_1,v_2$-walk $\wa$ for some $v_1,v_2 \in \GV$, a corresponding walk inflow rate $h_\wa \in \edom{\wa}$ with $\g^\wa:=\ell_\wa(h_\wa)$   and a node $v \in \GV$.
Then we have 
\begin{align*} 
    \op_v \g^\wa = 1_{v_1}(v)\cdot  (h_\wa \cdot \sigma)   -1_{v_2}(v) \cdot (h_\wa \cdot\sigma)\circ\arr_{\wa,|\wa|+1}^{-1}.
\end{align*}
If $\ell_{\wa,\abs{\wa}+1}(h_\wa)$ exists as well, then $\op_{v_2} \g^\wa =  \big(1_{v_1}(v_2)\cdot  h_\wa  -\ell_{\wa,\abs{\wa}+1}(h_\wa)\big) \cdot \sigma$, i.e.~$\g^\wa$ has the outflow rate $ 1_{v_1}(v_2)\cdot  h_\wa  -\ell_{\wa,\abs{\wa}+1}(h_\wa)$ at the end node $v_2$.  
\end{lemma}
\begin{proof}
Consider a node $v \in \GV$. 
   We calculate: 
   \begin{align*}
        \sum_{\arc \in \delta^+(v)}\g^\wa_\arc \cdot \sigma  &=  \sum_{\arc \in \delta^+(v)} \sum_{j:\wa[j] = \arc}(h_\wa \cdot\sigma)\circ\arr_{\wa,j}^{-1} \\
        &= 
        1_{v_1}(v) \cdot (h_\wa \cdot\sigma)\circ\arr_{\wa,1}^{-1}  +
       \sum_{\arc \in \delta^+(v)} \sum_{j\in\{2,\ldots,\abs{\wa}\}:\wa[j] = \arc}(h_\wa \cdot\sigma)\circ\arr_{\wa,j}^{-1}  
 \end{align*}      
       For the last part, we furthermore observe that (explanations follow)
 	\begin{align}
 		\sum_{\arc \in \delta^+(v)} \sum_{j\in\{2,\ldots,\abs{\wa}\}:\wa[j] = \arc}(h_\wa \cdot\leb)\circ\arr_{\wa,j}^{-1}  
 		&= \sum_{\arc \in \delta^+(v)} \sum_{j\in\{2,\ldots,\abs{\wa}\}:\wa[j] = \arc}(h_\wa \cdot\leb)\circ\big(\arr_{\wa,j-1}^{-1} \circ \exit_{\wa[j-1]}^{-1}\big) \nonumber \\
 		&= \sum_{\arc \in \delta^+(v)} \sum_{j\in\{1,\ldots,\abs{\wa}-1\}:\wa[j+1] = \arc}(h_\wa \cdot\leb)\circ\big(\arr_{\wa,j}^{-1} \circ \exit_{\wa[j]}^{-1}\big) \nonumber\\
 		&=\sum_{j\in\{1,\ldots,\abs{\wa}-1\}:\wa[j+1] \in \edgesFrom{v}}(h_\wa \cdot\leb)\circ\big(\arr_{\wa,j}^{-1} \circ \exit_{\wa[j]}^{-1}\big) \nonumber\\
 		&=\sum_{j\in\{1,\ldots,\abs{\wa}-1\}:\wa[j] \in \edgesTo{v}}(h_\wa \cdot\leb)\circ\big(\arr_{\wa,j}^{-1} \circ \exit_{\wa[j]}^{-1}\big) \nonumber\\
 		&= \sum_{\arc \in \delta^-(v)} \sum_{j\in\{1,\ldots,\abs{\wa}-1\}:\wa[j] = \arc}(h_\wa \cdot\leb)\circ\big(\arr_{\wa,j}^{-1} \circ \exit_{\wa[j]}^{-1}\big) \label{eq: NodeBalaHelp}
 	\end{align} 
The first equality is valid by the recursive definition of~$\arr_{\wa,j}$. The second equality  results due to an index shift. 
Since $\wa[\abs{\wa}] \in \edgesTo{v}$ only for $v = v_2$, we can further simplify \eqref{eq: NodeBalaHelp} via 
 \begin{align*}
        \eqref{eq: NodeBalaHelp} &= \sum_{\arc \in \delta^-(v)} \sum_{j\in\{1,\ldots,\abs{\wa}\}:\wa[j] = \arc}(h_\wa \cdot\sigma)\circ\arr_{\wa,j}^{-1} \circ \exit_{\wa[j]}^{-1} - 1_{v_2}(v) \cdot (h_\wa \cdot\sigma)\circ\arr_{\wa,\abs{\wa}}^{-1} \circ \exit_{\wa[\abs{\wa}]}^{-1} \\
        &= \sum_{\arc \in \delta^-(v)} (\g^\wa_\arc \cdot \sigma)\circ \exit_\arc^{-1} - 1_{v_2}(v)\cdot  (h_\wa \cdot\sigma)\circ\arr_{\wa,\abs{\wa}}^{-1} \circ \exit_{\wa[\abs{\wa}]}^{-1} .
\end{align*}
Thus, we ultimately arrive at 
\begin{align*}
      \sum_{\arc \in \delta^+(v)}\g^\wa_\arc \cdot \sigma &= 
      1_{v_1}(v) \cdot (h_\wa \cdot\sigma)\circ\arr_{\wa,1}^{-1}  + \sum_{\arc \in \delta^-(v)} (\g^\wa_\arc \cdot \sigma)\circ \exit_\arc^{-1} - 1_{v_2}(v)\cdot  (h_\wa \cdot\sigma)\circ\arr_{\wa,\abs{\wa}}^{-1} \circ \exit_{\wa[\abs{\wa}]}^{-1} \\
      &= 1_{v_1}(v) \cdot (h_\wa \cdot\sigma)  + \sum_{\arc \in \delta^-(v)} (\g^\wa_\arc \cdot \sigma)\circ \exit_\arc^{-1} - 1_{v_2}(v)\cdot  (h_\wa \cdot\sigma)\circ\arr_{\wa,\abs{\wa}+1}^{-1}
\end{align*}
from which the statement of the \namecref{lem: flowcon} follows immediately. 
\end{proof}

 Next, we aim to extend \Cref{lem: flowcon} to entire walk inflow rate vectors. 
 For this, we require the following \namecref{lem: NodeBalaCont} stating that the operator 
 \begin{align*}
     \op:L_+(\hori)^\GA \to \Meas(\R)^\GV, \g \mapsto (\op_v\g)_{v \in \GV}
 \end{align*}
 is linear and  admits a certain continuity. Here, $\Meas(\R)$ denotes the set of all bounded Borel measures on $\hori$.

 \begin{lemma}\label{lem: NodeBalaCont}
 	The node balance operator $\op:L_+(\hori)^\GA \to \Meas(\R)^\GV$  is linear and sequentially continuous 
 	\wrt the weak topology on $L_+(\hori)^\GA$ and the setwise convergence on $\Meas(\R)^\GV$, that is, 
 	$\op_v\g^n(\mathfrak T) \to \op_v\g(\mathfrak T)$ for all $\mathfrak T \in \mathcal{B}(\hori)$ and $v \in \GV$ whenever $\g^n\wto \g$ in $L_+(\hori)^\GA$. 
 \end{lemma}
 \begin{proof}
 	We prove in the following that the function 
 	$L_+(\hori) \to \Meas(\R), \measfunc \mapsto (\measfunc \cdot\leb)\circ T^{-1}$ for any measurable $T:\hori \to \hori$ 
 	is linear and is sequentially continuous \wrt   the weak topology on $L_+(\hori)$ and the setwise convergence on $\Meas(\R)$. 
 	From this, the statement of the \namecref{lem: NodeBalaCont} follows immediately.

 	For linearity, let $\measfunc_1,\measfunc_2 \in L_+(\hori)$,  $\alpha\in \R$ and $\mathfrak T \in \mathcal{B}(\hori)$ be arbitrary and observe: 
 	\begin{align*}
 		\left( \big((\alpha\cdot \measfunc_1 + \measfunc_2) \cdot\leb\big)\circ T^{-1} \right)(\mathfrak T) &\eqperdef 
 		\int_{T^{-1}(\mathfrak T)}  \alpha\cdot \measfunc_1 + \measfunc_2 \di\leb \\
 		&=\alpha\cdot  \int_{T^{-1}(\mathfrak T)}   \measfunc_1 \di\leb + \int_{T^{-1}(\mathfrak T)}  \measfunc_2 \di\leb \\
 		&\defpereq  \alpha \left((\measfunc_1 \cdot\leb)\circ T^{-1}\right)(\mathfrak T) + \left((\measfunc_2 \cdot\leb)\circ T^{-1}\right)(\mathfrak T).
 	\end{align*}
 	For the claimed continuity,  
    consider a sequence $\measfunc_n\wto \measfunc$ in $L_+(\hori)$, an arbitrary $\mathfrak T\in \mathcal{B}(\hori)$ and the following: 
    \begin{align*}
    	\lim_{n \to \infty} \left((\measfunc_n \cdot\leb)\circ T^{-1}\right)(\mathfrak T) = 
    	\lim_{n \to \infty}	\int_{T^{-1}(\mathfrak T)}  \measfunc_n  \di \leb 
    	\symoverset{1}{=} \int_{T^{-1}(\mathfrak T)}  \measfunc  \di \leb =  \left((\measfunc \cdot\leb)\circ T^{-1}\right)(\mathfrak T)
    \end{align*}
    where we used for \refsym{1} that $\measfunc_n\wto \measfunc$.
 \end{proof}
 
 From the above lemma, the extension to entire walk inflow rate vectors follows by a simple limit argument:

\begin{rstlemma}{\ref{lem: flowconW'}}
Consider an arbitrary countable collection of walks $\Routes'$, a corresponding walk inflow rate vector $h \in \edom{\Routes'}$ with   $\g:=\ell_{\Routes'}(h)$  and a node $v \in \GV$. 
Then we have 
\begin{align*} 
    \op_v \g = \sum_{\wa \in \Routes'_{v+}} h_\wa \cdot \sigma   - \sum_{\wa \in \Routes'_{v-}} (h_\wa \cdot\sigma)\circ\arr_{\wa,|\wa|+1}^{-1},
\end{align*}
where $\Routes'_{v+}$ denotes the set of walks in $\Routes'$ starting at $v$ while $\Routes'_{v-}$ denotes the set of walks in $\Routes'$ ending in  $v$. 

If, additionally, $\ell_{\wa,\abs{\wa}+1}(h_\wa)$ exist for all $\wa \in \Routes'_{v-}$, we even  have
\begin{align*}
    \op_v \g = \sum_{\wa \in \Routes'_{v+}} h_\wa \cdot \sigma - \sum_{\wa \in \Routes'_{v-}} \ell_{\wa,\abs{\wa}+1}(h_\wa) \cdot\sigma.
\end{align*}
\end{rstlemma}
\begin{proof}
 	Consider an arbitrary ordering of $\Routes'=\{\wa_k\}_{k \in \N}$ and define 
 	$\g^{k^*} := \sum_{k \leq k^*}\ell_{\wa_k}(h_{\wa_k})$  for any $k^* \in \N$. By \Cref{lem: flowcon,lem: NodeBalaCont}, we have that 
 	\begin{align*} 
 		\op_v \g^{k^*} = \sum_{k\leq k^*} \op_v \ell_{\wa_k}(h_{\wa_k}) =
 		\sum_{k \leq k^*:\wa_k \in \Routes'_{v+}} (h_{\wa_k} \cdot \leb)   - \sum_{k \leq k^*:\wa_{k} \in \Routes'_{v-}} (h_{\wa_k} \cdot\leb)\circ\arr_{\wa,|\wa|+1}^{-1}.
 	\end{align*}    
 	We know by \Cref{lem: elluExistenceProperties:ExistenceInducedFlow} that 
 	$(\ell_{\wa}(h_\wa))_{\wa\in\Routes'} \in \seql[1][\Routes'][L_+(\hori)]$ and thus 
 	$\g^{k^*} \to \g$ in $L_+(\hori)^\GA$ for $k^* \to \infty$.  
 	Hence, we get by \Cref{lem: NodeBalaCont} for arbitrary $\mathfrak T \in \mathcal{B}(\hori)$ that
 	\begin{align*}
 		&\op_v\g(\mathfrak T)\Croverset{lem: NodeBalaCont}{=} \lim_{k^*\to \infty} \op_v \g^{k^*}(\mathfrak T) =\\
 		&\quad \quad = \lim_{k^*\to \infty} \left( \sum_{k \leq k^*:\wa_k \in \Routes'_{v+}} (h_{\wa_k} \cdot \leb)(\mathfrak T)   - \sum_{k \leq k^*:\wa_{k} \in \Routes'_{v-}} (h_{\wa_k} \cdot\leb)  \circ\arr_{\wa,|\wa|+1}^{-1} (\mathfrak T) \right) \\ 
 		&\quad \quad\symoverset{1}{=}\lim_{k^*\to \infty} \sum_{k \leq k^*:\wa_k \in \Routes'_{v+}} (h_{\wa_k} \cdot \leb)(\mathfrak T)   -\lim_{k^*\to \infty}  \sum_{k \leq k^*:\wa_{k} \in \Routes'_{v-}} (h_{\wa_k} \cdot\leb)  \circ\arr_{\wa,|\wa|+1}^{-1} (\mathfrak T) \\ 
 		&\quad \quad=   \sum_{\wa_k \in \Routes'_{v+}} (h_{\wa_k} \cdot \leb)(\mathfrak T)   - \sum_{\wa_{k} \in \Routes'_{v-}} (h_{\wa_k} \cdot\leb)  \circ\arr_{\wa,|\wa|+1}^{-1} (\mathfrak T)  
 	\end{align*}
 	which shows the claim. Note that \refsym{1} holds as the limits of both terms exist in $\R$ since both of them are bounded by $\sum_{k \in \N}  \norm{h_{\wa_k}} $ which in turn is bounded due to \Cref{lem: elluContinuity:Subset} and $h \in \edom{\Routes'}$.  
 \end{proof}

From these insights, it follows directly that any appearing $\g^k$ during the execution of \Cref{alg: FlowDecompositionPseudo} is indeed \aauto $\source$,$\dest$-flow \wrt $\trav(\g,\cdot)$. Remark that any (non-\auto[)] $\source$,$\dest$-flow $\g$ 
with (non-\auto[)] net outflow rates $\inflow_v,v\in \GV$
is in particular \aauto $\source$,$\dest$-flow \wrt $\trav(\g,\cdot)$ with the same \auto net outflow rates at all nodes.

\subsection{Properties of \Auto \boldmath{$\source$,$\dest$}-Flows}\label{app:propertiesOfAutosdFlows}

In this section, we derive  the two main ingredients (\Cref{lem: ZeroCycleDecomposition} and \Cref{lem: ExistenceOfFlowCarryingWalk}) 
of the proof of our main flow decomposition result in \Cref{thm: FlowDecomp}. 
They state that \aauto $\source$,$\dest$-flow  has either a positive net outflow rate at $s$ and  admits 
a  flow-carrying \stwalk{} (\Cref{lem: ExistenceOfFlowCarryingWalk}), or, is a dynamic circulation (\Cref{lem: FlowConEveryNode}) and can be decomposed into zero-cycle inflow rates (\Cref{lem: ZeroCycleDecomposition}). 
From this, the correctness of  \Cref{alg: FlowDecompositionPseudo}  follows as the limit of the \auto $\source$,$\dest$-flows  $\g^k,k\in \N$ can not admit any flow-carrying \stwalk~$\wa_k$ due to the maximality of the corresponding  $\wflow_{\wa_k}$. 
To this end, we require several structural insights into \auto network loadings which we derive in the following.

In our first structural result, \Cref{lem: Relations:h>0u>0}, we consider an edge flow~$\g$ induced by a walk inflow vector~$h$ and validate the  intuition that whenever we have inflow into some walk~$\wa$ at some time~$t$, then this results in inflow into each of the edges on this walk at the corresponding arrival times and vice versa. 
Those implications are shown to hold for whole sets of times with positive inflow in items~\ref{lem: Relations:h>0u>0:Setwise} and~\ref{lem: Relations:h>0u>0:u>0ExistsHw>0} whereas 
items~\ref{lem: Relations:h>0u>0:Pointwise} and~\ref{lem: Relations:h>0u>0:u>0ExistsHw>0Pointwise}  
show their validity 
for almost all points in time. 
 Item~\ref{lem: Relations:h>0u>0:GoodRepres} then shows that there even
 exists a pair of suitable representatives for both walk and edge flow such that these implications hold 
\emph{for all} $t\in \hori$. 
Item~\ref{lem: Relations:h>0u>0:u>0ExistsHw>0D} extends  \ref{lem: Relations:h>0u>0:u>0ExistsHw>0Pointwise} to the sink, i.e., whenever we have positive inflow into the sink, there must be an \stwalk{} sending flow at a corresponding time. 
Finally, item~\ref{lem: Relations:h>0u>0:u>0ExistsCountableM} provides a covering version of item~\ref{lem: Relations:h>0u>0:u>0ExistsHw>0}: Any set of times~$\mathfrak T$ with positive inflow into some edge~$\arc$ can be covered by a countable set of walks -- in the sense that they have inflow at times such that the union of the corresponding arrival times at~$\arc$ covers~$\mathfrak T$.
To prove this, we require \ref{lem: Relations:h>0u>0:u>0ExistsCountableMZwischen} as an intermediate step.

\begin{lemma}\label{lem: Relations:h>0u>0} 
Let $\Routes'$ be an arbitrary countable collection of walks and $h \in \edom{\Routes'}$ with $\g: = \ell_{\Routes'}(h)$.  The following statements are true: 
    \begin{thmparts}
        \item  For all $\mathfrak T \in \mathcal{B}(\hori)$, $\wa \in \Routes$ and $j \leq \abs{\wa}$ the following implication holds: ${h}_\wa(t)>0$ for a.e.~$t \in \mathfrak T$ $\implies $ ${\g}_{\wa[j]}(t)>0$ for a.e.~$t \in \arr_{\wa,j}(\mathfrak T)$. \label[thmpart]{lem: Relations:h>0u>0:Setwise}

        \item For any edge $\arc \in \GA$, any representative of $h$ and for  all $\mathfrak T \in \mathcal{B}(\hori)$ with ${\g}_\arc(t)>0$ for a.e.~$t \in \mathfrak T$, there exists for almost every $t \in \mathfrak T$ a walk $\wa \in {\Routes'}, j \leq |\wa|$ with $\wa[j] = \arc$ and $\Tilde{t} \in \arr_{\wa,j}^{-1}(t)$ such that ${h}_\wa(\Tilde{t})> 0$. 
        \label[thmpart]{lem: Relations:h>0u>0:u>0ExistsHw>0}        
       
        \item For all $\wa \in \Routes'$ and $j \leq \abs{\wa}$, an arbitrary representative of ${\g}_{\wa[j]}$ and  almost all $t \in \hori$ the following implication holds 
        \begin{align*}
             {h}_\wa(t) > 0 \implies {\g}_{\wa[j]} (\arr_{\wa,j}(t)) >0.
        \end{align*}
       \label[thmpart]{lem: Relations:h>0u>0:Pointwise}

        \item For any edge $\arc \in \GA$, any representative of $h$ and almost all $t\in\hori$, the following implication holds 
        \begin{align*} 
 					\g_\arc(t)>0 \implies \exists \wa \in {\Routes'}, j \leq |\wa| \text{ with } \wa[j] = \arc \text{ and } \Tilde{t} \in \arr_{\wa,j}^{-1}(t) \text{ such that } {h}_\wa(\Tilde{t})> 0. 
 		\end{align*}
        \label[thmpart]{lem: Relations:h>0u>0:u>0ExistsHw>0Pointwise}

        \item \label[thmpart]{lem: Relations:h>0u>0:GoodRepres} There exist representatives of $h$ and $\g$ that fulfill 
            for all $\wa \in \Routes'$,  $j\leq \abs{\wa}$, $\arc \in \GA$  
            and \emph{for all} $t \in \hori$ the implications in \ref{lem: Relations:h>0u>0:Pointwise} and \ref{lem: Relations:h>0u>0:u>0ExistsHw>0Pointwise}.

 		\item If $\Routes'$ is a collection of \stwalk s, then, for any representative of $h$,  
 		there exists for $\op_\dest\g$-almost every $t$ a walk $\wa \in {\Routes'}$  and $\Tilde{t} \in \arr_{\wa,\abs{\wa}+1}^{-1}(t)$ such that ${h}_\wa(\Tilde{t})> 0$. 
 		\label[thmpart]{lem: Relations:h>0u>0:u>0ExistsHw>0D}

        \item Consider an arbitrary representative of $h$, an arbitrary $\mathfrak T \in \mathcal{B}(\hori)$ with $\sigma(\mathfrak T)> 0$ and a countable index set $\hat{L}$ with corresponding walk-edge-index-pairs $(\wa^{\hat l},j_{\hat l}),\hat l  \in \hat{L}$ with $\wa^{\hat l}\in \Routes'$ and $j_{\hat l}\leq \abs{\wa^{\hat l}}+1$ such that the following holds: For almost every $t \in \mathfrak T$ there exists an index $\hat l \in \hat{L}$ with corresponding $\tilde{t} \in \arr_{\wa^{\hat l},j_{\hat l}}^{-1}(t)$ such that $h_{\wa^{\hat l}}(\tilde{t})>0$. 
        Then, 
        there exists a countable set $L$ with corresponding walks $\wa^{l}\in \Routes'$, indices $j_{l}\leq \abs{\wa^{l}}+1$ and departure time sets $\mathfrak D^{l} \in \mathcal{B}(\hori)$ with $\sigma(\mathfrak D^{l})> 0$ such that
        \begin{itemize}\label[thmpart]{lem: Relations:h>0u>0:u>0ExistsCountableMZwischen}  
            \item for every $l \in L$ there exists $\hat{l}\in \hat{L}$ with $(\wa^l,j_l) = (\wa^{\hat{l}},j_{\hat{l}})$, 
            \item for every $l \in L$ we have $h_{\wa^l}(t)>0$ for a.e.~$t \in \mathfrak D^l$, 
            \item the union  $\bigcup_{l \in L}  \arr_{\wa^{l},j_{l}}(\mathfrak D^{l})$ equals $\mathfrak T$ up to a null set,
            \item the union  $\bigcup_{l \in L}  \arr_{\wa^{l},j_{l}}(\mathfrak D^{l})$ is disjoint, 
            \item all $\arr_{\wa^{l},j_{l}}(\mathfrak D^{l}), l \in L$ have positive measure and 
            \item every pair $(\wa,j), \wa\in {\Routes'}, j\leq\abs{\wa}+1$  appears at most as often in~$L$ as in~$\hat{L}$, that is, $\abs{\{l \in L\mid (\wa^{l},j_{l})=(\wa,j)\}} \leq \abs{\{\hat l \in \hat{L}\mid (\wa^{\hat l},j_{\hat l})=(\wa,j)\}}$.
        \end{itemize} 

        \item \label[thmpart]{lem: Relations:h>0u>0:u>0ExistsCountableM}  
        Consider an arbitrary $\arc \in \GA$, $\mathfrak T \in \mathcal{B}(\hori)$ with $\sigma(\mathfrak T)> 0$ and ${\g}_\arc(t)>0$ for a.e.~$t \in \mathfrak T$. Then, 
        there exists a countable set $L$ with corresponding walks $\wa^{l}\in \Routes', j_{l}\leq \abs{\wa^{l}}$ and departure time sets $\mathfrak D^{l} \in \mathcal{B}(\hori)$ with $\sigma(\mathfrak D^{l})> 0$ such that 
        \begin{itemize}
            \item for every $l \in L$ we have $h_{\wa^l}(t)>0$ for a.e.~$t \in \mathfrak D^l$, 
            \item the union  $\bigcup_{l \in L}  \arr_{\wa^{l},j_{l}}(\mathfrak D^{l})$ equals $\mathfrak T$ up to a null set,
            \item the union  $\bigcup_{l \in L}  \arr_{\wa^{l},j_{l}}(\mathfrak D^{l})$ is disjoint, 
            \item all $\arr_{\wa^{l},j_{l}}(\mathfrak D^{l}), l \in L$ have positive measure and 
            \item every $\wa \in {\Routes'}$ appears at most $\abs{\wa}$ many times, i.e.~$\abs{\{l \in L\mid \wa^{l}=\wa\}} \leq \abs{\wa}$. 
        \end{itemize} 
\end{thmparts}
\end{lemma}
\begin{proof}
    Let  $\wa\in \Routes'$, $j\leq |\wa|$ with $\arc:=\wa[j]$ be arbitrary for the following proofs of \ref{lem: Relations:h>0u>0:Setwise} and \ref{lem: Relations:h>0u>0:Pointwise}. 
\begin{structuredproof}
\proofitem{\ref{lem: Relations:h>0u>0:Setwise}}  Choose an arbitrary representative  of~$\g$. 
    Let $\mathfrak T \in \mathcal{B}(\hori)$ be arbitrary with ${h}_\wa>0$ a.e.\ on~$\mathfrak T$. Assume for the sake of a contradiction that the measurable set $\hat{\mathfrak T}:= \arr_{\wa,j}(\mathfrak T) \cap \{t \in\hori\mid \g_\arc=0 \}$ has positive measure. 
    Note that the latter set is indeed measurable by $\arr_{\wa,j}(\mathfrak T)$ being measurable as the image of a measurable set under an absolutely continuous function (\Cref{lem: PropAbsCon:ImageMeas}).  
    Then, we have ${h}_\wa(t) = 0$ for almost all $t \in   \arr_{\wa,j}^{-1}(\hat{\mathfrak T}) \cap \mathfrak{T}$ by the identity 
    $\int_{\hat{\mathfrak T}}\g_\arc\di\sigma = \sum_{\wa' \in {\Routes'}}\sum_{j': \wa'[j'] = \arc} \int_{\arr_{\wa',j'}^{-1}(\hat{\mathfrak T})}  {h}_{\wa'} \di\sigma$. 
    By assumption that  ${h}_\wa(t) > 0$ for almost all $t \in \mathfrak T$, this implies that the set $ \arr_{\wa,j}^{-1}(\hat{\mathfrak T}) \cap \mathfrak{T}$ has to be a null set. Yet, this is not possible as the image of the latter set under $\arr_{\wa,j}$ yields the set $\hat{\mathfrak T}$ which is not a null set, contradicting the property of absolutely continuous functions to have Lusin's property (\Cref{lem: PropAbsCon:Lus}).

\proofitem{\ref{lem: Relations:h>0u>0:u>0ExistsHw>0}} 
Fix an arbitrary representatives of ${h}$ and any edge~$\arc \in \GA$. 
Consider an arbitrary $\mathfrak T \in \mathcal{B}(\hori)$ with $\g_\arc(t)>0$ for a.e.~$t\in \mathfrak T$. Define 
\begin{align*}
    \hat{\mathfrak T} :=\{t \in \mathfrak T \mid \nexists \wa \in {\Routes'}, j\leq |\wa| \text{ with } \wa[j] = \arc \text{ and }\Tilde{t} \in \arr_{\wa,j}^{-1}(t): h_\wa(\Tilde{t})>0 \}.
\end{align*}
We have to show that $\hat{\mathfrak T}$ is a null set. 
We start by observing that the latter set is measurable as we can represent it as follows: 
\begin{align*}
    \hat{\mathfrak T} = \mathfrak T \setminus \Bigl( \bigcup_{\wa \in {\Routes'}}\bigcup_{j \leq|\wa|:\wa[j] = \arc} \arr_{\wa,j}\bigl(\{t \in \hori\mid h_\wa(t)>0\}\bigr)\Bigr)
\end{align*}
where we note that the unions are countable and the individual occurring sets are measurable \Cref{lem: PropAbsCon:ImageMeas}

Now, by definition of ${\g}$, we have $\int_{\hat{\mathfrak T}} {\g}_\arc \di\sigma =  \sum_{\wa \in {\Routes'}}\sum_{j:\wa[j] = \arc} \int_{\arr_{\wa,j}^{-1}(\hat{\mathfrak T})} {h}_\wa \di\sigma$ while at the same time, by definition of $\hat{\mathfrak T}$, we have ${h}_\wa(t)= 0$ for all $t \in \arr_{\wa,j}^{-1}(\hat{\mathfrak T})$ and all $\wa \in {\Routes'}, j \leq |\wa|$ with $\wa[j]= \arc$. Hence, the right hand side of the last equality is equal to zero, implying that $\hat{\mathfrak T}$ has to be a null set since, by definition of $\mathfrak T\supseteq \hat{\mathfrak T}$, we have $\g_\arc(t)> 0$ for almost all $t \in \mathfrak T$. 

\proofitem{\ref{lem: Relations:h>0u>0:Pointwise}} 
    Let $\g_{\arc}$ and $h$ be arbitrary representatives. 
    Consider the measurable set $\mathfrak T := \{t \in \hori\mid h_\wa(t)>0, \g_{\arc} (\arr_{\wa,j}(t)) =0\}$. 
    Then we have 
    \begin{align*}
        0 = \int_{\arr_{\wa,j}(\mathfrak T)}  \g_\arc \di\sigma =  \int_{\arr_{\wa,j}^{-1}(\arr_{\wa,j}(\mathfrak T))}h_\wa \di\sigma   \geq  \int_{\mathfrak T}h_\wa \di\sigma  
    \end{align*}
    showing that $h_\wa(t) = 0$ for a.e.\  $t\in \mathfrak T$ (as $h\in L_+(\hori)$). By definition of $\mathfrak T$, this is only possible if $\mathfrak T$ is itself a null set.

\proofitem{\ref{lem: Relations:h>0u>0:u>0ExistsHw>0Pointwise}} 
This is a direct consequence of \ref{lem: Relations:h>0u>0:u>0ExistsHw>0} for $\mathfrak T = \hori$.

\proofitem{\ref{lem: Relations:h>0u>0:GoodRepres}} We first choose an arbitrary representatives of $\g$ and \wrt to the latter a representative of $h$ such that 
 		the implication in \ref{lem: Relations:h>0u>0:Pointwise} holds for all $t \in \hori$ and all $\wa \in \Routes',j\leq \abs{\wa}$. This is possible since  $\Routes'$ is countable, every walk in $\Routes'$ is finite, and, for any fixed pair $\wa \in \Routes'$ and $j\leq \abs{\wa}$, the statement already holds for almost all $t \in \R$ by~\ref{lem: Relations:h>0u>0:Pointwise}. 
 Next, for every $\arc \in \GA$ let $\mathfrak T_\arc$ be the set of times where the second implication does not hold. This set is measurable, cf.~the proof of \ref{lem: Relations:h>0u>0:u>0ExistsHw>0}. 
 Hence,  \ref{lem: Relations:h>0u>0:u>0ExistsHw>0} implies that this set must be a null set. Therefore, adjusting for all $\arc \in \GA$ the representative of $\g_\arc$  on $\mathfrak T_\arc$ by setting it to zero results in another representative  of $\g$ fulfilling the second implication for all $t\in \hori$ and all $\arc \in \GA$. 
 Now, it remains to observe that this adjustment preserves the fulfillment of the implication in~\ref{lem: Relations:h>0u>0:Pointwise}:
Assume for the sake of a contradiction that there exists $\wa \in \Routes'$, $j\leq \abs{\wa}$ and $t \in \hori$ with $h_\wa(t)>0$
 and $\g_{\wa[j]}(\arr_{\wa,j}(t)) =0$ (after the adjustment). 
 Since, prior to the adjustment,  the implication in \ref{lem: Relations:h>0u>0:Pointwise} did hold, it follows that 
 $\arr_{\wa,j}(t) \in \mathfrak T_{\wa[j]}$. But this contradicts the fact that $t \in \arr_{\wa,j}^{-1}(\arr_{\wa,j}(t))$ and $h_\wa(t) > 0$. Hence, the proof is finished. 

\proofitem{\ref{lem: Relations:h>0u>0:u>0ExistsHw>0D}} 
 		Fix an arbitrary representatives of ${h}$. 
 		Define 
 		\begin{align*}
 			\hat{\mathfrak T} \coloneq\{t \in \hori \mid \nexists \wa \in {\Routes'} \text{ with } \Tilde{t} \in \arr_{\wa,\abs{\wa}+1}^{-1}(t): h_\wa(\Tilde{t})>0 \}.
 		\end{align*}
 		Then $\hat{\mathfrak T}\in \mathcal{B}(\hori)$ follows analogously to \ref{lem: Relations:h>0u>0:u>0ExistsHw>0}. 
 			We have to show that $\hat{\mathfrak T}$ is a $\op_\dest\g$-null set. 
 		By \Cref{lem: flowconW'} and $\Routes'$ being a collection of \stwalk s, we know that 
 		\begin{align*}
 		\op_\dest\g(\hat{\mathfrak T}) =	- \sum_{\wa \in \Routes'} (h_\wa \cdot\leb)\circ\arr_{\wa,|\wa|+1}^{-1}(\hat{\mathfrak T}).
 		\end{align*}
 		By definition of $\hat{\mathfrak{T}}$, we know that for every $\wa\in \Routes'$ we have 
 		\begin{align*}
 			(h_\wa \cdot\leb)\circ\arr_{\wa,|\wa|+1}^{-1}(\hat{\mathfrak T}) = \int_{\arr_{\wa,|\wa|+1}^{-1}(\hat{\mathfrak T})}h_\wa\di\leb =  \int_{\arr_{\wa,|\wa|+1}^{-1}(\hat{\mathfrak T})}0\di\leb=0
 		\end{align*}
 		and thus the claim follows.

\proofitem{\ref{lem: Relations:h>0u>0:u>0ExistsCountableMZwischen}} 
Consider $h$, $\mathfrak T$ and $\hat{L}$ as described. 
Define for any $l \in \hat{L}$ the set 
\begin{align*}
    \hat{\mathfrak D}^l := \arr_{\wa^l,j_l}^{-1}(\mathfrak T) \cap \{t \in \hori\mid h_{\wa^l}(t)>0\}.
\end{align*}
 Remark that these sets are measurable due to the measurability of $\arr$, $\mathfrak T$ and $h$. 
We now consider the union $\bigcup_{l\in \hat{L}} \arr_{\wa^l,j_l}\big(\hat{\mathfrak D}^{l}\big)$ 
and claim that this union gives us all of $\mathfrak T$ up to a null set, that is, the set given by 
$\hat{\mathfrak T}:=  \mathfrak T \setminus \bigcup_{l\in \hat{L}} \arr_{\wa^l,j_l}\big(\hat{\mathfrak D}^{l}\big)$ is a null set.  
By  assumption, 
there exists for almost every $t \in \hat{\mathfrak T}$ an index $l \in \hat{L}$ with  walk $\wa^l \in \Routes'$, index $j_l\leq \abs{\wa}+1$ and time $\tilde{t} \in \arr_{\wa^l,j_l}^{-1}(t)$ such that $h_{\wa^l}(\tilde{t})>0$.
This, in particular, implies $\tilde t \in \hat{\mathfrak D}^l$ (by definition of $\hat{\mathfrak D}^l$) and, therefore, $t = \arr_{\wa^l,j_l}(\tilde t) \in \arr_{\wa^l,j_l}\big(\hat{\mathfrak D}^{l}\big)$, which, in turn, implies $t \notin \hat{\mathfrak T}$ by the definition of $\hat{\mathfrak T}$. Hence, almost every $t \in \hat{\mathfrak T}$ is not contained in $\hat{\mathfrak T}$, which is only possible if $\sigma(\hat{\mathfrak T})=0$. 

Since $\hat{L}$ is countable, we can assume \wlg that it is a set of ascending natural numbers $\hat{L}=\{1,2,\ldots\}\subseteq \N$. 
We define for any $l\in \hat{L}$
 \begin{align*}
      \mathfrak D^{l} :=  \hat{\mathfrak D}^{l} \setminus  \arr_{\wa^l,j_l}^{-1}\Big(\bigcup_{l'\in \N:l'< l} \arr_{\wa^{l'},j_{l'}} \big( \hat{\mathfrak D}^{l'} \big) \Big)
 \end{align*} 
 and remark that these sets are measurable by the measurability of the functions $\arr_{\wa^l,j_l}$ and the sets  $\hat{\mathfrak D}^{l}$.

Now let $L\subseteq \hat{L}$ be the subset of all those indices where $\mathfrak D^l$ is not a null set.  
Clearly, this choice guarantees fulfillment of the first, second, fourth and last property required in~\ref{lem: Relations:h>0u>0:u>0ExistsCountableMZwischen}.
Moreover, we get immediately by construction the following:
\begin{align*}
    \bigcup_{l \in L} \arr_{\wa^l,j_l}\big(\mathfrak D^{l}\big) \symoverset{1}{\subseteq} \bigcup_{l \in \hat{L}} \arr_{\wa^l,j_l}\big(\mathfrak D^{l}\big)  = \bigcup_{l \in \hat{L}} \arr_{\wa^l,j_l}\big(\hat{\mathfrak D}^{l}\big)  \symoverset{3}{\subseteq} \mathfrak T. 
\end{align*}
Here, as remarked above, the inclusion  \refsym{3} is in fact an equality up to a null set.  
Moreover, the same is also true for the  inclusion \refsym{1}  since $\hat{L}\setminus L$ is countable and all $\arr_{\wa,j}$ have Lusin's property, which implies (\Cref{lem: PropAbsCon:Lus}) that $\arr_{\wa^l,j_l}(\mathfrak D^l)$ is a null set whenever $\mathfrak D^l$ is a null set. Thus, $\bigcup_{l \in L} \arr_{\wa^l,j_l}\big(\mathfrak D^{l}\big)$ equals $\mathfrak T$ up to a null set. 

Finally, we remark that  by $\ell_{\wa}(h_\wa),\wa \in \hat{\Routes}$ existing and \Cref{lem: elluExistenceProperties:ExistenceInducedFlowOnJthEdge},  $\arr_{\wa^{l},j_{l}}(\mathfrak D^{l})$ is not a null set for any $l \in L$ 
as $\sigma(\mathfrak D^l)> 0$ and $h_{\wa^l}>0$ on $\mathfrak D^l$.

\proofitem{\ref{lem: Relations:h>0u>0:u>0ExistsCountableM}} 
Consider $\arc \in \GA$ and $\mathfrak T$ as given as well as an arbitrary ordering $\{\wa^k\}_{k \in \N}$ on the set $\Routes'$. 
Define $\hat{L}$ as the countable set of all $(k,j) \in \N^2$ for which 
$\wa^k[j]=\arc$. Then, by \ref{lem: Relations:h>0u>0:u>0ExistsHw>0}, $\mathfrak T$ and $\hat{L}$ fulfill the required property of 
\ref{lem: Relations:h>0u>0:u>0ExistsCountableMZwischen} which shows the statement. \qedhere
\end{structuredproof} 
\end{proof}

The  next \namecref{lem: elluPropagation} shows how for a single walk inflow the inflow rates on two edges on this walk are connected via an \auto network loading along the partial walk between these two edges.

\begin{lemma}\label{lem: elluPropagation}
     Consider an arbitrary walk $\wa$, two indices $j_1\leq j_2\leq|\wa|+1$ and $h_\wa \in \edom{\wa,j_1}$. Then, 
     $\ell_{\wa_{\geq j_1},j_2-j_1+1}(\ell_{\wa,j_1}(h_\wa))$ exists if and only if $\ell_{\wa,j_2}(h_\wa)$ exists,  in which case they coincide, i.e.~$\ell_{\wa,j_2}(h_\wa) = \ell_{\wa_{\geq j_1},j_2-j_1+1}(\ell_{\wa,j_1}(h_\wa))$. 
\end{lemma}
\begin{proof}
We observe for an arbitrary $\mathfrak T \in \mathcal{B}(\hori)$: 
\begin{align}\label{eq: elluPropagation}
  \int_{\arr_{\wa,j_2}^{-1}(\mathfrak T)}h_\wa\di\sigma \symoverset{1}{=}\int_{\arr_{\wa,j_1}^{-1}(\arr_{\wa_{\geq j_1},j_2-j_1+1}^{-1}(\mathfrak T))}h_\wa\di\sigma \symoverset{2}{=}  \int_{\arr_{\wa_{\geq j_1},j_2-j_1+1}^{-1}(\mathfrak T)} \ell_{\wa,j_1}(h_\wa)\di\sigma
\end{align}
where we used for \refsym{1}  that $\arr_{\wa_{\geq j_1},j_2-j_1+1} \circ \arr_{\wa,j_1} =  \arr_{\wa,j_2}$ and for \refsym{2} the existence of $\ell_{\wa,j_1}(h_\wa)$. 

Let us argue for the existence statements in the \namecref{lem: elluPropagation} first: If $\ell_{\wa,j_2}(h_\wa)$ exists, then the left hand side of \eqref{eq: elluPropagation} is equal to $\int_{\mathfrak T} \ell_{\wa,j_2}(h_\wa)\di\sigma$ and in particular equal to $0$ for every null set $\mathfrak T$ which shows the existence of $\ell_{\wa_{\geq j_1},j_2-j_1+1}(\ell_{\wa,j_1}(h_\wa))$ by \Cref{lem: elluExistenceProperties:ExistenceInducedFlowOnJthEdge}. 

Similarly, if $\ell_{\wa_{\geq j_1},j_2-j_1+1}(\ell_{\wa,j_1}(h_\wa))$ exists, then the right hand side of \eqref{eq: elluPropagation} 
is equal to the integral $\int_{\mathfrak T} \ell_{\wa_{\geq j_1},j_2-j_1+1}(\ell_{\wa,j_1}(h_\wa))\di\sigma$ 
 and in particular equal to $0$ for every null set $\mathfrak T$ which shows the existence of $\ell_{\wa,j_2}(h_\wa)$ by \Cref{lem: elluExistenceProperties:ExistenceInducedFlowOnJthEdge}. 

The stated equality now follows immediately by \eqref{eq: elluPropagation}. 
\end{proof}

Next, we demonstrate that if the induced flow of a walk inflow rate $h_\wa$ exists, then integrating $h_\wa$ over a set $\mathfrak T$ yields the same result as integrating over the (potentially larger) set $\arr_{\wa,j}^{-1}(\arr_{\wa,j}(\mathfrak T))$.
\begin{lemma}\label{lem: elluinj}
       Consider an arbitrary walk $\wa$, $j\in [|\wa|+1]$ and $h_\wa \in L_+(\hori)$. If $\ell_{\wa,j}(h_\wa)$ exists, then $h_\wa = 0$ (a.e.) on $\arr_{\wa,j}^{-1}(\arr_{\wa,j}(\mathfrak T))\setminus \mathfrak T$ and in particular $\int_{\mathfrak T} h_\wa \di\leb = \int_{\arr_{\wa,j}^{-1}(\arr_{\wa,j}(\mathfrak T))} h_\wa \di\leb$ for any $\mathfrak T\in \mathcal{B}(\hori)$.
\end{lemma}
\begin{proof}
    Define $\mathfrak T_{\mathrm{sing}}$ as the set of times $t \in \hori$ with $\arr_{\wa,j}^{-1}(t)$ being a singleton. 
    Since $\arr_{\wa,j}$ is monotone increasing, 
    the collection of sets $\arr_{\wa,j}^{-1}(t), t \in   \hori\setminus \mathfrak T_{\mathrm{sing}}$ is a collection of non-empty disjoint intervals. This implies that  
    $\hori\setminus \mathfrak T_{\mathrm{sing}}$ 
    is countable and, in particular, a null set.  Furthermore,  $ \arr_{\wa,j}^{-1}(\arr_{\wa,j}(\mathfrak T))\setminus \mathfrak T \subseteq \arr_{\wa,j}^{-1}(\hori \setminus \mathfrak T_{\mathrm{sing}})$ which implies the claim by \Cref{lem: elluExistenceProperties:ExistenceInducedFlowOnJthEdge}. 
\end{proof}  

As a further structural insight, we show that  the mapping $\ell_{\wa,j}$ is  an order embedding, meaning that  a larger walk inflow rate will lead to a larger edge flow  and vice versa.

\begin{lemma}\label{lem: ellOrderPreserving}
    Consider an arbitrary walk  $\wa$, $j\in [|\wa|+1]$ and $h_{\wa},\Tilde{h}_\wa \in \edom{\wa,j}$. Then, we have
        \[\ell_{\wa,j}(h_\wa) \leq \ell_{\wa,j}(\tilde{h}_\wa) \iff h_\wa \leq \Tilde{h}_\wa.\]
    The analogue statement holds for $<$ instead of $\leq$ where $\wflow < \tilde{\wflow}$ for $\wflow,\tilde{\wflow} \in L_+(\hori)$ means that $\wflow \leq \tilde{\wflow}$ and $\wflow \neq \tilde{\wflow}$.  
\end{lemma}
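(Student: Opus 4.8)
The plan is to deduce the whole statement from linearity of $\ell^u_{\wa,j}$ (\Cref{lem: elluExistenceProperties:Linearity}) together with the two elementary identities already extracted in \Cref{lem: elluExistenceProperties} and \Cref{lem: elluinj}. Since the maximal domain of $\ell^u_{\wa,j}$ is a linear subspace on which $\ell^u_{\wa,j}$ acts linearly, the difference $k := \tilde h_\wa - h_\wa$ again lies in the domain and $\ell^u_{\wa,j}(k) = \ell^u_{\wa,j}(\tilde h_\wa) - \ell^u_{\wa,j}(h_\wa)$. Hence $\ell^u_{\wa,j}(h_\wa)\le\ell^u_{\wa,j}(\tilde h_\wa)$ is equivalent to $\ell^u_{\wa,j}(k)\ge 0$ and $h_\wa\le\tilde h_\wa$ is equivalent to $k\ge 0$ (a.e.), so it suffices to prove the equivalence $\ell^u_{\wa,j}(k)\ge 0 \iff k\ge 0$ for every $k$ in the domain.

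First I would handle ``$k\ge 0 \Rightarrow \ell^u_{\wa,j}(k)\ge 0$'': by the computation in the proof of \Cref{lem: elluExistenceProperties} the measures $\ell^u_{\wa,j}(k)\cdot\sigma$ and $(k\cdot\sigma)\circ\arr_{\wa,j}(u,\cdot)^{-1}$ agree on $\mathcal B(\hori)$; if $k\ge 0$ a.e.\ the latter is a nonnegative measure, so $\int_{\mathfrak T}\ell^u_{\wa,j}(k)\di\sigma\ge 0$ for all $\mathfrak T\in\mathcal B(\hori)$, which forces $\ell^u_{\wa,j}(k)\ge 0$ a.e. For the converse I would invoke the observation established inside the proof of \Cref{lem: elluContinuity:SingleWalk}: whenever $\ell^u_{\wa,j}(k)\ge 0$ on a set $\mathfrak T$, then $k\ge 0$ a.e.\ on $\arr_{\wa,j}(u,\cdot)^{-1}(\mathfrak T)$; specializing to $\mathfrak T=\hori$ and using the standing assumption $\arr_{\wa,j}(u,\cdot)^{-1}(\hori)=\hori$ yields $k\ge 0$ a.e. If preferred, this single line can be re-derived directly from \Cref{lem: elluinj} exactly as there: for measurable $\mathfrak T'\subseteq\hori$ one has $\int_{\mathfrak T'}k\di\sigma = \int_{\arr_{\wa,j}(u,\cdot)(\mathfrak T')}\ell^u_{\wa,j}(k)\di\sigma\ge 0$, and letting $\mathfrak T'$ range over all measurable subsets gives $k\ge 0$ a.e.

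The strict version then follows for free: the two non-strict implications say exactly that $\ell^u_{\wa,j}$ is an order embedding, and in particular injective on its domain (if $\ell^u_{\wa,j}(k)=0$ then $0\le k$ and $k\le 0$, hence $k=0$). Therefore $\ell^u_{\wa,j}(h_\wa)<\ell^u_{\wa,j}(\tilde h_\wa)$ — i.e.\ $\ell^u_{\wa,j}(h_\wa)\le\ell^u_{\wa,j}(\tilde h_\wa)$ together with $\ell^u_{\wa,j}(h_\wa)\ne\ell^u_{\wa,j}(\tilde h_\wa)$ — holds iff $h_\wa\le\tilde h_\wa$ and $h_\wa\ne\tilde h_\wa$, i.e.\ iff $h_\wa<\tilde h_\wa$. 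I do not anticipate a genuine obstacle: everything needed has essentially been isolated in the preceding lemmas, and the whole argument is bookkeeping with those tools. The only point requiring a little care is the measurability of the image set $\arr_{\wa,j}(u,\cdot)(\mathfrak T')$ appearing in the converse direction, which is dealt with precisely as in the proof of \Cref{lem: elluContinuity}: it suffices to test with $\mathfrak T'$ an interval (as $\arr_{\wa,j}(u,\cdot)$ is continuous and monotone, images of intervals are intervals), and intervals generate $\mathcal B(\hori)$ and already determine $k$ a.e.
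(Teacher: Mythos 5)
Your proposal is correct and follows essentially the same route as the paper: the substance in both directions is the identity $\int_{\mathfrak T'}k\di\sigma=\int_{\arr_{\wa,j}(u,\cdot)(\mathfrak T')}\ell^u_{\wa,j}(k)\di\sigma$ coming from \Cref{lem: elluinj}, which you merely repackage via linearity for the difference $k=\tilde h_\wa-h_\wa$. The only cosmetic deviation is the strict case, which you obtain from injectivity of the order embedding while the paper uses the norm identity $\int_\hori \ell^u_{\wa,j}(\hat h_\wa)\di\sigma=\int_\hori \hat h_\wa\di\sigma$; both are fine.
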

\begin{proof}
    We first start with the statements for $\leq$ and prove both direction separately: 
    \begin{structuredproof}
          \proofitem{``$\Leftarrow$"}
        Let $\mathfrak T \in \mathcal{B}(\hori)$ be arbitrary. Then we have
        \begin{align*}
           \int_{\mathfrak T} \ell_{\wa,j}(h_\wa) \di\sigma = \int_{\arr_{\wa,j}^{-1}(\mathfrak T)} h_\wa \di\sigma
           \leq  \int_{\arr_{\wa,j}^{-1}(\mathfrak T)} \tilde{h}_\wa \di\sigma = 
           \int_{\mathfrak T} \ell_{\wa,j}(\tilde{h}_\wa) \di\sigma
        \end{align*}
        which shows $\ell_{\wa,j}(h_\wa) \leq \ell_{\wa,j}(\tilde{h}_\wa)$ since $\mathfrak T$ was arbitrary. 
        \proofitem{``$\Rightarrow$"} 
        Let $\mathfrak T \in \mathcal{B}(\hori)$ be arbitrary. Then we have
        \begin{align*}
            \int_{\mathfrak T} h_\wa \di\sigma 
            &\symoverset{1}{=} \int_{\arr_{\wa,j}^{-1}(\arr_{\wa,j}(\mathfrak T))} h_\wa \di\sigma
            =\int_{\arr_{\wa,j}(\mathfrak T)} \ell_{\wa,j}(h_\wa) \di\sigma\\
            &\leq \int_{\arr_{\wa,j}(\mathfrak T)} \ell_{\wa,j}(\tilde{h}_\wa)  \di\sigma 
           = \int_{\arr_{\wa,j}^{-1}(\arr_{\wa,j}(\mathfrak T))} \tilde{h}_\wa \di\sigma\\
          &\symoverset{1}{=} \int_{\mathfrak T} \tilde{h}_\wa \di\sigma 
        \end{align*}     
        where the equalities indicated by \refsym{1} hold due to \Cref{lem: elluinj}. Hence, the claimed inequality is true since $\mathfrak T$ was arbitrary. 
    \end{structuredproof}
    Now the statement for $<$ follows directly by the above and the equality $\int_{\hori} \ell_{\wa,j}(\hat{\wflow}_\wa)\di\sigma = \int_{\hori}\hat{\wflow}_\wa\di\sigma$ for any $\hat{\wflow}_\wa \in L_+(\hori)$. 
\end{proof}

We can even sharpen the previous result and show that if $h_\wa \leq \tilde{h}_\wa$ on some subset $\mathfrak D$ of starting times, then the induced flows fulfill $\ell_{\wa,j}(h_\wa) \leq \ell_{\wa,j}(\tilde{h}_\wa)$  on the arrival times $\arr_{\wa,j}(\mathfrak D)$ at the $j$-th edge. 
In order to show this, we need the following lemma demonstrating that 
 $\ell_{\wa,j}$ commutes with indicator functions. Here, 
we denote for any set $S$ and subset $S'\subseteq S$ the indicator function by $1_{S'}:S\to \{0,1\}$ with $1_{S'}(s) = 1$ if $s \in S'$ and $1_{S'}(s) = 0$ else. 

\begin{lemma} \label{lem: elluindi}
       Consider an arbitrary walk $\wa$, $j\leq |\wa|+1$ and $h_\wa \in\edom{\wa}$.  
       Then for any $\mathfrak T^* \in \mathcal{B}(\hori)$, the flow $\ell_{\wa,j}(1_{\mathfrak T^*}\cdot h_\wa)$ exists and  fulfills $\ell_{\wa,j}(1_{\mathfrak T^*}\cdot h_\wa) = 1_{\arr_{\wa,j}(\mathfrak T^*)}\cdot\ell_{\wa,j}(h_\wa)$.  
\end{lemma}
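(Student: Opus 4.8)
The plan is to separate the statement into an existence part and an identity part. For existence, I would use the characterization from \Cref{lem: elluExistenceProperties:ExistenceInducedFlowOnJthEdge}: the hypothesis that $\ell^u_{\wa,j}(h_\wa)$ exists is equivalent to $h_\wa$ fulfilling~\eqref{eq: nlexists}, i.e.~$h_\wa = 0$ on $\arr_{\wa,j}(u,\cdot)^{-1}(\mathfrak N)$ for every null set $\mathfrak N\subseteq\hori$. Since $1_{\mathfrak T^*}\cdot h_\wa$ vanishes wherever $h_\wa$ does, it again satisfies~\eqref{eq: nlexists}, so $\ell^u_{\wa,j}(1_{\mathfrak T^*}\cdot h_\wa)$ exists (and is unique) by the same \namecref{lem: elluExistenceProperties}. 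I would also remark that the set $\arr_{\wa,j}(u,\cdot)(\mathfrak T^*)$ is (Lebesgue-)measurable, since $\arr_{\wa,j}(u,\cdot)$ is absolutely continuous and hence maps null sets to null sets and, being continuous, maps measurable sets to measurable sets; this makes the right-hand side of the claimed identity a well-defined element of $L(\hori)$.

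For the identity, I would fix an arbitrary $\mathfrak T\in\mathcal B(\hori)$ and show that both sides of the asserted equation have the same integral over $\mathfrak T$; since $\mathfrak T$ is arbitrary, this forces the two functions to coincide $\sigma$-a.e., i.e.~as elements of $L(\hori)$. The argument only uses the defining relation of the induced flows in its measure form $\int_{\mathfrak S}\ell^u_{\wa,j}(f)\di\sigma = \int_{\arr_{\wa,j}(u,\cdot)^{-1}(\mathfrak S)}f\di\sigma$ (for $f$ in the domain of $\ell^u_{\wa,j}$), which is the equality of induced measures established in the proof of \Cref{lem: elluExistenceProperties} and which, thanks to~\eqref{eq: nlexists}, remains valid even when $\mathfrak S$ is only Lebesgue-measurable. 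Applied with $f = 1_{\mathfrak T^*}\cdot h_\wa$ it gives
\begin{align*}
    \int_{\mathfrak T}\ell^u_{\wa,j}(1_{\mathfrak T^*}\cdot h_\wa)\di\sigma
    &= \int_{\arr_{\wa,j}(u,\cdot)^{-1}(\mathfrak T)}1_{\mathfrak T^*}\cdot h_\wa\di\sigma \\
    &= \int_{\arr_{\wa,j}(u,\cdot)^{-1}(\mathfrak T)\cap\mathfrak T^*}h_\wa\di\sigma,
\end{align*}
while, applied with $f = h_\wa$ and the Lebesgue-measurable set $\mathfrak T\cap\arr_{\wa,j}(u,\cdot)(\mathfrak T^*)$, together with the elementary identity $\arr_{\wa,j}(u,\cdot)^{-1}\bigl(\mathfrak T\cap\arr_{\wa,j}(u,\cdot)(\mathfrak T^*)\bigr) = \arr_{\wa,j}(u,\cdot)^{-1}(\mathfrak T)\cap\arr_{\wa,j}(u,\cdot)^{-1}\bigl(\arr_{\wa,j}(u,\cdot)(\mathfrak T^*)\bigr)$, it gives
\[
    \int_{\mathfrak T}1_{\arr_{\wa,j}(u,\cdot)(\mathfrak T^*)}\cdot\ell^u_{\wa,j}(h_\wa)\di\sigma = \int_{\arr_{\wa,j}(u,\cdot)^{-1}(\mathfrak T)\cap\arr_{\wa,j}(u,\cdot)^{-1}(\arr_{\wa,j}(u,\cdot)(\mathfrak T^*))}h_\wa\di\sigma.
\]

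It then remains to reconcile the two domains of integration. We have the inclusion $\mathfrak T^*\subseteq\arr_{\wa,j}(u,\cdot)^{-1}\bigl(\arr_{\wa,j}(u,\cdot)(\mathfrak T^*)\bigr)$, and \Cref{lem: elluinj} --- which is applicable precisely because $\ell^u_{\wa,j}(h_\wa)$ exists --- yields $h_\wa = 0$ a.e.~on the difference $\arr_{\wa,j}(u,\cdot)^{-1}\bigl(\arr_{\wa,j}(u,\cdot)(\mathfrak T^*)\bigr)\setminus\mathfrak T^*$. Hence intersecting with $\arr_{\wa,j}(u,\cdot)^{-1}(\mathfrak T)$ leaves the integral of $h_\wa$ unchanged, so both expressions above equal $\int_{\arr_{\wa,j}(u,\cdot)^{-1}(\mathfrak T)\cap\mathfrak T^*}h_\wa\di\sigma$, and the lemma follows as $\mathfrak T$ was arbitrary. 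I expect this last reconciliation step to be the only genuinely delicate point --- one has to keep forward images and preimages carefully apart and apply \Cref{lem: elluinj} to exactly the set $\mathfrak T^*$ (and not, say, to $\arr_{\wa,j}(u,\cdot)^{-1}(\mathfrak T)\cap\mathfrak T^*$) --- while everything else is routine bookkeeping with the measure identity for $\ell^u_{\wa,j}$.
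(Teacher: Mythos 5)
Your proposal is correct and follows essentially the same route as the paper: integrate both sides over an arbitrary $\mathfrak T\in\mathcal B(\hori)$, unfold the defining measure identity of $\ell^u_{\wa,j}$, and use \Cref{lem: elluinj} applied to $\mathfrak T^*$ to pass between $\arr_{\wa,j}(u,\cdot)^{-1}(\mathfrak T)\cap\mathfrak T^*$ and $\arr_{\wa,j}(u,\cdot)^{-1}\bigl(\mathfrak T\cap\arr_{\wa,j}(u,\cdot)(\mathfrak T^*)\bigr)$. Your added existence and measurability remarks (the latter should be justified via absolute continuity/Lusin's property rather than continuity alone) are left implicit in the paper but do not change the argument.
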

\begin{proof}
    Let $\mathfrak T \in \mathcal{B}(\hori)$ be arbitrary. We calculate: 
    \begin{align*}
         \int_{\mathfrak T} 1_{\arr_{\wa,j}(\mathfrak T^*)}\cdot\ell_{\wa,j}(h_\wa) \di\sigma &= \int_{\mathfrak T \cap\arr_{\wa,j} (\mathfrak T^*)}   \ell_{\wa,j}(h_\wa)\di\sigma = \int_{\arr_{\wa,j}^{-1}(\mathfrak T \cap\arr_{\wa,j}(\mathfrak T^*))  }   h_\wa \di\sigma  \\
        &=\int_{\arr_{\wa,j}^{-1}(\mathfrak T) \cap \arr_{\wa,j}^{-1}(\arr_{\wa,j}(\mathfrak T^*))}  h_\wa \di\sigma\\
         &\symoverset{1}{=}    \int_{\arr_{\wa,j}^{-1}(\mathfrak T) \cap \mathfrak T^*}  h_\wa \di\sigma  = \int_{\arr_{\wa,j}^{-1}(\mathfrak T)}  1_{\mathfrak T^*} \cdot h_\wa \di\sigma
    \end{align*}
    where equality \refsym{1} holds due to \Cref{lem: elluinj}. Hence, $  1_{\arr_{\wa,j}(\mathfrak T^*)}\cdot\ell_{\wa,j}(h_\wa)$ fulfills \eqref{eq: Deflwj} which shows the claim. 
\end{proof}

As an immediate consequence of the previous two lemmas we get:

\begin{lemma}\label{lem: ellOrderPreservingSharpened}
    Consider an arbitrary walk  $\wa$, $j\in [|\wa|+1]$ and $h_{\wa},\Tilde{h}_\wa \in \edom{\wa}$. 
    For any set $\mathfrak D \in \mathcal{B}(\hori)$ the inequality 
     $\ell_{\wa,j}(h_\wa) \leq \ell_{\wa,j}(\tilde{h}_\wa)$ on $\arr_{\wa,j}(\mathfrak D)$ is equivalent to $h_\wa \leq \Tilde{h}_\wa$ on $\mathfrak D$. The analogue statement holds for $<$ instead of $\leq$. 
\end{lemma}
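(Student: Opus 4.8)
The plan is to deduce this directly from \Cref{lem: ellOrderPreserving} and \Cref{lem: elluindi} by localising the two inflow rates to the set $\mathfrak D$ via indicator functions. First I would note that, since $1_{\mathfrak D}\cdot h_\wa$ vanishes wherever $h_\wa$ does, it still satisfies condition~\eqref{eq: nlexists}, so $\ell^u_{\wa,j}(1_{\mathfrak D}\cdot h_\wa)$ exists (by \Cref{lem: elluExistenceProperties}, or already as part of the assertion of \Cref{lem: elluindi}), and likewise for $\tilde h_\wa$. \Cref{lem: elluindi} then yields the two identities
\[
   \ell^u_{\wa,j}(1_{\mathfrak D}\cdot h_\wa)=1_{\arr_{\wa,j}(u,\cdot)(\mathfrak D)}\cdot\ell^u_{\wa,j}(h_\wa)
   \quad\text{and}\quad
   \ell^u_{\wa,j}(1_{\mathfrak D}\cdot\tilde h_\wa)=1_{\arr_{\wa,j}(u,\cdot)(\mathfrak D)}\cdot\ell^u_{\wa,j}(\tilde h_\wa).
\]

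Second, I would unwind the statement. By definition, ``$\ell^u_{\wa,j}(h_\wa)\le\ell^u_{\wa,j}(\tilde h_\wa)$ on $\arr_{\wa,j}(u,\cdot)(\mathfrak D)$'' is the inequality obtained after multiplying both sides by $1_{\arr_{\wa,j}(u,\cdot)(\mathfrak D)}$, and by the identities above this is precisely $\ell^u_{\wa,j}(1_{\mathfrak D}\cdot h_\wa)\le\ell^u_{\wa,j}(1_{\mathfrak D}\cdot\tilde h_\wa)$. Applying \Cref{lem: ellOrderPreserving} to the pair $1_{\mathfrak D}\cdot h_\wa$, $1_{\mathfrak D}\cdot\tilde h_\wa$ turns this equivalently into $1_{\mathfrak D}\cdot h_\wa\le 1_{\mathfrak D}\cdot\tilde h_\wa$, which is exactly ``$h_\wa\le\tilde h_\wa$ on $\mathfrak D$''. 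This settles the ``$\le$'' version.

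Finally, for the ``$<$'' version I would use that $\wflow<\tilde{\wflow}$ means $\wflow\le\tilde{\wflow}$ together with $\wflow\ne\tilde{\wflow}$: the ``$\le$''-part is already handled, and for the ``$\ne$''-part I would argue, exactly as in the ``$<$''-case of \Cref{lem: ellOrderPreserving}, from the norm identity $\int_{\hori}\ell^u_{\wa,j}(\hat{\wflow})\di\sigma=\int_{\hori}\hat{\wflow}\di\sigma$ valid for every admissible $\hat{\wflow}$ (since $\arr_{\wa,j}(u,\cdot)^{-1}(\hori)=\hori$). Indeed, given the already established pointwise inequalities, $\ell^u_{\wa,j}(1_{\mathfrak D}\cdot h_\wa)=\ell^u_{\wa,j}(1_{\mathfrak D}\cdot\tilde h_\wa)$ a.e.\ if and only if $\int_{\hori}1_{\mathfrak D}\cdot(\tilde h_\wa-h_\wa)\di\sigma=0$, i.e.\ if and only if $1_{\mathfrak D}\cdot h_\wa=1_{\mathfrak D}\cdot\tilde h_\wa$ a.e.; combining this with the two identities from the first step gives the ``$\ne$'' equivalence. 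I do not expect a genuine obstacle here; the only points requiring a little care are the existence of the localised induced flows (so that \Cref{lem: elluindi} and \Cref{lem: ellOrderPreserving} may legitimately be applied to them) and keeping the indicator-function bookkeeping consistent.
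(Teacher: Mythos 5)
Your proposal is correct and follows essentially the same route as the paper: apply \Cref{lem: elluindi} to rewrite the restricted inequality as one between $\ell^u_{\wa,j}(1_{\mathfrak D}\cdot h_\wa)$ and $\ell^u_{\wa,j}(1_{\mathfrak D}\cdot\tilde h_\wa)$, and then invoke \Cref{lem: ellOrderPreserving}. Your extra remarks on existence of the localised flows and the explicit norm-identity treatment of the ``$<$'' case are just a more detailed spelling-out of what the paper leaves implicit (the ``$<$'' part follows directly from the ``$<$'' case of \Cref{lem: ellOrderPreserving} applied to the localised inflows).
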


\begin{proof}
    By \Cref{lem: elluindi}, we get $\ell_{\wa,j}(h_\wa) \cdot 1_{\arr_{\wa,j}(\mathfrak D)} = \ell_{\wa,j}(h_\wa \cdot 1_{\mathfrak D})$ and  $\ell_{\wa,j}(\tilde{h}_\wa) \cdot 1_{\arr_{\wa,j}(\mathfrak D)} = \ell_{\wa,j}(\tilde{h}_\wa \cdot 1_{\mathfrak D})$. Hence, the claim then follows by \Cref{lem: ellOrderPreserving}. 
\end{proof}

Finally, we show that any flow arriving at an subwalk with zero traversal time induces the same flow on each of its edges as well as on the first subsequent edge. 
\begin{lemma}\label{lem: FLowOnZeroTrav}
Consider an arbitrary walk $\wa$, two edge indices $j_1<j_2\leq|\wa|+1$ and $h_\wa \in\edom{\wa,j_1}$.
 Furthermore, let $\mathfrak D \in \mathcal{B}(\hori)$ be a set for which for almost every $t \in \mathfrak D$, we have  $\arr_{\wa,j_1}(t) = \arr_{\wa,j_2}(t)$. 
Then, $\ell_{\wa,j'}(h_\wa\cdot 1_{\mathfrak{D}}), j'\in\{j_1+1,\ldots,j_2\}$ exist and fulfill  $\ell_{\wa,j'}(h_\wa \cdot 1_{\mathfrak{D}} ) = \ell_{\wa,j_1}(h_\wa) \cdot 1_{\arr_{\wa,j_1}(\mathfrak D)}$. 

In particular, if  $h_\wa = 0$ on $\hori\setminus \mathfrak D$, then $\ell_{\wa,j'}(h_\wa)$ exists and is equal to $\ell_{\wa,j_1}(h_\wa)$ on the whole set $\hori$ for all $j'\in\{j_1+1,\ldots,j_2\}$. For a zero-cycle inflow rate $h_c$ into a cycle $c$, the latter statement shows that $\ell_{c,\arc}(h_c) = \sum_{j\leq\abs{c}:c[j] = \arc}h_c$ for every $\arc \in c$. 
\end{lemma}
\begin{proof}
Choose an arbitrary representative of $h_\wa$ and define 
\begin{align*}
    \mathfrak D^*:= \{t \in \mathfrak D\mid   h_\wa(t)>0 \text{ and } \arr_{\wa,j'}(t) = \arr_{\wa,j_2}(t),j' \in\{j_1,\ldots,j_2\} \}.  
\end{align*}
      By assumption, we have $\sigma(\mathfrak D \setminus \mathfrak D^*) = 0$.
    Let $j' \in\{j_1,\ldots,j_2\}$ and   $\mathfrak T \in \mathcal{B}(\hori)$ be arbitrary. Then we have 
    \begin{align}\label{eq: lem:FlowOnZeroTrav}
            \arr_{{\wa},j'}^{-1}(\mathfrak T) \cap \mathfrak D^* = \arr_{{\wa},j_1}^{-1}(\mathfrak T) \cap \mathfrak D^*,
    \end{align}
 which allows us to derive for an arbitrary set $\mathfrak T \in \mathcal{B}(\hori)$: 
\begin{align*}
    \int_{\mathfrak T} \ell_{\wa,j_1}(h_\wa) \cdot  1_{\arr_{\wa,j_1}(\mathfrak D)} \di\sigma &\symoverset{3}{=}  \int_{\mathfrak T} \ell_{\wa,j_1}(h_\wa  \cdot  1_{\mathfrak D})\di\sigma =   \int_{\arr_{\wa,j_1}^{-1}(\mathfrak T)}h_\wa \cdot  1_{\mathfrak D} \di\sigma \\
    &\symoverset{1}{=}  \int_{\arr_{\wa,j_1}^{-1}(\mathfrak T)}h_\wa \cdot  1_{\mathfrak D^*} \di\sigma \\
    &=  \int_{\arr_{\wa,j_1}^{-1}(\mathfrak T)\cap \mathfrak D^*}h_\wa   \di\sigma 
    \overset{\eqref{eq: lem:FlowOnZeroTrav}}{=} \int_{\arr_{\wa,j'}^{-1}(\mathfrak T)\cap \mathfrak D^*}h_\wa   \di\sigma\\
    &= \int_{\arr_{\wa,j'}^{-1}(\mathfrak T) }h_\wa \cdot 1_{\mathfrak D^*}  \di\sigma \\
    &\symoverset{1}{=}   \int_{\arr_{\wa,j'}^{-1}(\mathfrak T) }h_\wa \cdot 1_{\mathfrak D}  \di\sigma 
\end{align*}
where the equality indicated by 
\refsym{3} follow by \Cref{lem: elluindi} and the ones indicated by 
\refsym{1} hold since   $\sigma(\mathfrak D \setminus \mathfrak D^*) = 0$. 
Hence, $\ell_{\wa,j_1}(h_\wa) \cdot  1_{\arr_{\wa,j_1}(\mathfrak D)}$ fulfills \eqref{eq: Deflwj} 
for $h_\wa \cdot 1_{\mathfrak D}$ and thus the claimed equality holds.

Moreover, in case of $h_\wa = 0$ on $\hori \setminus \mathfrak D$, 
we get 
\begin{align*}
 \ell_{\wa,j_1}(h_\wa) \symoverset{2}{=} \ell_{\wa,j_1}(h_\wa \cdot  1_{\mathfrak D}) \symoverset{1}{=}  \ell_{\wa,j_1}(h_\wa) \cdot 1_{\arr_{\wa,j_1}(\mathfrak D)}  \symoverset{3}{=} \ell_{\wa,j'}(h_\wa\cdot 1_{\mathfrak D}) \symoverset{2}{=} \ell_{\wa,j'}(h_\wa) 
\end{align*}
where we used for the equalities indicated with \refsym{2} that $h_\wa = 0$ on $\hori \setminus \mathfrak D$. The equality indicated with \refsym{1} follows by \Cref{lem: elluindi} while we used for equality \refsym{3} the first part the \namecref{lem: FLowOnZeroTrav}.
 \end{proof}

In the follow we now show that any flow satisfying flow conservation at all nodes except for the destination, also satisfies flow conservation at the destination (i.e., is a dynamic circulation) and can be induced by zero-cycle inflow rates only. 

As a first step, we show that we can rewrite the total travel time of a flow in terms of only the node balances.

 \begin{lemma}\label{lem: flowcontrav}    For any  $\eflow \in L_+(\hori)^\GA$, we have
 \begin{align*}
      \dup{\trav}{\eflow} =  \sum_{v \in \GV}\int_{\hori} -\id \di (\op_v   \eflow) \in \R_+ \cup \{\infty\}
 \end{align*}
  with $-\id$ denoting the identity function in $L(\hori)$.  
 \end{lemma}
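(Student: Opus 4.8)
The plan is to unfold both sides into finite sums of ordinary Lebesgue integrals over $\hori$ and match them edge by edge. On the left, using $\trav_\arc(u,t) = \exit_\arc(u,t) - t$ and the definition of the bilinear form, one gets
\[
\dup{\trav(u,\cdot)}{\eflow} = \sum_{\arc \in \GA} \int_{\hori} \big(\exit_\arc(u,t) - t\big)\,\eflow_\arc(t)\di\sigma(t),
\]
and every such integral exists because $\hori$ is a bounded interval, so $\id$ and each absolutely continuous $\exit_\arc(u,\cdot)$ are bounded, while $\eflow_\arc \in L(\hori)$.

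Next I would treat the right-hand side node by node. Plugging in $\op_v\eflow = \sum_{\arc\in\delta^+(v)} \eflow_\arc\cdot\sigma - \sum_{\arc\in\delta^-(v)} (\eflow_\arc\cdot\sigma)\circ\exit_\arc(u,\cdot)^{-1}$, integrating $-\id$ against the outgoing terms gives $\sum_{\arc\in\delta^+(v)}\int_{\hori} -t\,\eflow_\arc(t)\di\sigma(t)$ immediately. For the incoming (image-measure) terms I would apply the change-of-variables formula (\cite[Theorem 3.6.1]{Bogachev2007I}, applied to the positive and negative parts of the finite signed measure $\eflow_\arc\cdot\sigma$) together with the standing assumption of this section that $\exit_\arc(u,\cdot)(\hori)\subseteq\hori$, yielding $\int_{\hori} -\id\di\big((\eflow_\arc\cdot\sigma)\circ\exit_\arc(u,\cdot)^{-1}\big) = \int_{\hori} -\exit_\arc(u,t)\,\eflow_\arc(t)\di\sigma(t)$. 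Hence
\[
\int_{\hori} -\id\di(\op_v\eflow) = \sum_{\arc\in\delta^+(v)}\int_{\hori} -t\,\eflow_\arc(t)\di\sigma(t) + \sum_{\arc\in\delta^-(v)}\int_{\hori} \exit_\arc(u,t)\,\eflow_\arc(t)\di\sigma(t).
\]

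Finally I would sum this identity over all $v\in\GV$ and interchange the two finite sums: each edge $\arc=(v_1,v_2)$ occurs exactly once in $\delta^+(v_1)$ and exactly once in $\delta^-(v_2)$, so its total contribution is $\int_{\hori}\big(\exit_\arc(u,t)-t\big)\eflow_\arc(t)\di\sigma(t) = \int_{\hori}\trav_\arc(u,t)\eflow_\arc(t)\di\sigma(t)$, and summing over $\arc\in\GA$ reproduces exactly $\dup{\trav(u,\cdot)}{\eflow}$. The computation is elementary; the only point requiring a little care is that $\eflow$ need not be nonnegative, so the measures $\op_v\eflow$ are merely signed measures of finite total variation. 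Since $\id$ and $\exit_\arc(u,\cdot)$ are bounded this causes no integrability issues, but it does mean the change-of-variables step should be invoked via the Jordan decomposition rather than for positive measures only.
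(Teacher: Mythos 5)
Your proof is correct and follows essentially the same route as the paper: expand the node balances $\op_v\eflow$, convert the image-measure terms via the change-of-variables formula (the same reference, \cite[Theorem 3.6.1]{Bogachev2007I}, using that $\exit_\arc(u,\cdot)$ maps $\hori$ into $\hori$), and regroup the sum edgewise so that each edge contributes $\int_\hori(\exit_\arc(u,\cdot)-\id)\,\eflow_\arc\di\sigma=\int_\hori\trav_\arc(u,\cdot)\,\eflow_\arc\di\sigma$. Your added remark on handling signed $\eflow_\arc$ via the Jordan decomposition is a small extra precision, not a different argument.
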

 
Intuitively, the two sides of the above equality correspond to two different ways of computing the total travel time of all particles without counting any waiting: On the left, one sums the product of current travel time and edge inflow rate over all times and edges. To understand the right side, consider the definition of the node balance as difference between out- and inflow rate at that node. Hence, we get as positive contribution the current time whenever a particle arrives at a node and, analogously, a negative one for particles leaving a node. Thus, in total we get the sum of all node-to-node travel times of all particles.
 
 \begin{proof}[Proof of \Cref{lem: flowcontrav}]
We calculate for an arbitrary  $\eflow \in L_+(\hori)^\GA$ 
\begin{align*}
    \lim_{n \to \infty} \sum_{v \in \GV}\int_{[-n,n]} -&\id \di (\op_v   \eflow) =\\ 
     &\symoverset{2}{=}  \lim_{n \to \infty} \sum_{v \in \GV} \sum_{\arc \in \delta^+(v)} \int_{[-n,n]} -\id \di (\eflow_\arc\cdot \sigma) - \sum_{\arc \in \delta^-(v)} \int_{[-n,n]} -\id \di ((\eflow_\arc\cdot \sigma) \circ \exit_\arc^{-1}) \\
&\symoverset{1}{=} \lim_{n \to \infty} \sum_{v \in \GV} \sum_{\arc \in \delta^+(v)} \int_{[-n,n]} -\id \di (\eflow_\arc\cdot \sigma) - \sum_{\arc \in \delta^-(v)} \int_{[-n,n]} -\exit_\arc  \di (\eflow_\arc\cdot \sigma) \\   
     &= \lim_{n \to \infty}  \sum_{\arc \in \GA} \int_{[-n,n]} -\id + \exit_\arc   \di (\eflow_\arc\cdot \sigma)= \lim_{n \to \infty} \sum_{\arc \in \GA} \int_{[-n,n]} \trav_\arc  \di (\eflow_\arc\cdot \sigma)\\ 
     &= \lim_{n \to \infty}\sum_{\arc \in \GA} \int_{[-n,n]} \trav_\arc\cdot \eflow_\arc  \di \sigma = \dup{\trav}{\eflow}
\end{align*}
where we used in \refsym{2} that $-\id$ is integrable on $[-n,n]$ \wrt $\g_\arc\cdot \sigma$ and $(\g_\arc\cdot \sigma)\circ \exit_\arc^{-1}$ for all $\arc \in \GA$. 
The equality indicated with \refsym{1} holds by the change of variables formula (\cite[Theorem 3.6.1]{Bogachev2007I}). 
 \end{proof}

With this we can now show that \aauto  $\source$,$\dest$-flow fulfilling  flow conservation at all nodes except for the destination, must already be a dynamic circulation and can only use edges of zero travel time. 

\begin{lemma}\label{lem: FlowConEveryNode}
    Let $\eflow \in L_+(\hori)^\GA$ be \aauto $\source$,$\dest$-flow fulfilling flow conservation also at $s$. 
    Then, $\eflow$ is \aauto dynamic circulation  and we have
    \begin{align*}
        {\eflow}_\arc(t) > 0 \implies \trav_\arc(t) = 0 \text{ for almost all } t \in \hori \text{ and all } \arc \in \GA.   
    \end{align*}
\end{lemma}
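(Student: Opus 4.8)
The plan is to combine \Cref{lem: flowcontrav} with the sign condition on the node balance at $\dest$, after first upgrading this sign condition to an equality. First I would unravel the hypotheses: since $\eflow$ is a $u$-based $\source$,$\dest$-flow that additionally satisfies flow conservation at $\source$, the signed measure $\op_v\eflow$ vanishes (i.e.\ $\op_v\eflow(\mathfrak T)=0$ for all $\mathfrak T\in\mathcal{B}(\hori)$) for every $v\in\GV\setminus\{\dest\}$, because the Radon--Nikodym derivative of $\op_v\eflow$ is $0$ almost everywhere there; meanwhile $\op_\dest\eflow$ is nonpositive, $\op_\dest\eflow(\mathfrak T)\le 0$ for all $\mathfrak T$.

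Next I would show that flow conservation in fact also holds at $\dest$. Evaluating \eqref{eq: FlowBalance} at $\mathfrak T=\hori$ and using $\exit_\arc(u,\cdot)^{-1}(\hori)=\hori$ for every $\arc\in\GA$ (the same fact already invoked in the proof of \Cref{lem: flowcontrav}), each edge $\arc=(x,y)$ contributes $+\int_\hori\eflow_\arc\di\sigma$ to $\op_x\eflow(\hori)$ and $-\int_\hori\eflow_\arc\di\sigma$ to $\op_y\eflow(\hori)$, both finite since $\eflow_\arc\in L(\hori)$; hence $\sum_{v\in\GV}\op_v\eflow(\hori)=0$. As every summand with $v\neq\dest$ vanishes, this forces $\op_\dest\eflow(\hori)=0$. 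Now for arbitrary $\mathfrak T\in\mathcal{B}(\hori)$ the decomposition $0=\op_\dest\eflow(\hori)=\op_\dest\eflow(\mathfrak T)+\op_\dest\eflow(\hori\setminus\mathfrak T)$ together with both summands being $\le 0$ gives $\op_\dest\eflow(\mathfrak T)=0$. Thus $\op_\dest\eflow$ is the zero measure, which is precisely flow conservation at $\dest$.

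Finally, with $\op_v\eflow=0$ for \emph{all} $v\in\GV$, \Cref{lem: flowcontrav} yields $\dup{\trav(u,\cdot)}{\eflow}=\sum_{v\in\GV}\int_\hori -\id\di(\op_v\eflow)=0$. On the other hand $\dup{\trav(u,\cdot)}{\eflow}=\sum_{\arc\in\GA}\int_\hori \trav_\arc(u,\cdot)\cdot\eflow_\arc\di\sigma$, and every integrand here is nonnegative since $\trav_\arc(u,\cdot)\ge 0$ and $\eflow\in L_+(\hori)^\GA$. Hence $\int_\hori \trav_\arc(u,\cdot)\cdot\eflow_\arc\di\sigma=0$ for each $\arc\in\GA$, so $\trav_\arc(u,t)\cdot\eflow_\arc(t)=0$ for almost all $t\in\hori$ and all $\arc$, which is exactly the asserted implication. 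I expect the only mildly delicate points to be the bookkeeping that makes the telescoping sum $\sum_{v}\op_v\eflow(\hori)=0$ rigorous (including finiteness, so that no $\infty-\infty$ arises) and spelling out that "flow conservation at $v$" in the Radon--Nikodym sense of the paper is equivalent to $\op_v\eflow$ being the zero measure; the rest is an immediate consequence of \Cref{lem: flowcontrav}.
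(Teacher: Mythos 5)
Your proof is correct and follows essentially the same route as the paper: first force $\op_\dest\eflow=0$ from the node balances, then combine \Cref{lem: flowcontrav} with the nonnegativity of $\trav(u,\cdot)$ and $\eflow$ to get $\trav_\arc(u,\cdot)\cdot\eflow_\arc=0$ a.e. The only immaterial difference is how tightness at $\dest$ is obtained: you telescope total masses over $\hori$ (via $\exit_\arc(u,\cdot)^{-1}(\hori)=\hori$) and then use nonpositivity of $\op_\dest\eflow$ on a set and its complement, whereas the paper sums the per-edge measures $(\eflow_\arc\cdot\sigma)-(\eflow_\arc\cdot\sigma)\circ\exit_\arc(u,\cdot)^{-1}$ and exploits their nonnegativity on initial intervals $[0,t]$.
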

\begin{proof}
   Consider 
     \begin{align*}
        &\sum_{\arc \in \GA} \Big(   (\eflow_\arc \cdot \sigma) -    (\eflow_\arc \cdot \sigma) \circ \exit_\arc^{-1} \Big) 
        =\sum_{v \in \GV} \Big(\sum_{\arc \in \delta^+(v)} \eflow_\arc \cdot \sigma -  \sum_{\arc \in \delta^-(v)} (\eflow_\arc \cdot \sigma) \circ \exit_\arc^{-1}  \Big)\\
        = & \sum_{v \in \GV} \op_v\eflow  = \op_\dest\eflow   \leq 0.
    \end{align*}
    Furthermore, we observe that each summand  in the first sum assigns a nonnegative measure to any interval of the form $\startint t]$. This is true as $\eflow_\arc\in L_+(\hori)$ and  $\exit_\arc^{-1}({\startint}t]) \subseteq {\startint}t]$ for all $t \in \hori,\arc \in \GA$ since edge traversal times are nonnegative. 
   Hence, we know that $\op_\dest \eflow$ must assign~$0$ to any such interval and, subsequently, is the zero measure. Thus, $\op \eflow = 0$. From this, the second part of the statement follows as well since we have
     $\dup{\trav}{\eflow} =  \sum_{v \in \GV}\int_{\hori} -\id \di (\op_v   \eflow) = 0$ by \Cref{lem: flowcontrav} and both~$\eflow$ and~$\trav$ are nonnegative. 
\end{proof}

In order to deduce from the previous \namecref{lem: FlowConEveryNode} that such a flow can be induced by zero-cycle inflow rates only, we 
require several insights regarding   the (edge) outflow rate $\eflow_\arc^-$ of a corresponding  edge inflow rate  $\eflow_\arc$, i.e.~the equivalence class in $L_+(\hori)$ fulfilling 
\begin{align}\label{eq: outflowCondition}
    \int_{\mathfrak T}\eflow_\arc^- \di\sigma =   \int_{\exit_\arc^{-1}(\mathfrak T)}\eflow_\arc \di\sigma \text{ for all }\mathfrak T \in \mathcal{B}(\hori).
\end{align}


\begin{lemma} \label{lem: outflow}
Let  $\eflow \in L_+(\hori)^\GA$ and $\arc \in \GA$ be arbitrary. The following statements are true:
\begin{thmparts}
    \item \label[thmpart]{lem: outflow: =ell+Cons}  The outflow rate $\eflow_\arc^-$ exists if and only if $\eflow_\arc \in \edom{\wa,2}$ for $\wa = (\arc)$, in which case $\eflow_\arc^- = \ell_{\wa,2}(\eflow_\arc)$. That is, the outflow rate $\eflow_\arc^-$ equals the inflow rate into the end node of the single-edge walk $\wa=(\arc)$. As a direct consequence, we get 
    \begin{enumerate}
        \item \label{lem: outflow: Exis}  The 
    outflow rate $\eflow_\arc^-$ exists and is then uniquely determined if
     and only if 
     \begin{align}\label{eq: outflowExists}
              \eflow_\arc = 0  \text{ on }\exit_\arc^{-1}(\mathfrak T) \text{ for any null set }\mathfrak T\subseteq \hori. 
     \end{align} 
     \item \label{lem: outflow: ExisLeq}  If $\eflow_\arc \leq \tilde{\eflow}_\arc$ for some $\tilde\eflow_\arc \in L_+(\hori)$ and $\tilde{\eflow}_\arc^-$ exists, then so does $\eflow_\arc^-$.
     \item  \label{lem: outflow: ExisLeqInverted} Every $\Tilde{\eflow}_\arc^-\in L_+(\hori)$  with $\Tilde{\eflow}_\arc^- = 0$ on ${\startint}\exit_\arc(-\infty))$ for $\exit_\arc(-\infty) := \lim_{t \to -\infty}\exit_\arc(t) $ has a corresponding inflow rate $\Tilde{\eflow}_\arc\in L_+(\hori)$. 
    \item \label{lem: outflow: ExisLeqInvertedCons} If $\eflow_\arc^-$ exists, then every $\Tilde{\eflow}_\arc^-\in L_+(\hori)$  with $\Tilde{\eflow}_\arc^- \leq \eflow_\arc^-$ has a corresponding inflow rate $\Tilde{\eflow}_\arc\in L_+(\hori)$ with $\Tilde{\eflow}_\arc \leq {\eflow}_\arc$.  
    \end{enumerate}

     \item  \label[thmpart]{lem: outflow: FlowCon} For any $v \in \GV$,  the edge outflow rates  $\eflow_\arc^-,\arc \in \delta^-(v)$ exist, if  $\eflow$ has a net node outflow rate $\inflow_v \in L_+(\hori)$ at $v$, i.e.~if \eqref{eq: FlowBalanceDerivative} holds w.r.t.~$\inflow_v$.  
    
    \item  \label[thmpart]{lem: outflow: In=OutIfD=0} If $\eflow_\arc^-$ exists, then $\eflow_\arc^-(t) = \eflow_\arc(t)$ for almost all $t$ with $\trav_\arc(t) = 0$. 
\end{thmparts}
\end{lemma}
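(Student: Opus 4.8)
My plan is to reduce every statement to the single‑edge walk $\wa=(\arc)$, for which $\arr_{\wa,1}(u,\cdot)=\id$ and $\arr_{\wa,2}(u,\cdot)=\exit_\arc(u,\cdot)$. The starting observation is that~\eqref{eq: outflowCondition}, read on the generating family $\mathfrak T=[0,t]$, is literally the defining relation of $\ell^u_{\wa,2}(\eflow_\arc)$; hence $\eflow_\arc^-$ exists if and only if $\ell^u_{\wa,2}(\eflow_\arc)$ does, and then $\eflow_\arc^-=\ell^u_{\wa,2}(\eflow_\arc)$, which is \Cref{lem: outflow: =ell+Cons}. From this identification I expect the four numbered consequences to follow quickly from earlier results: the existence criterion~\eqref{eq: outflowExists} is exactly condition~\eqref{eq: nlexists} for $j=2$, so existence and uniqueness come from \Cref{lem: elluExistenceProperties}; if $0\le\eflow_\arc\le\tilde\eflow_\arc$ then $\eflow_\arc$ vanishes wherever $\tilde\eflow_\arc$ does, which gives the second item; for the third I would feed $\tilde\eflow_\arc^-$ into \Cref{lem: 1to1:h-f} (whose hypothesis is precisely vanishing on $[0,\arr_{\wa,2}(u,0))=[0,\exit_\arc(u,0))$) to obtain $h\in L(\hori)$ with $\ell^u_{\wa,2}(h)=\tilde\eflow_\arc^-$, and then invoke the order embedding \Cref{lem: ellOrderPreserving} on $\ell^u_{\wa,2}(0)=0\le\tilde\eflow_\arc^-=\ell^u_{\wa,2}(h)$ to conclude $h\ge0$, so that $\tilde\eflow_\arc:=h$ does the job; and the fourth item combines the third (note that $\eflow_\arc^-$, and hence any $\tilde\eflow_\arc^-\le\eflow_\arc^-$, vanishes on $[0,\exit_\arc(u,0))$, since there $\exit_\arc(u,\cdot)^{-1}([0,t])=\emptyset$) with one more application of \Cref{lem: ellOrderPreserving}, now to $\ell^u_{\wa,2}(\tilde\eflow_\arc)=\tilde\eflow_\arc^-\le\eflow_\arc^-=\ell^u_{\wa,2}(\eflow_\arc)$, which forces $\tilde\eflow_\arc\le\eflow_\arc$.

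For \Cref{lem: outflow: FlowCon} I would rewrite the hypothesis $\op_v\eflow=\inflow_v\cdot\sigma$ with $\inflow_v\in L_+(\hori)$ as
\[
    \sum_{\arc\in\delta^-(v)}(\eflow_\arc\cdot\sigma)\circ\exit_\arc(u,\cdot)^{-1}\;=\;\sum_{\arc\in\delta^+(v)}\eflow_\arc\cdot\sigma-\inflow_v\cdot\sigma\;=:\;\mu .
\]
The left‑hand side is a sum of nonnegative measures, so $\mu$ is a nonnegative measure, and it is manifestly absolutely continuous with respect to $\sigma$. Since each single summand $(\eflow_\arc\cdot\sigma)\circ\exit_\arc(u,\cdot)^{-1}$ is dominated by $\mu$, it too is absolutely continuous with respect to $\sigma$, which (using $\eflow_\arc\ge0$) is precisely~\eqref{eq: outflowExists}; hence $\eflow_\arc^-$ exists for every $\arc\in\delta^-(v)$ by the existence criterion just established.

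For \Cref{lem: outflow: In=OutIfD=0} I would set $\mathfrak Z:=\{t\in\hori\mid\trav_\arc(u,t)=0\}$, which is measurable and on which $\exit_\arc(u,\cdot)$ coincides with $\id$, and evaluate $\ell^u_{\wa,2}(1_{\mathfrak Z}\cdot\eflow_\arc)$ in two ways (which is legitimate because $\ell^u_{\wa,2}(\eflow_\arc)=\eflow_\arc^-$ exists by assumption). On one hand, \Cref{lem: elluindi} gives $\ell^u_{\wa,2}(1_{\mathfrak Z}\cdot\eflow_\arc)=1_{\arr_{\wa,2}(u,\cdot)(\mathfrak Z)}\cdot\ell^u_{\wa,2}(\eflow_\arc)=1_{\mathfrak Z}\cdot\eflow_\arc^-$, using $\exit_\arc(u,\cdot)(\mathfrak Z)=\mathfrak Z$. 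On the other hand, the identity $\exit_\arc(u,\cdot)^{-1}([0,t])\cap\mathfrak Z=[0,t]\cap\mathfrak Z$ turns the defining integral of $\ell^u_{\wa,2}$ into $\int_0^t\ell^u_{\wa,2}(1_{\mathfrak Z}\cdot\eflow_\arc)\di\sigma=\int_{[0,t]\cap\mathfrak Z}\eflow_\arc\di\sigma=\int_0^t 1_{\mathfrak Z}\cdot\eflow_\arc\di\sigma$ for every $t$, so $\ell^u_{\wa,2}(1_{\mathfrak Z}\cdot\eflow_\arc)=1_{\mathfrak Z}\cdot\eflow_\arc$. Comparing the two expressions yields $1_{\mathfrak Z}\cdot\eflow_\arc^-=1_{\mathfrak Z}\cdot\eflow_\arc$, which is the claim.

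I expect \Cref{lem: outflow: In=OutIfD=0} to be the most delicate step. The tempting shortcut via \Cref{lem: FLowOnZeroTrav} (applied with $j_1=1$, $j_2=2$ and $\mathfrak D=\{t\mid\trav_\arc(u,t)=0,\ \eflow_\arc(t)>0\}$) only establishes the identity on the zero‑transit times where $\eflow_\arc>0$, and one is still left having to argue that $\eflow_\arc^-$ vanishes on the remaining zero‑transit times where $\eflow_\arc$ does; truncating with the indicator $1_{\mathfrak Z}$ from the outset, as above, sidesteps this case distinction. A secondary point that needs attention is the nonnegativity of the inflow rate produced in the third item: \Cref{lem: 1to1:h-f} alone only yields an element of $L(\hori)$, and it is the order embedding \Cref{lem: ellOrderPreserving} that upgrades it to $L_+(\hori)$.
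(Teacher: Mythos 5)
Your proposal is correct and follows essentially the same route as the paper: identify $\eflow_\arc^-$ with $\ell^u_{\wa,2}(\eflow_\arc)$ for the one-edge walk $\wa=(\arc)$, derive the four consequences from \Cref{lem: elluExistenceProperties}, \Cref{lem: 1to1:h-f} and \Cref{lem: ellOrderPreserving}, obtain \Cref{lem: outflow: FlowCon} by checking~\eqref{eq: outflowExists} via the nonnegativity of the flow-balance terms, and prove \Cref{lem: outflow: In=OutIfD=0} by exploiting $\exit_\arc(u,\cdot)=\id$ on the zero-transit set. The only (harmless) deviations are that you phrase \Cref{lem: outflow: In=OutIfD=0} through \Cref{lem: elluindi} and a truncation by $1_{\mathfrak Z}$ instead of the paper's use of the \Cref{lem: elluinj}-type vanishing on $\exit_\arc(u,\cdot)^{-1}(\mathfrak T)\setminus\mathfrak T$, and that you spell out the nonnegativity of the inflow rate in the third item, which the paper leaves implicit.
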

\begin{proof}
\begin{structuredproof}
\proofitem{\ref{lem: outflow: =ell+Cons}} 
By definition  $\ell_{\wa,2}(\eflow_\arc)$ has to fulfill 
     \begin{align*}
         \int_{\mathfrak T} \ell_{\wa,2}(\eflow_\arc) \di \sigma \overset{(*)}{=} \int_{\arr_{\wa,2}^{-1}(\mathfrak T)}\eflow_\arc \di \sigma= \int_{\exit_\arc^{-1}(\mathfrak T)}\eflow_\arc \di \sigma \text{ for all }\mathfrak T \in \mathcal{B}(\hori). 
     \end{align*}
     Hence, the claimed equality follows by \Cref{lem: elluExistenceProperties} where we showed that a function fulfilling the equality $(*)$ is uniquely determined. This \namecref{lem: elluExistenceProperties} then also implies~\ref{lem: outflow: Exis} which in turn implies~\ref{lem: outflow: ExisLeq}. The statement in~\ref{lem: outflow: ExisLeqInverted} follows by \Cref{lem: 1to1:h-f} while~\ref{lem: outflow: ExisLeqInverted} together with \Cref{lem: ellOrderPreserving} imply~\ref{lem: outflow: ExisLeqInvertedCons}.   

    \proofitem{\ref{lem: outflow: FlowCon}} 
    We verify that condition~\eqref{eq: outflowExists} is satisfied. Hence, consider an arbitrary null set $\mathfrak T$ and $\arc \in \delta^-(v)$.  
    By  $\eflow$ having the net node outflow rate $\inflow_v \in L_+(\hori)$ at $v$ and $\eflow \in L_+(\hori)^\GA$ we get
    \begin{align*}
         0 \leq\int_{\exit_\arc^{-1}(\mathfrak T)}\eflow_\arc \di \sigma   \leq \sum_{\arc \in \delta^-(v)} \int_{\exit_\arc^{-1}(\mathfrak T)}\eflow_\arc \di \sigma  \overset{\eqref{eq: FlowBalanceDerivative}}{=} \sum_{\arc \in \delta^+(v)} \int_{\mathfrak T} \eflow_\arc \di \sigma- \int_{\mathfrak T} \inflow_v \di\sigma = 0,
    \end{align*}
    which shows the claim. 
 
    \proofitem{\ref{lem: outflow: In=OutIfD=0}} 
        Let $\mathfrak T \in \mathcal{B}(\hori)$ with $\mathfrak T \subseteq \{t \in \hori \mid \trav_\arc(t) = 0\}$ be arbitrary. 
        Then we have $\mathfrak T \subseteq \exit_\arc^{-1}(\mathfrak T) $. Moreover, 
        by~\ref{lem: outflow: =ell+Cons} and \Cref{lem: elluinj}, we have $\eflow_\arc = 0$ on $\exit_\arc^{-1}(\exit_\arc(\mathfrak T))\setminus\mathfrak T =\exit_\arc^{-1}(\mathfrak T)\setminus\mathfrak T$. 
        Hence, we arrive at    
        \begin{align*}
        \int_{\mathfrak T} \eflow_\arc^- \di\sigma =   \int_{\exit_\arc^{-1}(\mathfrak T)}\eflow_\arc \di\sigma  
        = \int_{\mathfrak T}\eflow_\arc \di\sigma + \int_{\exit_\arc^{-1}(\mathfrak T)\setminus \mathfrak T}\eflow_\arc \di\sigma  = \int_{\mathfrak T}\eflow_\arc \di\sigma  , 
    \end{align*}
    implying the claim. \qedhere
\end{structuredproof}
\end{proof}

We are now in the position to derive the promised statement that any  dynamic circulation can  be expressed in terms of zero-cycle inflow rates.

\begin{rsttheorem}{\ref{lem: ZeroCycleDecomposition}}
	Any  \aauto dynamic circulation $\eflow$ can be decomposed into zero-cycle inflow rates ${h}_c \in L_+(\hori), c \in \SimpCyc$ via $\eflow_\arc = \sum_{c \in \SimpCyc} \ell_{c,\arc}({h}_c) = \sum_{c \in \SimpCyc:\arc \in c}  {h}_c$ for all $\arc \in \GA$.
\end{rsttheorem}
\begin{proof} 
We start by observing that
 all (edge) outflow rates $\eflow_\arc^-,\arc \in \GA$ exist by \Cref{lem: outflow: FlowCon} and 
$\eflow$ satisfying flow conservation at all nodes. 
From this, \Cref{lem: FlowConEveryNode} together with \Cref{lem: outflow: In=OutIfD=0} imply
that $\eflow_\arc = \eflow_\arc^-,\arc \in \GA$. 
Moreover, using flow conservation at all nodes again, we get for any $v \in \GV$ and all $\mathfrak T \in \mathcal{B}(\hori)$ that
\begin{align*}
     0 = \sum_{\arc \in \delta^+(v)} \int_{\mathfrak T} \eflow_\arc \di \sigma -  \sum_{\arc \in \delta^-(v)} \int_{\mathfrak T} \eflow^-_\arc \di \sigma =  \sum_{\arc \in \delta^+(v)} \int_{\mathfrak T} \eflow_\arc \di \sigma -  \sum_{\arc \in \delta^-(v)} \int_{\mathfrak T} \eflow_\arc \di \sigma
\end{align*} which shows 
that  $ \sum_{\arc \in \delta^+(v)}  \eflow_\arc(t) =  \sum_{\arc \in \delta^-(v)}  \eflow_\arc(t) $ for almost all $t \in \hori$.

Let us now fix a nonnegative representative of $\eflow$ fulfilling the latter property as well as the property stated in \Cref{lem: FlowConEveryNode}  for all $t \in \hori$, that is, for all $t \in \hori$, the vector $\eflow(t) \in \R^\GA_+$ is a static flow fulfilling flow conservation at every node and $\eflow_\arc(t)>0 \implies \trav_\arc(t)= 0$ holds for all $\arc \in \GA$. 

Let us denote the set of simple cycles as $\mathcal{C}= \{c_1,\ldots,c_k\}$.
We define in the following recursively nonnegative measurable functions ${h}_{c_j}, j\in\{1,\ldots,k\}$ and $\eflow^j,j\in\{0,\ldots,k\}$ 
with $\eflow^j$ resembling the flow $\eflow^{j-1}$ from which we have subtracted the flow ${h}_{c_j}$ along $c_j$. 
In particular, all $\eflow^j$  fulfill flow conservation at every node and time $t$ and $\eflow^j_\arc(t)>0 \implies \trav_\arc(t)= 0, t \in \hori$ holds for all $\arc \in \GA$.

We start by setting $\eflow^0:= \eflow$ which clearly satisfies the properties required for $g^j$ as well. Now, assume that for some $j \in [k]$ we have already constructed $\eflow^{l},{h}_{c_l}$ with the stated properties for all $l < j$. 
We define ${h}_{c_j} := \min_{\arc \in c_j} \eflow^{j-1}_\arc \geq 0$ and set $\eflow^j_\arc := \eflow^{j-1}_\arc-{h}_{c_j}  \geq 0$ if $\arc \in c_j$ and $\eflow^j_\arc := \eflow^{j-1}_\arc \geq 0$ else. 
Clearly, both are measurable functions by $\eflow^{j-1}$ being likewise. 
Furthermore it is clear that $\eflow^j$ 
 fulfills  $\eflow^j_\arc(t)>0 \implies \trav_\arc(t)= 0, t \in \hori$
 by $\eflow^{j-1}$ fulfilling the latter. Similarly, $\eflow^j$ 
 fulfills also flow conservation at every node and time $t$ 
 as  we subtracted for all $t\in \hori$ from $\eflow^j(t)$ the value $\min_{\arc \in c_j} \eflow^{j-1}_\arc(t)$ along the cycle $c_j$.

We argue in the following that $h_c,c\in \mathcal{C}$ fulfill the properties stated in the \namecref{lem: ZeroCycleDecomposition}:

\begin{structuredproof}
    \proofitem{$\eflow_\arc(t) = \sum_{c \in \mathcal{C}:\arc \in c} {h}_c(t),t \in \hori,\arc\in\GA$}
Let $t \in \hori$ and $\arc \in \GA$ be arbitrary.  
By construction, $\eflow^k_\arc(t) = \eflow_\arc(t) - \sum_{c \in \mathcal{C}:\arc \in c}h_c(t)$. 
Hence, the claim follows by observing that $\eflow^k_\arc(t) = 0$: 
Since $\eflow^k(t)$ is a static flow fulfilling flow conservation at every node,  $\eflow^k_\arc(t) >0$ would imply (via the static flow decomposition theorem) that there has to exist a $j \leq k$ with $\min_{\arc' \in c_{j}} \eflow^k_{\arc'}(t) > 0$ and $\arc \in c_{j}$. This, however, implies that also $\min_{\arc' \in c_{j}} \eflow^{j}_{\arc'}(t) > 0$ which is not possible as $\eflow^{j}_{\arc'}(t) = \eflow^{j-1}_{\arc'}(t) - \min_{\hat{\arc} \in c_{j}} \eflow^{j-1}_{\hat{\arc} }(t)$ for all $\arc' \in c_j$ and hence $\eflow^{j}_{\arc'}(t) = 0$ for any $\arc' \in \argmin_{\hat{\arc} \in c_{j}} \eflow^{j-1}_{\hat{\arc} }(t)$. 

\proofitem{$h_c,c \in \mathcal{C}$ are zero-cycle inflow rates} 
This is an immediate consequence of the construction and $\eflow^j$ fulfilling  $\eflow^j_\arc(t)>0 \implies \trav_\arc(t)= 0$.

\proofitem{$\sum_{c \in \mathcal{C}} \ell_{c,\arc}({h}_c) = \sum_{c \in \mathcal{C}:\arc \in c}  {h}_c,\arc \in \GA $ in $L(\hori)$}
This is an immediate consequence of  $h_c,c \in \mathcal{C}$ being zero-cycle inflow rates,   \Cref{lem: FLowOnZeroTrav} 
and the fact that $\mathcal{C}$ is the set of simple cycles.  \qedhere
\end{structuredproof}
\end{proof}

As our second main result of this section, we show that any \auto $\source$,$\dest$-flow with positive net outflow rate at $s$ admits a flow carrying \stwalk . Moreover, 
 flow  can be send along this walk in such a way that the resulting flow has a net outflow at~$\source$ upper bounded by that of~$\eflow$ and a node balance at the destination that is lower bounded by that of~$\eflow$.

\begin{rsttheorem}{\ref{lem: ExistenceOfFlowCarryingWalk}}
	Let $\eflow \in L_+(\hori)^\GA$ be an \auto edge $\source$,$\dest$-flow with a positive node outflow rate ${\inflow}_s\in L_+(\hori) \setminus\{0\}$ at $\source$. 
Then, there exists 
an \stwalk{} $\wa \in \sdRoutes$ and a walk inflow rate $\wflow_\wa \in L_+(\hori)\setminus\{0\}$ with $\ell_{\wa}(\wflow_\wa) \leq \eflow$, $\wflow_\wa \leq {\inflow}_s$ as well as $\op_\dest\ell_\wa(\wflow_\wa) \geq \op_\dest \eflow$.
\end{rsttheorem}

\begin{proof}
	We determine such an \stwalk{} using the following algorithm (\Cref{algo: sdwalk}) which  constructs a (directed) tree~$\mathcal{T}$ of walks starting at $s$ with flow on them. This tree is iteratively constructed by 
	adding a set of new edges in each iteration $j \in \N$ to each leaf of the current tree $\mathcal{T}^{j-1}$. 
	We will denote by $\GV(G)$ and $\GV(\mathcal{T}^j)$ the nodes belonging to~$G$ and $\mathcal{T}^j$, respectively (similarly for edges).   Note that these trees may contain  many copies of the nodes/edges of the given graph~$G$. We say that a node $\tv \in V(\mathcal{T}^j)$ \emph{corresponds} to a node $v \in \GV(G)$ if it is one of the copies of $v$ and write $\pi(\tv)=v$ (and similarly for edges).  
	Furthermore, we denote by $\tarc(\tv)$ the unique edge in $\GA(\mathcal{T}^j)$ that enters $\tv \in \GV(\mathcal{T}^j)$
	and denote by $L^j$ the set of leafs of the tree $\mathcal{T}^j$.  
	
	\begin{algorithm}
		\caption{Find Flow Carrying \stwalk}\label{algo: sdwalk}
		\Input{\Aauto $\source$,$\dest$-flow  $\g\in L_+(\hori)^\GA$  with positive node outflow rate ${\inflow}_s\in L_+(\hori) \setminus\{0\}$ at $s$}
		\Output{A flow carrying \stwalk{} as described in \Cref{lem: ExistenceOfFlowCarryingWalk}}
		
		$\mathcal{T}^0 \leftarrow (\{\tilde{s}\},\emptyset)$ where $\pi(\tilde{s})= s$ 
		
		$\diffe^{0} \leftarrow \eflow$
		
		$\teflow_{\tarc(\Tilde{s})}^- \leftarrow {\inflow}_s$  \tcp*{Initialize inflow into root of $\mathcal{T}^0$ via artificial edge inflow $\teflow_{\tarc(\Tilde{s})}^-$} 
		
		\ForEach{$j \in \N$ }{ \label{algo: sdwalk: Iterations} 
			
			$L^{j-1} \leftarrow$ leafs of $\mathcal{T}^{j-1}$
			
			$\mathcal{T}^j \leftarrow \mathcal{T}^{j-1}$
			
			$\diffe^j \leftarrow \diffe^{j-1}$
			
			\fetry{$\tv \in L^{j-1}$ \label{algo: sdwalk: lineSetFor}}{
				
				Construct a finite set of outgoing edges $\eset_{\tv}$ with $\pi(\eset_{\tv}) \subseteq \delta^+(\pi(\tv))$  
				and corresponding functions 
				$\teflow_{\tarc} \in L_+(\hori)\setminus\{0\} ,\tarc \in \eset_{\tv}$ 
				such that 
				\begin{align}
					\int_{{\mathfrak T}}\sum_{\tarc \in \eset_{\tv}}\teflow_{\tarc}\di\leb &= \int_{\exit_{\pi(\tarc(\tv))}^{-1}({\mathfrak T})} \teflow_{\tarc(\tv)}\di\leb  
					&\text{ holds  for all }\mathfrak T \in \mathcal{B}(\hori)  \tag{\ref*{algo: sdwalk: lineSet}.1} \label{algo: sdwalk: lineSet1}\\
					\sum_{\tarc \in \eset_{\tv}: \pi(\tarc)=\arc} \teflow_{\tarc} &\leq \diffe^j_\arc &\text{ holds for all } \arc \in \GA(G) \tag{\ref*{algo: sdwalk: lineSet}.2} \label{algo: sdwalk: lineSet2} 
				\end{align}\label{algo: sdwalk: lineSet}	
				
				Add the edges from $\eset_{\tv}$ to $\mathcal{T}^j$
				
				$\diffe^j \leftarrow \diffe^{j} - (\sum_{\tarc \in \eset_{\tv}: \pi(\tarc)=\arc} \teflow_{\tarc})_{\arc \in \GA}$ \label{algo: sdwalk: newDiffe}	
				
			} 
			\catch{ 
				$\tv^* \leftarrow \tv$
				
				$k\leftarrow j$ \label{algo: sdwalk: defk}
				
				\KwRet{$\wa\coloneq(\pi(\tarc_1),\ldots,\pi(\tarc_{k-1}))$ where $\Tilde{\wa}\coloneq(\tarc_1,\ldots,\tarc_{k-1})$ is the \stwalk[\tilde{s}][\tv^*] in $\mathcal{T}^{k}$}}
		} 
	\end{algorithm}

	\begin{figure}[ht]
		\centering
		\BigPicture[1]{%
			\newcommand{\fadingpathe}[3][]{
				\draw[line width=1pt,line cap=round,dash pattern=on 0pt off 2pt,#1](#2) -- (#3);
				\begin{scope}
					\clip($(#2)!.1!(#3)$) rectangle ($(#2)!1!(#3)$);
					\draw[line width=2pt,white, path fading=west]($(#2)!-.01!(#3)$) -- ($(#2)!1.01!(#3)$);
				\end{scope}
			}
			\newcommand{\fadingpathw}[3][]{
				\draw[line width=1pt,line cap=round,dash pattern=on 0pt off 2pt,#1](#2) -- (#3);
				\begin{scope}
					\clip($(#2)!.1!(#3)$) rectangle ($(#2)!1.01!(#3)$);
					\draw[line width=2pt,white, path fading=east]($(#2)!-.01!(#3)$) -- ($(#2)!1.01!(#3)$);
				\end{scope}
			}
			\def\nodedist{2cm}
			
			\begin{tikzpicture}[node distance= \nodedist,
				Knoten/.style ={draw,circle, inner sep = 3pt, fill = black},
				Hilf/.style = {draw,white,circle, inner sep = 3pt, fill = white}]
				
				\node[draw, circle] (s) at (0,0) {$\tilde{s}$}; 
				\node[above of = s, node distance = 3 * \nodedist] (L0) {{\Large $L^0$}};
				
				\node[right of = s, node distance = 0.75*\nodedist] (h3) {};
				\node[Knoten, above of =  h3](v1)  {};
				\node[Knoten, below   of =  h3](v2)  {};
				\node[above of = v1,node distance = 2*\nodedist] (L1) {{\Large $L^1$}};
				\node[right of = L1,node distance = 0.5*\nodedist] {\ldots};
				
				\draw[edge] (s) to(v1);
				\draw[edge] (s) to (v2);
				
				\node[Hilf, above right of =  v1](v3)  {};
				\node[Hilf, below of =  v3,node distance = 0.3*\nodedist](v4)  {};
				\node[Hilf, below of =  v4,node distance = 0.32*\nodedist](v5)  {};
				\node[Hilf, below of =  v5,node distance = 0.33*\nodedist](v6)  {};
				\node[Hilf, below of =  v6,node distance = 0.36*\nodedist](v7)  {};
				\node[Hilf, below of =  v7,node distance = 0.5*\nodedist](v8)  {};
				\node[Hilf, below of =  v8,node distance = 0.66*\nodedist](v9)  {};
				\node[Hilf, below of =  v9,node distance = 0.31*\nodedist](v10)  {}; 
				\node[Hilf, below of =  v10,node distance = 0.31*\nodedist](v14)  {}; 
				
				\fadingpathe{v1}{v3}
				\fadingpathe{v1}{v4}
				\fadingpathe{v1}{v5}
				\fadingpathe{v1}{v6}
				\fadingpathe{v2}{v7}
				\fadingpathe{v2}{v8}
				\fadingpathe{v2}{v9}
				\fadingpathe{v2}{v10}
				\fadingpathe{v2}{v14}
				
				\node[draw,circle,node distance = 0pt, right of = s, node distance = 3.5*\nodedist] (tv) {$\tv$};
				\node[Knoten, above  of =  tv, node distance = 2* \nodedist](v11)  {};
				\node[Knoten, below   of =  tv, node distance = 2* \nodedist](v12)  {};  
				
				\fadingpathw[<-]{v11}{$(v4)+(\nodedist,.5\nodedist)$}
				\fadingpathw[<-]{v12}{$(v14)+(\nodedist,-.5\nodedist)$}
				
				\node[Knoten,left of =tv, node distance = 1.5*\nodedist] (h2) {};
				
				\path (v11) to node[midway] (z1) {\vdots}(tv);
				\path (tv) to node[midway] (z2){\vdots} (v12);
				
				\fadingpathe{h2}{$(h2)!.5!(z1)$}
				\fadingpathe{h2}{$(h2)!.5!(z2)$}
				
				\draw[edge] (h2) to 
				node[pos=0.5,below] {$\tarc(\tv)$}
				node[pos = 0.6,above = .1cm] {
					\begin{tikzpicture}
						\begin{scope}[-]
							\begin{axis}[xmin=0,xmax=6.5,ymax=1.5, ymin=0, samples=500,width=3.5cm,height=2.8cm,anchor=west,
								axis y line*=left, axis lines=left, xtick=\empty,ytick=\empty]
								\addplot[blue,  thick,domain=0:2*pi]  {cos(deg(x))} node[above right = 2pt,pos=.1]{$\teflow_{\tarc(\tv)}$};
								\addplot[blue,  thick,domain=0:2*pi]  {cos(deg(x-pi))};
							\end{axis}            
						\end{scope}  
					\end{tikzpicture}
				} (tv);
				
				\node[above of =  h2, node distance = 3*\nodedist]   {{\Large $L^{j-2}$}};
				\node[above of =  v11]   {{\Large $L^{j-1}$}};
				
				\node[right of = tv, node distance = 2*\nodedist ] (vj2) {};
				\node[Knoten,  above of = vj2, node distance = 1.5*\nodedist ] (vj1) {};
				\node[Knoten,  below of =vj2, node distance = 1.5*\nodedist ] (vj3) {};
				
				\node[above of =  vj2,node distance = 3*\nodedist]   {{\Large $L^j$}};
				
				\draw[edge] (tv) to 
				node[midway, above = 3cm]  {$\eset_{\tv}=\{\tarc_1,\tarc_2\}$} 
				node[midway,above] {$\tarc_1$}
				node[pos=0.7,below right = 0cm] {
					\begin{tikzpicture}
						\begin{scope}[-]
							\begin{axis}[xmin=0,xmax=6.5,ymax=1.5, ymin=0, samples=500,width=3.5cm,height=2.8cm,anchor=west,
								axis y line*=left, axis lines=left, xtick=\empty,ytick=\empty]
								\addplot[blue,  thick,domain=0:2*pi]  {cos(deg(x))} node[above right = -4pt,pos=.1]{$\teflow_{\tarc_1}$};
							\end{axis}            
						\end{scope}  
					\end{tikzpicture}
				} (vj1);

				\draw[edge] (tv) to 
				node[midway,below] {$\tarc_2$}
				node[pos=0.7,above right =0cm] {
					\begin{tikzpicture}
						\begin{scope}[-]
							\begin{axis}[xmin=0,xmax=6.5,ymax=1.5, ymin=0, samples=500,width=3.5cm,height=2.8cm,anchor=west,
								axis y line*=left, axis lines=left, xtick=\empty,ytick=\empty]
								\addplot[blue,  thick,domain=0:2*pi]  {cos(deg(x-pi))}node[above left = -4pt,pos=.4]{$\teflow_{\tarc_2}$};
							\end{axis}            
						\end{scope}  
					\end{tikzpicture}
				} (vj3);
				
		\end{tikzpicture}}
		\caption{Visualization of the tree and functions constructed in \Cref{algo: sdwalk}.}
		\label{fig:AlgoSDWalk}
	\end{figure} 
	
	In order to show that \Cref{algo: sdwalk} is correct, we show the following claim 
	via induction over the number of execution of the inner foreach-loop starting in \Cref{algo: sdwalk: lineSetFor}. 
	\begin{claim} \label{claim: algo: sdwalk: DiffeFlowCon} 
		The inner foreach-loop starting in \Cref{algo: sdwalk: lineSetFor}  retains the invariant that
		the node outflow rate of $\diffe^j$ at any node $v \in \GV(G)\setminus\{\dest\}$ equals the combined edge outflow rates $\teflow_{\tarc(\tv)} ^-$ of edges   entering the representatives of $v$ in $L^j$, i.e.
		\begin{align}\label{eq: claim: algo: sdwalk: DiffeFlowCon1}
			\op_v \diffe^j =   \sum_{\tv \in L^j: \pi(\tv) = v} \left((\teflow_{\tarc(\tv)}\cdot\leb)\circ \exit_{\pi(\tarc(\tv))}^{-1}\right)  = \sum_{\tv \in L^j: \pi(\tv) = v} \teflow_{\tarc(\tv)} ^-\cdot\leb.
		\end{align}
		Similarly, we have for the node balance at $\dest$:
		\begin{align}\label{eq: claim: algo: sdwalk: DiffeFlowCon2}
				\op_\dest \diffe^j =  \op_\dest \eflow + \sum_{\tv \in L^j: \pi(\tv) = \dest} \left((\teflow_{\tarc(\tv)}\cdot\leb)\circ \exit_{\pi(\tarc(\tv))}^{-1}\right) .
		\end{align}
	\end{claim}
	Before coming to the proof, 
	let us remark on two differences between \eqref{eq: claim: algo: sdwalk: DiffeFlowCon1} and \eqref{eq: claim: algo: sdwalk: DiffeFlowCon2}. 
	Firstly, we do not necessarily have the existence of edge outflow rates $\teflow_{\tarc(\tv)} ^-$ for $\pi(\tv) = \dest$ 
	since the edge outflow rates of $\g_\arc^-$ for $\arc \in \edgesTo{\dest}$ do not necessarily exist which in turn is possible as our model does not prevent for a point mass of flow to arrive at the destination. Secondly, we do not have a summand $\op_v\g$ in \eqref{eq: claim: algo: sdwalk: DiffeFlowCon1} as $\op_v\g = 0$ for all $v\neq \source,\dest$ and $\op_\source \g$ is accounted for by the artificial edge inflow $\teflow_{\tarc(\Tilde{s})}^-=  {\inflow}_s$.
	\begin{proofClaim}[Proof of \Cref{claim: algo: sdwalk: DiffeFlowCon}]
		We only show the case of $v \neq \dest$ as   the case of $v = \dest$ follows completely analogously.

		At the start of the algorithm, the invariant is trivially fulfilled by the definitions of  $\diffe^0 = \eflow$ and $\teflow_{\tarc(\Tilde{s})}^-=  {\inflow}_s$, the properties required for $\eflow$ and $L^0 = \{s\}$, $\GA(\mathcal{T}^0) = \emptyset$.  
		
		Hence, assume that the invariant holds before some iteration of the inner foreach-loop where $j$ and $\tv' \in L^{j-1}$ are the index and node, respectively, considered in that iteration. We denote by $\Delta$ the vector $\Delta^j$ before the iteration and by $\Delta'$ the vector $\Delta^j$ after the iteration. Analogously, we use the notation $L$ and $L'$ for the sets of leafs of $\mathcal{T}^j$ before and after the iteration.
		
		Let us start by remarking that $\g_\arc^-,\arc \in \edgesTo{v}$ exist by \Cref{lem: outflow: FlowCon}.  Subsequently, by $\teflow_{\tarc(\tv)} \leq \diffe_{\pi(\tarc(\tv))} \leq \g_{\pi(\tarc(\tv))}$, it follows by \Cref{lem: outflow: ExisLeq} of \Cref{lem: outflow: =ell+Cons} that $\teflow_{\tarc(\tv)}^-$ exists as well.             
		
		We now compute for any $v \in \GV(G)\setminus\{\dest\}$ the node net outflow of $\Delta'$ at this node (note that \eqref{algo: sdwalk: lineSet2} ensures that $\Delta'$ is nonnegative):
		\begin{align*}
			\op_v \diffe' 
			&= \sum_{\arc \in \delta^+(v)}  \diffe'_\arc \cdot \leb -  \sum_{\arc \in \delta^-(v)}  ( \diffe'_\arc \cdot \leb)\circ \exit_\arc^{-1} \\
			&\symoverset{4}{=} \sum_{\arc \in \delta^+(v)}  \bigl(\diffe_\arc - \sum_{\tarc \in \eset_{\tv'}: \pi(\tarc)=\arc} \teflow_{\tarc}\bigr)\cdot \leb -  \sum_{\arc \in \delta^-(v)}  \Bigl(\bigl(\diffe_\arc - \sum_{\tarc \in \eset_{\tv'}: \pi(\tarc)=\arc} \teflow_{\tarc}\bigr)\cdot \leb\Bigr)\circ \exit_\arc^{-1} \\
			&= \op_v \diffe - \sum_{\arc \in \delta^+(v)}\sum_{\tarc \in \eset_{\tv'}: \pi(\tarc)=\arc} \teflow_{\tarc}\cdot \leb + \sum_{\arc \in \delta^-(v)}\sum_{\tarc \in \eset_{\tv'}: \pi(\tarc)=\arc} (\teflow_{\tarc}\cdot \leb)\circ \exit_\arc^{-1} \\
			&= \op_v \diffe -\sum_{\tarc \in \eset_{\tv'}: \pi(\tarc) \in \delta^+(v)} \teflow_{\tarc}\cdot \leb + \sum_{\tarc \in \eset_{\tv'}: \pi(\tarc)\in \delta^-(v)} (\teflow_{\tarc}\cdot \leb)\circ \exit_\arc^{-1} \\
			&\symoverset{1}{=} \sum_{\tv \in L: \pi(\tv) = v} \teflow_{\tarc(\tv)} ^-\cdot\leb -\sum_{\tarc \in \eset_{\tv'}: \pi(\tarc) \in \delta^+(v)} \teflow_{\tarc}\cdot \leb + \sum_{\tarc \in \eset_{\tv'}: \pi(\tarc)\in \delta^-(v)} \teflow_{\tarc}^-\cdot\leb \\
			&=\sum_{\tv \in L\setminus\{\tv'\}: \pi(\tv) = v} \teflow_{\tarc(\tv)}^-\cdot\leb + \CharF[\{\pi(\tv')\}](v)\cdot\teflow_{\tarc(\tv)} ^-\cdot\leb 
			+ \sum_{\tv:\tarc(\tv)\in \eset_{\tv'}\land \pi(\tv) = v} \teflow_{\tarc(\tv)}^-\cdot\leb
			 -\hspace{-0.5cm}\sum_{\tarc \in \eset_{\tv'}: \pi(\tarc) \in \delta^+(v)} \teflow_{\tarc}\cdot \leb  \\ 
			&\symoverset{2}{=} \sum_{\tv \in L': \pi(\tv) = v} \teflow_{\tarc(\tv)}^-\cdot\leb + \CharF[\{\pi(\tv')\}](v)\cdot\teflow_{\tarc(\tv)} ^-\cdot\leb  -\sum_{\tarc \in \eset_{\tv'}: \pi(\tarc) \in \delta^+(v)} \teflow_{\tarc}\cdot \leb    \\
			&=\sum_{\tv \in L': \pi(\tv) = v}   \teflow_{\tarc(\tv)} ^-\cdot\leb.  
		\end{align*}    
		Here, \refsym{4} holds as $\diffe'$ is obtained from $\diffe$ via \Cref{algo: sdwalk: newDiffe} of \Cref{algo: sdwalk}. The equality indicated by~\refsym{1} holds since, for the first term, the invariant is true for $\diffe$. For the third term we used  the definition and existence of the edge outflow rates. The one indicated with~\refsym{2} is true since $L' = (L \setminus \{\tv'\}) \cup \set{\tv | \tarc(\tv) \in \eset_{\tv'}}$. The last equality holds by the fulfillment of~\eqref{algo: sdwalk: lineSet1}.   
	\end{proofClaim}
	
	Next, we show that if \Cref{algo: sdwalk} terminates, then $\tv^*$ corresponds to $\dest$, i.e.~$\pi(\tv^*) = \dest$.
	In particular, the walk $\wa$ returned by \Cref{algo: sdwalk} is an \stwalk. 
	\begin{claim}\label{claim: algo: sdwalk: SetsExist}
		In any execution of \Cref{algo: sdwalk: lineSet},  
		a set  $\eset_{\tv}$ with corresponding $\teflow_{\tarc},\tarc\in \eset_{\tv}$  exists 
		if
		\begin{align}\label{eq: claim: algo: sdwalk: SetsExist}
			\sum_{\arc \in \edgesFrom{\pi(\tv)}}\diffe^j_\arc \cdot \leb \geq (\tilde{\g}_{\tarc(\tv)} \cdot \leb)\circ \exit_{\pi(\tarc(\tv))}^{-1}.
		\end{align}
	     Moreover, if $\pi(\tv)\neq \dest$, $\eset_{\tv}$ with corresponding $\teflow_{\tarc},\tarc\in \eset_{\tv}$  exists and hence,   if \Cref{algo: sdwalk} terminates, $\pi(\tv^*) = \dest$.  
	\end{claim}
	\begin{proofClaim}
		Consider any execution of \Cref{algo: sdwalk: lineSet} with corresponding $\tv'$. We use the same notation $\diffe$, $L$ as in the proof of \Cref{claim: algo: sdwalk: DiffeFlowCon} for the flow and leaf set before the current iteration. 
		
		By \eqref{eq: claim: algo: sdwalk: SetsExist} (for $\diffe^j=\diffe$), it follows that $(\tilde{\g}_{\tarc(\tv)} \cdot \leb)\circ \exit_{\pi(\tarc(\tv))}^{-1}$ is absolutely continuous and, hence by \Cref{lem: outflow: Exis}, $\tilde{\g}_{\tarc(\tv)}^-$ exists and fulfills  $\tilde{\g}_{\tarc(\tv)}^- \leq \sum_{\arc \in \edgesFrom{\pi(\tv')}}\diffe_\arc$. This together with 
		$\diffe \in L_+(\hori)^\GA$ being nonnegative (by the fulfillment of \eqref{algo: sdwalk: lineSet2}) implies the existence  of  a set $\mathcal{E} \subseteq \delta^+(\pi(\tv'))$ and functions $\teflow_\arc \in L_+(\hori)\setminus\{0\},\arc \in \mathcal{E}$ with $\teflow_{\arc} \leq \diffe_{\arc}, \arc \in \mathcal{E}$ as well as $\sum_{\arc \in \mathcal{E}} \tg_\arc =\tilde{\g}_{\tarc(\tv)}^-$. In particular, we have  
		\begin{align*}
		 \int_{{\startint}t]} 	\sum_{\arc \in \mathcal{E}}  \teflow_\arc \di \leb =  \int_{{\startint}t]}   \teflow_{\tarc(\tv')} ^-\di\leb = \int_{\exit_{\pi(\tarc(\tv))}^{-1}({\startint}t])}   \teflow_{\tarc(\tv')}\di\leb \text{ for all }t \in \hori.
		\end{align*}
		Thus, choosing a set $\eset_{\tv'}$ which corresponds  one-to-one via $\pi$ to  $\mathcal{E}$ and setting $\teflow_{\tarc} \coloneq \teflow_{\pi(\tarc)},\tarc \in \eset_{\tv'}$ shows the first part of the claim.

		Now in case of $\pi(\tv') \neq \dest$, we get by \Cref{claim: algo: sdwalk: DiffeFlowCon} applied to $\pi(\tv')$ that: 
		\begin{align*}
			\sum_{\arc \in \edgesFrom{\pi(\tv')}}\diffe_\arc \cdot \leb \geq   \op_{\tv'} \diffe = 
			 \sum_{\tv \in L: \pi(\tv) = \pi(\tv')} \left((\teflow_{\tarc(\tv)}\cdot\leb)\circ \exit_{\pi(\tarc(\tv))}^{-1}\right) \geq (\tilde{\g}_{\tarc(\tv')} \cdot \leb)\circ \exit_{\pi(\tarc(\tv'))}^{-1}. 
		\end{align*} 
		Hence, the first part of the claim is applicable and finishes the proof. 
	\end{proofClaim}
	
	The statement of the \namecref{lem: ExistenceOfFlowCarryingWalk} is now a consequence of the following \namecref{claim: Algosdwalk}: 
	\begin{claim}\label{claim: Algosdwalk}
		The following statements are valid: 
		\begin{thmparts}[label= \roman*)]
			\item \Cref{algo: sdwalk} terminates after at most $\lfloor\norm{\eflow}/\norm{{\inflow}_s}\rfloor$ many iterations. \label[thmpart]{claim: Algosdwalk: termi}
			\item   The walk $\wa$ returned by    \Cref{algo: sdwalk}  is an \stwalk{} and there exists $\wflow_\wa \in L_+(\hori)\setminus\{0\}$  with $\ell_{\wa}(\wflow_\wa) \leq \eflow$ and $\wflow_\wa \leq {\inflow}_s$ as well as $\op_\dest\ell_\wa(\wflow_\wa) \geq \op_\dest \eflow$. \label[thmpart]{claim: Algosdwalk: corr}
		\end{thmparts}
		
	\end{claim}
	\begin{proofClaim}
		\begin{structuredproof}
			\proofitem{\ref{claim: Algosdwalk: termi}} 
			We first show the following subclaim via induction: 
			\begin{subclaim}
				For $j=0$, and, every  $j \in \N$ with the iteration in \Cref{algo: sdwalk: Iterations} being completed, we have $\norm{\sum_{\tv \in L^j} \teflow_{\tarc(\tv)}}= \norm{{\inflow_s}}$.
			\end{subclaim}
			\begin{proofClaim}
				The base case of $j = 0$ is trivial as by definition  $\teflow_{\tarc(\Tilde{s})} ={\inflow}_s$.  Hence, let $j \geq 1$ and assume the claim holds for all $0\leq j'<j$. 
				We first note that for any flow $f_\arc \in L_+(\hori)$ on an $\arc \in \GA$ that admits a corresponding edge outflow rate $f_\arc^-$, we have $\norm{f_\arc} = \norm{f_\arc^-}$.  
				Then, by the induction hypothesis, the fulfillment of~\eqref{algo: sdwalk: lineSet1} and using the equality $\norm{\teflow_{\tarc}} = \norm{\teflow_{\tarc}^-},\tarc \in \GA(\mathcal{T}^j)$ we get 
				\begin{align*}
					\norm{{\inflow}_s} &= \norm{\sum_{\tv \in L^{j-1}} \teflow_{\tarc(\tv)}} =  
					\sum_{\tv \in L^{j-1}} \norm{\teflow_{\tarc(\tv)} } = 
					\sum_{\tv \in L^{j-1}} \norm{\teflow_{\tarc(\tv)}^-} \\
					&= \sum_{\tv \in L^{j-1}} \norm{\sum_{\tarc \in \eset_{\tv}}  \teflow_{\tarc}} = 
					\sum_{\tv \in L^{j-1}} \sum_{\tarc \in \eset_{\tv}}  \norm{\teflow_{\tarc}} = 
					\sum_{\tv \in L^j}\norm{\teflow_{\tarc(\tv)}} = \norm{\sum_{\tv \in L^j} \teflow_{\tarc(\tv)}}
				\end{align*}
				Note that we can interchange any sum and norm since all equivalence classes are nonnegative almost everywhere. 
			\end{proofClaim}
			From this subclaim,~\ref{claim: Algosdwalk: termi} follows directly since we have $0 \leq \norm{\diffe^j} = \norm{\eflow}- \sum_{j'\in[j]} \norm{\sum_{\tv \in L^{j'}} \teflow_{\tarc(\tv)}}$ for all~$j$ with the outer foreach-loop being completed, where we again used that all appearing equivalence classes are nonnegative almost everywhere. 
			
			\proofitem{\ref{claim: Algosdwalk: corr}}
			Let $\wa\coloneq(\arc_1,\ldots,\arc_{k-1})=(\pi(\tarc_1),\ldots,\pi(\tarc_{k-1}))$ be the walk returned by the algorithm resulting from the walk $\Tilde{\wa}\eqcolon (\tarc_1,\ldots,\tarc_{k-1})$ in $\mathcal{T}^k$ and denote by $\tarc_l \eqqcolon (\tv^{l-1},\tv^{l})$ for all $l \in [k-1]$. Note that  $\tarc_l = \tarc(\tv^l)$ and $\tv^{k-1}=\tv^*$.  
			From \Cref{claim: algo: sdwalk: SetsExist}, we immediately know that $\wa$ is an \stwalk.  
			
			The second part of~\ref{claim: Algosdwalk: corr} will follow almost immediately from the following subclaim.  
			Here, the idea is the following: We first construct an edge inflow into the last edge $\arc_{k-1}=(\pi(\tv^{k-2}),\dest)$ of the walk $\wa$, which is upper bounded by $\teflow_{\tarc_{k-1}}$ and which induces a node balance at $\dest$ that is lower bounded by $\op_\dest \g$. 
			This edge out flow is then propagated iteratively backwards along the walk $\wa$. 
			\begin{subclaim}
				There exists for every $j  \in [k-1]$ an equivalence class $\wflow_\wa^j \in \edom{\wa_{\geq j}}\setminus\{0\}$ with  $\ell_{\wa_{\geq j},j'}(\wflow_\wa^j) \leq \teflow_{\tarc_{j+j'-1}}, j' \in  [k-j]$ and  $\op_\dest\ell_{\wa_{\geq j}}(\wflow_\wa^j) \geq \op_\dest \eflow$. 
			\end{subclaim}
			\begin{proofClaim}
				We prove this subclaim via downwards induction, i.e.\ starting with $j=k-1$:
				\begin{proofbyinduction}
					\basecase{$j = k-1$} 
					Define the two measures 
					\[\mu^\diffe \coloneq \sum_{\arc \in \edgesFrom{\pi(\tv^*)}}\diffe^{k-1}_\arc \cdot \leb \quad\text{and}\quad \mu^{\teflow}\coloneq (\tilde{\g}_{\tarc(\tv^*)} \cdot \leb)\circ \exit_{\pi(\tarc(\tv^*))}^{-1},\]
					representing the outflow from~$\pi(\tv^*)$ under $\diffe^{k-1}$ in the original and the inflow into~$\tv^*$ under $\tilde\g$ in the constructed tree, respectively.
					
					Since we have $\tv^{k-1} = \tv^*$,  a set $\eset_{\tv^*}$ as required in \Cref{algo: sdwalk: lineSet} does not exist. 
					Thus, by \Cref{claim: algo: sdwalk: SetsExist}, 
					we have $\mu^\diffe \ngeq  \mu^{\teflow}$. 
					In particular, the difference $\nu\coloneq \mu^\diffe - \mu^{\teflow}$ is a signed finite measure and has a Jordan-Hahn decomposition $\nu = \nu^+-\nu^-$ with $\nu^- \in \Meas_+(\R)\setminus\{0\}$. 
					Then, $\nu^-$ is absolutely continuous \wrt $\mu^{\teflow}$ and admits a 
					Radon-Nikodym derivative (cf.~\cite[Theorem 3.2.2]{Bogachev2007I}). Let 
					  $f$ be a representative of the  latter 
					 with the property that $f(t) \geq 0$ for all $t \in \hori$. Note that this is possible as 
					 the Radon-Nikodym derivative has to be nonnegative $\mu^{\teflow}$-almost everywhere since  $\nu^-$ and $\mu^{\teflow}$ are both nonnegative.

					Define $h_\wa^{k-1} \coloneqq (f \circ \exit_{\pi(\tarc(\tv^*))}) \cdot \tilde{\g}_{\tarc(\tv^*)}$ and note that the latter is measurable and nonnegative as all appearing functions are likewise. 
					Next,  observe the following for arbitrary ${\mathfrak T} \in \mathcal{B}(\hori)$: 
					\begin{equation}\label{eq: subclaim: claim: Algosdwalk: termi}
						\begin{aligned}
							\left((h_\wa^{k-1} \cdot\leb)\circ \exit_{\pi(\tarc(\tv^*))}^{-1}\right)({\mathfrak T}) &\eqperdef 
							\int_{\exit_{\pi(\tarc(\tv^*))}^{-1}({\mathfrak T})} (f\circ \exit_{\pi(\tarc(\tv^*))}) \cdot \tilde{\g}_{\tarc(\tv^*)} \di\leb \\
							&\symoverset{1}{=} \int_{\exit_{\pi(\tarc(\tv^*))}^{-1}({\mathfrak T})} f\circ \exit_{\pi(\tarc(\tv^*))}  \di\big( \tilde{\g}_{\tarc(\tv^*)} \cdot \leb\big) \\
							&\symoverset{2}{=} \int_{{\mathfrak T}} f  \di\Big(\big( \tilde{\g}_{\tarc(\tv^*)} \cdot \leb\big) \circ \exit_{\pi(\tarc(\tv^*))}^{-1}\Big)\\
							&\symoverset{3}{=}\nu^-({\mathfrak T})
						\end{aligned}
					\end{equation}
					where  \refsym{1} follows immediately by definition of $ \teflow_{\tarc(\tv^*)}\cdot \leb$, \refsym{2} by the change of variables formula (\cite[Theorem 3.6.1]{Bogachev2007II}) and \refsym{3} by $f$ being the Radon-Nikodym derivative of $\nu^-$ w.r.t.~$\mu^{\teflow}$. In particular, we get $h_\wa^{k-1} \neq 0$ as $\nu^- \neq 0$. Thus, by the above argued nonnegativity, we have $h_\wa^{k-1}(t) \in L_+(\hori)\setminus\{0\}$.

					Since the equalities in \eqref{eq: subclaim: claim: Algosdwalk: termi} hold for all $\mathfrak T \in \mathcal{B}(\hori)$, we have $(h_\wa^{k-1} \cdot\leb)\circ \exit_{\pi(\tarc(\tv^*))}^{-1} = \nu^-$. 	Moreover, $\ell_{\wa_{\geq k-1}}(\wflow_\wa^{k-1})$ exists and is given by
					$\ell_{\wa_{\geq k-1},\wa[k-1]}(\wflow_\wa^{k-1})= 	\wflow_\wa^{k-1}$ and $\ell_{\wa_{\geq k-1},\arc}(\wflow_\wa^{k-1})= 0$ else. 
					This lets us deduce further:
					\begin{align*}
						\op_\dest\ell_{\wa_{\geq k-1}}(\wflow_\wa^{k-1}) &\symoverset{3}{=} - ((\wflow_\wa^{k-1}  \cdot \leb)\circ \exit_{\pi(\tarc(\tv^*))}^{-1} \overset{\eqref{eq: subclaim: claim: Algosdwalk: termi}}{=} \nu^-\\
						&\eqperdef \left( \sum_{\arc \in \edgesFrom{\pi(\tv^*)}}\diffe^{k-1}_\arc \cdot \leb  - (\tilde{\g}_{\tarc(\tv^*)} \cdot \leb)\circ \exit_{\pi(\tarc(\tv^*))}^{-1} \right)^-  \\
						&\geq   \left(  \op_\dest (  \diffe^k )   -(\tilde{\g}_{\tarc(\tv^*)} \cdot \leb)\circ \exit_{\pi(\tarc(\tv^*))}^{-1} \right)^-  \\
						&\symoverset{1}{=}  \left(   \op_\dest \eflow  + \sum_{\tv \in L^k: \pi(\tv) = \dest} (\teflow_{\tarc(\tv)}\cdot\leb)\circ \exit_{\pi(\tarc(\tv))}^{-1}  -(\tilde{\g}_{\tarc(\tv^*)} \cdot \leb)\circ \exit_{\pi(\tarc(\tv^*))}^{-1} \right)^-\\
						&\symoverset{2}{\geq} \big(   \op_\dest  \eflow \big)^- \symoverset{3}{=} \op_\dest  \eflow 
					\end{align*}
					where the equality indicated by \refsym{3} holds by \Cref{lem: flowcon} and our assumption that~$G$ is loop-free, \refsym{1} by \Cref{claim: algo: sdwalk: DiffeFlowCon} and the inequality \refsym{2} by $\tarc_{k-1} = \tarc(\tv^{k-1}) = \tarc(\tv^*)$ and $\tv^* \in L^k$ with $\pi(\tv^*) = \dest$ by \Cref{claim: algo: sdwalk: SetsExist}. 
					Finally, \refsym{3} holds as $\op_\dest\g$ is nonpositive as $\g$ is an $\source$,$\dest$-flow. 
					
					\inductionstep{$j<k-1$} Assume that the claim is fulfilled for $j+1$, that is, there exists $h_\wa^{j+1}\in\edom{\wa_{\geq j+1}}\setminus\{0\}$ with 
					 $\ell_{\wa_{\geq (j+1)},j'}(\wflow_\wa^{j+1}) \leq \teflow_{\tarc_{(j+1)+j'-1}}, j' \in  [k-(j+1)]$ and  $\op_\dest\ell_{\wa_{\geq (j+1)}}(\wflow_\wa^{j+1}) \geq \op_\dest \eflow$. 
		
					By the validity of \eqref{algo: sdwalk: lineSet1}, we have for all $\mathfrak T \in \mathcal{B}(\hori)$: 
					\begin{align*}
						 \int_{\mathfrak T}h_\wa^{j+1}\di\leb &\symoverset{2}{\leq} \int_{\mathfrak T}\teflow_{\tarc_{j+1}}\di\leb
						\leq \int_{\mathfrak T}\sum_{\tarc \in \eset_{\tv^j}}\teflow_{\tarc}\di\leb\\ \overset{\eqref{algo: sdwalk: lineSet1}}&{=} \int_{\exit_{\pi(\tarc(\tv^j))}^{-1}(\mathfrak T)} \teflow_{\tarc(\tv^j)}\di\leb \symoverset{1}{=} 
						 \int_{\mathfrak T} \teflow_{\tarc(\tv^j)}^-\di\leb
					\end{align*}
					where \refsym{2} is valid as  $h_\wa^{j+1} = \ell_{\wa_{\geq (j+1)},1}(\wflow_\wa^{j+1}) \leq \teflow_{\tarc_{j+1}}$ by induction hypothesis. 
					 The existence of the edge outflow rate $\teflow_{\tarc(\tv^j)}^-$ in \refsym{1} follows by the validity  of \eqref{algo: sdwalk: lineSet1}. 
					The above inequalities show that we have $h_\wa^{j+1} \leq  \teflow_{\tarc(\tv^j)}^-$. 
					 By \Cref{lem: outflow: ExisLeqInvertedCons} of \Cref{lem: outflow: =ell+Cons}, there exists (\wrt edge $\pi(\tarc_{j})$) an inflow rate $\wflow_\wa^j  \in L_+(\hori)\setminus\{0\}$  
						with edge outflow rate $(\wflow_\wa^j)^-= h_\wa^{j+1}$ and 
					$\wflow_\wa^j \leq \teflow_{\tarc(\tv^j)}$.
					 Furthermore,  \Cref{lem: elluPropagation}  implies that $\ell_{\wa_{\geq j},j'}(\wflow_\wa^j) = \ell_{\wa_{\geq j+1},j'-1}(\wflow_\wa^{j+1})$ for all $j' \in \{2,\ldots,k-j\}$ since 
					$\ell_{\wa_{\geq j},j'}(\wflow_\wa^j) = \ell_{\wa_{\geq j+1},j'-1}(\ell_{\wa_{\geq j},2}(\wflow_\wa^j))$ and
					$\ell_{\wa_{\geq j},2}(\wflow_\wa^j) = (\wflow_\wa^j)^- = \wflow_\wa^{j+1}$. 
					Hence, the latter together with the induction hypothesis for $j+1$ shows that $\wflow_\wa^j$ is as required. \qedhere
				\end{proofbyinduction}
			\end{proofClaim}
			Now~\ref{claim: Algosdwalk: corr} follows by setting $\wflow_\wa \coloneq \wflow_\wa^1$ and observing that by the above subclaim we have 
			$\wflow_\wa^1  = \ell_{\wa,1}(\wflow_\wa^1) \leq \teflow_{\tarc_1}\leq \teflow^-_{\tarc(\Tilde{s})} = \inflow_s$. 
			Furthermore, for any $\arc \in \GA$ the chain of inequalities $\ell_{\wa,\arc}(\wflow_\wa) \leq \sum_{\tarc \in \GA(\mathcal{T}^k): \pi(\tarc)=\arc} \teflow_{\tarc} \leq \eflow_\arc$ is valid:  The last inequality holds as $\diffe^k \in L_+(\hori)$ by~\eqref{algo: sdwalk: lineSet2}. 
			The first inequality holds as $\ell_{\wa,j}(\wflow_\wa) \leq \teflow_{\tarc_j}, j \in [k]$ by the above subclaim and $\tarc_j \neq \tarc_{j'},j\neq j'$ since $\Tilde{\wa}$ is a walk in $\mathcal{T}^k$ which is a directed tree. \qedhere
		\end{structuredproof}
	\end{proofClaim}
	
	Since the statement of the \namecref{lem: ExistenceOfFlowCarryingWalk} is exactly \Cref{claim: Algosdwalk: corr}, the proof is finished. 
\end{proof}

\section{Conclusions and Open Problems}

We derived a decomposition theorem for integrable dynamic edge $\source$,$\dest$-flows
stating that any such edge flow can be decomposed into nonnegative linear combinations of \stwalk-inflows and circulations. 
We also gave a characterization of $\source$,$\dest$-edge flows that admit a decomposition purely into \stwalk-inflows without circulations. 
To prove these results, we developed the framework of \auto network loadings which, 
we believe, provides a useful
concept for understanding dynamic flows more broadly, beyond the scope of this paper.  
For this type of network loading, we 
derived several structural results: from existence characterizations and topological properties (\Cref{lem: elluExistenceProperties}), to existence guarantees for a class of associated optimization problems (\Cref{thm: ExistenceOptSol}), and to sufficient conditions for flows to be decomposable into zero-cycle inflow rates or to admit a flow-carrying walk (\Cref{lem: ZeroCycleDecomposition,lem: ExistenceOfFlowCarryingWalk}).

Two closely related problems that we did not address in this paper are related to $\source$,$\dest$-paths, i.e.~cycle-free \stwalk s: Determining which dynamic edge flows can be decomposed into only $\source$,$\dest$-path inflows and, in case of existence, how to compute such a decomposition. 
Regarding the former, one can deduce by the   equivalence of  edge- and walk-based equilibria (\Cref{thm:def-equilibrium-equivalent}) that edge-based equilibria always admit (only) path-based decompositions (assuming that particles minimize travel times and travel times are strictly positive): \Cref{lemma:EquivalenceEquilibriumDefinitionsGivenFlows} states that any decomposition of an edge-based dynamic equilibrium is a walk-based  equilibrium and, hence, flow can only be injected into paths for such a decomposition (and at least one such decomposition must exist by our decomposition theorem). 
Regarding the computation of a path-based decomposition, we remark that 
 the output of the decomposition algorithm presented in this paper heavily depends on the order of walks
 that are chosen in the main flow-reduction step: There are examples in which the edge flow admits a path-based decomposition but 
 a wrong order of paths chosen in the main flow-reduction step requires the algorithm to send flow into a walk including proper cycles.

We further believe that the dynamic flow decomposition results can be the
basis for a better understanding of related infinite dimensional 
optimization problems, such as
the problem of computing a system optimal traffic assignment (minimizing
total travel time) under 
fixed inflow rates and load-dependent travel times.
This quite fundamental problem is not understood at all (except for flow-independent travel times, see the related work), not even
for the Vickrey point queue model.

Let us finally mention that we did not elaborate on the computational complexity
of computing dynamic flow decompositions.
However, even much simpler questions like the computational complexity of the network
loading problem for the well-studied Vickrey queueing model is -- to the best of our knowledge -- not resolved so far (see the open problems raised by Martin Skutella in \cite[Section~4.6]{DagstuhlSeminar22}).

\section*{Acknowledgements}
\hfill\\
This research has been funded by the Deutsche Forschungsgemeinschaft
(DFG) in the project 543678993 (Aggregative gemischt-ganzzahlige
Gleichgewichtsprobleme: Existenz, Approximation und Algorithmen).
We acknowledge the support of the DFG.

\bibliographystyle{plain}
\bibliography{master-bib}

\appendix

\section{Properties of Locally Absolutely Continuous Functions}

In this section, we show several insights into locally absolutely continuous functions $\func:\hori\to\R$, i.e.~functions that are  absolutely continuous on every closed interval $[a,b]\subseteq \hori$ in the sense of \cite[Definition 5.3.1]{Bogachev2007I}.
These results follow almost immediately by their corresponding analogues for absolutely continuous functions on an interval.  
 \Cref{lem: PropAbsCon:ImageMeas} states that the image of a measurable set  under $\func$ is measurable as well. 
\Cref{lem: PropAbsCon:Lus} states that 
$\func$ fulfills Lusin's property, i.e.~takes null sets to null sets. 
\Cref{lem: PropAbsCon:Der} states that
the derivative of 
$\func$ exists almost everywhere. 
\Cref{lem: PropAbsCon:DerNonDec}
shows that for $\func$ non-decreasing, its derivative is larger or equal to $0$ in case of existence. 
\Cref{lem: PropAbsCon:Est} establishes an estimate for 
the measure of the image $\func(\mathfrak T)$ for any measurable set $\mathfrak T \in \mathcal{B}(\hori)$. 
Finally, \Cref{lem: PropAbsCon:Conca} states that the concatenation $\func\circ \hat{\func}$ of two locally absolutely continuous functions $\func,\hat{\func}$ is again locally absolutely continuous, provided that $\hat{\func}$ is non-decreasing. 
\begin{lemma}\label{lem: PropAbsCon} 
    For an arbitrary locally absolutely continuous function $\func:\hori\to\R$, the following statements are valid: 
\begin{thmparts} 
    \item\label[thmpart]{lem: PropAbsCon:ImageMeas} The image $\func(\mathfrak T)$ of a measurable set $\mathfrak T \in \mathcal{B}(\hori)$ under $\func$ is again (Borel-)measurable.
    \item  \label[thmpart]{lem: PropAbsCon:Lus} $\func$ fulfills Lusin's property, i.e.~$\sigma(\func(\mathfrak T)) = 0$ for all null sets $\mathfrak T$.
    \item\label[thmpart]{lem: PropAbsCon:Der} The derivative $\func'$ exists almost everywhere, any measurable extension $\Tilde{\func}'$ of $\func'$ to the whole real line is locally integrable and fulfills for all $a<b \in \R$  the equality 
    \begin{align*}
       \func(b) - \func(a) =  \int_{[a,b]}{\tilde{\func}'}\di\sigma . 
    \end{align*}
    In particular, this allows us to write from now on simply  $\int_{[a,b]}\func'\di\sigma$ instead of the right hand side. 
    \item \label[thmpart]{lem: PropAbsCon:DerNonDec}  If $\func$ is non-decreasing, then $\func'(t)\geq 0$ holds if $\func'(t)$ exists. 
    \item \label[thmpart]{lem: PropAbsCon:Est} The following estimate holds for all $\mathfrak T \in \mathcal{B}(\hori)$: 
    \begin{align*}
        \sigma(\func(\mathfrak T))) \leq \int_{\mathfrak T}\abs{\func'}\di\sigma.
    \end{align*}
    \item \label[thmpart]{lem: PropAbsCon:Conca}   If $\hat{\func}:\hori\to \hori$ is a non-decreasing locally absolutely continuous function, then $\func \circ \hat{\func}$ is also locally absolutely continuous. 
\end{thmparts}
     
\end{lemma}
\begin{proof}
All of the arguments in the following exploit the fact that $\func$ is an absolutely continuous function on any interval of the form $[-n,n]$ for arbitrary $n \in \N$. 
    \begin{structuredproof}
    \proofitem{\ref{lem: PropAbsCon:ImageMeas}} Let $\mathfrak T \in \mathcal{B}(\hori)$ be arbitrary. For every $n \in \N$, 
    the set $\func(\mathfrak T \cap [-n,n])$ is measurable as the image of a measurable set under an absolutely continuous function, cf.~\cite[Theorem 9.9.3 + Exercise 5.8.49]{Bogachev2007II}. Hence, the countable union $\bigcup_{n \in \N} \func(\mathfrak T \cap [-n,n]) = \func(\mathfrak T)$ is also measurable.  
    
     \proofitem{\ref{lem: PropAbsCon:Lus}} Let $\mathfrak T \subseteq \hori$ be a null set. 
     Then we have 
     \begin{align*}
         \sigma(\func(\mathfrak T)) = \lim_{n \to \infty} \sigma(\func(\mathfrak T \cap [-n,n])) = 0
     \end{align*}
     where the first equality holds as for any sequence of measurable sets $\mathfrak T_n, n\in \N$ with $\mathfrak T_n \subseteq\mathfrak T_{n+1},n\in \N$, the equality $\sigma(\bigcup_{n\in \N} \mathfrak T_n) = \lim_{n \to \infty} \sigma(\mathfrak T_n)$ is  true.  The last equality holds by $\func$ fulfilling Lusin's property on every finite interval (\cite[Exercise 5.8.49]{Bogachev2007I}). 
        \proofitem{\ref{lem: PropAbsCon:Der}}  Consider the set $\mathfrak D \in \mathcal{B}(\hori)$ where the derivative of $\func$ does not exist. Then, we have 
        \begin{align*}
            \sigma(\mathfrak D) = \lim_{n\to \infty} \sigma(\mathfrak D \cap [-n,n]) = 0 
        \end{align*}
        where $\sigma(\mathfrak D \cap [-n,n]) = 0$ as for every interval $[-n,n] \subseteq \hori, n \in \N$, the derivative exists almost everywhere by \cite[Theorem 5.3.6]{Bogachev2007I}. Moreover, \cite[Theorem 5.3.6]{Bogachev2007I} shows further that the derivative is integrable on $[a,b]$ for every $a< b  \in \R$ and that it fulfills the stated equality. 
        \proofitem{\ref{lem: PropAbsCon:DerNonDec}} Let $t \in \hori$ with $\func'(t)$ existing, that is, 
        \begin{align*}
            \func'(t)= \lim_{n \to \infty} \frac{\func(t_n)-\func(t)}{t_n-t} 
        \end{align*}
        exists for any sequence $ (t_n)_{n \in \N} \subseteq \hori$ with $t_n \to t$. In particular, for $t_n := t + \frac{1}{n}$ we get
         \begin{align*}
            \func'(t)= \lim_{n \to \infty} \frac{\func(t + \frac{1}{n})-\func(t)} {\frac{1}{n}} \geq  \lim_{n \to \infty} \frac{\func(t)-\func(t)}{\frac{1}{n}} = 0
        \end{align*}       
        where we used that $\func(t + \frac{1}{n}) \geq \func(t)$ as $\func$ is non-decreasing. 
        
        \proofitem{\ref{lem: PropAbsCon:Est}}  
        Denote again by $\mathfrak D$ the null set where $\func'$ does not exist. 
        We calculate for an arbitrary $\mathfrak T \in \mathcal{B}(\hori)$: 
        \begin{align*}
            \sigma(\func(\mathfrak T))) &\symoverset{1}{=}\sigma(\func(\mathfrak T\cap \mathfrak{D}))) = \lim_{n\to \infty} \sigma(\mathfrak T \cap\mathfrak{D}\cap  [-n,n]) 
            \symoverset{2}{\leq}  \lim_{n\to \infty} \int_{\mathfrak T\cap\mathfrak{D}\cap  [-n,n]}\abs{\func'}\di\sigma \\
            &\symoverset{3}{=}  \lim_{n\to \infty} \int_{\mathfrak T \cap  [-n,n]}\abs{\func'}\di\sigma 
            \symoverset{4}{\leq} \int_{\mathfrak T }\abs{\func'}\di\sigma.
        \end{align*}
        Here, \refsym{1} holds by $\func$ fulfilling Lusin's property (cf.~\ref{lem: PropAbsCon:Lus}) and $\hori\setminus \mathfrak D$ being a null set, implying that 
        $\sigma(\func(\mathfrak T \cap (\hori\setminus \mathfrak D))) = 0$. 
        The estimation indicated by \refsym{2} holds due to \cite[Lemma 7.10]{bruckner1997real} while equality \refsym{3} is valid by  $\hori\setminus \mathfrak D$ being a null set. Finally, \refsym{4} holds by Lebesgue's dominated convergence theorem (\cite[Theorem 2.8.1 + Corollary 2.8.6]{Bogachev2007I}) in case of $\func'$ being integrable on $\mathfrak T$ or trivially in case of $\func'$ not being integrable on $\mathfrak T$ (i.e.~$\int_{\mathfrak T }\abs{\func'}\di\sigma = \infty$). 
        
        \proofitem{\ref{lem: PropAbsCon:Conca}} Consider the interval $[-n,n]$ for an arbitrary $n \in \N$. 
        Since $\hat{\func}$ is continuous and non-decreasing, $\hat\func([-n,n])$ is contained in a bounded interval $[-n',n']$ for some $n' \in \N$. Since $\func$ is absolutely continuous on the latter interval, the claim follows by \cite[Exercise 5.8.59]{Bogachev2007I}.  
        \qedhere
    \end{structuredproof}
\end{proof}

\providecommand{\oref}{\Cref}
\section{List of Symbols}

{
	\newcommand{\losEntry}[2]{#1 & #2 \\}
	\renewcommand{\arraystretch}{1.2}

    \ifjournal
	\begin{longtable}{p{4.3cm}p{6.7cm}}
	\else
	\begin{longtable}{p{4.7cm}p{10.7cm}}
	\fi
		Symbol				& Description \\\hline
		
		\hline\multicolumn{2}{l}{\textbf{General}}\\\hline
		
		\losEntry{$L(\hori)$}{space of integrable functions on $\hori$}
		\losEntry{$L_+(\hori)$}{non-negative functions in $L(\hori)$}
        \losEntry{$L^\infty(\hori)$}{space of measurable essentially bounded functions on $\hori$}
		\losEntry{$L^\infty_+(\hori)$}{non-negative functions in $L^\infty(\hori)$}
        \losEntry{$\seql[1][M][L(\hori)]$}{vectors  $(h_m)_{m\in M}\in (L(\hori))^M$ for an arbitrary countable set $M$ whose corresponding series $\sum_{m\in M}h_m$ converges absolutely in $L(\hori)$. }
        \losEntry{$\seql[\infty][M][L^\infty(\hori)]$}{vectors  $(h_m)_{m\in M}\in (L^\infty(\hori))^M$  whose entries are uniformly bounded, i.e.~$\sup_{m \in M}\norm{h_m}_{\infty}< \infty$. }

		\losEntry{$\sigma$}{the Lebesgue measure on $\hori$}
		\losEntry{$\mathfrak{T}, \mathfrak{D}$}{measurable subsets of $\hori$}

        \losEntry{$1_{K}$}{characteristic function of set/walk $K$, i.e. $1_{K}(t)=1$ if $t \in K$  and $0$, otherwise}

        \losEntry{$\dup{f}{g}$}{the bilinear  form  $\dup{f}{g} \coloneqq \sum_{m \in M}\int_\hori f_m\cdot g_m \sigma$ between the dual pair $(\seql[1][M][L(\hori)],\seql[\infty][M][L^\infty(\hori)])$}

        \losEntry{$\eqperdef$}{indicates an equality which holds by definition}
		
		\hline\multicolumn{2}{l}{\textbf{Network}}\\\hline
		
		\losEntry{$G=(\GV,\GA)$}{directed graph with nodes $\GV$ and edges $\GA$}
		\losEntry{$\edgesFrom{v}$}{set of edges starting from node $v$}
		\losEntry{$\edgesTo{v}$}{set of edges ending at node $v$}
		\losEntry{$\source \in V$}{source node}
		\losEntry{$\dest \in V$}{destination node}
		\losEntry{$t \in \hori$}{time}
		\losEntry{$\wa=(\arc_1,\dots,\arc_k)$}{walk consisting of edges $\arc_j$}
		\losEntry{$\edgesFrom{\wa}$}{edges not in $\wa$ but starting from a node visited by $\wa$}
		\losEntry{$\wa[j]$}{$j$-th edge of walk $\wa$} 
		\losEntry{$\hat\wa_{\leq j}$ ($\hat\wa_{< j}$)}{subwalk of $\hat{\wa}$ up to and including (excluding) the $j$-th edge}
		\losEntry{$\hat\wa_{\geq j}$ ($\hat\wa_{> j}$)}{subwalk of $\hat{\wa}$ starting from the $j$-th ($j+1$-th) edge}
		\losEntry{$\hat\Routes_{v_1,v_2}$}{set of (finite) \stwalk[v_1][v_2]s}
		\losEntry{$\Routes'$}{arbitrary collection of (finite)  walks}  
		\losEntry{$\SimpCyc$}{set of all simple cycles}

        \hline\multicolumn{2}{l}{\textbf{Initial Flow-dependent Travel Times}}\\\hline
         \losEntry{$\trav_\arc(\cdot,\cdot): L_+(\hori)^\GA \times \hori \to \R_+$}{initial flow-dependent travel time on edge $\arc$}               
	    \losEntry{$\trav_\arc(\g,t)$}{edge traversal time under $\g$ when entering edge $\arc$ at time $t$ (absolutely continuous in~$t$)}
		\losEntry{$\exit_\arc(\g,t)$}{edge exit time when entering edge $\arc$ at time $t$ under $\g$: $\exit_\arc(\g,t) \coloneqq t+\trav_\arc(\g,t)$ (non-decreasing in~$t$)}
		\losEntry{$\arr_{\wa,j}(\g,t)$}{arrival time in front of the $j$-th edge of walk $\wa$ when entering this walk at time $t$ under $\g$}

		\hline\multicolumn{2}{l}{\textbf{\Auto Network Loadings}}\\\hline
        \losEntry{$\trav_\arc(\cdot): \hori \to \R_+$}{some fixed flow-independent (but still time-dependent) absolutely continuous travel time function on edge $\arc$}   
		\losEntry{$\exit_\arc(t)$}{edge exit time when entering edge $\arc$ at time $t$: $\exit_\arc(t) \coloneqq t+\trav_\arc(t)$ (non-decreasing in~$t$)}
		\losEntry{$\arr_{\wa,j}(t)$}{arrival time in front of the $j$-th edge of walk $\wa$ when entering this walk at time $t$}
		\losEntry{$\ell_{\Routes'}:\edom{\Routes'} \to L_+(\hori)^\GA$}{\auto network loading w.r.t.~(flow-independent) travel time function $\trav(\cdot)$ and walk collection $\Routes'$}        
        \losEntry{$\edom{\Routes'} \subseteq L_+(\hori)^{\Routes'}$}{maximal set of walk inflow rates $h$ into $\Routes'$ whose induced \auto edge flow $\ell_{\Routes'}(h)$ exists (we even have $\edom{\Routes'}\subseteq \seql[1][{\Routes'}][L_+(\hori)]$ by \oref{lem: elluContinuity:Subset})} 
		\losEntry{$h \in L_+(\hori)^{\Routes'}$}{walk-inflow for the walk collection $\Routes'$} 
		\losEntry{$\g = \ell_{\Routes'}(h) \in L(\hori)^\GA$}{\auto edge flow induced by $h$} 
		\losEntry{$\ell_{\wa,j}(h_\wa)\in L_+(\hori)$}{the flow on the $j$-th edge on walk $\wa$ induced by inflow into that walk under $ h_\wa$, without aggregating over multiple occurrences of that edge}
		\losEntry{$\edom{\wa,j} \subseteq L_+(\hori)$}{maximal set of walk inflow rates $h_\wa$ into walk $\wa$ for which $\ell_{\wa,j}(h_\wa)$ exists} 		
		\losEntry{$\ell_{\wa,\arc}(h_\wa)\in L_+(\hori)$}{the flow on edge $\arc$ on walk $\wa$ induced by inflow into that walk under $ h_\wa$,  aggregated over multiple occurrences of that edge}
		\losEntry{$\edom{\wa,\arc} \subseteq L_+(\hori)$}{maximal set of walk inflow rates $h_\wa$ into walk $\wa$ for which $\Nl[\trav(\cdot)]_{\wa,\arc}(h_\wa)$ exists} 		
		\losEntry{$\nabla_v \eflow$}{a measure denoting the \auto flow balance at node $v$ under flow $\eflow$}

		\hline\multicolumn{2}{l}{\textbf{Connected Components of \Cref{def: ConnectedComp} \wrt zero-cycle inflow rates}}\\\hline
		\losEntry{$\SimpCyc(t)$}{set of active (=flow-carrying) simply cycles at $t$}
		\losEntry{$C_n, n\in N$}{all connected components that are active during a non-trivial amount of time}
		\losEntry{$\mathfrak T_{C_n}$}{set of times at which $C_n$ is active} 
		\losEntry{$\mathfrak T_c$}{set of times at which a cycle $c$ is active}
		\losEntry{$\GV_{C_n}$}{nodes contained in $C_n$}
		\losEntry{$\GA_{C_n}$}{edges contained in $C_n$}
		\losEntry{$\edgesFrom{C_n}$}{edges leaving the connected component $C_n$}

        \hline\multicolumn{2}{l}{\textbf{Measure Operations}}\\\hline
		\losEntry{$f\cdot \leb$}{$\leb$-based measure with density $f$ given by $f\cdot \leb(\mathfrak T)\coloneq \int_{\mathfrak T}f\di\leb$} 
		\losEntry{$\mu \circ f^{-1}$}{pushforward/image measure of a measure $\mu$ \wrt function $f$ defined via $\mu \circ f^{-1}(\mathfrak T)\coloneq \mu\big(f^{-1}(\mathfrak T)\big)$ }
	\end{longtable}
}

\end{document}